\tikzstyle{snode}=[circle,draw=black!70,thick,align=center,minimum size = 0.5cm,inner sep =0pt]
\tikzstyle{bobox}=[rectangle,rounded corners=5pt,draw=black!70,thick,fill=orange!20]
\tikzstyle{triangle} =[regular polygon, regular polygon sides=3,rounded corners=5pt,draw=black!70,fill=gray!20]
\tikzstyle{lines} =[draw=black!70,thick]
\newcommand{\ptime}{\ensuremath{\mathsf{P}}\xspace}
\newcommand{\pspace}{\ensuremath{\mathsf{PSpace}}\xspace}
\newcommand{\exptime}{\ensuremath{\mathsf{EXPTime}}\xspace}
\newcommand{\logspace}{\ensuremath{\mathsf{LogSpace}}\xspace}
\newcommand{\nlog}{\ensuremath{\mathsf{NL}}\xspace}
\newcommand{\nfa}{\ensuremath{\mathsf{NFA}}\xspace}
\newcommand{\nfas}{\ensuremath{\mathsf{NFAs}}\xspace}
\newcommand{\dfa}{\ensuremath{\mathsf{DFA}}\xspace}
\newcommand{\dfas}{\ensuremath{\mathsf{DFAs}}\xspace}
\newcommand{\nat}{\ensuremath{\mathbb{N}}\xspace}
\newcommand{\integ}{\ensuremath{\mathbb{Z}}\xspace}
\newcommand{\quozk}{\ensuremath{{\integ}/{k\integ}}\xspace}
\newcommand\contentmorphism{\ensuremath{\textsf{alph}}}
\newcommand\cont[1]{\ensuremath{\contentmorphism(#1)}}
\newcommand{\reg}{\ensuremath{\textup{REG}}\xspace}
\newcommand{\mso}{\ensuremath{\textup{MSO}}\xspace}
\newcommand{\at}{\ensuremath{\textup{AT}}\xspace}
\newcommand{\sfr}{\ensuremath{\textup{SF}}\xspace}
\newcommand{\sttp}[1]{\ensuremath{\textup{ST}[#1]}\xspace}
\newcommand{\stzer}{\sttp{0}}
\newcommand{\sthone}{\sttp{{1}/{2}}}
\newcommand{\stone}{\sttp{1}}
\newcommand{\sthtwo}{\sttp{{3}/{2}}}
\newcommand{\sttwo}{\sttp{2}}
\newcommand{\sththree}{\sttp{{5}/{2}}}
\newcommand{\bool}[1]{\ensuremath{\mathit{Bool}(#1)}\xspace}
\newcommand{\pol}[1]{\ensuremath{\mathit{Pol}(#1)}\xspace}
\newcommand{\bpol}[1]{\ensuremath{\mathit{BPol}(#1)}\xspace}
\newcommand{\polp}[2]{\ensuremath{\mathit{Pol}_{#2}(#1)}\xspace}
\newcommand{\polk}[1]{\polp{#1}{k}}
\newcommand{\polrelp}[1]{\ensuremath{\leqslant_{#1}}\xspace}
\newcommand{\polrelk}{\polrelp{k}}
\newcommand{\typ}[2]{\ensuremath{[#1]_{#2}}\xspace}
\newcommand{\ctype}[1]{\typ{#1}{\Cs}}
\newcommand{\cmult}{\ensuremath{\mathbin{\scriptscriptstyle\bullet}}}
\newcommand{\As}{\ensuremath{\mathcal{A}}\xspace}
\newcommand{\Cs}{\ensuremath{\mathcal{C}}\xspace}
\newcommand{\Ds}{\ensuremath{\mathcal{D}}\xspace}
\newcommand{\Js}{\ensuremath{\mathcal{J}}\xspace}
\newcommand{\Lb}{\ensuremath{\mathbf{L}}\xspace}
\newcommand{\frE}{\ensuremath{\mathbbm{E}}\xspace}
\newcommand{\frP}{\ensuremath{\mathbbm{P}}\xspace}
\newcommand{\frQ}{\ensuremath{\mathbbm{Q}}\xspace}
\newcommand{\frS}{\ensuremath{\mathbbm{S}}\xspace}
\newcommand{\frT}{\ensuremath{\mathbbm{T}}\xspace}
\newcommand{\frU}{\ensuremath{\mathbbm{U}}\xspace}
\newcommand{\frV}{\ensuremath{\mathbbm{V}}\xspace}
\newcommand{\vari}{quotienting Boolean algebra\xspace}
\newcommand{\varis}{quotienting Boolean algebras\xspace}
\newcommand{\pvari}{quotienting lattice\xspace}
\newcommand{\pvaris}{quotienting lattices\xspace}
\newcommand{\varie}{variety\xspace}
\newcommand{\varies}{varieties\xspace}
\newcommand{\pvarie}{positive variety\xspace}
\newcommand{\pvaries}{positive varieties\xspace}
\newcommand{\pVaries}{Positive varieties\xspace}
\def\inv{^{-1}}
\newcounter{sauvegarde}
\newcommand\adjustc[1]{
  \setcounter{sauvegarde}{\thetheorem}
  \setcounterref{theorem}{#1}
  \addtocounter{theorem}{-1}
}
\newcommand\restorec{
  \setcounter{theorem}{\thesauvegarde}
}
\title{The complexity of separation for levels in concatenation hierarchies}
\titlerunning{The complexity of separation for levels in concatenation hierarchies}
\author{Thomas Place}{LaBRI, Bordeaux University and IUF, France}{}{}{}
\author{Marc Zeitoun}{LaBRI, Bordeaux University, France}{}{}{}
\authorrunning{T. Place and M. Zeitoun}
\subjclass{\ccsdesc{Theory of computation~Formal languages and automata theory}}
\keywords{Regular languages, separation, concatenation hierarchies, complexity}\funding{Both authors acknowledge support from the DeLTA project (ANR-16-CE40-0007).}
\theoremstyle{plain}
\newtheorem{proposition}[theorem]{Proposition}
\newtheorem{fact}[theorem]{Fact}
\begin{document}

\maketitle

\begin{abstract}
  We investigate the complexity of the separation problem associated to classes of regular languages. For a class \Cs, \Cs-separation takes two regular languages as input and asks whether there exists a third language in \Cs which includes the first and is disjoint from the second. First, in contrast with the situation for the classical membership problem, we prove that for most classes \Cs, the complexity of \Cs-separation does not depend on how the input languages are represented: it is the same for nondeterministic finite automata and monoid morphisms. Then, we investigate specific classes belonging to finitely based concatenation hierarchies. It was recently proved that the problem is always decidable for levels 1/2 and 1 of any such hierarchy (with inefficient algorithms). Here, we build on these results to show that when the alphabet is fixed, there are polynomial time algorithms for both levels. Finally, we investigate levels 3/2 and 2 of the famous Straubing-Thérien hierarchy. We show that separation is \pspace-complete for level 3/2 and between \pspace-hard and \exptime for level 2.
\end{abstract}

\section{Introduction}
\label{sec:intro}
For more than 50 years, a significant research effort in theoretical computer science was made to solve the membership problem for regular languages. This problem consists in determining whether a class of regular languages is decidable, that is, whether there is an algorithm inputing a regular language and outputing `yes' if the language belongs to the investigated class, and `no' otherwise.

Many results were obtained in a long and fruitful line of research. The most prominent one is certainly Schützenberger's theorem~\cite{sfo}, which gives such an algorithm for the class of star-free languages. For most interesting classes also, we know precisely the computational cost of the membership problem. As can be expected, this cost depends on the way the input language is given. Indeed, there are several ways to input a regular language. For instance, it can be given by a nondeterministic finite automaton (\nfa), or, alternately, by a morphism into a finite monoid. While obtaining an \nfa representation from a morphism into a monoid has only a linear cost, the converse direction is much more expensive: from an \nfa with $n$ states, the smallest monoid recognizing the same language may have an exponential number of elements (the standard construction yields $2^{n^2}$ elements). This explains why the complexity of the membership problem depends on the representation of the input. For instance, for the class of star-free languages, it is \pspace-complete if one starts from \nfas (and actually, even from \dfas~\cite{chofo}) while it is \nlog when starting from monoid morphisms.

Recently, another problem, called separation, has replaced membership as the cornerstone in the investigation of regular languages. It takes as input \emph{two} regular langages instead of one, and asks whether there exists a third language from the class under investigation including the first input language and having empty intersection with the second one. This problem has served recently as a major ingredient in the resolution of difficult membership problems, such as the so-called dot-depth two problem~\cite{pz:qalt:2014} which remained open for 40 years (see~\cite{pztale,PZ:generic_csr_tocs:18,jep-dd45} for recent surveys on the topic). Dot-depth two is a class belonging to a famous \emph{concatenation hierarchy} which stratifies the star-free languages: the dot-depth~\cite{BrzoDot}. A specific concatenation hierarchy is built in a generic way. One starts from a base class (level 0 of the hierarchy) and builds increasingly growing classes (called levels and denoted by 1/2, 1, 3/2, 2, $\dots$) by alternating two standard closure operations: polynomial and Boolean closure. Concatenation hierarchies account for a significant part of the open questions in this research area. The state of the art regarding separation is captured by only three results~\cite{pzbpol,pbp}: in finitely based concatenation hierarchies (i.e. those whose basis is a finite class) levels 1/2, 1 and 3/2 have decidable separation.  Moreover, using specific transfer results~\cite{pzsuccfull}, this can be pushed to the levels 3/2 and 2 for the two most famous finitely based hierarchies: the dot-depth~\cite{BrzoDot} and the Straubing-Thérien hierarchy~\cite{StrauConcat,TheConcat}.

Unlike the situation for membership and despite these recent decidability results for separability in concatenation hierarchies, the complexity of the problems and of the corresponding algorithms has not been investigated so far (except for the class of piecewise testable languages~\cite{martens,pvzmfcs13,Masopust18}, which is level 1 in the Straubing-Thérien hierarchy). The aim of this paper is to establish such complexity results. Our contributions are the following:
\begin{itemize}
\item We present a \textbf{generic} reduction, which shows that for many natural classes, the way the input is given (by \nfas or finite monoids) has \textbf{no impact} on the complexity of the separation problem. This is proved using two \logspace reductions from one problem to the other. This situation is surprising and opposite to that of the membership problem, where an exponential blow-up is unavoidable when going from \nfas to monoids.
\item Building on the results of~\cite{pzbpol}, we show that when the alphabet is fixed, there are polynomial time algorithms for levels 1/2 and 1 in any finitely based hierarchy. 	\item We investigate levels 3/2 and 2 of the famous Straubing-Thérien hierarchy, and we show that separation is \pspace-complete for level 3/2 and between \pspace-hard and \exptime for level 2. The upper bounds are based on the results of~\cite{pzbpol} while the lower bounds are based on independent reductions.
\end{itemize}

\noindent
{\bf Organization.} In Section~\ref{sec:prelims}, we give preliminary terminology on the objects investigated in the paper. Sections~\ref{sec:nfatomono}, \ref{sec:fixalph} and~\ref{sec:classic} are then devoted to the three above points. Due to space limitations, many proofs are postponed to the appendix.

\section{Preliminaries}
\label{sec:prelims}
In this section, we present the key objects of this paper. We define words and regular languages, classes of languages, the separation problem and finally, concatenation hierarchies.

\subsection{Words and regular languages}

An alphabet is a \emph{finite} set $A$ of symbols, called \emph{letters}. Given some alphabet $A$, we denote by $A^+$ the set of all nonempty finite words and by $A^{*}$ the set of all finite words over $A$ (\emph{i.e.}, $A^* = A^+ \cup \{\varepsilon\}$). If $u \in A^*$ and $v \in A^*$ we write $u \cdot v \in A^*$ or $uv \in A^*$ for the concatenation of $u$ and~$v$. A \emph{language} over an alphabet $A$ is a subset of $A^*$. Abusing terminology, if $u \in A^*$ is some word, we denote by $u$ the singleton language~$\{u\}$. It is standard to extend concatenation to languages: given $K,L \subseteq A^*$, we write~$KL = \{uv \mid u \in K \text{ and } v \in L\}$. Moreover, we also consider marked concatenation, which is less standard. Given $K,L \subseteq A^*$, \emph{a marked concatenation} of~$K$ with $L$ is a language of the form $KaL$, for some $a \in A$.

We consider \emph{regular languages}, which can be equivalently defined by \emph{regular expressions}, \emph{nondeterministic finite automata}~(\nfas), \emph{finite monoids} or \emph{monadic second-order logic} (\mso). In the paper, we investigate the separation problem which takes regular languages as input. Since we are focused on complexity, how we represent these languages in our inputs matters. We shall consider two kinds of representations: \nfas and monoids. Let us briefly recall these objects and fix the terminology (we refer the reader to~\cite{pingoodref} for details).

\medskip

\noindent

{\bf NFAs.} An \nfa is a tuple $\As = (A,Q,\delta,I,F)$ where $A$ is an alphabet, $Q$ a finite set of states, $\delta \subseteq Q \times A \times Q$ a set of transitions, $I \subseteq Q$ a set of initial states and $F \subseteq Q$ a set of final states.  The language $L(\As) \subseteq A^*$ consists of all words labeling a run from an initial state to a final state. The regular languages are exactly those which are recognized by an \nfa. Finally, we write ``\dfa'' for \emph{deterministic} finite automata, which are defined in the standard way.

\medskip

\noindent

{\bf Monoids.} We turn to the algebraic definition of regular languages. A \emph{monoid} is a set $M$ endowed with an associative multiplication $(s,t) \mapsto s\cdot t$ (also denoted by~$st$) having a neutral element $1_M$, \emph{i.e.}, such that ${1_M}\cdot s=s\cdot {1_M}=s$ for every $s \in M$. An \emph{idempotent} of a monoid $M$ is an element $e \in M$ such that $ee = e$.

Observe that $A^{*}$ is a monoid whose multiplication is concatenation (the neutral element is $\varepsilon$). Thus, we may consider monoid morphisms $\alpha: A^* \to M$ where $M$ is an arbitrary monoid. Given such a morphism, we say that a language $L\subseteq A^*$ is \emph{recognized} by~$\alpha$ when there exists a set $F \subseteq M$ such that $L = \alpha\inv(F)$. It is well-known that the regular languages are also those which are recognized by a morphism into a \emph{finite} monoid. When representing a regular language $L$ by a morphism into a finite monoid, one needs to give both the morphism $\alpha: A^* \to M$ (\emph{i.e.}, the image of each letter) and the set $F \subseteq M$ such that $L = \alpha\inv(F)$.

\subsection{Classes of languages and separation}

A class of languages \Cs is a correspondence $A \mapsto \Cs(A)$ which, to an alphabet $A$, associates a set of languages $\Cs(A)$ over $A$.

\begin{remark}

  When two alphabets $A,B$ satisfy $A \subseteq B$, the definition of classes does not require $\Cs(A)$ and $\Cs(B)$ to be comparable. In fact, it may happen that a particular language $L \subseteq A^* \subseteq B^*$ belongs to $\Cs(A)$ but not to $\Cs(B)$ (or the opposite).	For example, we may consider the class \Cs defined by $\Cs(A) =  \{\emptyset,A^*\}$ for every alphabet $A$. When $A \subsetneq B$, we have $A^* \in \Cs(A)$ while $A^* \not\in \Cs(B)$.

\end{remark}

We say that \Cs is a \emph{lattice} when for every alphabet $A$, we have $\emptyset,A^* \in \Cs(A)$ and $\Cs(A)$ is closed under finite union and finite intersection: for any $K,L \in \Cs(A)$, we have $K \cup L \in \Cs(A)$ and $K \cap L \in \Cs(A)$. Moreover, a \emph{Boolean algebra} is a lattice \Cs which is additionally closed under complement: for any $L \in \Cs(A)$, we have $A^* \setminus L \in \Cs(A)$. Finally, a class \Cs is \emph{quotienting} if it is closed under quotients. That is, for every alphabet $A$, $L \in \Cs(A)$ and word $u \in A^*$, the following properties~hold:
\[
  u^{-1}L \stackrel{\text{def}}{=}\{w\in A^*\mid uw\in L\} \text{\quad and\quad} Lu^{-1} \stackrel{\text{def}}{=}\{w\in A^*\mid wu\in L\}\text{\quad both belong to $\Cs(A)$}.
\]
All classes that we consider in the paper are (at least) \pvaris consisting of \emph{regular languages}. Moreover, some of them satisfy an additional property called \emph{closure under inverse image}.

Recall that $A^*$ is a monoid for any alphabet $A$. We say that a class \Cs is \emph{closed under inverse image} if for every two alphabets $A,B$, every monoid morphism $\alpha: A^* \to B^*$ and every language $L \in \Cs(B)$, we have $\alpha\inv (L) \in \Cs(A)$. A \pvari (resp. \vari) closed under inverse image is called a \emph{\pvarie} (resp. \emph{\varie}).

\medskip

\noindent

{\bf Separation.} Consider a class of languages \Cs. Given an alphabet $A$ and two languages $L_1,L_2 \subseteq A^*$, we say that $L_1$ is \Cs-separable from $L_2$ when there exists a third language $K \in \Cs(A)$ such that $L_1 \subseteq K$ and $L_2 \cap K = \emptyset$. In particular, $K$ is called a \emph{separator} in \Cs. The \Cs-separation problem is now defined as follows:

\begin{tabular}{ll}

  {\bf Input:} & An alphabet $A$ and two regular languages $L_1,L_2 \subseteq A^*$. \\

  {\bf Output:} & Is $L_1$ \Cs-separable from $L_2$ ?

\end{tabular}

\begin{remark}

  Separation generalizes the simpler \emph{membership problem}, which asks whether a single regular language belongs to \Cs. Indeed $L \in \Cs$ if and only if $L$ is \Cs-separable from $A^* \setminus L$ (which is also regular and computable from $L$).

\end{remark}

Most papers on separation are mainly concerned about decidability. Hence, they do not go beyond the above presentation of the problem  (see~\cite{martens,pz:qalt:2014,pzfo,pzbpol} for example). However, this paper specifically investigates complexity. Consequently, we shall need to be more precise and take additional parameters into account. First, it will be important to specify whether the alphabet over which the input languages is part of the input (as above) or a constant. When considering separation for some fixed alphabet $A$, we shall speak of ``$\Cs(A)$-separation''. When the alphabet is part of the input, we simply speak of ``\Cs-separation''.

Another important parameter is how the two input languages are represented. We shall consider \nfas and monoids. We speak of \emph{separation for \nfas and separation for monoids}. Note that one may efficiently reduce the latter to the former. Indeed, given a language $L \subseteq A^*$ recognized by some morphism $\alpha: A^* \to M$, it is simple to efficiently compute a \nfa with $|M|$ states recognizing $L$ (see~\cite{pingoodref} for example). Hence, we have the following lemma.

\begin{lemma} \label{lem:easyreduc}

  For any class \Cs, there is a \logspace reduction from \Cs-separation for monoids to \Cs-separation for \nfas.

\end{lemma}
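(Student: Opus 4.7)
The plan is to apply the standard right Cayley-graph construction that turns a monoid morphism into a \dfa whose states are the elements of the monoid. An instance of \Cs-separation for monoids consists of an alphabet $A$ together with two recognizing data $(\alpha_1,F_1)$ and $(\alpha_2,F_2)$, where $\alpha_i : A^* \to M_i$ is a morphism and $F_i \subseteq M_i$ satisfies $L_i = \alpha_i^{-1}(F_i)$. For each $i \in \{1,2\}$, I would output the \dfa $\As_i = (A, M_i, \delta_i, \{1_{M_i}\}, F_i)$ with transition relation $\delta_i = \{(m,a,m \cdot \alpha_i(a)) \mid m \in M_i,\ a \in A\}$. A one-line induction on $|w|$ shows that after reading $w$ the unique run of $\As_i$ ends in $\alpha_i(w)$, so $L(\As_i) = \alpha_i^{-1}(F_i) = L_i$. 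Hence $(\As_1, \As_2)$ is an equivalent instance of \Cs-separation for \nfas.

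It then remains to check that this transformation runs in \logspace. A finite monoid is given by its multiplication table and a morphism $\alpha_i$ by the list of images $\alpha_i(a)$, so each element $m \in M_i$ and each letter $a \in A$ is addressable with $O(\log(|M_i|+|A|))$ bits. To enumerate the transitions of $\As_i$, I would simply loop over the pair $(m,a)$ using two such counters and, for each, perform one lookup in the table of $\alpha_i$ to fetch $\alpha_i(a)$ and one lookup in the multiplication table of $M_i$ to fetch $m \cdot \alpha_i(a)$, then write the triple $(m,a,m\cdot\alpha_i(a))$ on the output tape. The lists of states, of initial states, and of final states are produced by a direct copy of a portion of the input. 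Nothing beyond the two loop counters and a constant number of pointers into the input is ever stored, so the whole reduction fits in logarithmic working space.

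There is no real obstacle: the only substantive observation is that the Cayley-style construction is genuinely local, in the sense that each output transition depends on a constant-size portion of the input that can be addressed in logarithmic space. This is precisely why the blowup going from monoids to \nfas, while possibly quadratic in $|M_i|$ in the size of the produced \nfa, costs nothing in terms of the space complexity of the reduction itself.
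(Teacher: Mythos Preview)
Your proposal is correct and follows exactly the approach the paper alludes to: the paper simply remarks that from a morphism $\alpha:A^*\to M$ one can efficiently compute an \nfa with $|M|$ states recognizing the same language (citing a standard reference), and your Cayley-graph \dfa is precisely that construction, with the \logspace analysis spelled out in detail where the paper leaves it implicit.
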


Getting an efficient reduction for the converse direction is much more difficult since going from \nfas (or even \dfas) to monoids usually involves an exponential blow-up. However, we shall see in Section~\ref{sec:nfatomono} that for many natural classes \Cs, this is actually possible.

\subsection{Concatenation hierarchies}

We now briefly recall the definition of concatenation hierarchies. We refer the reader to~\cite{PZ:generic_csr_tocs:18} for a more detailed presentation. A particular concatenation hierarchy is built from a starting class of languages \Cs, which is called its \emph{basis}. In order to get robust properties, we restrict~\Cs to be a \vari of regular languages. The basis is the only parameter in the construction. Once fixed, the construction is generic: each new level is built from the previous one by applying generic operators: either Boolean closure, or polynomial closure. Let us first define these two operators.

\medskip

\noindent

{\bf Definition.} Consider a class \Cs. We denote by \bool{\Cs} the \emph{Boolean closure} of \Cs: for every alphabet $A$, $\bool{\Cs}(A)$ is the least set containing $\Cs(A)$ and closed under Boolean operations. Moreover, we denote by \pol{\Cs} the \emph{polynomial closure} of \Cs: for every alphabet $A$, $\pol{\Cs}(A)$ is the least set containing $\Cs(A)$ and closed under union and marked concatenation (if $K,L \in \pol{\Cs}(A)$ and $a \in A$, then $K \cup L,KaL \in \pol{\Cs}(A)$).

Consider a \vari of regular languages \Cs. The concatenation hierarchy of basis \Cs is defined as follows. Languages are classified into levels of two kinds: full levels (denoted by 0, 1, 2,$\dots$) and half levels (denoted by 1/2, 3/2, 5/2,$\dots$). Level $0$ is the basis (\emph{i.e.}, \Cs) and for every $n \in \nat$,

\begin{itemize}

\item The \emph{half level} $n+1/2$ is the \emph{polynomial closure} of the previous full level, \emph{i.e.}, of level $n$.

\item The \emph{full level} $n+1$ is the \emph{Boolean closure} of the previous half level, \emph{i.e.}, of level $n+1/2$.

\end{itemize}

\begin{center}

  \begin{tikzpicture}[scale=.9]

    \node[anchor=east] (l00) at (0.0,0.0) {{\large $0$}};

    \node[anchor=east] (l12) at (2.0,0.0) {\large $1/2$};

    \node[anchor=east] (l11) at (4.0,0.0) {\large $1$};

    \node[anchor=east] (l32) at (6.0,0.0) {\large $3/2$};

    \node[anchor=east] (l22) at (8.0,0.0) {\large $2$};

    \node[anchor=east] (l52) at (10.0,0.0) {\large $5/2$};

    \draw[very thick,->] (l00) to node[above] {$Pol$} (l12);

    \draw[very thick,->] (l12) to node[below] {$Bool$} (l11);

    \draw[very thick,->] (l11) to node[above] {$Pol$} (l32);

    \draw[very thick,->] (l32) to node[below] {$Bool$} (l22);

    \draw[very thick,->] (l22) to node[above] {$Pol$} (l52);

    \draw[very thick,dotted] (l52) to ($(l52)+(1.0,0.0)$);

  \end{tikzpicture}

\end{center}

We write $\frac 12 \nat = \{0,1/2,1,2,3/2,3,\dots\}$ for the set of all possible levels in a concatenation hierarchy. Moreover, for any basis \Cs and $n \in \frac12 \nat$, we write $\Cs[n]$ for level $n$ in the concatenation hierarchy of basis \Cs. It is known that every half-level is a \pvari and every full level is a \vari (see~\cite{PZ:generic_csr_tocs:18} for a recent proof).

We are interested in finitely based concatenation hierarchies: if \Cs is the basis, then $\Cs(A)$ is finite for every alphabet $A$. Indeed, it was shown in~\cite{pzbpol} that for such hierarchies separation is always decidable for the levels 1/2 and 1  (in fact, while we do not discuss this in the paper, this is also true for level 3/2, see~\cite{pbp} for a preliminary version). In Section~\ref{sec:fixalph}, we build on the results  of~\cite{pzbpol} and show that when the alphabet is fixed, this can be achieved in polynomial time for both levels 1/2 and 1. Moreover, we shall also investigate the famous \emph{Straubing-Th\'erien} hierarchy in Section~\ref{sec:classic}. Our motivation for investigating this hierarchy in particular is that the results of~\cite{pzbpol} can be pushed to levels 3/2 and 2 in this special case.

\section{Handling \nfas}
\label{sec:nfatomono}
In this section, we investigate how the representation of input languages impact the complexity of separation. We prove that for many natural classes \Cs (including most of those considered in the paper), \Cs-separation has the same complexity for \nfas as for monoids. Because of these results, we shall be able to restrict ourselves to monoids in later sections.

\begin{remark}
  This result highlights a striking difference between separation and the simpler membership problem. For most classes \Cs, \Cs-membership is strictly harder for \nfas than for monoids. This is because when starting from a \nfa, typical membership algorithms require to either determinize \As or compute a monoid morphism recognizing $L(\As)$  which involves an exponential blow-up in both cases. Our results show that the situation differs for separation.
\end{remark}

We already have a generic efficient reduction from \Cs-separation for monoids to \Cs-separation for \nfas (see Lemma~\ref{lem:easyreduc}). Here, we investigate the opposite direction: given some class \Cs, is it possible to \emph{efficiently} reduce \Cs-separation for \nfas to \Cs-separation for monoids ? As far as we know, there exists no such reduction which is generic to all classes \Cs.

\begin{remark}
  There exists an \emph{inefficient} generic reduction from separation for \nfas to the separation for monoids. Given as input two \nfas $\As_1,\As_2$, one may compute monoid morphisms recognizing $L(\As_1)$ and $L(\As_2)$. This approach is not satisfying as it involves an exponential blow-up: we end-up with monoids $M_i$ of size $2^{|Q_i|^2}$ where $Q_i$ is the set of states of $\As_i$.
\end{remark}

Here, we present a set of conditions applying to a pair of classes $(\Cs,\Ds)$. When they are satisfied, there exists an efficient reduction from \Cs-separation for \nfas to \Ds-separation for monoids. By themselves, these conditions are abstract. However, we highlight two concrete applications. First, for every \pvarie \Cs, the pair $(\Cs,\Cs)$ satisfies the conditions. Second, for every finitely based concatenation hierarchies of basis \Cs, there exists another finite basis \Ds such that for every $n \in \frac12 \nat$, the pair $(\Cs[n],\Ds[n])$ satisfies the conditions

We first introduce the notions we need to present the reduction and the conditions required to apply it. Then, we state the reduction itself and its applications.

\subsection{Generic theorem}

We fix a special two letter alphabet $\frE = \{0,1\}$. For the sake of improved readability, we abuse terminology and assume that when considering an arbitrary alphabet $A$, it always has empty intersection with \frE. This is harmless as we may work up to bijective renaming.

We exhibit conditions applying to a pair of classes $(\Cs,\Ds)$. Then, we prove that they imply the existence of an efficient reduction from \Cs-separation for \nfas to \Ds-separation for monoids. This reduction is based on a construction which takes as input a \nfa \As (over some arbitrary alphabet $A$) and builds a modified version of the language $L(\As)$ (over $A \cup \frE$) which is recognized by a ``small'' monoid. Our conditions involve two kinds of hypotheses:
\begin{enumerate}
\item First, we need properties related to inverse image: ``\Ds must be an an extension of \Cs''.
\item The construction is parametrized by an object called ``tagging''. We need an algorithm which builds special taggings (with respect to \Ds) efficiently.
\end{enumerate}
We now make these two notions more precise. Let us start with extension.

\medskip
\noindent
{\bf Extensions.} Consider two classes \Cs and \Ds. We say that \Ds is an extension of \Cs when for every alphabet $A$, the two following conditions hold:
\begin{itemize}
\item If $\gamma: (A \cup \frE)^* \to A^*$ is the morphism defined by $\gamma(a) = a$ for $a \in A$ and $\gamma(b) = \varepsilon$ for $b \in \frE$, then for every $K \in \Cs(A)$, we have $\gamma\inv(K)  \in \Ds(A \cup \frE)$.
\item For every $u \in \frE^*$, if $\lambda_u: A^* \to (A \cup \frE)^*$ is the morphism defined by $\lambda_u(a) = au$ for $a \in A$, then for every $K \in \Ds(A \cup \frE)$, we have $\lambda_u\inv(K) \in \Cs(A)$.
\end{itemize}
\pVaries give an important example of extension. Since they are closed under inverse image, it is immediate that for every \pvarie \Cs, \Cs is an extension of itself.

\medskip
\noindent
{\bf Taggings.} A \emph{tagging} is a pair $P = (\tau: \frE^* \to T,G)$ where $\tau$ is a morphism into a finite monoid and $G \subseteq T$. We call $|G|$ the \emph{rank} of $P$ and $|T|$ its size. Moreover, given some \nfa $\As = (A,Q,\delta,I,F)$, $P$ is \emph{compatible with \As} when the rank $|G|$ is larger than $|\delta|$.

For our reduction, we shall require special taggings. Consider a class \Ds and a tagging $P = (\tau: \frE^* \to T,G)$. We say that $P$ \emph{fools} \Ds when, for every alphabet $A$ and every morphism $\alpha: (A \cup \frE)^* \to M$ into a finite monoid $M$, if all languages recognized by $\alpha$ belong to $\bool{\Ds}(A \cup \frE)$, then, there exists $s \in M$, such that for every $t \in G$, we have $w_t \in \frE^*$ which satisfies $\alpha(w_t) = s$ and $\tau(w_t) = t$.

Our reduction requires an efficient algorithm for computing taggings which fool the output class \Ds. Specifically, we say that a class \Ds is \emph{smooth} when, given as input $k \in \nat$, one may compute in \logspace (with respect to $k$) a tagging of rank at least $k$ which fools \Ds.

\medskip
\noindent
{\bf Main theorem.} We may now state our generic reduction theorem. The statement has two variants depending on whether the alphabet is fixed or not.

\begin{theorem} \label{thm:autoreduc}
  Let $\Cs,\Ds$ be \pvaris such that \Ds is smooth and extends \Cs. Then the two following properties hold:
  \begin{itemize}
  \item There is a \logspace reduction from \Cs-separation for \nfas to \Ds-separation for monoids.
  \item For every fixed alphabet $A$, there is a \logspace reduction from $\Cs(A)$-separation for \nfas to $\Ds(A \cup \frE)$-separation for monoids.
  \end{itemize}
\end{theorem}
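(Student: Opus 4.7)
The plan is to reduce an instance $(A,\As_1,\As_2)$ of \Cs-separation for \nfas to an instance of \Ds-separation for monoids over $A\cup\frE$ by turning each NFA into a tagged language whose nondeterminism is externalised into $\frE$-blocks. Using smoothness of \Ds, I first compute in \logspace a tagging $P=(\tau\colon\frE^*\to T,\,G)$ of rank at least $|\delta_1|+|\delta_2|$ that fools \Ds, and fix injective maps $\phi_i\colon\delta_i\to G$. I then set
\[
  \tilde L_i = \{\, a_1 w_1 \cdots a_n w_n : a_j\in A,\ w_j\in\frE^*,\ \exists\text{ accepting run } q_0\xrightarrow{a_1}q_1\cdots\xrightarrow{a_n}q_n \text{ in } \As_i \text{ with } \tau(w_j)=\phi_i(q_{j-1},a_j,q_j) \text{ for all } j\,\},
\]
and output monoid morphisms $\alpha_i\colon(A\cup\frE)^*\to M_i$ recognising $\tilde L_i$. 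Because each tag in $G$ uniquely identifies a single transition of $\As_i$, $M_i$ need only track the leading and trailing $\frE$-block tags, the first and last $A$-letters of the word and the source--target pair $(p,q)\in Q_i^2$ implemented by the internal tags (or an ``invalid'' flag), which keeps $|M_i|$ polynomial in $|\As_i|\cdot|T|$.

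For correctness, the forward direction is short: a separator $K\in\Cs(A)$ for $(L(\As_1),L(\As_2))$ lifts via the first extension property to $\gamma^{-1}(K)\in\Ds(A\cup\frE)$, which separates $\tilde L_1$ from $\tilde L_2$ since $\gamma(\tilde L_i)\subseteq L(\As_i)$. The reverse direction starts from a separator $K'\in\Ds(A\cup\frE)$ recognised by a morphism $\alpha\colon(A\cup\frE)^*\to M$ whose full range of recognised languages lies in $\bool{\Ds}$; the fooling property of~$P$ applied to $\alpha$ delivers a single element $s\in M$ and words $w_t\in\frE^*$ with $\alpha(w_t)=s$ and $\tau(w_t)=t$ for every $t\in G$. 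Choosing $u=w_{t_0}$ for any $t_0\in G$, the second extension property gives $\lambda_u^{-1}(K')\in\Cs(A)$, and for each $w=a_1\cdots a_n\in L(\As_i)$ with accepting run yielding tags $t_1,\dots,t_n$ the identity
\[
  \alpha(\lambda_u(w)) = \alpha(a_1)\,s\,\alpha(a_2)\,s\cdots\alpha(a_n)\,s = \alpha(a_1 w_{t_1}\cdots a_n w_{t_n})
\]
together with $a_1 w_{t_1}\cdots a_n w_{t_n}\in\tilde L_i$ forces $\lambda_u(w)\in K'$ to coincide with membership of this tagged witness in $K'$; hence $\lambda_u^{-1}(K')$ separates $L(\As_1)$ from $L(\As_2)$. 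The fixed-alphabet variant follows verbatim since fixing $A$ fixes $A\cup\frE$ as well.

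The main obstacle is keeping $|M_i|$ polynomial: this hinges on the determinism forced by the injectivity of $\phi_i$, which lets the monoid decode each $\frE$-block into a single transition of $\As_i$ and thereby sidesteps the exponential subset-construction blow-up typical of NFA-to-monoid conversions. A secondary subtlety in the backward direction is the existence, for every $K'\in\Ds(A\cup\frE)$, of a recognising morphism whose entire range of recognised languages lies in $\bool{\Ds}$—needed to invoke fooling—but this is a standard algebraic property available for the positive varieties we consider.
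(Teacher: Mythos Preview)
Your proposal is correct and follows essentially the same route as the paper. The paper packages the construction slightly differently---it first relabels $\As_i$ over the alphabet $A\times T$ so that every transition carries a distinct letter (this gives a monoid of size $\leq |Q|^2+2$ for the relabelled automaton), then pulls this back through a map $\mu\colon (A\frE^*)^*\to (A\times T)^*$ and prefixes by $\frE^*$, obtaining $L[\As,P]=\frE^*\cdot\mu^{-1}(K[\As,P])$; the recognising monoid is $T\cup(T\times N\times A\times T)$, tracking the leading $\frE$-tag, the resolved state-pair, the \emph{last} $A$-letter only, and the trailing $\frE$-tag. Your direct description of $\tilde L_i$ and of $M_i$ amounts to the same thing (the ``first $A$-letter'' component you list is unnecessary, and you omit the $\frE^*$ prefix, but neither affects the argument). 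Your correctness proof---$\gamma^{-1}(K)$ for the forward direction, and for the converse taking the syntactic morphism of a $\Ds$-separator, invoking fooling to get a uniform $s$, and pulling back along $\lambda_u$---matches the paper's Proposition~\ref{prop:variautored2} line by line.
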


We have two main applications of Theorem~\ref{thm:autoreduc} which we present at the end of the section. Let us first describe the reduction. As we explained, we use a construction building a language recognized by a ``small'' monoid out of an input \nfa and a compatible tagging.

\medskip

Consider a \nfa $\As = (A,Q,\delta,I,F)$ and let $P = (\tau: \frE^* \to T,G)$ be a compatible tagging (i.e. $|\delta| \leq |G|$). We associate a new language $L[\As,P]$ over the alphabet $A \cup \frE$ and show that one may efficiently compute a recognizing monoid whose size is polynomial with respect to $|Q|$ and the rank of $P$ (i.e $|G|$). The construction involves two steps. We first define an intermediary language $K[\As,P]$ over the alphabet $A \times T$ and then define $L[\As,P]$ from it.

We define $K[\As,P] \subseteq (A \times T)^*$ as the language recognized by a new \nfa $\As[P]$ which is built by relabeling the transitions of \As.  Note that the definition of $\As[P]$ depends on arbitrary linear orders on $G$ and $\delta$. We let $\As[P] = (A \times T,Q,\delta[P],I,F)$ where $\delta[P]$ is obtained by relabeling the transitions of \As as follows. Given $i \leq |\delta|$, if $(q_i,a_i,r_i) \in \delta$ is the $i$-th transition of \As, we replace it with the transition $(q_i,(a_i,t_i),r_i) \in \delta[P]$ where $t_i \in G$ is the $i$-th element of $G$ (recall that $|\delta| \leq |G|$ by hypothesis).

\begin{remark}
  A key property of $\As[P]$ is that, by definition, all transitions are labeled by distinct letters in $A \times T$. This implies that $K[\As,P] = L(\As[P])$ is recognized by a monoid of size at most $|Q|^2 + 2$.\end{remark}

We may now define the language $L[\As,P] \subseteq (A \cup \frE)^*$. Observe that we have a natural map $\mu: (A\frE^*)^* \to (A \times T)^*$. Indeed, consider $w \in (A\frE^*)^*$. Since $A \cap \frE = \emptyset$ (recall that this is a global assumption), it is immediate that $w$ admits a \emph{unique} decomposition $w = a_1w_1 \cdots a_n w_n$ with $a_1,\dots,a_n \in A$ and $w_1,\dots,w_n \in \frE^*$. Hence, we may define $\mu(w) = (a_1,P(w_1)) \cdots (a_n,P(w_n)) \in (A \times T)^*$. Finally, we define,
\[
  L[\As,P] = \frE^* \cdot  \mu\inv(K[\As,P]) \subseteq (A \cup \frE)^*
\]
We may now state the two key properties of $L[\As,P]$ upon which Theorem~\ref{thm:autoreduc} is based. It is recognized by a small monoid and the construction is connected to the separation.

\begin{proposition} \label{prop:variautored1}
  Given a \nfa $\As = (A,Q,\delta,I,F)$ and a compatible tagging $P$ of rank $n$, one may compute in \logspace a monoid morphism $\alpha: (A \cup \frE)^* \to M$ recognizing $L[\As,P]$ and such that $|M| \leq n + |A| \times n^2 \times (|Q|^2+2)$.
\end{proposition}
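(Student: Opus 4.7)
The plan is to combine a small transition monoid for $K[\As,P]$ with the tagging morphism $\tau\colon\frE^*\to T$ via a tracking construction that records just enough information about $\frE^*$-blocks to make the relabelling $\mu$ compositional.

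First I would observe that $\As[P]$ is \emph{deterministically labeled}: each letter in $A\times T$ labels at most one transition of $\delta[P]$, and letters $(a,t)$ with $t\notin G$ label none at all. Consequently, for any word $v\in (A\times T)^*$, the set of pairs $(q,q')$ such that $v$ labels a run of $\As[P]$ from $q$ to $q'$ is either empty or a singleton. Taking $N$ to be the transition monoid of $\As[P]$ (whose elements are these subsets of $Q\times Q$ together with the identity relation on $Q$), one gets $|N|\leq |Q|^2+2$, and the canonical morphism $\beta\colon (A\times T)^*\to N$ recognizes $K[\As,P]$ with accepting set $F_N=\{R\in N : R\cap (I\times F)\neq\emptyset\}$. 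Both $N$ and $\beta$ are readable in \logspace from $\As$ and $P$.

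Next I would exploit that every word $w\in (A\cup\frE)^*$ admits a unique decomposition $w = u_0 a_1 u_1 \cdots a_k u_k$ with $u_0,\dots,u_k\in\frE^*$ and $a_1,\dots,a_k\in A$, where $k=0$ means $w\in \frE^*$. Membership of $w$ in $L[\As,P]$ reduces to testing whether $\beta\bigl((a_1,\tau(u_1))\cdots (a_k,\tau(u_k))\bigr)$ lies in $F_N$, with the empty product corresponding to $\varepsilon\in K[\As,P]$, i.e.\ $I\cap F\neq\emptyset$. The compositional difficulty is that, under concatenation, the trailing $\frE$-block $u_k$ of the first factor merges with the leading block $u'_0$ of the second, so the pair at $a_k$ must be recomputed using $\tau(u_k u'_0)=\tau(u_k)\tau(u'_0)$. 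To handle this, I would let $M = T \,\sqcup\, (T\times N\times A\times T)$, sending $w\in \frE^*$ to $\tau(w)\in T$ and, if $k\geq 1$, setting
\[
  \alpha(w) = \bigl(\tau(u_0),\ \beta((a_1,\tau(u_1))\cdots (a_{k-1},\tau(u_{k-1}))),\ a_k,\ \tau(u_k)\bigr).
\]
The multiplication on $M$ is then forced: the product of $T$ on $T\times T$; the one-sided action of $T$ on the outer tag components of a tuple; and in the mixed case
\[
  (t_0,s,a,t_e)\cdot (t'_0,s',a',t'_e) \;=\; \bigl(t_0,\ s\cdot \beta((a, t_e t'_0))\cdot s',\ a',\ t'_e\bigr).
\]
Taking $F_M = \{t\in T : I\cap F\neq\emptyset\} \,\cup\, \{(t_0,s,a,t_e) : s\cdot \beta((a,t_e))\in F_N\}$ then gives $\alpha^{-1}(F_M)=L[\As,P]$.

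The main obstacle I anticipate is checking associativity of this multiplication and the morphism property of $\alpha$: both reduce to a four-case analysis, and the delicate case is when both factors contain $A$-letters, where the merged block $u_k u'_0$ must be shown to produce, via $\beta$, precisely the factor $\beta((a,t_e t'_0))$ inserted between the two middle $N$-components -- this is forced by the definition of $\mu$ but requires some careful bookkeeping. Once this is done, the size bound $|M|\leq |T| + |A|\cdot |T|^2\cdot |N|\leq n + |A|\cdot n^2\cdot (|Q|^2+2)$ is immediate (using the natural case $|T|=n$ that the smoothness algorithm outputs), and the \logspace computability is routine: each element of $M$ is a short tuple whose components are directly readable from $\As$ and $P$, so the multiplication table together with the values $\alpha(a)$ and $\alpha(e)$ for $a\in A$, $e\in \frE$ can be enumerated with only a logarithmic counter.
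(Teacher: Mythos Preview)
Your proposal is correct and essentially identical to the paper's proof: the paper also first builds the small monoid $N=Q^2\cup\{0_N,1_N\}$ recognizing $K[\As,P]$ from the deterministic labeling of $\As[P]$, then defines $M=T\cup(T\times N\times A\times T)$ with exactly the multiplication you wrote, and recovers $L[\As,P]$ via the same accepting set. Your parenthetical remark about $|T|=n$ is apt: the bound indeed requires $n$ to be the \emph{size} $|T|$ of the tagging (as the appendix version of the statement makes explicit), not its rank.
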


\begin{proposition} \label{prop:variautored2}
  Let $\Cs,\Ds$ be \pvaris such that \Ds extends \Cs. Consider two \nfas $\As_1$ and $\As_2$ over some alphabet $A$ and let $P$ be a compatible tagging that fools \Ds.  Then, $L(\As_1)$ is $\Cs(A)$-separable from $L(\As_2)$ if and only if $L[\As_1,P]$ is $\Ds(A \cup \frE)$-separable from $L[\As_2,P]$.
\end{proposition}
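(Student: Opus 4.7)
The plan is to prove the two implications separately, each using one half of the extension hypothesis; the fooling property is what powers the harder direction.

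The forward direction is a routine unfolding of definitions. Given a separator $K \in \Cs(A)$, I would take $\gamma\inv(K)$, where $\gamma \colon (A \cup \frE)^* \to A^*$ is the $\frE$-erasing morphism; the first clause of the extension condition places this language in $\Ds(A \cup \frE)$. It then suffices to observe that $\gamma$ sends each $L[\As_i, P]$ into $L(\As_i)$, which follows directly from the definitions of $\mu$, $\As_i[P]$, and $L[\As_i, P]$: an element of $L[\As_i, P]$ has the shape $v \, a_1 w_1 \cdots a_n w_n$ with $v, w_j \in \frE^*$ and $\mu(a_1 w_1 \cdots a_n w_n) \in L(\As_i[P])$, which lifts to an accepting run of $\As_i$ on $a_1 \cdots a_n = \gamma(w)$.

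For the converse, assume $H \in \Ds(A \cup \frE)$ separates $L[\As_1, P]$ from $L[\As_2, P]$. The plan is to produce a single $u \in \frE^*$ such that $\lambda_u\inv(H)$ becomes a $\Cs(A)$-separator of $L(\As_1), L(\As_2)$; membership $\lambda_u\inv(H) \in \Cs(A)$ is then automatic from the second extension clause. To choose $u$, I would pass to the syntactic morphism $\alpha \colon (A \cup \frE)^* \to M$ of $H$: since $\Ds$ is quotienting, every language recognized by $\alpha$ is a Boolean combination of quotients of $H$ and therefore lies in $\bool{\Ds}(A \cup \frE)$, which is exactly the hypothesis needed to invoke the fooling property of $P$. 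Fooling then supplies an element $s \in M$ and, for each $t \in G$, a word $w_t \in \frE^*$ with $\alpha(w_t) = s$ and $\tau(w_t) = t$; set $u$ to be any one of the $w_t$, so that $\alpha(u) = s$.

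To check that $\lambda_u\inv(H)$ separates, take $v = a_1 \cdots a_n \in L(\As_i)$, fix an accepting run of $\As_i$, and read off the tags $t_1, \ldots, t_n \in G$ that the relabeling defining $\As_i[P]$ attaches to its transitions. By construction $(a_1, t_1) \cdots (a_n, t_n) \in K[\As_i, P]$, hence the interleaved word $a_1 w_{t_1} \cdots a_n w_{t_n}$ belongs to $L[\As_i, P]$; and since $\alpha(w_{t_j}) = s = \alpha(u)$ for every $j$, this interleaved word has the same $\alpha$-image as $\lambda_u(v)$. Recognizability of $H$ by $\alpha$ then transfers $H$-membership from the interleaved word to $\lambda_u(v)$, which yields $\lambda_u(v) \in H$ when $i = 1$ and $\lambda_u(v) \notin H$ when $i = 2$. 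The main obstacle is precisely this simultaneous use of fooling across both automata: a single monoid element $s$ must absorb every tag that can appear in any accepting run of $\As_1$ or $\As_2$, and this is exactly what the compatibility condition $|G| \geq |\delta|$ together with the uniform $w_t$ produced by fooling guarantees, explaining why the construction is parameterized by a tagging in the first place.
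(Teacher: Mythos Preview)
Your proposal is correct and follows essentially the same approach as the paper's proof: the forward direction pulls back along the $\frE$-erasing morphism $\gamma$, and the backward direction passes to the syntactic morphism of the separator $H$, invokes the fooling hypothesis to obtain a single $s \in M$ with witnesses $w_t$ for every $t \in G$, and then pulls back along $\lambda_u$ for $u$ equal to any one of the $w_t$. Your justification for why all languages recognized by the syntactic morphism lie in $\bool{\Ds}$ (Boolean combinations of quotients of $H$) is in fact slightly more explicit than the paper's, which simply appeals to standard facts about quotienting Boolean algebras.
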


Let us explain why these two propositions imply Theorem~\ref{thm:autoreduc}. Let $\Cs,\Ds$ be \pvaris such that \Ds is smooth and extends \Cs. We show that the second assertion in the theorem holds (the first one is proved similarly).

Consider two \nfas $\As_i = (A,Q_j,\delta_j,I_j,F_j)$ for $j = 1,2$. We let $k= max(|\delta_1|,|\delta_2|)$. Since \Ds is smooth, we may compute (in \logspace) a tagging $P = (\tau: \frE^* \to T,G)$ of rank $|G| \geq k$. Then, we may use Proposition~\ref{prop:variautored1} to compute (in \logspace) monoid morphisms recognizing $L[\As_1,P]$ and $L[\As_2,P]$. Finally, by Proposition~\ref{prop:variautored2}, $L(\As_1)$ is $\Cs(A)$-separable from $L(\As_2)$ if and only if $L[\As_1,P]$ is $\Ds(A \cup \frE)$-separable from $L[\As_2,P]$. Altogether, this construction is a \logspace reduction to \Ds-separation for monoids which concludes the proof.

\subsection{Applications}

We now present the two main applications of Theorem~\ref{thm:autoreduc}. We start with the most simple one \pvaries. Indeed, we have the following lemma.

\begin{lemma} \label{lem:extendeasy}
  Let \Cs be a \pvarie. Then, \Cs is an extension of itself. Moreover, if $\bool{\Cs} \neq \reg$, then \Cs is smooth.
\end{lemma}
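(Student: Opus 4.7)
I treat the two claims separately.

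For ``\Cs is an extension of itself'', I unfold the definition of extension with $\Ds = \Cs$. Both required conditions are instances of closure of \Cs under inverse image by a specific monoid morphism (by $\gamma : (A \cup \frE)^* \to A^*$ for the first condition and by $\lambda_u : A^* \to (A \cup \frE)^*$ for the second), and hence hold because \Cs is a \pvarie.

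For smoothness, I must produce a \logspace procedure that, on input $k$ in unary, outputs a tagging $(\tau_k : \frE^* \to T_k, G_k)$ of rank at least $k$ that fools \Cs. I first reformulate the fooling condition. Let \Vs be the \varie of finite monoids associated with \bool{\Cs}. Given any $\alpha : (A \cup \frE)^* \to M$ whose recognized languages all lie in $\bool{\Cs}(A \cup \frE)$, closure of \bool{\Cs} under inverse image by the monoid inclusion $\frE^* \hookrightarrow (A \cup \frE)^*$ forces every language recognized by the restriction $\beta = \alpha|_{\frE^*}$ to lie in $\bool{\Cs}(\frE)$; hence the image submonoid $\beta(\frE^*) \subseteq M$ belongs to \Vs. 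Fooling therefore amounts to the purely \frE-local statement: for every morphism $\beta : \frE^* \to M_0$ with $M_0 \in \Vs$, there exists $s \in M_0$ contained in $\beta(\tau_k^{-1}(t))$ for each $t \in G_k$.

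The hypothesis $\bool{\Cs} \neq \reg$ means \Vs is a \emph{proper} sub-\varie of the \varie of all finite monoids. From this I extract, once and for all (i.e., as data depending only on \Cs and not on $k$), a morphism $\eta_0 : \frE^* \to N_0$ into a finite monoid whose image $\eta_0(\frE^*)$ does not belong to \Vs. The tagging $(\tau_k, G_k)$ is then built from $\eta_0$ by an elementary $k$-indexed construction that yields a polynomial-sized target $T_k$ together with an explicit set of $k$ tags $G_k$ whose $\tau_k$-preimages have rich $\eta_0$-content. The crux of the proof is verifying the fooling property using the properness of \Vs: for every $\beta : \frE^* \to M_0$ with $M_0 \in \Vs$ one must exhibit the common witness $s \in M_0$. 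The intended argument is by contradiction---if no such $s$ existed, the joint behaviour of $\beta$ on the preimages $\tau_k^{-1}(t)$ for $t \in G_k$ would let us reconstruct inside $M_0$ enough of the structure of $\eta_0(\frE^*)$ to contradict $M_0 \in \Vs$. Once this obstacle is cleared, the \logspace bound is immediate: the pair $(\eta_0, N_0)$ is a constant of the construction and only the $k$-dependent indexing into a polynomial-size product needs to be computed on the fly.
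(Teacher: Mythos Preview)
Your treatment of the extension claim is fine and matches the paper's.

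For smoothness, however, your proposal is not a proof but an outline with the two essential steps left as black boxes. You write that the tagging is ``built from $\eta_0$ by an elementary $k$-indexed construction'' and that fooling is verified by a contradiction in which one would ``reconstruct inside $M_0$ enough of the structure of $\eta_0(\frE^*)$''. Neither of these is specified, and neither is routine. Concretely: your $\eta_0$ is a single fixed morphism into a constant-size monoid $N_0$; you give no mechanism for producing, from this constant object and over the fixed two-letter alphabet $\frE$, a family of taggings whose rank grows with $k$ while the size stays polynomial. And your contradiction sketch misses the actual lever: the point is not to ``reconstruct'' $N_0$ inside $M_0$, but to exploit \emph{non-separability}.

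The paper's argument makes both steps explicit. From $\bool{\Cs}\neq\reg$ one obtains, over some alphabet $D$, a morphism $\eta:D^*\to R$ and two elements $r,r'\in R$ such that $\eta^{-1}(r)$ is not $\bool{\Cs}$-separable from $\eta^{-1}(r')$; this immediately yields the rank-$2$ base case, since any $\beta$ whose recognised languages lie in $\bool{\Cs}$ must send some word of each preimage to a common element. Rank $2^h$ is then obtained by taking $h$ disjoint copies of $D$ and forming the product morphism into $R^h$ (the copies being disjoint is what lets one choose the coordinates independently). Finally, the large alphabet is encoded back into $\frE$ by a fixed-length binary code, which preserves the fooling property and keeps the size polynomial. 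These three ingredients---non-separability for the base case, disjoint-alphabet product for amplification, binary encoding for the return to $\frE$---are precisely what your proposal needs and does not supply.
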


That a \pvarie is an extension of itself is immediate (one uses closure under inverse image). The difficulty is to prove smoothness. We may now combine Theorem~\ref{thm:autoreduc} with Lemma~\ref{lem:extendeasy} to get the following corollary.

\begin{corollary} \label{cor:autoreducvari}
  Let \Cs be a \pvarie such that $\bool{\Cs} \neq \reg$. There exists a \logspace reduction from \Cs-separation for \nfas to \Cs-separation for monoids.
\end{corollary}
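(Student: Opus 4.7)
The plan is to obtain the corollary as an immediate instantiation of Theorem~\ref{thm:autoreduc}, taking $\Ds = \Cs$. What needs to be checked is simply that the pair $(\Cs,\Cs)$ satisfies the two hypotheses of the theorem, namely that \Cs extends \Cs and that \Cs is smooth. Both facts are exactly what Lemma~\ref{lem:extendeasy} provides under the assumption that \Cs is a \pvarie with $\bool{\Cs}\neq\reg$. Once this is in place, the first item of Theorem~\ref{thm:autoreduc} directly yields the desired \logspace reduction from \Cs-separation for \nfas to \Cs-separation for monoids.

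In more detail, I would first invoke Lemma~\ref{lem:extendeasy}. The extension part is essentially a formality: since \Cs is a \pvarie, it is closed under inverse image of monoid morphisms, so for both the projection $\gamma:(A\cup\frE)^*\to A^*$ and each morphism $\lambda_u:A^*\to(A\cup\frE)^*$ the required stability of (the appropriate side of) the class under inverse image follows at once from the definition of a \pvarie. The nontrivial half of the lemma is smoothness, but I am allowed to assume it; that is where the hypothesis $\bool{\Cs}\neq\reg$ is actually used, in order to produce, in \logspace from~$k$, a tagging $P=(\tau:\frE^*\to T,G)$ of rank at least $k$ that fools \Cs.

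Having both properties in hand, I plug $\Ds=\Cs$ into Theorem~\ref{thm:autoreduc}. Its first conclusion states verbatim that there is a \logspace reduction from \Cs-separation for \nfas to \Ds-separation for monoids; with $\Ds=\Cs$ this is exactly the statement of the corollary.

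There is really no obstacle to overcome here, since all the machinery has already been developed. The only point worth emphasising is that the role of the hypothesis $\bool{\Cs}\neq\reg$ is confined to guaranteeing smoothness via Lemma~\ref{lem:extendeasy}; it is needed because, if $\bool{\Cs}$ were the class of all regular languages, one could never construct a tagging that fools \Cs (any non-trivial image under $\tau$ could be distinguished by some language in the class), and the generic construction underlying Theorem~\ref{thm:autoreduc} would collapse. In all other cases, the combination of self-extension and smoothness suffices to transport the \nfas-to-monoids reduction from the abstract setting of Theorem~\ref{thm:autoreduc} to the concrete case of a single \pvarie.
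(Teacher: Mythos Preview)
Your proposal is correct and matches the paper's own argument: the corollary is obtained by combining Theorem~\ref{thm:autoreduc} (with $\Ds=\Cs$) and Lemma~\ref{lem:extendeasy}, exactly as you describe. The paper likewise notes that self-extension is immediate from closure under inverse image and that smoothness is the nontrivial part requiring $\bool{\Cs}\neq\reg$.
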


Corollary~\ref{cor:autoreducvari} implies that for any \pvarie \Cs, the complexity of \Cs-separation is the same for monoids and \nfas. We illustrate this with an example: the \emph{star-free languages}.

\begin{example}
  Consider the star-free languages (\sfr): for every alphabet $A$, $\sfr(A)$ is the least set of languages containing all singletons $\{a\}$ for $a\in A$ and closed under Boolean operations and concatenation. It is folklore and simple to verify that \sfr is a \varie. It is known that \sfr-membership is in \nlog for monoids (this is immediate from Sch\"utzenberger's theorem~\cite{sfo}). On the other hand, \sfr-membership is \pspace-complete for \nfas. In fact, it is shown in~\cite{chofo} that \pspace-completeness still holds for \emph{deterministic} finite automata (\dfas).

  For \sfr-separation, we may combine Corollary~\ref{cor:autoreducvari} with existing results to obtain that the problem is in \exptime and \pspace-hard for both \nfas and monoids. Indeed, the \exptime upper bounds is proved in~\cite{pzfoj} for monoids and we may lift it to \nfas with Corollary~\ref{cor:autoreducvari}. Finally, the \pspace lower bound follows from~\cite{chofo}: \sfr-membership is \pspace-hard for \dfas. This yields that \sfr-separation is \pspace-hard for both \dfas and \nfas (by reduction from membership to separation which is easily achieved in \logspace when starting from a \dfa). Using Corollary~\ref{cor:autoreducvari} again, we get that \sfr-separation is \pspace-hard for monoids as well. \qed
\end{example}

We turn to our second application: finitely based concatenation hierarchies. Consider a finite \vari \Cs. We associate another finite \vari $\Cs_\frE$ which we only define for alphabets of the form $A \cup \frE$ (this is harmless: $\Cs_\frE$ is used as the output class of our reduction). Let $A$ be an alphabet and consider the morphism $\gamma: (A \cup \frE)^* \to A^*$ defined by $\gamma(a) = a$ for $a \in A$ and $\gamma(0) = \gamma(1) = \varepsilon$. We define,
\[
  \Cs_{\frE}(A \cup \frE) = \{\gamma\inv(L) \mid L \in \Cs(A)\}
\]
It is straightforward to verify that $\Cs_{\frE}$ remains a finite \vari. Moreover, we have the following lemma.

\begin{lemma} \label{lem:extendeasy2}
  Let \Cs be a finite \vari. For every $n \in \frac12 \nat$, $\Cs_{\frE}[n]$ is smooth and an extension of $\Cs[n]$.
\end{lemma}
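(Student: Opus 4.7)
I split the proof into two parts, each by induction on $n \in \frac{1}{2}\nat$: extension and smoothness.

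\textbf{Extension.} I verify the two defining conditions. For condition~(1) ($\gamma\inv(L) \in \Cs_{\frE}[n](A \cup \frE)$ whenever $L \in \Cs[n](A)$), the base case $n = 0$ is the definition of $\Cs_{\frE}$. Inductively, $\gamma\inv$ commutes with Boolean operations; for $a \in A$ one has the identity $\gamma\inv(KaL) = \gamma\inv(K) \cdot a \cdot \gamma\inv(L)$, since $\gamma$ is the identity on $A$ (so every occurrence of $a$ in $\gamma(w)$ traces back to an occurrence of $a$ in $w$). This suffices because the marked concatenations building $\pol{\Cs[n]}(A)$ only use letters of $A$. For condition~(2) ($\lambda_u\inv(L) \in \Cs[n](A)$ whenever $L \in \Cs_{\frE}[n](A \cup \frE)$ and $u \in \frE^*$), the base case follows from $\gamma \circ \lambda_u = \mathrm{id}_{A^*}$. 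Full levels are handled by $\lambda_u\inv$ commuting with Boolean operations. For half levels, I decompose $\lambda_u\inv(KbL)$ using $\lambda_u(a_1 \cdots a_n) = a_1 u \cdots a_n u$: the marker $b$ is either some $a_i$ (case $b \in A$) or sits inside some copy of $u$ (case $b \in \frE$), yielding a finite union of expressions $\lambda_u\inv(K') \cdot c \cdot \lambda_u\inv(L')$ with $c \in A$ and $K', L'$ appropriate quotients of $K, L$. Quotient closure of $\Cs_{\frE}[n]$ (as a \pvari) plus the induction hypothesis then puts the result in $\pol{\Cs[n]}(A) = \Cs[n+1/2](A)$.

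\textbf{Smoothness.} I must produce, in \logspace from $k$, a tagging of rank $\geq k$ that fools $\Cs_{\frE}[n]$. The starting observation, which the extension analysis also yields along the way, is that $\Cs_{\frE}[n] \subseteq \Cs[n]$ at every alphabet (by monotonicity of Pol and Bool, together with the base case $\gamma\inv(\Cs(A)) \subseteq \Cs(A \cup \frE)$, which holds because $\Cs$ is closed under inverse image). Hence, any $\alpha: (A \cup \frE)^* \to M$ whose recognized languages lie in $\bool{\Cs_{\frE}[n]}$ recognizes only languages in $\bool{\Cs[n]}$, so the submonoid $\alpha(\frE^*) \subseteq M$ satisfies the equations of the (ordered) monoid variety $\mathbf{V}_n$ associated with $\Cs[n]$. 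I then build a tagging $\tau: \frE^* \to T$, where $T$ and its distinguished subset $G$ are tailored to $\mathbf{V}_n$ and $k$, in such a way that for every admissible $\alpha$ there is an element $s \in M$ whose $\alpha$-fiber intersects every $\tau$-class indexed by $G$. Concretely, a pumping argument in $\mathbf{V}_n$ provides a power $N = N(|M|, n)$ such that the set of exponents $m$ with $\alpha(1^m) = \alpha(1^N)$ forms a full arithmetic progression; one arranges $T$ (e.g.\ as a suitable product of cyclic groups, or a quotient thereof) so that this progression meets every $\tau$-class in~$G$, and sets $s := \alpha(1^N)$.

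The main obstacle is precisely this last step: making the tagging $T$ robust against every pumping period that $\mathbf{V}_n$ can force on $\alpha$, while keeping both $T$ and $\tau$ \logspace-computable from $k$ and $n$. For the concrete classes that appear later---levels $1/2$ and $1$ of any finitely based hierarchy, and levels $3/2$ and $2$ of the Straubing-Th\'erien hierarchy---the required equations (e.g.\ $\Jrel$-triviality for piecewise testable languages) make the tagging construction fully explicit. A uniform construction valid for an arbitrary finite \vari $\Cs$ is the substantive technical content of the lemma, and is expected to parallel the smoothness construction used to prove Lemma~\ref{lem:extendeasy} for \pvaries.
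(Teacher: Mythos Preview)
Your treatment of the \emph{extension} half is sound and matches the paper's approach: induction on $n$, using that $\gamma^{-1}$ and $\lambda_u^{-1}$ commute with Boolean operations and can be pushed through marked concatenations (with quotients absorbing the pieces of $u$).

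The \emph{smoothness} half, however, has a genuine gap, and your proposed route cannot be completed as stated. Your plan is to pass through the inclusion $\Cs_{\frE}[n] \subseteq \Cs[n]$ and use the equations of the monoid variety $\mathbf{V}_n$ corresponding to $\Cs[n]$. But $\Cs$ is an \emph{arbitrary} finite \vari: it may contain non-star-free languages (e.g.\ take $\Cs = \text{MOD}_p$, the languages recognized by elementary abelian $p$-groups, which is a finite \vari). Then $\mathbf{V}_n$ is not aperiodic, and the ``pumping period'' you obtain for $\alpha(1)$ can be any divisor of $|M|$. No tagging built in advance from $k$ alone can be guaranteed to meet every class along an arithmetic progression of unknown step; your suggestion of a product of cyclic groups fails precisely here, and the hoped-for parallel with Lemma~\ref{lem:extendeasy} does not go through either, since that proof relies on $\bool{\Cs}$ being a \varie (closed under inverse image), which $\bool{\Cs_{\frE}[n]}$ is not.

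The paper's argument is both simpler and exploits the one piece of structure you discarded by passing to $\Cs[n]$: the basis $\Cs_{\frE}$ is, \emph{by construction}, blind to $\frE$-letters. Concretely, every language in $\Cs_{\frE}(A\cup\frE)$ has the form $\gamma^{-1}(L)$, so its intersection with $\frE^*$ is $\emptyset$ or $\frE^*$. An easy induction on $n$ then shows that for every $L \in \Cs_{\frE}[n](A\cup\frE)$, the restriction $L \cap \frE^*$ lies in level $n$ of the Straubing--Th\'erien hierarchy over $\frE$; in particular it is star-free. Hence any $\alpha$ recognizing only $\bool{\Cs_{\frE}[n]}$-languages has $\alpha|_{\frE^*}$ aperiodic, so $\alpha(0)^N = \alpha(0)^{N+1}$ for some $N$. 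The tagging is then simply $\tau_k:\frE^* \to \integ/k\integ$ sending both letters to $1$ (length modulo $k$), with $G = \integ/k\integ$: set $s = \alpha(0)^N$ and, for each residue $i$, take $w_i = 0^{m}$ with $m \geq N$ and $m \equiv i \pmod k$. This is trivially \logspace-computable and independent of $n$.
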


In view of Theorem~\ref{thm:autoreduc}, we get the following corollary which provides a generic reduction for levels within finitely based hierarchies.

\begin{corollary} \label{cor:autoreduc}
  Let \Cs be a finite basis and $n \in \frac12 \nat$. There exists a \logspace reduction from $\Cs[n]$-separation for \nfas to $\Cs_{\frE}[n]$-separation for monoids.
\end{corollary}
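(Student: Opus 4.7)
The plan is essentially to assemble the pieces already laid out: Corollary~\ref{cor:autoreduc} should follow as a direct consequence of Theorem~\ref{thm:autoreduc} once we verify its hypotheses for the pair $(\Cs[n],\Cs_\frE[n])$. Hence my proof proposal is a short invocation argument.

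First I would recall, as noted in the paragraph on concatenation hierarchies, that for any \vari \Cs and any $n \in \frac12\nat$, the level $\Cs[n]$ is a \pvari (full levels are even \varies). Since $\Cs_\frE$ is itself a finite \vari (as observed right after its definition), the same applies to $\Cs_\frE[n]$. Thus both $\Cs[n]$ and $\Cs_\frE[n]$ satisfy the structural hypothesis of Theorem~\ref{thm:autoreduc}.

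Next, I would appeal to Lemma~\ref{lem:extendeasy2}, which gives precisely the two remaining hypotheses needed to apply Theorem~\ref{thm:autoreduc}: namely, that $\Cs_\frE[n]$ is \emph{smooth} and that $\Cs_\frE[n]$ is an \emph{extension} of $\Cs[n]$. With these in hand, the first assertion of Theorem~\ref{thm:autoreduc}, applied to the pair $(\Cs[n],\Cs_\frE[n])$, yields a \logspace reduction from $\Cs[n]$-separation for \nfas to $\Cs_\frE[n]$-separation for monoids, which is exactly the desired statement.

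There is essentially no obstacle at this stage: the non-trivial work is packaged in Theorem~\ref{thm:autoreduc} (the construction of $L[\As,P]$ together with Propositions~\ref{prop:variautored1} and~\ref{prop:variautored2}) and in Lemma~\ref{lem:extendeasy2} (the smoothness of $\Cs_\frE[n]$ and the fact that $\gamma^{-1}$ and $\lambda_u^{-1}$ map $\Cs[n]$-languages and $\Cs_\frE[n]$-languages to the expected side of the hierarchy, which is what makes $\Cs_\frE[n]$ an extension of $\Cs[n]$). The only minor care point is to check that the ambient convention $A \cap \frE = \emptyset$ used in the reduction can always be ensured (up to bijective renaming of letters, as stated at the beginning of Section~\ref{sec:nfatomono}), so that the reduction output is well-formed over the alphabet $A \cup \frE$ as required by Theorem~\ref{thm:autoreduc}. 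Once this is observed, the corollary follows immediately.
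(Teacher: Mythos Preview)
Your proposal is correct and follows essentially the same approach as the paper: the corollary is derived directly from Theorem~\ref{thm:autoreduc} after invoking Lemma~\ref{lem:extendeasy2} to verify that $\Cs_\frE[n]$ is smooth and extends $\Cs[n]$. The additional checks you spell out (that both classes are \pvaris and that the convention $A \cap \frE = \emptyset$ can be assumed) are routine but appropriate.
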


\section{Generic upper bounds for low levels in finitely based hierarchies}
\label{sec:fixalph}
In this section, we present generic complexity results for the fixed alphabet separation problem associated to the lower levels in finitely based concatenation hierarchies. More precisely, we show that for every finite basis \Cs and every alphabet $A$, $\Cs[1/2](A)$- and $\Cs[1](A)$-separation are respectively in \nlog and in \ptime. These upper bounds hold for both monoids and \nfas: we prove them for monoids and lift the results to \nfas using the reduction of Corollary~\ref{cor:autoreduc}.

\begin{remark}
  We do \textbf{not} present new proofs for the decidability of $\Cs[1/2]$- and $\Cs[1]$-separation when \Cs is a finite \vari. These are difficult results which are proved in~\cite{pzbpol}. Instead, we recall the (inefficient) procedures which were originally presented in~\cite{pzbpol} and carefully analyze and optimize them in order to get the above upper bounds.
\end{remark}

For the sake of avoiding clutter, we fix an arbitrary finite \vari \Cs and an alphabet $A$ for the section.

\subsection{Key sub-procedure}

The algorithms $\Cs[1/2](A)$- and $\Cs[1](A)$-separation presented in~\cite{pzbpol} are based on a common sub-procedure. This remains true for the improved algorithms which we present in the paper. In fact, this sub-procedure is exactly what we improve to get the announced upper complexity bounds. We detail this point here.  Note that the algorithms require considering special monoid morphisms (called ``\Cs-compatible'') as input. We first define this notion.

\medskip
\noindent
{\bf \Cs-compatible morphisms.} Since \Cs is finite, one associates a classical equivalence $\sim_\Cs$ defined on $A^*$. Given $u,v \in A^*$, we write $u \sim_\Cs v$ if and only if $u \in L \ \Leftrightarrow\ v \in L$ for all $L \in \Cs(A)$. Given $w \in A^*$, we write $\ctype{w} \subseteq A^*$ for its $\sim_\Cs$-class. Since \Cs is a finite \vari, $\sim_\Cs$ is a congruence of finite index for concatenation (see~\cite{PZ:generic_csr_tocs:18} for a proof). Hence, the quotient ${A^*}/{\sim_\Cs}$ is a monoid and the map $w \mapsto \ctype{w}$ a morphism.

Consider a morphism $\alpha: A^* \to M$ into a finite monoid $M$. We say that $\alpha$ is \Cs-compatible when there exists a \emph{monoid morphism} $s \mapsto \ctype{s}$ from $M$ to ${A^*}/{\sim_\Cs}$ such that for every $w \in A^*$, we have $\ctype{w} = \ctype{\alpha(w)}$. Intuitively, the definition means that $\alpha$ ``computes'' the $\sim_\Cs$-classes of words in $A^*$. The following lemma is used to compute \Cs-compatible morphisms (note that the \logspace bound holds because \Cs and $A$ is fixed).

\begin{lemma} \label{lem:compat}
  Given two morphisms recognizing regular languages $L_1,L_2 \subseteq A^*$ as input, one may compute in \logspace a \Cs-compatible morphism which recognizes both $L_1$ and $L_2$.
\end{lemma}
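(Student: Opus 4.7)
The plan is the standard product construction. Let $\alpha_1 \colon A^* \to M_1$ and $\alpha_2 \colon A^* \to M_2$ be the two input morphisms with accepting sets $F_1 \subseteq M_1$ and $F_2 \subseteq M_2$ such that $L_i = \alpha_i\inv(F_i)$, and let $\eta \colon A^* \to \cmono$ denote the canonical morphism $w \mapsto \ctype{w}$. Since \Cs is a finite \vari and $A$ is fixed, \cmono is a fixed finite monoid; its multiplication table and the images $\ctype{a}$ for $a \in A$ depend only on the fixed pair $(\Cs,A)$ and may be treated as hardcoded constants. I then define
\[
  M \;=\; M_1 \times M_2 \times \cmono
\]
with componentwise multiplication, together with the morphism $\alpha \colon A^* \to M$ determined on letters by $\alpha(a) = (\alpha_1(a), \alpha_2(a), \ctype{a})$, so that $\alpha(w) = (\alpha_1(w), \alpha_2(w), \ctype{w})$ for every $w \in A^*$.

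Correctness then reduces to three short verifications. First, $\alpha$ recognizes both input languages: setting $F_1' = F_1 \times M_2 \times \cmono$ and $F_2' = M_1 \times F_2 \times \cmono$, we have $\alpha\inv(F_i') = \alpha_i\inv(F_i) = L_i$ for $i = 1,2$. Second, $\alpha$ is \Cs-compatible: the third-coordinate projection $\pi \colon M \to \cmono$ is a monoid morphism, and taking $\ctype{s} := \pi(s)$ for each $s \in M$ yields $\ctype{\alpha(w)} = \pi(\alpha(w)) = \ctype{w}$ for every $w \in A^*$, which is exactly the compatibility condition. Third, outputting $\alpha$ amounts to listing the triples of $M$, writing its componentwise multiplication table, recording the images $\alpha(a)$ for $a \in A$, and giving the two accepting sets $F_1', F_2'$.

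The \logspace analysis hinges on the fact that $|\cmono|$ is a constant determined by the fixed data $(\Cs,A)$. Consequently $|M| = O(|M_1| \cdot |M_2|)$ and its multiplication table has only $O(|M_1|^2 \cdot |M_2|^2)$ entries, all polynomial in the input size. Each entry is produced from two lookups in the input multiplication tables of $M_1$ and $M_2$ and one constant-time lookup in the hardcoded table for \cmono, which requires $O(\log(|M_1| \cdot |M_2|))$ workspace. The letter images $\alpha(a)$ and the two accepting sets $F_i'$ are generated by analogous enumerations within the same workspace bound.

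There is essentially no deep obstacle here: the construction is textbook and its correctness immediate. The single point requiring care is the \logspace bound, which relies entirely on $\cmono$ being of constant size for fixed $(\Cs,A)$; if \Cs or $A$ were part of the input, the same construction would still be correct but would only yield a polynomial-time algorithm, because one would first need to compute \cmono from \Cs and $A$ and then carry along a non-constant third component when producing the multiplication table of $M$.
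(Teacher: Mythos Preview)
Your proof is correct and essentially identical to the paper's own argument: both take $M = M_1 \times M_2 \times \cmono$ with the componentwise morphism $\alpha(w) = (\alpha_1(w),\alpha_2(w),\ctype{w})$, verify recognition and \Cs-compatibility via the third projection, and observe that the \logspace bound holds because $\cmono$ has constant size when $\Cs$ and $A$ are fixed. Your write-up is in fact more detailed than the paper's on the accepting sets and the \logspace bookkeeping.
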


In view of Lemma~\ref{lem:compat}, we shall assume in this section without loss of generality that our input in separation for monoids is a single \Cs-compatible morphism recognizing the two languages that need to be separated.

\medskip
\noindent
{\bf Sub-procedure.} Consider two \Cs-compatible morphisms $\alpha: A^* \to M$ and $\beta: A^* \to N$.  We say that a subset of $N$ is \emph{good} (for $\beta$) when it contains $\beta(A^*)$ and is closed under multiplication. For every good subset $S$ of $N$, we associate a subset of $M \times 2^N$. We then consider the problem of deciding whether specific elements belong to it (this is the sub-procedure used in the separation algorithms).

\begin{remark}
  The set $M \times 2^N$ is clearly a monoid for the componentwise multiplication. Hence we may multiply its elements and speak of idempotents in $M \times 2^N$.
\end{remark}

An \emph{$(\alpha,\beta,S)$-tree} is an unranked ordered tree. Each node $x$ must carry a label $lab(x) \in M \times 2^N$ and there are three possible kinds of nodes:
\begin{itemize}
\item {\bf Leaves}: $x$ has no children and $lab(x) = (\alpha(w),\{\beta(w)\})$ for some $w \in A^*$.
\item {\bf Binary}: $x$ has exactly two children $x_1$ and $x_2$. Moreover, if $(s_1,T_1) = lab(x_1)$ and $(s_2,T_2) = lab(x_2)$, then $lab(x) = (s_1s_2,T)$ with
  $T \subseteq T_1T_2$.
\item {\bf $S$-Operation}: $x$ has a unique child $y$. Moreover, the following must be satisfied:
  \begin{enumerate}
  \item The label $lab(y)$ is an idempotent $(e,E) \in M \times 2^N$.
  \item $lab(x) = (e,T)$ with $T \subseteq E \cdot \{t \in S \mid \ctype{e} = \ctype{t} \in S\} \cdot E$.
  \end{enumerate}
\end{itemize}
We are interested in deciding whether elements in $M \times 2^N$ are the root label of some computation tree. Observe that computing all such elements is easily achieved with a least fixpoint procedure: one starts from the set of leaf labels and saturates this set with three operations corresponding to the two kinds of inner nodes. This is the approach used in~\cite{pzbpol} (actually, the set of all root labels is directly defined as a least fixpoint and $(\alpha,\beta,S)$-trees are not considered). However, this is costly since the computed set may have exponential size with respect to $|N|$. Hence, this approach is not suitable for getting efficient algorithms. Fortunately, solving $\Cs[1/2](A)$- and $\Cs[1](A)$-separation does not require to have the whole set of possible root labels in hand. Instead, we shall only need to consider the elements $(s,T) \in M \times 2^N$ which are the root label of some tree \textbf{and} such that $T$ is a \textbf{singleton set}. It turns out that these specific elements can be computed efficiently. We state this in the next theorem which is the key technical result and main contribution of this section.

\begin{theorem} \label{thm:efficient}
  Consider two \Cs-compatible morphisms $\alpha: A^* \to M$ and $\beta: A^* \to N$ and a good subset $S \subseteq N$. Given $s \in M$ and $t \in N$, one may test in \nlog with respect to $|M|$ and $|N|$ whether there exists an $(\alpha,\beta,S)$-tree with root label $(s,\{t\})$.
\end{theorem}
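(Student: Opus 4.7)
The approach is to reduce the question to reachability in a polynomial-size graph, then exhibit an $\mathrm{NL}$ decision procedure. The starting point is a \emph{subset monotonicity} lemma: if $(s,T)$ is the root label of some $(\alpha,\beta,S)$-tree and $\emptyset \neq T' \subseteq T$, then $(s,T')$ is also the root label of some $(\alpha,\beta,S)$-tree. This is immediate: each of the three labelling rules (leaf, binary, $S$-operation) imposes only an upper-bound constraint on the root's $T$-component, so one may safely replace the root label $(s,T)$ by $(s,T')$, leaving all subtrees untouched. Consequently, $(s,\{t\})$ is achievable if and only if some achievable label $(s,T)$ satisfies $t \in T$.

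I would next introduce two polynomial-size auxiliary relations: $V \subseteq M \times N$ consisting of all pairs $(s,t)$ such that $(s,\{t\})$ is achievable, and $I \subseteq M \times N \times N$ consisting of all triples $(e,t_1,t_2)$ with $e^2 = e$ such that some achievable idempotent $(e,E)$ satisfies $t_1,t_2 \in E$. The claim is that $(V,I)$ is the least solution to a finite system of closure rules mirroring the tree operations. For $V$: the base elements are the pairs $(\alpha(w),\beta(w))$ for $w \in A^*$, $V$ is closed under componentwise multiplication, and $(e,t_1 u t_2) \in V$ whenever $(e,t_1,t_2) \in I$ and $u \in S$ satisfies $\ctype{u} = \ctype{e}$. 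For $I$: the base triples are the idempotent ``doubled leaves'' $(\alpha(w),\beta(w),\beta(w))$ with $\alpha(w)^2=\alpha(w)$ and $\beta(w)^2=\beta(w)$; and $I$ is closed under componentwise multiplication of triples and an analogous pair-version of the $S$-operation.

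Given these closure rules, the $\mathrm{NL}$ algorithm proceeds by nondeterministic reachability. To decide $(s,t) \in V$, it guesses a sequence of generators whose componentwise product equals $(s,t)$, maintaining only the accumulated element (of bit-size $O(\log(|M||N|))$). Each generator is either a base pair $(\alpha(a),\beta(a))$ or an $S$-operation output $(e,t_1 u t_2)$; in the latter case, the procedure makes a subcall to the analogous algorithm for $I$ in order to certify $(e,t_1,t_2)$. Since $I$ is decided by the same technique on a polynomial-size domain and $\mathrm{NL}^{\mathrm{NL}} = \mathrm{NL}$, the total procedure runs in $\mathrm{NL}$ with respect to $|M|$ and $|N|$.

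The main obstacle is justifying that $I$ can be captured by a fixpoint over the polynomial-size domain $M \times N \times N$ despite the fact that the witnessing set $E$ may be exponential in $|N|$. The key observation is that raising any achievable label $(s,T)$ to the $\omega$-power yields an achievable label $(s^\omega, T^\omega)$ which is always idempotent in $M \times 2^N$; hence every $I$-witness arises by stabilization of an ordinary achievable label. Combined with subset monotonicity, this reduces the existence of $t_1,t_2$ in a common achievable idempotent $E$ to the existence of two ``parallel'' derivations sharing a common seed whose accumulated images are $t_1$ and $t_2$, a condition that can be tracked by the pair relation $I$ without ever materializing $E$ itself.
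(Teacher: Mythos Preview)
Your subset-monotonicity observation is correct, and the idea of reducing to reachability on a polynomial-size domain is the right shape. However, the proposed pair relation $I \subseteq M \times N \times N$ cannot support a sound ``pair-version of the $S$-operation'', and this is where the argument breaks.

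The $S$-operation applied to an idempotent child label $(e,E)$ produces any $(e,T)$ with $T \subseteq E \cdot \{t \in S \mid \ctype{t}=\ctype{e}\} \cdot E$. To certify that \emph{two} elements $t_1 = a_1 u_1 b_1$ and $t_2 = a_2 u_2 b_2$ lie in a common such output, one needs $a_1,b_1,a_2,b_2$ to all lie in the \emph{same} achievable idempotent set $E$. This is a $4$-consistency condition, not a $2$-consistency one: knowing $(e,a_1,b_1)\in I$ and $(e,a_2,b_2)\in I$ separately does not give a single $E$ containing all four, and your stabilization remark (that $(s,T)^\omega$ is achievable and idempotent) does not help, since two distinct achievable idempotents $(e,E_1)$ and $(e,E_2)$ need not admit any achievable idempotent $(e,E)$ with $E \supseteq E_1 \cup E_2$. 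Iterating, after $h$ nested $S$-operations one needs $2^h$-consistency at the innermost level, so a fixpoint over bounded-arity tuples only works once $h$ is bounded. Your ``parallel derivations sharing a common seed'' does not escape this: the two derivations may share the tree skeleton, but at each $S$-operation node they independently pick a pair from $E$, and all those pairs must come from one $E$.

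The paper supplies exactly the missing ingredient. It defines the \emph{operational height} of a tree (the maximal number of $S$-operation nodes on a branch) and proves (Proposition~\ref{prop:opbound}) that it can always be taken to be at most the $\Jrel$-depth of $A^*/{\sim_\Cs}$, which is a constant because \Cs and $A$ are fixed. With this constant bound $h$ in hand, it suffices to track sets $T$ of size at most $2^h$ (also constant); the \nlog test then follows by induction on $h$ (Proposition~\ref{prop:computimp}, whose appendix proof makes the doubling $m \mapsto 2m$ per level of $S$-operation explicit). Your plan is salvageable, but only once the operational-height bound is in place; the pair relation $I$ alone is not enough.
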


Theorem~\ref{thm:efficient} is proved in appendix. We only present a brief outline which highlights two propositions about $(\alpha,\beta,S)$-trees upon which the theorem is based.

We first define a complexity measure for $(\alpha,\beta,S)$-trees. Consider two \Cs-compatible morphisms $\alpha: A^* \to M$ and $\beta: A^* \to N$ as well as a good subset $S \subseteq N$. Given an $(\alpha,\beta,S)$-tree \frT, we define the \emph{operational height of \frT} as the greatest number $h \in \nat$ such that \frT contains a branch with $h$ $S$-operation nodes.

Our first result is a weaker version of Theorem~\ref{thm:efficient}. It considers the special case when we restrict ourselves to $(\alpha,\beta,S)$-trees whose operational heights are bounded by a constant.

\begin{proposition} \label{prop:computimp}
  Let $h \in \nat$ be a constant and consider two \Cs-compatible morphisms $\alpha: A^* \to M$ and $\beta: A^* \to N$ and a good subset $S \subseteq N$. Given $s \in M$ and $t \in N$, one may test in \nlog with respect to $|M|$ and $|N|$ whether there exists an $(\alpha,\beta,S)$-tree  of operational height at most $h$ and with root label $(s,\{t\})$.
\end{proposition}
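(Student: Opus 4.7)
The plan is to proceed by induction on $h$. Throughout, let $R_h \subseteq M \times N$ denote the set of pairs $(s,t)$ such that $(s,\{t\})$ is the root label of some $(\alpha,\beta,S)$-tree of operational height $\leq h$. For the base case $h = 0$, I would observe that an op-height-$0$ tree has only leaf and binary nodes. Leaves carry singleton second components, and a binary node satisfies $T \subseteq T_1 T_2$; a direct structural induction then shows that every label in such a tree has second component of cardinality at most one. Hence $(s,t) \in R_0$ iff there exists $w \in A^*$ with $\alpha(w) = s$ and $\beta(w) = t$, i.e.\ joint reachability in the polynomial-size product graph of $\alpha$ and $\beta$, which is in \nlog.

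Before attacking the inductive step, I would first establish a \emph{refinement lemma} by structural induction on trees: if $(s, T)$ is the root label of a tree of operational height $\leq h$ and $t \in T$, then $(s, \{t\})$ is also the root label of some tree of operational height $\leq h$. The crucial case is an $S$-operation root, where the rule $T \subseteq E \cdot C \cdot E$ (with $C = \{t' \in S \mid \ctype{t'} = \ctype{e}\}$) allows any subset of $E \cdot C \cdot E$, so reusing the same idempotent child and choosing parent second component $\{t\}$ succeeds. For a binary root, one factors $t = t_1 t_2$ with $t_i \in T_i$ and invokes the inductive hypothesis on each child. Leaves are immediate.

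For the inductive step, the refinement lemma lets me characterise $R_h$ as the submonoid of $M \times N$ generated by $R_{h-1}$ together with the set of ``$S$-atoms''
\[
A_h = \{(e, e_1 c e_2) \mid (e,E) \in I_{h-1},\ e_1, e_2 \in E,\ c \in S,\ \ctype{c} = \ctype{e}\},
\]
where $I_{h-1}$ denotes the achievable idempotent labels at operational height $\leq h-1$. The \nlog algorithm would guess a factorisation of $(s,t)$ of length at most $|M| \cdot |N|$ in the Cayley graph on $M \times N$, keeping the running partial product in $O(\log(|M||N|))$ bits. Each guessed atom is verified either to lie in $R_{h-1}$ via a recursive call (dispatched in \nlog by the inductive hypothesis), or to lie in $A_h$ by guessing $e_1, e_2, c$ and verifying the arithmetic conditions together with the existence of a witnessing idempotent.

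The main obstacle will be this last verification, since the witnessing idempotent $E \subseteq N$ can have size up to $|N|$ and cannot be guessed directly in logarithmic space. By the refinement lemma, every such $E$ must lie inside the sub-semigroup $E_e = \{t' \in N \mid (e,t') \in R_{h-1}\}$ of $N$ (closure under product follows from $R_{h-1}$ being a submonoid and $e$ being idempotent). My plan is to extract a canonical maximal idempotent subset $E^\star \subseteq E_e$, obtained by stabilising the descending chain $E_e \supseteq E_e \cdot E_e \supseteq \cdots$ (which converges since $N$ is finite), and to prove that $(e, E^\star) \in I_{h-1}$ by explicitly constructing the witnessing tree from singleton-labeled subtrees (provided by the inductive hypothesis) and well-chosen $S$-operations. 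The test $e_1, e_2 \in E^\star$ then reduces to a polynomial number of $R_{h-1}$-membership queries, each handled in \nlog by the inductive hypothesis. Since $h$ is constant, the composition of a bounded number of such \nlog stages remains in \nlog, closing the induction.
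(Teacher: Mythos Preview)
Your base case, the refinement lemma, and the decomposition of $R_h$ as the submonoid generated by $R_{h-1}$ together with the $S$-atoms are all correct. The gap is in the verification of $S$-atoms, specifically the claim that the ``canonical maximal idempotent'' $(e,E^\star)$ belongs to $I_{h-1}$.

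This claim is false already for $h=1$. Take $\Cs$ trivial, $A=\{a,b\}$, $M=\{1_M,e\}$ with $e$ idempotent and $\alpha(a)=\alpha(b)=e$, and $N=\{1_N,f_a,f_b,0\}$ with $f_a,f_b$ idempotent, $f_af_b=f_bf_a=0$, $0$ absorbing, $\beta(a)=f_a$, $\beta(b)=f_b$, $S=N$. Then $E_e=\{f_a,f_b,0\}$ is already a sub-semigroup, so $E^\star=E_e$. But at operational height $0$ every achievable label has second component of size at most one (only leaves and binary nodes are available, as you yourself argue in the base case), so $(e,\{f_a,f_b,0\})\notin I_0$. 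More generally there is no unique maximal achievable idempotent: here $(e,\{f_a\})$, $(e,\{f_b\})$, $(e,\{0\})$ are three incomparable maximal elements of $I_0$. The proposed construction ``from singleton-labeled subtrees and well-chosen $S$-operations'' cannot help, because any $S$-operation you add raises the operational height to at least $h$, not $h-1$; and binary combinations of singleton labels only yield singletons.

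The underlying difficulty is that your refinement lemma goes only one way: from an achievable $(e,E)$ you can extract any singleton $(e,\{t\})$ with $t\in E$, but you cannot conversely assemble $(e,\{e_1,e_2\})$ from $(e,\{e_1\})$ and $(e,\{e_2\})$. The paper handles this by \emph{strengthening the inductive hypothesis}: it proves the \nlog bound not just for singleton targets but for all $(s,T)$ with $|T|\leq m$, where $m$ is a second constant. The $S$-operation check then only requires exhibiting an achievable child $(e,E')$ with $|E'|\leq 2m$ (two witnesses per element of $T$), which is handled by the inductive hypothesis at level $h-1$ with the bound $2m$. Since $h$ is fixed, the bound stays constant throughout ($2^h$ at worst), and the whole argument goes through. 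Your singleton-only induction is too weak to close at the $S$-operation step.
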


Our second result complements the first one: in Theorem~\ref{thm:efficient}, it suffices to consider $(\alpha,\beta,S)$-trees whose operational heights are bounded by a constant (depending only on the class \Cs and the alphabet $A$ which are fixed here). Let us first define this constant. Given a finite monoid $M$, we define the \Js-depth of $M$ as the greatest number $h \in \nat$ such that one may find $h$ pairwise distinct elements $s_1,\dots,s_h \in M$ such that for every $i < h$, $s_{i+1} = xs_iy$ for some $x,y \in M$

\begin{remark}
  The term ``\Js-depth'' comes from the Green's relations which are defined on any monoid~\cite{green}. We do not discuss this point here.
\end{remark}

Recall that the quotient set ${A^*}/{\sim_\Cs}$ is a monoid. Consequently, it has a \Js-depth. Our second result is as follows.

\begin{proposition} \label{prop:opbound}
  Let $h \in \nat$ be the \Js-depth of ${A^*}/{\sim_\Cs}$. Consider two \Cs-compatible morphisms $\alpha: A^* \to M$ and $\beta: A^* \to N$, and a good subset $S \subseteq N$. Then, for every $(s,T) \in M \times 2^N$, the following properties are equivalent:
  \begin{enumerate}
  \item $(s,T)$ is the root label of some $(\alpha,\beta,S)$-tree.
  \item $(s,T)$ is the root label of some $(\alpha,\beta,S)$-tree whose operational height is at most $h$.
  \end{enumerate}
\end{proposition}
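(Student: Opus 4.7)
The direction $(2) \Rightarrow (1)$ is immediate. For the converse, my plan is to iteratively reduce the operational height of any witness tree $\frT$ for $(s,T)$, preserving its root label, until the operational height is at most $h$.

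The first step is to establish a monotonicity property: along any branch of an $(\alpha,\beta,S)$-tree, the $\sim_\Cs$-classes of the first components of $S$-operation node labels form a $\Jord$-chain, decreasing from root to leaf. If $x$ is an $S$-operation node with first component $e$ and $x'$ is an $S$-operation descendant of $x$ on the same branch with first component $e'$, then tracing the path from $x$'s child down to $x'$ should give $e = s_1 e' s_2$ for some $s_1,s_2 \in M$: an $S$-operation node preserves the first component of its child, while a binary node introduces the sibling's first component as a factor. Since $\alpha$ is \Cs-compatible, the induced map $s \mapsto \ctype{s}$ from $M$ to $A^*/{\sim_\Cs}$ is a monoid morphism, which yields $\ctype{e} = \ctype{s_1}\ctype{e'}\ctype{s_2}$, i.e.\ $\ctype{e} \Jord \ctype{e'}$. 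Consequently, if the operational height of $\frT$ exceeds $h$, some branch carries at least $h+1$ $S$-operation nodes and, by the definition of $h$ as the \Js-depth of $A^*/{\sim_\Cs}$, two of them must have equal class: there exist $S$-operation nodes $x$ (ancestor) and $x'$ on a common branch with $\ctype{e_x} = \ctype{e_{x'}}$.

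The heart of the proof will then be a \emph{merging lemma}: from such a pair, one builds an $(\alpha,\beta,S)$-tree with the same root label as $\frT$ but strictly smaller operational height; iterating this drives the operational height down to at most $h$. The guiding intuition is that the ``insertion set'' $X = \{t \in S : \ctype{t} = \ctype{e_x}\}$ used by both $S$-operation rules coincides (since $\ctype{e_x} = \ctype{e_{x'}}$), which makes one of the two operations redundant. Combined with the easy invariant that every second-component label $T$ of any $(\alpha,\beta,S)$-tree satisfies $T \subseteq S$ (proved by induction, using that leaf second-components land in $\beta(A^*) \subseteq S$ and that $S$ is closed under multiplication), the plan is to remove the inner $S$-operation at $x'$ by promoting its child, and then use the freedom of the binary rule (which only requires $T \subseteq T_1 T_2$) to readjust the labels between $x'$'s parent and $x$'s child $y$ so as to preserve both the idempotent label $(e_x,E_y)$ at $y$ and the label $(e_x,T_x)$ at $x$.

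The main obstacle I expect is making this readjustment rigorous: one must verify that the flexibility of the binary rule above $x'$ suffices to absorb the removal of the inner $S$-operation while keeping the label at $y$ exactly equal to the original idempotent $(e_x,E_y)$, so that the outer $S$-operation at $x$ still yields $(e_x,T_x)$. This is precisely where the equality $\ctype{e_x} = \ctype{e_{x'}}$, the idempotency of $(e_x,E_y)$ in $M \times 2^N$, and the invariant $T \subseteq S$ must be combined carefully.
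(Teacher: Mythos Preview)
Your high-level strategy matches the paper's: show that along any branch the classes $\ctype{e}$ of $S$-operation nodes form a $\Jord$-chain, invoke the \Js-depth bound to find two such nodes $x$ (ancestor) and $x'$ with $\ctype{e_x} = \ctype{e_{x'}}$, and then merge. The monotonicity argument and the invariant $T \subseteq \{t \in S : \ctype{t} = \ctype{s}\}$ for every node label $(s,T)$ are both correct and needed.

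The gap is in your merging plan. You propose to delete $x'$ (promoting its child $y'$ with idempotent label $(e_{x'}, E_{y'})$) and then relabel the intervening binary nodes so that the label at $y$ (the child of $x$) remains the original idempotent $(e_x, E_y)$. This generally fails. In the simplest case where only binary nodes lie between $y$ and $x'$, write $(p,P)$ and $(q,Q)$ for the combined left and right side contributions; then you only know
\[
  E_y \subseteq P\,T_{x'}\,Q \subseteq P\,E_{y'}\,X\,E_{y'}\,Q, \qquad X = \{t \in S : \ctype{t} = \ctype{e_x}\}.
\]
After deleting the operation at $x'$ the most you can realise at $y$ is a subset of $P\,E_{y'}\,Q$, and there is no reason for $E_y \subseteq P\,E_{y'}\,Q$: the $S$-operation at $x'$ may have genuinely enlarged the second component (take $E_{y'}=\{f\}$ and $T_{x'}=\{fgf\}$ with $g \in X$ and $fgf \neq f$). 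So ``preserve $E_y$'' is the wrong target.

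The paper's construction is more drastic. It does not try to preserve $E_y$; it discards the entire backbone from $x$ down to $x'$ and rebuilds the subtree rooted at $x$ from scratch: all side subtrees hanging off the backbone are collected into two trees $\frU$ (left factors, label $(u,U)$) and $\frV$ (right factors, label $(v,V)$); the subtree below $y'$ with its idempotent label $(e_{x'},E_{y'})$ is kept; a \emph{single} $S$-operation is applied to it; and two binary nodes glue the pieces into a tree with root label $(u\,e_{x'}\,v,\,T_x)$. That the first component equals $e_x$ is routine. That the second component can be taken equal to $T_x$, i.e.\ that $T_x \subseteq U\,E_{y'}\,X\,E_{y'}\,V$, is the crux, and it rests on a non-obvious identity in $A^*/{\sim_\Cs}$: from $e_x = u'e_{x'}v'$ and $\ctype{e_x}=\ctype{e_{x'}}$ one shows, via an $\omega$-power manipulation of idempotents, that $\ctype{e_{x'}} = \ctype{e_{x'}\,v'\,e_x\,u'\,e_{x'}}$. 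This identity is precisely what lets the single remaining $S$-operation absorb the factors contributed by the side trees. Your listed ingredients (equal classes, idempotency of $(e_x,E_y)$, the $T\subseteq S$ invariant) point in the right direction but do not by themselves produce this fact, and your plan as stated does not leave room for it since you insist on keeping $E_y$ fixed.

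A minor point: the paper's decreasing measure is operational \emph{size} (total number of $S$-operation nodes), not operational height. This cleanly guarantees termination without having to argue that the reconstruction does not lengthen some other branch.
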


In view of Proposition~\ref{prop:opbound}, Theorem~\ref{thm:efficient} is an immediate consequence of Proposition~\ref{prop:computimp} applied in the special case when $h$ is the $\Js$-depth of ${A^*}/{\sim_\Cs}$ and $m = 1$.

\subsection{Applications}

We now combine Theorem~\ref{thm:efficient} with the results of~\cite{pzbpol} to get the upper complexity bounds for $\Cs[1/2](A)$- and $\Cs[1](A)$-separation that we announced at the begging of the section.

\medskip
\noindent
{\bf Application to $\Cs[1/2]$.} Let us first recall the connection between $\Cs[1/2]$-separation and $(\alpha,\beta,S)$-trees. The result is taken from~\cite{pzbpol}.

\begin{theorem}[\cite{pzbpol}] \label{thm:poltheo}
  Let $\alpha: A^* \to M$ be a \Cs-compatible morphism and $F_0,F_1 \subseteq M$. Moreover, let $S = \alpha(A^*) \subseteq M$. The two following properties are equivalent:
  \begin{itemize}
  \item $\alpha\inv(F_0)$ is $\Cs[1/2]$-separable from $\alpha\inv(F_1)$.
  \item for every $s_0 \in F_0$ and $s_1 \in F_1$, there exists no $(\alpha,\alpha,S)$-tree with root label $(s_0,\{s_1\})$.
  \end{itemize}
\end{theorem}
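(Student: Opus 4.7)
The plan is to prove the two implications separately: tree existence implies inseparability via a structural induction, while the converse requires extracting a separator from the failure of any tree to witness $(s_0,\{s_1\})$.

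\textbf{Tree implies inseparability.} I would establish by structural induction on $(\alpha,\alpha,S)$-trees the following invariant: whenever $(s,T)$ is the root label of such a tree, one can exhibit a word $u$ with $\alpha(u)=s$ together with a family $\{v_t\}_{t\in T}$ with $\alpha(v_t)=t$, such that every $K \in \Cs[1/2](A)$ containing $u$ also contains at least one $v_t$. Applied to the hypothesised root label $(s_0,\{s_1\})$, this immediately forbids any $\Cs[1/2]$-separator: such a separator contains $\alpha\inv(F_0)$ and hence $u$, so it must contain $v_{s_1}$, contradicting disjointness from $\alpha\inv(F_1)$. Leaves $(\alpha(w),\{\alpha(w)\})$ are immediate with $u=v_{\alpha(w)}=w$. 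For the binary step, given witnesses from the two children, set $u = u_1 u_2$ and $v_{(t,t')} = v_{1,t} v_{2,t'}$ for $(t,t') \in T_1 \times T_2$; verifying that any marked concatenation $L_0 a_1 L_1 \cdots a_n L_n$ containing $u_1 u_2$ contains some such $v_{(t,t')}$ reduces to applying the two inductive hypotheses on suitable restrictions, the main subtlety---handled via \Cs-compatibility of $\alpha$ and closure of \Cs under quotients---being that the split point of a decomposition of $u_1u_2$ need not align with the split between $u_1$ and $u_2$. The $S$-operation step relies on idempotent pumping: at an idempotent node with label $(e,E)$ one inserts arbitrarily many copies of the sub-witness realising $e$, and the constraint $\ctype{e}=\ctype{t}$ is precisely what allows substituting an intermediate word of \Cs-type $t\in S$ between two such copies without changing membership in any \Cs-language, and therefore in any marked concatenation thereof.

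\textbf{Inseparability implies tree.} Conversely, let $R \subseteq M \times 2^M$ denote the set of all root labels of $(\alpha,\alpha,S)$-trees. Then $R$ is the least subset of $M \times 2^M$ containing every leaf pair and closed under the binary and $S$-operation rules; in particular $R$ can be computed by a standard fixpoint saturation. Assuming $(s_0,\{s_1\}) \notin R$ for every $s_0 \in F_0$ and $s_1 \in F_1$, the strategy is to turn each tree-formation rule into a language constructor and read off a separator from $R$ directly: for each $s_0 \in F_0$ one builds a language $K_{s_0} \in \Cs[1/2](A)$ containing every word with $\alpha$-image $s_0$ but avoiding every word with $\alpha$-image $s_1$ for any $s_1 \in F_1$. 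Each $K_{s_0}$ is assembled as a finite union of marked concatenations of \Cs-languages indexed by the possible derivations of $s_0$ inside $R$. The separator is then $K=\bigcup_{s_0\in F_0} K_{s_0}$, which remains in $\Cs[1/2](A)$ since half-levels are closed under finite union.

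\textbf{Main obstacle.} The delicate part is the converse direction: converting a non-membership in $R$ into an explicit marked concatenation separator. This is the completeness statement asserting that no additional closure beyond the three tree-formation rules is required to characterise $\Cs[1/2]$-inseparability. The interplay between the binary rule (corresponding to concatenation in $\Cs[1/2]$) and the $S$-operation (corresponding to iteration, controlled by the congruence $\sim_\Cs$) is the subtle point; the compatibility constraint $\ctype{e}=\ctype{t}$ is the precise algebraic identity that lets this induction close.
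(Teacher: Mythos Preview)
The paper does not prove Theorem~\ref{thm:poltheo}; it is quoted verbatim from~\cite{pzbpol} and used as a black box to derive the complexity bounds in Section~\ref{sec:fixalph}. So there is no ``paper's own proof'' to compare against here.

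That said, your sketch of the soundness direction (tree $\Rightarrow$ inseparability) contains a real gap. The invariant you state --- a \emph{fixed} word $u$ and a \emph{fixed} family $\{v_t\}_{t\in T}$ such that every $K\in\Cs[1/2]$ containing $u$ contains some $v_t$ --- is too strong to push through the induction. At an $S$-operation node you write that ``one inserts arbitrarily many copies of the sub-witness realising $e$'', but ``arbitrarily many'' is incompatible with a fixed $u$: the number of copies one must pump depends on the language $K$ (equivalently, on the stratum $k$ in the filtration $\polk{\Cs}$ of $\pol{\Cs}$; compare Lemma~\ref{lem:propreo2}, where the exponent $3^{k+1}-1$ is unbounded in $k$). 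Concretely, from the child $(e,E)$ your invariant only yields that each $K$ meets \emph{some} $v'_{e'}$, with $e'$ depending on $K$; to build $v_t$ for $t=e_1 r e_2$ you need the specific factors $v'_{e_1}$ and $v'_{e_2}$, and the inductive hypothesis does not single those out. The standard fix is to quantify over $k$ first: for every $k$ there exist $u_k$ with $\alpha(u_k)=s$ and $\{v_{t,k}\}_{t\in T}$ with $\alpha(v_{t,k})=t$ such that $u_k\polrelk v_{t,k}$ for \emph{all} $t\in T$. This stronger-per-$k$ invariant does go through the three node types (the $S$-operation step uses idempotent pumping with exponent depending on $k$), and combined with Lemma~\ref{lem:sthsep} it yields non-separability.

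For the converse (inseparability $\Rightarrow$ tree), your description --- ``turn each tree-formation rule into a language constructor and read off a separator from $R$'' --- is not yet a proof; it names the shape of the argument but supplies none of its content. This is the genuinely hard direction in~\cite{pzbpol}, where completeness of the three rules is established through a careful analysis (in the language of that paper, via optimal imprints for \pol{\Cs} on multiplicative rating maps). Your acknowledgement that ``the interplay between the binary rule and the $S$-operation \ldots\ is the subtle point'' is accurate, but the sketch does not indicate how that interplay is resolved.
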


By Theorem~\ref{thm:efficient} and the Immerman–Szelepcsényi theorem (which states that $\nlog = co\text{-}\nlog$), it is straightforward to verify that checking whether the second assertion in Theorem~\ref{thm:poltheo} holds can be done in \nlog with respect to $|M|$. Therefore, the theorem implies that $\Cs[1/2](A)$-separation for monoids is in \nlog. This is lifted to \nfas using Corollary~\ref{cor:autoreduc}.

\begin{corollary}  \label{cor:poltheo}
  For every finite basis \Cs and alphabet $A$, $\Cs[1/2](A)$-separation is in \nlog for both \nfas and monoids.
\end{corollary}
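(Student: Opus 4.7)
The plan is to combine Theorem~\ref{thm:poltheo} with Theorem~\ref{thm:efficient} for the monoid case, and then transfer the result to \nfas via Corollary~\ref{cor:autoreduc}. Since \Cs and $A$ are fixed throughout, every construction depending only on them has constant overhead.

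For the monoid case, assume the input is given as two morphisms recognizing the languages $L_1$ and $L_2$ to be separated. First I would apply Lemma~\ref{lem:compat} to compute, in \logspace, a single \Cs-compatible morphism $\alpha: A^* \to M$ together with accepting sets $F_0, F_1 \subseteq M$ such that $L_i = \alpha\inv(F_i)$. The set $S = \alpha(A^*) \subseteq M$ can also be produced in \logspace (in fact it is just $\alpha(A) \cup \{1_M\}$ closed under product, and since \Cs and $A$ are fixed, even listing $\alpha(A)$ costs \logspace in $|M|$). By Theorem~\ref{thm:poltheo}, non-separability is equivalent to the existence of some $s_0 \in F_0$ and $s_1 \in F_1$ admitting an $(\alpha,\alpha,S)$-tree with root label $(s_0,\{s_1\})$. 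An \nlog procedure for non-separability therefore works as follows: nondeterministically guess $s_0 \in F_0$ and $s_1 \in F_1$, then call the \nlog algorithm supplied by Theorem~\ref{thm:efficient} to verify that such a tree exists. Since $\nlog = co\text{-}\nlog$ by Immerman–Szelepcsényi, we obtain an \nlog algorithm for separability itself.

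To lift the result from monoids to \nfas, I would invoke Corollary~\ref{cor:autoreduc}: it provides a \logspace reduction from $\Cs[1/2](A)$-separation for \nfas to $\Cs_\frE[1/2](A \cup \frE)$-separation for monoids. Since $\Cs_\frE$ is itself a finite \vari and $A \cup \frE$ is again a fixed alphabet, the monoid-side algorithm developed in the previous paragraph applies to the target problem and runs in \nlog. Composing a \logspace reduction with an \nlog decision procedure yields an \nlog procedure overall, which gives the upper bound for \nfas.

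There is no genuine obstacle here: the statement is essentially bookkeeping on top of Theorem~\ref{thm:efficient}. The only point that deserves care is checking that every ``preparation'' step — computing the \Cs-compatible morphism via Lemma~\ref{lem:compat}, extracting $F_0$, $F_1$, and $S$, and guessing the pair $(s_0, s_1)$ — fits within \logspace (or \nlog) with respect to the monoid size, so that the overall composition stays in \nlog. This is immediate because \Cs and $A$ are constants of the problem, and hence the index of $\sim_\Cs$ and the \Js-depth of $A^*/{\sim_\Cs}$ that appear inside Theorem~\ref{thm:efficient} are also constants.
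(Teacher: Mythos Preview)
Your proposal is correct and follows essentially the same route as the paper: combine Theorem~\ref{thm:poltheo} with Theorem~\ref{thm:efficient} and Immerman--Szelepcs\'enyi for the monoid case, then lift to \nfas via Corollary~\ref{cor:autoreduc}. The paper's argument is terser but identical in substance; your added bookkeeping (Lemma~\ref{lem:compat}, guessing $(s_0,s_1)$, noting that $\Cs_\frE$ and $A\cup\frE$ are again fixed) is exactly what the paper leaves implicit under ``straightforward to verify''.
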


\medskip
\noindent
{\bf Application to $\Cs[1]$.} We start by recalling the $\Cs[1]$-separation algorithm which is again taken from~\cite{pzbpol}. In this case, we consider an auxiliary sub-procedure which relies on $(\alpha,\beta,S)$-trees.

Consider a \Cs-compatible morphism $\alpha: A^* \to M$. Observe that $M^2$ is a monoid for the componentwise multiplication. We let $\beta: A^* \to M^2$ as the morphism defined by $\beta(w) = (\alpha(w),\alpha(w))$ for every $w \in A^*$. Clearly, $\beta$ is \Cs-compatible: given $(s,t) \in M^2$, it suffices to define $\ctype{(s,t)} = \ctype{s}$. Using $(\alpha,\beta,S)$-trees, we define a procedure $S \mapsto Red(\alpha,S)$ which takes as input a good subset $S \subseteq M^2$ (for $\beta$) and outputs a subset $Red(\alpha,S) \subseteq S$.
\[
  Red(\alpha,S) = \{(s,t) \in S \mid \text{$(s,\{(t,s)\}) \in M \times 2^{M^2}$ is the root label of an $(\alpha,\beta,S)$-tree}\} \subseteq S
\]
It is straightforward to verify that $Red(\alpha,S)$ remains a good subset of $M^2$. We now have the following theorem which is taken from~\cite{pzbpol}.

\begin{theorem}[\cite{pzbpol}] \label{thm:bpoltheo}
  Let $\alpha: A^* \to M$ be a morphism into a finite monoid and $F_0,F_1 \subseteq M$. Moreover, let $S \subseteq M^2$ be the greatest subset of $\alpha(A^*) \times \alpha(A^*)$ such that $Red(\alpha,S) = S$. Then, the two following properties are equivalent:
  \begin{itemize}
  \item $\alpha\inv(F_0)$ is \bool{\pol{\Cs}}-separable from $\alpha\inv(F_1)$.
  \item for every $s_0 \in F_0$ and $s_1 \in F_1$, $(s_0,s_1) \not\in S$.
  \end{itemize}
\end{theorem}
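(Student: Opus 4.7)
The plan is to characterize the greatest fixpoint $S$ semantically: $(s,t)\in S$ should hold exactly when there exist words $u\in\alpha\inv(s)$ and $v\in\alpha\inv(t)$ that are \emph{mutually} non-$\pol{\Cs}$-separable (neither $u$ can be $\pol{\Cs}$-separated from $v$, nor $v$ from $u$). Since $\pol{\Cs}$ is not closed under complement, these two directions are genuinely distinct, and their conjunction is precisely what Boolean closure flattens: a $\bool{\pol{\Cs}}$-separator of $u$ from $v$ exists iff some $\pol{\Cs}$-language distinguishes them in one of the two directions. The definition of $Red$ via trees with root labels of the swapped form $(s,\{(t,s)\})$ is engineered to enforce both directions at once. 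Well-definedness of the greatest fixpoint is routine: the operator $X\mapsto Red(\alpha,X)$ is monotone on good subsets of $\alpha(A^*)\times\alpha(A^*)$ and contractive ($Red(\alpha,X)\subseteq X$ by construction), so iterating $Red$ from $\alpha(A^*)\times\alpha(A^*)$ stabilizes in finitely many steps on a fixpoint that dominates all others.

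For the soundness direction ($(s_0,s_1)\notin S$ for all $s_0\in F_0, s_1\in F_1$ implies separability), I would follow the descending chain defining $S$. For each critical pair, there is a smallest index at which it is eliminated; by the failure of $Red$ at that stage, no $(\alpha,\beta,X_{i-1})$-tree has the expected root label. Invoking Theorem~\ref{thm:poltheo} and its mirror version (obtained by swapping the two input sets) converts this into an actual $\pol{\Cs}$-separator either of $\alpha\inv(s_0)$ from $\alpha\inv(s_1)$ or in the reverse direction. A Boolean combination of these finitely many separators — a union over $s_0\in F_0$ of intersections over $s_1\in F_1$ of the appropriate separator or its complement — then yields a $\bool{\pol{\Cs}}$-separator of $\alpha\inv(F_0)$ from $\alpha\inv(F_1)$.

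For the completeness direction (if $(s_0,s_1)\in S$ then inseparable), I would induct on $(\alpha,\beta,S)$-trees with root label $(s_0,\{(s_1,s_0)\})$ to extract words $u\in\alpha\inv(s_0)$ and $v\in\alpha\inv(s_1)$ that are mutually non-$\pol{\Cs}$-separable, which immediately rules out any $\bool{\pol{\Cs}}$-separator. Leaves give $u=v=w$, for which the property is trivial. The binary case concatenates inductive witnesses, combining mutual non-separabilities multiplicatively using closure of each direction of $\pol{\Cs}$-non-separability under product.

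The main obstacle will be the $S$-operation case. Starting from an idempotent root label $(e,E)$ of a child subtree and a pumping element $t$ satisfying $\ctype{e}=\ctype{t}$, one must produce new witnesses whose mutual non-$\pol{\Cs}$-separability survives the idempotent pumping. The classical pumping argument underlying Theorem~\ref{thm:poltheo} handles one direction at a time; here it must be applied \emph{symmetrically} to both directions simultaneously. This is precisely where the swapped second component $(t,s)$ in the definition of $Red$ is essential: it encodes both roles symmetrically, so that pumping on $e$ inside one witness is matched by pumping on $t$ inside the other, keeping both non-separability assertions alive through the same idempotent power. The $\Cs$-compatibility of $\alpha$ and $\beta$ ensures that this symmetric pumping respects $\sim_\Cs$, which is exactly what is needed to invoke the constructions behind Theorem~\ref{thm:poltheo} on both sides of the pair.
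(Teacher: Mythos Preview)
The paper does not prove this statement. Theorem~\ref{thm:bpoltheo} is explicitly imported from~\cite{pzbpol}: the theorem header carries the citation, and the surrounding text says ``the following theorem which is taken from~\cite{pzbpol}''. The appendix subsection entitled ``Proof of Theorem~\ref{thm:bpoltheo}'' is a labeling slip; its actual content establishes the \exptime upper bound and the \pspace lower bound for \sttwo-separation, i.e.\ it proves Theorem~\ref{thm:st}, not Theorem~\ref{thm:bpoltheo}. There is therefore nothing in this paper to compare your proposal against.

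That said, your sketch has a conceptual wobble worth flagging before you chase it further. You phrase the target semantic characterization as ``there exist words $u\in\alpha\inv(s)$ and $v\in\alpha\inv(t)$ that are mutually non-\pol{\Cs}-separable'', and later ``a \bool{\pol{\Cs}}-separator of $u$ from $v$ exists iff some \pol{\Cs}-language distinguishes them in one of the two directions''. For \emph{individual} words this is vacuous: any two distinct words are separated by the class (singletons are in \at, hence in \pol{\Cs}). The correct objects are the \emph{languages} $\alpha\inv(s)$ and $\alpha\inv(t)$, and the right invariant is not a single pair of witness words but, for every $k$, a pair $(u_k,v_k)$ with $u_k \polrelk v_k$ and $v_k \polrelk u_k$ (cf.\ Lemma~\ref{lem:stsep}). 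Your soundness step also invokes Theorem~\ref{thm:poltheo} directly on the absence of an $(\alpha,\beta,X_{i-1})$-tree, but that theorem is stated for $(\alpha,\alpha,\alpha(A^*))$-trees, with $\beta=\alpha$ and the fixed good set $\alpha(A^*)$; bridging to $\beta:A^*\to M^2$ and a shrinking good set $X_{i-1}\subseteq M^2$ is exactly the substantive content of~\cite{pzbpol}, not a corollary you can cite from the present paper.
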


Observe that Theorem~\ref{thm:efficient} implies that given an arbitrary good subset $S$ of $\alpha(A^*) \times \alpha(A^*)$, one may compute $Red(\alpha,S) \subseteq S$ in \ptime with respect to $|M|$. Therefore, the greatest subset $S$ of $\alpha(A^*) \times \alpha(A^*)$ such that $Red(\alpha,S) = S$ can be computed in \ptime using a greatest fixpoint algorithm. Consequently, Theorem~\ref{thm:bpoltheo} yields that $\Cs[1](A)$-separation for monoids is in \ptime. Again, this is lifted to \nfas using Corollary~\ref{cor:autoreduc}.

\begin{corollary}  \label{cor:bpoltheo}
  For every finite basis \Cs and alphabet $A$, $\Cs[1](A)$-separation is in \ptime for both \nfas and monoids.
\end{corollary}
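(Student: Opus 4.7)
The plan is to first establish the upper bound for monoid inputs and then lift to \nfas via Corollary~\ref{cor:autoreduc}. Given two languages $L_1, L_2 \subseteq A^*$ represented by monoid morphisms, I would first apply Lemma~\ref{lem:compat} to compute in \logspace a single \Cs-compatible morphism $\alpha: A^* \to M$ together with sets $F_0, F_1 \subseteq M$ such that $L_i = \alpha^{-1}(F_i)$. By Theorem~\ref{thm:bpoltheo}, it then suffices to compute in polynomial time in $|M|$ the greatest subset $S$ of $\alpha(A^*) \times \alpha(A^*)$ that is fixed by $Red(\alpha, \cdot)$, and then simply test whether any pair in $F_0 \times F_1$ lies in $S$.

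To compute $S$, I would run a standard greatest fixpoint iteration. Set $S_0 := \alpha(A^*) \times \alpha(A^*)$ and $S_{i+1} := Red(\alpha, S_i)$, halting when $S_{i+1} = S_i$. Since $Red(\alpha, \cdot)$ is monotone, $S_{i+1} \subseteq S_i$, and $|S_0| \leq |M|^2$, so the iteration stabilizes after at most $|M|^2$ rounds. Each round requires computing $Red(\alpha, S_i)$, which by definition amounts to deciding, for each pair $(s,t) \in S_i$, whether $(s, \{(t,s)\})$ is the root label of some $(\alpha, \beta, S_i)$-tree, where $\beta: A^* \to M^2$ is the diagonal morphism $w \mapsto (\alpha(w), \alpha(w))$. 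By Theorem~\ref{thm:efficient}, each such query is decidable in \nlog with respect to $|M|$ and $|M^2| = |M|^2$, hence in polynomial time. Since there are at most $|M|^2$ queries per round and at most $|M|^2$ rounds, the total time remains polynomial in $|M|$.

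To obtain the bound for \nfa inputs, I would invoke Corollary~\ref{cor:autoreduc} with $n = 1$: this provides a \logspace reduction from $\Cs[1](A)$-separation for \nfas to $\Cs_\frE[1](A \cup \frE)$-separation for monoids. Since $\Cs_\frE$ is itself a finite \vari over the fixed alphabet $A \cup \frE$, the monoid algorithm above applies to it and yields a polynomial-time procedure; composing it with a \logspace reduction preserves membership in \ptime.

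The main conceptual obstacle is that the naive algorithm implicit in~\cite{pzbpol} — computing by least fixpoint the \emph{full} set of pairs $(s,T) \in M \times 2^{M^2}$ achievable as root labels — would track subsets of $M^2$ and hence blow up to size $2^{|M|^2}$. The decisive point is that $Red(\alpha, S)$ only involves root labels whose second component is a \emph{singleton}, and Theorem~\ref{thm:efficient} provides a direct query mechanism for precisely such pairs without enumerating the exponentially many non-singleton labels. Feeding these singleton queries into the greatest fixpoint loop is what converts the earlier decidability argument into a polynomial-time algorithm; the rest of the proof is bookkeeping that the nested loops remain polynomial.
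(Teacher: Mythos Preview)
Your proposal is correct and follows essentially the same approach as the paper: compute the greatest fixpoint of $Red(\alpha,\cdot)$ by iterating from $\alpha(A^*)\times\alpha(A^*)$, use Theorem~\ref{thm:efficient} to answer the singleton root-label queries that define $Red$ in polynomial time, and lift to \nfas via Corollary~\ref{cor:autoreduc}. One small presentational slip: the inclusion $S_{i+1}\subseteq S_i$ follows directly from the definition $Red(\alpha,S)\subseteq S$, not from monotonicity; monotonicity is what you need to argue that the limit dominates every fixpoint and hence equals the greatest one.
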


\section{The Straubing-Thérien hierarchy}
\label{sec:classic}
In this final section, we consider one of the most famous concatenation hierarchies: the Straubing-Thérien hierarchy~\cite{StrauConcat,TheConcat}. We investigate the complexity of separation for the levels 3/2 and 2.

\begin{remark}
  Here, the alphabet is part of the input. For fixed alphabets, these levels can be handled with the generic results presented in the previous section (see Theorem~\ref{thm:alphatrick} below).
\end{remark}

The basis of the Straubing-Thérien hierarchy is the trivial \varie \stzer defined by $\stzer(A) = \{\emptyset,A^*\}$ for every alphabet $A$. It is known and simple to verify (using induction) that all half levels are \pvaries and all full levels are \varies.

The complexity of separation for the level one (\stone) has already been given a lot of attention. Indeed, this level corresponds to a famous class which was introduced independently from concatenation hierarchies: the piecewise testable languages~\cite{simon75}. It was shown independently in~\cite{martens} and~\cite{pvzmfcs13} that \stone-separation is in \ptime for \nfas (and therefore for \dfas and monoids as well). Moreover, it was also shown in~\cite{Masopust18} that the problem is actually \ptime-complete for \nfas and \dfas\footnote{Since \stone is a \varie, \ptime-completeness for \stone-separation can also be lifted to monoids using Corollary~\ref{cor:autoreducvari}.}. Additionally, it is shown in~\cite{martens} that \sthone-separation is in \nlog.

In the paper, we are mainly interested in the levels \sthtwo and \sttwo. Indeed, the Straubing-Thérien hierarchy has a unique property: the generic separation results of~\cite{pzbpol} apply to these two levels as well. Indeed, these are also the levels 1/2 and 1 in another finitely based hierarchy. Consider the class \at of \emph{alphabet testable languages}. For every alphabet $A$, $\at(A)$ is the set of all Boolean combinations of languages $A^*a A^*$ for $a \in A$. One may verify that \at is a \varie and that $\at(A)$ is finite for every alphabet $A$. Moreover, we have the following theorem which is due to Pin and Straubing~\cite{pin-straubing:upper} (see~\cite{PZ:generic_csr_tocs:18} for a modern proof).

\begin{theorem}[\cite{pin-straubing:upper}] \label{thm:alphatrick}
  For every $n \in \frac12 \nat$, we have $\at[n] = \sttp{n+1}$.
\end{theorem}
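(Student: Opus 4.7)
The plan is to prove Theorem~\ref{thm:alphatrick} by induction on $n$, with all the substantive work concentrated at the base case $n = 1/2$. Once $\at[1/2] = \sttp{3/2}$ is established, the induction step is purely formal: assuming $\at[n] = \sttp{n+1}$, the defining operators of the two hierarchies yield
\[
  \at[n+1/2] = \pol{\at[n]} = \pol{\sttp{n+1}} = \sttp{n+3/2},
\]
and applying the Boolean closure then gives
\[
  \at[n+1] = \bool{\at[n+1/2]} = \bool{\sttp{n+3/2}} = \sttp{n+2}.
\]

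For the base case, I would prove the two inclusions separately. The inclusion $\at[1/2] \subseteq \sttp{3/2}$ is easy: every generator $A^*aA^*$ of $\at$ lies in $\sttp{1/2}$, so $\at \subseteq \bool{\sttp{1/2}} = \sttp{1}$, and monotonicity of $\pol$ gives $\at[1/2] = \pol{\at} \subseteq \pol{\sttp{1}} = \sttp{3/2}$.

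For the reverse inclusion $\sttp{3/2} \subseteq \pol{\at}$, the plan is to invoke the classical ``alphabet product'' characterization of level $3/2$ (due to Arfi, see also Pin--Straubing): every language in $\sttp{3/2}(A)$ is a finite union of languages of the form
\[
  B_0^* a_1 B_1^* a_2 \cdots a_k B_k^*, \qquad a_i \in A,\ B_i \subseteq A.
\]
The key observation is that for every $B \subseteq A$, the language $B^*$ equals $A^* \setminus \bigcup_{c \in A \setminus B} A^*cA^*$ and is therefore a Boolean combination of generators of $\at$; in particular, $B^* \in \at$. Each alphabet product can then be built inside $\pol{\at}$ by induction on $k$: the case $k=0$ reduces to $B_0^* \in \at \subseteq \pol{\at}$, and in the inductive step we factor the product as $(B_0^* a_1 \cdots a_{k-1} B_{k-1}^*) \cdot a_k \cdot B_k^*$, where the prefix lies in $\pol{\at}$ by the induction hypothesis, $B_k^* \in \at$, and marked concatenation is one of the defining closure operations of $\pol{\at}$. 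Closing under finite unions then yields $\sttp{3/2} \subseteq \pol{\at} = \at[1/2]$.

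The main obstacle is the structural characterization of $\sttp{3/2}$ as a union of alphabet products, which is classical but non-trivial: it amounts to saying that Boolean combinations arising within the $\sttp{1}$ factors of a marked concatenation can always be absorbed into the alphabet structure of the surrounding factors, so that no genuine complementation remains. Once this characterization is granted, the rest of the argument is an elementary monotonicity calculation together with the straightforward induction on $n$ described above.
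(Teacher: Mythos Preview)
The paper does not supply its own proof of Theorem~\ref{thm:alphatrick}: it is quoted as a result of Pin and Straubing~\cite{pin-straubing:upper}, with a pointer to~\cite{PZ:generic_csr_tocs:18} for a modern presentation. So there is no in-paper argument to compare your proposal against.

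That said, your outline is essentially the standard route and is sound for $n \geq 1/2$. Two remarks. First, your induction bookkeeping is slightly off: you declare the base case to be the half level $n=1/2$, but the displayed induction step is phrased for a full level $n$ (using $\at[n+1/2]=\pol{\at[n]}$). Once you have $\at[1/2]=\sttp{3/2}$, the very next step is a \emph{Boolean} closure, giving $\at[1]=\bool{\at[1/2]}=\bool{\sttp{3/2}}=\sttp{2}$; after that you may alternate $\pol$ and $\bool$ as you wrote. This is a cosmetic fix.

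Second, you skip the case $n=0$, and for good reason: as stated literally, the equality $\at[0]=\sttp{1}$ would assert $\at=\stone$, which is false (for instance $A^*abA^*$ is piecewise testable but not alphabet testable). The theorem should be read for $n\geq 1/2$; the paper only invokes it for $n\in\{1/2,1\}$, so this does not affect anything downstream, but your write-up should flag that the base of the induction is $1/2$ rather than $0$ precisely because the level-$0$ instance fails.

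On the substantive step, your reduction of $\sttp{3/2}\subseteq\pol{\at}$ to Arfi's normal form (finite unions of $B_0^*a_1B_1^*\cdots a_kB_k^*$) together with $B^*\in\at$ is exactly the classical argument, and the easy inclusion $\pol{\at}\subseteq\sttp{3/2}$ via $\at\subseteq\stone$ is correct as you state it.
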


The theorem implies that $\sthtwo = \at[1/2]$ and $\sttwo = \at[1]$. Therefore, the results of~\cite{pzbpol} yield the decidability of separation for both \sthtwo and \sttwo (the latter is the main result of~\cite{pzbpol}). As expected, this section investigates complexity for these two problems.

\subsection{The level 3/2}

We have the following tight complexity bound for \sthtwo-separation.

\begin{theorem} \label{thm:sth}
  \sthtwo-separation is \pspace-complete for both \nfas and monoids.
\end{theorem}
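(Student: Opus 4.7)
I would attack the two inclusions separately.

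For the \pspace upper bound, the plan is to reduce to the framework of Section~\ref{sec:fixalph}. The first step is to invoke Theorem~\ref{thm:alphatrick}, which rewrites \sthtwo as $\at[1/2]$ for the \emph{finite} \varie \at of alphabet testable languages. Using Corollary~\ref{cor:autoreduc}, it then suffices to treat monoid inputs, and by Lemma~\ref{lem:compat} we may assume we are given a single $\at$-compatible morphism $\alpha : A^* \to M$ recognizing both input languages. Theorem~\ref{thm:poltheo} then reduces $\at[1/2]$-separation to checking, for relevant pairs $(s_0,s_1) \in M \times M$, whether $(s_0,\{s_1\})$ is the root label of some $(\alpha,\alpha,\alpha(A^*))$-tree. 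The essential new point compared with Corollary~\ref{cor:poltheo} is that $A$ is now part of the input: the \Js-depth of $A^*/{\sim_\at}$ equals $|A|+1$ (its elements are the contents, ordered by inclusion), so the operational-height bound given by Proposition~\ref{prop:opbound} is now polynomial rather than constant.

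The algorithmic core is a recursive alternating procedure that takes as input a candidate label $(s,\{t\})$ and a height budget $h \leq |A|+1$, and guesses the shape of a certifying tree: either a leaf (verify $s = t$ lies in $\alpha(A^*)$), a binary node (guess a factorization $s = s_1 s_2$, $t = t_1 t_2$ and recurse on $(s_1,\{t_1\})$ and $(s_2,\{t_2\})$ with budget~$h$), or an $S$-operation (guess a decomposition $t = e_1\, g\, e_2$ with $\ctype{g}=\ctype{s}$ and recurse on $(s,\{e_i\})$ with budget $h-1$, after guessing that $s$ is idempotent). Each recursive frame stores only a constant number of elements of $M$ and $N$ plus the height counter, i.e.\ logarithmic space; the recursion tree has depth polynomial in $|A|\log|M|$, which places the whole procedure in \pspace.

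The step I expect to be the main obstacle is the $S$-operation case: the child $(e,E)$ is required to be an \emph{idempotent} of $M \times 2^N$, so $E$ must satisfy $E^2 = E$ and must contain both witnesses $e_1,e_2$, yet $E$ itself may be exponentially large and cannot be stored on the stack. The key observation that I would use to bypass this is that, for a fixed idempotent $e$, the set $E_e = \{e' \in N \mid (e,\{e'\}) \text{ is a root label}\}$ is automatically closed under multiplication (two singleton roots combine by a binary node), and by an idempotent-power argument one can extract from it an idempotent second coordinate of a genuine tree. This means we never have to materialize $E$: guessing $e_1$ and $e_2$ and recursively certifying $(e,\{e_1\})$ and $(e,\{e_2\})$ is enough to witness the $S$-operation, which is what keeps the whole recursion in polynomial space.

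For the \pspace lower bound I would give a direct reduction from a canonical \pspace-complete problem — the natural choices being intersection emptiness for a family of \dfas, or universality of an \nfa. From an instance I would construct, over an alphabet whose size is polynomial in the input, two regular languages $L_1, L_2$ (presented as \nfas, and then transported to monoids via Corollary~\ref{cor:autoreduc} if needed) such that $L_1$ is \sthtwo-separable from $L_2$ iff the instance is positive. The encoding exploits the precise shape of $\sthtwo$ languages as finite unions of marked concatenations of piecewise testable languages: a separator at this level has just enough expressive power to describe a polynomially-sized run-certificate, and not more, and the reduction is designed so that separating forces the existence of exactly such a certificate. The same construction will be re-used, with slight modifications, to obtain the matching \pspace lower bound at level $2$.
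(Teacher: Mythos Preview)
Your upper-bound outline is aligned with the paper up to the point where you handle the $S$-operation node, but the ``key observation'' you propose there does not go through. Knowing that $(e,\{e_1\})$ and $(e,\{e_2\})$ are each root labels of trees of operational height at most $h-1$ does \emph{not} imply that some idempotent $(e,E)$ with $e_1,e_2\in E$ is itself such a root label: from singleton second components, binary nodes can only ever produce singleton second components (the product of two singletons is a singleton), so there is no way to ``merge'' $e_1$ and $e_2$ into a common $E$, and your idempotent-power remark only yields a \emph{single} idempotent element, not a set containing both witnesses. The paper closes this gap by a different, $\at$-specific observation: every label $(s,T)$ occurring in an $(\alpha_\at,\beta_\at,S)$-tree is \emph{alphabet safe}, i.e.\ all $t\in T$ share the content of $s$; hence $T\subseteq N\times\{c\}$ for one fixed $c\in 2^A$, so $|T|\leq |N|$. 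One may therefore store the \emph{full} set $T$ in polynomial space at every node, and together with the polynomial height bound coming from Proposition~\ref{prop:opbound} (the \Js-depth of $2^A$ is $|A|+1$) this gives the \pspace procedure. A smaller point: you invoke Lemma~\ref{lem:compat} to pass to an $\at$-compatible morphism, but that lemma is \logspace only for a \emph{fixed} alphabet, and here $|A^*/{\sim_\at}|=2^{|A|}$; the paper works instead with $\alpha_\at:A^*\to M\times 2^A$ implicitly, never materializing the codomain.

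Your lower-bound sketch is too schematic to assess and does not match the paper's route. The paper reduces from QBF: from a prenex-CNF formula $\Psi=Q_n x_n\cdots Q_1 x_1\,\varphi$ it builds languages $L_\Psi,L'_\Psi$ by induction on the quantifier prefix, introducing fresh delimiters $\#_i,\$$ at each level and encoding $\exists$ versus $\forall$ by two distinct concatenation patterns. The core technical content is Lemma~\ref{lem:reduclem}, an induction showing that for every partial valuation $V$ the sub-formula $\Psi_i$ is satisfied by $V$ iff $L_i\cap[V]$ is \emph{not} \sthtwo-separable from $L'_i\cap[V]$; the non-separability direction relies on the characteristic $\pol{\at}$ pumping property, and the separability direction exhibits an explicit $\pol{\at}$ separator assembled from the inductively obtained separators. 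Your description (``separating forces the existence of a run-certificate'') gives no indication of how either direction would be carried out, and the quantifier-handling machinery is precisely where the work lies. Note also that the paper's \pspace lower bound for level~$2$ is \emph{not} a variant of this construction but a separate, much simpler reduction from \sthtwo-separation itself (Proposition~\ref{prop:bpolred}).
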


The \pspace upper bound is proved by building on the techniques introduced in the previous section for handling the level 1/2 of an arbitrary finitely based hierarchies. Indeed, we have $\sthtwo = \at[1/2]$ by Theorem~\ref{thm:alphatrick}. However, let us point out that obtaining this upper bound requires some additional work: the results of Section~\ref{sec:fixalph} apply to the setting in which the alphabet is fixed, this is not the case here. In particular, this is why we end up with a \pspace upper bound instead of the generic \nlog upper presented in Corollary~\ref{cor:poltheo}. The detailed proof is postponed to the appendix.

\medskip

In this abstract, we focus on proving that \sthtwo-separation is \pspace-hard. The proof is presented for \nfas: the result can then be lifted to monoids with Corollary~\ref{cor:autoreducvari} since \sthtwo is a \pvarie. We use a \logspace reduction from the quantified Boolean formula problem (QBF) which is among the most famous \pspace-complete problems.

We first describe the reduction. For every quantified Boolean formula $\Psi$, we explain how to construct two languages $L_\Psi$ and $L'_\Psi$. It will be immediate from the presentation that given $\Psi$ as input, one may compute \nfas for $L_\Psi$ and $L'_\Psi$ in \logspace. Then, we show that this construction is the desired reduction: $\Psi$ is true if and only if $L_\Psi$ is not \sthtwo-separable from $L'_\Psi$.

\medskip

Consider a quantified Boolean formula $\Psi$ and let $n$ be the number of variables it involves. We assume without loss of generality that $\Psi$ is in prenex normal form and that the quantifier-free part of $\Psi$ is in conjunctive normal form (QBF remains \pspace-complete when restricted to such formulas). That is,
\[
  \Psi = Q_n\ x_n \cdots Q_1\ x_1\ \varphi
\]
where $x_1 \dots x_n$ are the variables of $\Psi$, $Q_1,\dots,Q_n \in \{\exists,\forall\}$ are quantifiers and $\varphi$ is a quantifier-free Boolean formula involving the variables $x_1 \dots x_n$ which is in conjunctive normal form.

We describe the two regular languages $L_\Psi,L'_\Psi$ by providing regular expressions recognizing them. Let us first specify the alphabet over which these languages are defined. For each variable $x_i$  occurring in $\Psi$, we create two letters that we write $x_i$ and $\overline{x_i}$. Moreover, we let,
\[
  X = \{x_1,\dots,x_n\} \quad \text{and} \quad \overline{X} = \{\overline{x_1},\dots,\overline{x_n}\}
\]
Additionally, our alphabet also contains the following letters: $\#_1,\dots,\#_i,\$$. For $0 \leq i \leq n$, we define an alphabet $B_i$. We have:
\[
  B_0 = X \cup \overline{X} \quad \text{and} \quad B_i = X \cup \overline{X} \cup \{\#_1,\dots,\#_i,\$\}
\]
Our languages are defined over the alphabet $B_n$: $L_\Psi,L_\Psi' \subseteq B_n^*$. They are built by induction: for $0 \leq i \leq n$ we describe two languages $L_i,L'_i \subseteq B_i^*$ (starting with the case $i = 0$). The languages $L_\Psi,L_\Psi'$ are then defined as $L_n,L'_n$.

\medskip
\noindent
{\bf Construction of $L_0,L'_0$.} The language $L_0$ is defined as $L_0 = (B_0)^*$. The language $L'_0$ is defined from the quantifier-free Boolean formula $\varphi$.
Recall that by hypothesis $\varphi$ is in conjunctive normal form: $\varphi = \bigwedge_{j \leq k} \varphi_j$ were $\varphi_i$ is a disjunction of literals. For all $j \leq k$, we let $C_j \subseteq B_0 = X  \cup \overline{X}$ as the following alphabet:
\begin{itemize}
\item Given $x \in X$, we have $x \in C_j$, if and only $x$ is a literal in the disjunction $\varphi_j$.
\item Given $\overline{x} \in \overline{X}$, we have $\overline{x} \in C_j$, if and only $\neg x$ is a literal in the disjunction $\varphi_j$.
\end{itemize}
Finally, we define $L'_0 = C_1C_2 \cdots C_k$.

\medskip
\noindent
{\bf Construction of $L_i,L'_i$ for $i \geq 1$.} We assume that $L_{i-1},L'_{i-1}$ are defined and describe $L_i$ and $L'_i$. We shall use the two following languages in the construction:
\[
  T_i = (\#_ix_i (B_{i-1} \setminus \{\overline{x_i}\})^*\$x_i)^* \quad \text{and} \quad  \overline{T_i} = (\#_i \overline{x_i} (B_{i-1} \setminus \{x_i\})^*\$\overline{x_i})^*
\]
The definition of $L_i,L'_i$ from $L_{i-1},L'_{i-1}$ now depends on whether the quantifier $Q_i$ is existential or universal. \begin{itemize}
\item If $Q_i$ is an existential quantifier (i.e. $Q_i = \exists$):
  \[
    \begin{array}{lll}
      L_i & = & (\#_i(x_i + \overline{x_i})L_{i-1}\$(x_i + \overline{x_i}))^*\#_i \\
      L'_i & = &(\#_i (x_i + \overline{x_i})L'_{i-1}\$(x_i + \overline{x_i}))^* \#_i\$ \left(T_i\#_i + \overline{T_i}\#_i\right)
    \end{array}
  \]
\item If the $Q_i$ is an universal quantifier (i.e. $Q_i = \forall$):
  \[
    \begin{array}{lll}
      L_i & = & (\#_i(x_i + \overline{x_i})L_{i-1}\$(x_i + \overline{x_i}))^*\#_i \\
      L'_i & = & \overline{T_i}\#_i\$(\#_i (x_i + \overline{x_i})L'_{i-1}\$(x_i + \overline{x_i}))^* \#_i\$ T_i\#_i
    \end{array}
  \]
\end{itemize}

Finally, $L_\Psi,L_\Psi'$ are defined as the languages $L_n,L'_n \subseteq (B_n)^*$. It is straightforward to verify from the definition, than given $\Psi$ as input, one may compute \nfas for $L_\Psi$ and $L_\Psi'$ in \logspace. Consequently, it remains to prove that this construction is the desired reduction. We do so in the following proposition.

\begin{proposition} \label{prop:reducgoal}
  For every quantified Boolean formula $\Psi$, $\Psi$ is true if and only if $L_\Psi$ is not \sthtwo-separable from $L'_\Psi$.
\end{proposition}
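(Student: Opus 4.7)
The plan is to use the standard preorder-based characterization of $\sthtwo$-non-separability: since $\sthtwo = \pol{\stone}$ is a lattice, $L$ and $L'$ are not $\sthtwo$-separable iff there is a pair $(u,v) \in L \times L'$ such that every language of $\sthtwo$ containing $u$ also contains $v$. I would prove both implications by induction on the quantifier rank $n$, showing that $L_i$ and $L'_i$ faithfully encode the QBF semantics level by level.

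For the forward direction ($\Psi$ true $\Rightarrow$ non-separable), I would build a pair $(u_n, v_n) \in L_\Psi \times L'_\Psi$ that cannot be separated, guided by a winning strategy of the existential player. The base case ($i=0$) uses a satisfying assignment $\nu$ of $\varphi$: pick in each clause $C_j$ a literal $\ell_j$ true under $\nu$ and set $v_0 = \ell_1 \cdots \ell_k \in L'_0$ and $u_0 = v_0 \in L_0$; this trivial pair is non-separable. For the inductive step: if $Q_i = \exists$, the strategy picks a value $b \in \{0,1\}$ yielding a true sub-formula, and we wrap a sufficient number of copies of the level-$(i{-}1)$ sub-witness in blocks bracketed by $\ell_b \in \{x_i, \overline{x_i}\}$ while appending the matching trap ($T_i$ if $b=1$ or $\overline{T_i}$ if $b=0$) to $v_i$; if $Q_i = \forall$, both values have sub-witnesses by the inductive hypothesis, and the dual wrap of $L'_i$ by $\overline{T_i}$ at the front and $T_i$ at the back encodes that the separator must accommodate both plays simultaneously.

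For the backward direction ($\Psi$ false $\Rightarrow$ separable), I would construct an explicit separator in $\sthtwo$ by induction on $i$, using a winning strategy of the universal player. At each level, the separator is built as a finite union of marked products (the standard form of $\sthtwo$-languages over $B_i$), with each product inspecting the bracket letters around one $L_{i-1}$ block together with the sub-separator inherited from level $i-1$. The constraint that $T_i$ forbids $\overline{x_i}$ (and $\overline{T_i}$ forbids $x_i$) is what allows a marked product in $\sthtwo$ to detect an inconsistent bracket pair, forcing any word of $L'_i$ to use brackets consistent with a play of the winning universal strategy; this is precisely what fails for the words of $L_i$.

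The main obstacle I anticipate is the verification step for the forward direction, particularly in the $\forall$ case: one must check that for every decomposition of $u_i$ as a marked product with prescribed alphabet constraints between the marked positions, $v_i$ admits a matching decomposition, even though $v_i$ contains the additional traps $T_i$ and $\overline{T_i}$ absent from $u_i$. This forces careful design of the number of Kleene iterations in $u_i$ so that its body can absorb the trap blocks of $v_i$ under the $\sthtwo$ preorder. The precise placement of the trap languages (with their forbidden opposite literal) is what carries the QBF commitment from one level to the next, and ensuring this is compatible with the syntactic structure of $\sthtwo$ is the heart of the argument.
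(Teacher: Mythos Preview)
Your overall plan matches the paper's: induction on the level $i$, the preorder characterization of non-separability for the forward direction, and an explicit separator for the backward direction. But there is a genuine gap in how you set up the induction.

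The formulas $\Psi_i = Q_i x_i \cdots Q_1 x_1\,\varphi$ have \emph{free} variables $x_{i+1},\dots,x_n$, so ``$\Psi_i$ is true'' is not well-defined without a partial assignment. Your inductive statement, however, is about the bare languages $L_i$ and $L'_i$, which do not depend on any such assignment. The paper fixes this by strengthening the inductive hypothesis: it introduces \emph{$i$-valuations} $V \subseteq B_i$ (encoding an assignment to $x_{i+1},\dots,x_n$) together with the language $[V] \subseteq V^*$, and proves that $\Psi_i$ is satisfied by $V$ iff $L_i \cap [V]$ is not $\sthtwo$-separable from $L'_i \cap [V]$. The intersection with $[V]$ is what carries the commitment to a partial assignment from one level to the next.

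Without this, your base case in the backward direction fails concretely. You would need: if $\varphi$ is false, then $L_0$ is $\sthtwo$-separable from $L'_0$. But $L_0 = B_0^*$, so this holds iff $L'_0 = \emptyset$, i.e.\ iff some clause of $\varphi$ has no literals---which has nothing to do with unsatisfiability. With the paper's formulation, one instead checks that $L'_0 \cap [V] = \emptyset$ whenever $V$ falsifies $\varphi$, which is exactly right. The same issue reappears in the inductive step: to build the separator at level $i$ (say for $Q_i = \exists$), you need \emph{two} different level-$(i-1)$ separators, one for each value of $x_i$; your inductive hypothesis only hands you one. The valuation parameter is precisely what lets the induction produce both $K_\top$ and $K_\bot$.
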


Proposition~\ref{prop:reducgoal} is proved by considering a stronger result which states properties of all the languages $L_i,L'_i$ used in the construction of  $L_\Psi,L_\Psi'$ (the argument is an induction on $i$). While we postpone the detailed proof to the appendix, let us provide a sketch which presents this stronger result.

\begin{proof}[Proof of Proposition~\ref{prop:reducgoal} (sketch)]
  Consider a quantified Boolean formula $\Psi$. Moreover, let $B_0,\dots,B_n$ and $L_i,L'_i \subseteq (B_i)^*$ as the alphabets and languages defined above. The key idea is to prove a property which makes sense for all languages $L_i,L'_i$. In the special case when $i = n$, this property implies Proposition~\ref{prop:reducgoal}.

  Consider $0 \leq i \leq n$. We write $\Psi_i$ for the sub-formula $\Psi_i := Q_i\ x_i \cdots Q_1\ x_1\ \varphi$ (with the free variables $x_{i+1},\dots,x_n$). In particular, $\Psi_0 := \varphi$ and $\Psi_n := \Psi$. Moreover, we call ``\emph{$i$-valuation}'' a sub-alphabet $V \subseteq B_i$ such that,
  \begin{enumerate}
  \item $\#_1,\dots,\#_i,\$ \in V$ and $x_1,\overline{x_1},\dots,x_i,\overline{x_i} \in V$, and,
  \item for every $j$ such that $i < j \leq n$, one of the two following property holds:
    \begin{itemize}
    \item $x_j \in V$ and $\overline{x_j} \not\in V$, or,
    \item $x_j \not\in V$ and $\overline{x_j} \in V$.
    \end{itemize}
  \end{enumerate}
  Clearly, an $i$-valuation corresponds to a truth assignment for all variables $x_j$ such that $j > i$ (i.e. those that are free in $\Psi_i$): when the first (resp. second) assertion in Item~2 holds, $x_j$ is assigned to $\top$ (resp. $\bot$).  Hence, abusing terminology, we shall say that an $i$-valuation $V$ \emph{satisfies} $\Psi_i$ if $\Psi_i$ is true when replacing its free variables by the truth values provided by $V$.

  Finally, for $0 \leq i \leq n$, if $V \subseteq B_i$ is an $i$-valuation, we let $[V] \subseteq V^*$ as the following language. Given $w \in V^*$, we have $w \in [V]$ if and only if for every $j > i$ either $x_j \in \cont{w}$ or $\overline{x_j} \in \cont{w}$ (by definition of $i$-valuations, exactly one of these two properties must hold). Proposition~\ref{prop:reducgoal} is now a consequence of the following lemma.

  \begin{lemma} \label{lem:reduclem}
    Consider $0 \leq i \leq n$. Then given an $i$-valuation $V$, the two following properties are equivalent:
    \begin{enumerate}
    \item $\Psi_i$ is satisfied by $V$.
    \item $L_i \cap [V]$ is not \sthtwo-separable from $L'_i \cap [V]$.
    \end{enumerate}
  \end{lemma}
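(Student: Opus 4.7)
My plan is to prove the lemma by induction on $i$. For the base case $i = 0$, $\Psi_0 = \varphi$ is a CNF formula and a $0$-valuation $V$ is a full truth assignment. Since $L_0 = B_0^*$, we have $L_0 \cap [V] = [V]$. If $\varphi$ is falsified by $V$, then some clause $C_j$ satisfies $C_j \cap V = \emptyset$, so every word of $L'_0$ uses a letter outside $V$, whence $L'_0 \cap [V] = \emptyset$ and the \at-language $V^* \in \sthtwo$ separates. Conversely, if $\varphi$ is satisfied, picking a satisfied literal $a_j \in V \cap C_j$ in each clause yields a word $a_1 \cdots a_k \in L'_0 \cap V^*$; using the normal form $\sthtwo = \pol{\at}$, one shows that any \sthtwo language containing $[V]$ must also contain such a word (after the standard preprocessing that forces every variable to appear in some clause, so that all chosen literals do occur in a selectable word), yielding non-separability.

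For the inductive step, fix an $i$-valuation $V$ and let $V_\top, V_\bot$ be the two $(i{-}1)$-valuations obtained from $V$ by fixing the truth value of $x_i$, namely $V_\top = V \setminus \{\overline{x_i}, \#_i\}$ and $V_\bot = V \setminus \{x_i, \#_i\}$. The induction hypothesis applied to these two valuations equates, for each $\epsilon \in \{\top, \bot\}$, $V_\epsilon$ satisfying $\Psi_{i-1}$ with the non-separability of $L_{i-1} \cap [V_\epsilon]$ from $L'_{i-1} \cap [V_\epsilon]$. When $Q_i = \exists$, $\Psi_i$ is satisfied by $V$ iff at least one of $V_\top, V_\bot$ satisfies $\Psi_{i-1}$, and the trailing disjunction $T_i\#_i + \overline{T_i}\#_i$ of $L'_i$ encodes this choice of truth value: an inner witness $u$ for, say, $V_\top$ lifts into $L'_i \cap [V]$ by wrapping it in blocks $\#_i x_i u \$ x_i$ and appending a $T_i$-tail, which contains no $\overline{x_i}$ and therefore does not disturb the $[V]$-membership. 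When $Q_i = \forall$, satisfaction requires both $V_\top$ and $V_\bot$ to satisfy $\Psi_{i-1}$; the sandwich structure $\overline{T_i}\#_i\$ \cdots \#_i\$ T_i\#_i$ of $L'_i$ forces every word of $L'_i \cap [V]$ to exhibit both $x_i$ and $\overline{x_i}$ at specific positions, so any separator must simultaneously accommodate both truth values, matching the required conjunction.

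The main technical obstacle will be translating \sthtwo-separators at level $i$ into \sthtwo-separators at level $i - 1$, which is the content of the hard direction in both quantifier cases. I plan to exploit the normal form $\sthtwo = \pol{\at}$: every separator is a finite union of marked alphabet blocks $A_0^* a_1 A_1^* \cdots a_m A_m^*$. The fresh marker $\#_i$, absent from $B_{i-1}$, forces each such block to align with the outer block structure of $L_i$ and $L'_i$, which allows the extraction of a well-defined \sthtwo-separator acting on the inner content by substitution of arbitrary inner words between outer markers. The subtle point is that the extracted separator must correctly track the extra letter $x_i$ or $\overline{x_i}$ demanded by $[V_\epsilon]$; the letters $x_i, \overline{x_i}$ that flank every inner $L_{i-1}$-block in the construction perform precisely this bookkeeping, and verifying the embedding-extraction symmetry for both quantifier cases and both directions of the equivalence is the crux of the induction.
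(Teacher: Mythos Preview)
Your plan has a structural gap: you only address the direction $1 \Rightarrow 2$ (satisfied $\Rightarrow$ not separable), phrased contrapositively as ``a separator at level $i$ yields separators at level $i{-}1$''. You do not address $2 \Rightarrow 1$. In the paper, the contrapositive of $2 \Rightarrow 1$ (not satisfied $\Rightarrow$ separable) is handled by an \emph{explicit construction} of a level-$i$ separator $K$ from the level-$(i{-}1)$ separators $K_\top, K_\bot$; this construction (a six-term union tracking which of $x_i, \overline{x_i}$ occur in the rightmost $\#_i$-free blocks) is the longest part of the appendix proof and is not suggested by anything in your outline.

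For the direction you do sketch, your approach via the monomial normal form of $\pol{\at}$ and an ``alignment'' argument is different from the paper's and is under-specified at the crucial step. The paper does not extract separators downward; instead it uses the preorder characterization of non-separability (Lemma~\ref{lem:sthsep}): it takes witnesses $u \polrelk u'$ at level $i{-}1$ and builds explicit words $w, w'$ at level $i$ with $w \polrelk w'$. The key tool is Lemma~\ref{lem:propreo2} (for $\cont{v} \subseteq \cont{u}$ and large enough exponents, $u^h \polrelk u^{h_1} v u^{h_2}$), which lets one insert the extra $\#_i\$$-tail of $w'$ without destroying the relation. Your alignment heuristic is morally the contrapositive of this, but you have not identified the missing ingredient: a level-$i$ monomial $B_0^* a_1 B_1^* \cdots a_m B_m^*$ can cut across arbitrarily many $\#_i$-blocks of $w$, so ``$\#_i$ forces alignment'' only becomes true after a pumping argument showing that sufficiently many repetitions saturate any fixed monomial. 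Without that, the extraction step does not go through.

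Finally, your base case overcomplicates the forward direction: once you have a word in $L'_0 \cap [V]$, note that $L_0 \cap [V] = [V] \supseteq L'_0 \cap [V]$, so the two input languages \emph{intersect} and are therefore not separable by any class whatsoever --- no analysis of \sthtwo is needed there.
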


  Lemma~\ref{lem:reduclem} is proved by induction on $i$ using standard properties of the polynomial closure operation (see~\cite{PZ:generic_csr_tocs:18} for example). The proof is postponed to the appendix. Let us explain why the lemma implies Proposition~\ref{prop:reducgoal}.

  Consider the special case of Lemma~\ref{lem:reduclem} when $i = n$. Observe that $V = B_n$ is an $n$-valuation (the second assertion in the definition of $n$-valuations is trivially true since there are no $j$ such that $n < j \leq n$). Hence, since $\Psi = \Psi_n$ and $L_\Psi,L'_\Psi = L_n,L'_n$, the lemma yields that,
  \begin{enumerate}
  \item $\Psi$ is satisfied by $V$ (i.e. $\Psi$ is true).
  \item $L_\Psi \cap [V]$ is not \sthtwo-separable from $L'_\Psi \cap [V]$.
  \end{enumerate}
  Moreover, we have $[V] = (B_n)^*$ by definition. Hence, we obtain that $\Psi$ is true if and only if $L$ is not \sthtwo-separable from $L'$ which concludes the proof of Proposition~\ref{prop:reducgoal}.
\end{proof}

\subsection{The level two}

For the level two, there is a gap between the lower and upper bound that we are able to prove. Specifically, we have the following theorem.

\begin{theorem} \label{thm:st}
  \sttwo-separation is in \exptime and \pspace-hard for both \nfas and monoids.
\end{theorem}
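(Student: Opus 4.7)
\textbf{Proof plan for Theorem~\ref{thm:st}.}

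For the \exptime upper bound, by Corollary~\ref{cor:autoreducvari} (applicable because \sttwo is a \varie and $\bool{\sttwo} = \sttwo \neq \reg$), it suffices to handle monoid inputs. Theorem~\ref{thm:alphatrick} identifies \sttwo with $\at[1]$, so Theorem~\ref{thm:bpoltheo} is applicable with $\Cs = \at$. Starting from a morphism $\alpha_0 : A^* \to M_0$ recognizing both input languages, I would first produce an \at-compatible morphism $\alpha : A^* \to M$ by taking the product with the canonical map $w \mapsto \cont{w}$. Since $\sim_\at$-classes are determined exactly by content, the quotient $A^*/\sim_{\at}$ has size $2^{|A|}$, and hence $|M| \leq |M_0| \cdot 2^{|A|}$, which is singly exponential in the input. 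I would then compute the greatest subset $S \subseteq \alpha(A^*) \times \alpha(A^*)$ satisfying $Red(\alpha,S) = S$ by a greatest-fixpoint iteration. Each iteration uses Theorem~\ref{thm:efficient} applied to $\beta = (\alpha,\alpha) : A^* \to M^2$ to test, for every $(s,t) \in S$, whether $(s,\{(t,s)\})$ is the root label of an $(\alpha,\beta,S)$-tree; each such test is in \ptime in $|M|$, hence in \exptime in the input size. With at most $|M|^2$ iterations, the whole procedure runs in \exptime, and Theorem~\ref{thm:bpoltheo} reads off the separation answer from the final $S$.

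For the \pspace-hardness lower bound, Corollary~\ref{cor:autoreducvari} again lets us target a single representation, so I would work with \nfas. The plan is a \logspace reduction from QBF in the spirit of Proposition~\ref{prop:reducgoal}, but the construction cannot simply be reused: the inclusion $\sthtwo \subseteq \sttwo$ goes the wrong way for a direct transfer, since non-separability in the larger class \sttwo is a strictly stronger condition than non-separability in \sthtwo. Instead, I would carry out a fresh inductive construction of pairs $(L_i, L_i')$ over alphabets $B_i$, built from the quantifier blocks of $\Psi$ with suitable marker letters $\#_1,\dots,\#_i,\$$ that enforce a rigid layered structure. The soundness proof would be an analogue of Lemma~\ref{lem:reduclem}: for every $i$-valuation $V$, the sub-formula $\Psi_i$ is satisfied under $V$ if and only if $L_i \cap [V]$ is not \sttwo-separable from $L_i' \cap [V]$. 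Instantiating at $i = n$ and $V = B_n$ then gives the desired reduction.

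The main obstacle is precisely the inductive step of the \sttwo non-separability claim, because one must rule out \emph{every} Boolean combination of \sthtwo-languages as a candidate separator, not merely single polynomials. I would address this by exploiting the algebraic characterization of \sttwo-separability supplied by Theorem~\ref{thm:bpoltheo} (with $\Cs = \at$), which reduces the impossibility of a \sttwo separator to a concrete combinatorial condition: the existence of an $(\alpha,\beta,S)$-tree whose root label witnesses the two elements associated with accepted pairs of words. The technical heart of the reduction is then to construct, from the truth of $\Psi_i$ under a valuation $V$, explicit $(\alpha,\beta,S)$-trees over the morphisms recognizing $L_i$ and $L_i'$, using the alternation of quantifiers to build idempotents and combine them via $S$-operation nodes so that the semantics of the QBF is mirrored by the tree-building rules. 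The upper bound, in contrast, is essentially a complexity accounting exercise once the algorithmic machinery of Section~\ref{sec:fixalph} is applied with the observation that the blow-up incurred by moving to an \at-compatible morphism is only singly exponential.
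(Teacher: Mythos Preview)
Your \exptime upper bound is essentially the paper's argument, with one caveat: Theorem~\ref{thm:efficient} is stated in Section~\ref{sec:fixalph} where both \Cs and the alphabet $A$ are \emph{fixed}, and the \nlog bound hides a dependence on the \Js-depth of $A^*/{\sim_\Cs}$. When $A$ is part of the input this depth is $|A|+1$ for $\Cs=\at$, so the bound degrades. The paper handles this via Proposition~\ref{prop:keyprop}, the alphabet-parametric variant; but your high-level accounting still yields \exptime and matches the paper.

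For the lower bound, however, you take a much harder route. You propose a fresh reduction directly from QBF, with an inductive construction mirroring Lemma~\ref{lem:reduclem}, and you plan to certify \sttwo-non-separability by exhibiting $(\alpha,\beta,S)$-trees witnessing the fixpoint of Theorem~\ref{thm:bpoltheo}. This is vague at the crucial step---how quantifier alternation maps to $S$-operation nodes and idempotents inside a greatest fixpoint you do not control a priori---and even if workable, it is far more laborious than needed.

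The paper instead reduces from \sthtwo-separation itself, already \pspace-hard by Theorem~\ref{thm:sth}. Given $H,H'\subseteq A^*$, set $B=A\cup\{\#,\$\}$, $L=\#(H'\#(A^*\$\#)^*)^*H\#(A^*\$\#)^*$ and $L'=\#(H'\#(A^*\$\#)^*)^*$; Proposition~\ref{prop:bpolred} shows $H$ is \sthtwo-separable from $H'$ iff $L$ is \sttwo-separable from $L'$. The forward direction is one line: if $K\in\sthtwo$ separates $H$ from $H'$, then $B^*\#K\#B^*\in\sthtwo\subseteq\sttwo$ separates $L$ from $L'$. The backward direction uses the preorder characterizations (Lemmas~\ref{lem:sthsep}--\ref{lem:propreo2}): from $u\in H$, $u'\in H'$ with $u\polrelk u'$ one manufactures $w\in L$, $w'\in L'$ with both $w\polrelk w'$ and $w'\polrelk w$, via suitable powers and insertions. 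Your observation that ``the inclusion $\sthtwo\subseteq\sttwo$ goes the wrong way'' correctly rules out the trivial transfer, but it overlooks this short constructive reduction between the two separation problems.
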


Similarly to what happened with \sthtwo, the \exptime upper bound is obtained by building on the techniques used in the previous section. Proving \pspace-hardness is achieved using a reduction from \sthtwo-separation (which is \pspace-hard by Theorem~\ref{thm:sth}). The reduction is much simpler than what we presented for \sthtwo above. It is summarized by the following proposition.

\begin{proposition} \label{prop:bpolred}
  Consider an alphabet $A$ and $H,H' \subseteq A^*$. Let $B = A \cup \{\#,\$\}$ with $\#,\$ \not\in A$, $L = \#(H'\#(A^*\$\#)^*)^*H\#(A^*\$\#)^* \subseteq B^*$ and $L' = \#(H'\#(A^*\$\#)^*)^* \subseteq B^*$. The two following properties are equivalent:
  \begin{enumerate}
  \item $H$ is \sthtwo-separable from $H'$.
  \item $L$ is \sttwo-separable from $L'$.
  \end{enumerate}
\end{proposition}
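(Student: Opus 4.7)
The plan is to prove the two implications of the equivalence separately.

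For direction (1) $\Rightarrow$ (2), I would proceed by direct construction. Given $K \in \sthtwo(A)$ with $H \subseteq K$ and $K \cap H' = \emptyset$, I claim that $M := B^* \# K \# B^*$, regarded as a language in $B^*$, lies in $\sthtwo(B) \subseteq \sttwo(B)$ and separates $L$ from $L'$. Verifying $M \in \sthtwo(B)$ reduces to showing $K \in \sthtwo(B)$, which follows by re-interpreting the polynomial expression for $K$ (a union of marked products of $\sttp{1}(A)$-languages) over the larger alphabet $B$: each $\sttp{1}(A)$-factor $P$ extends naturally to a $\sttp{1}(B)$-language $\widetilde{P}$ with $\widetilde{P} \cap A^* = P$, and since $A^* \in \sttp{1}(B)$ (definable as ``neither $\#$ nor $\$$ occurs as a subword''), closure of $\sthtwo$ under intersection and marked concatenation yields $M \in \sthtwo(B)$. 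The inclusion $L \subseteq M$ is immediate since every $L$-word exhibits a factor $\# h \#$ with $h \in H \subseteq K$, and $M \cap L' = \emptyset$ follows because any factor $\# u \#$ of an $L'$-word with $u \in K \subseteq A^*$ must be a main block in the $L'$-parsing (it contains no $\$$), hence lies in $H'$, contradicting $K \cap H' = \emptyset$.

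For direction (2) $\Rightarrow$ (1), I would argue the contrapositive: assuming $H$ is not $\sthtwo(A)$-separable from $H'$, I show that no $M \in \sttwo(B) = \bool{\sthtwo(B)}$ can separate $L$ from $L'$. The subtlety here is that the naive residual $K := \{u \in A^* : \# u \# \in M\}$ only yields a $\sttwo(A)$-separator of $H$ from $H'$ (via closure of $\sttwo$ under quotient and inverse image), not the $\sthtwo(A)$-separator required, so the Boolean power of $\sttwo$ has to be absorbed by the iterated structure of $L$ and $L'$. My plan is to use the algebraic characterizations obtained in Section~\ref{sec:fixalph} combined with the identifications $\sthtwo = \at[1/2]$ and $\sttwo = \at[1]$ from Theorem~\ref{thm:alphatrick}. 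Non-$\sthtwo(A)$-separability of $H$ from $H'$ yields, via Theorem~\ref{thm:poltheo} applied to an $\at$-compatible morphism $\alpha$ recognizing both, an $(\alpha,\alpha,S)$-tree with root label $(s_0, \{s_1\})$ for some $s_0 \in \alpha(H)$, $s_1 \in \alpha(H')$. I would then lift this tree to a certificate of non-$\sttwo(B)$-separability of $L$ from $L'$ in the sense of the greatest-fixpoint characterization of Theorem~\ref{thm:bpoltheo}, realizing each ``$S$-operation'' node of the tree in the syntactic monoid of a candidate separator by iterating a block $h'\#(A^*\$\#)^*$ enough times to reach an idempotent --- precisely what the Kleene stars in $L$ and $L'$ make possible.

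The main obstacle will be the block-by-block correspondence in this lifting: each idempotent-operation on the $H, H'$ side must be matched with a concrete iterated block of the form $h'\#(A^*\$\#)^*$ inside $L, L'$, while keeping track of $\at$-types so that the lifted tree genuinely certifies non-$\sttwo(B)$-separability through the $Red$ fixpoint. This block-level bookkeeping is where the precise design of the encoding --- the roles of $\#$, $\$$, and the nested Kleene stars --- does the essential work, and it is likely the most technical portion of the proof.
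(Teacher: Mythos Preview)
Your direction $(1)\Rightarrow(2)$ is exactly the paper's argument: the separator is $B^*\#K\#B^*$, and the verifications are the same.

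For $(2)\Rightarrow(1)$ you take a genuinely different route from the paper, and I think your plan has a real gap. The paper does \emph{not} go through the algorithmic characterizations of Theorems~\ref{thm:poltheo} and~\ref{thm:bpoltheo}. Instead it introduces (in the appendix) a stratification $\polk{\at}$ of $\pol{\at}=\sthtwo$ and the associated preorders $\polrelk$, together with the characterizations of non-separability in Lemmas~\ref{lem:sthsep} and~\ref{lem:stsep}. From non-\sthtwo-separability one extracts, for each $k$, words $u\in H$, $u'\in H'$ with $u\polrelk u'$, and then writes down explicit words
\[
w=\#(u'\#(u\$\#)^{3^{k+1}})^{3^{k+1}}u\#(u\$\#)^{3^{k+1}}\in L,\qquad
w'=\#(u'\#(u\$\#)^{3^{k+1}})^{3^{k+1}}\in L',
\]
and checks $w\polrelk w'$ and $w'\polrelk w$ using only the three elementary facts: compatibility of $\polrelk$ with concatenation (Lemma~\ref{lem:concat}), stability under high powers (Lemma~\ref{lem:propreo1}), and the key absorption property $u^{h}\polrelk u^{h_1}vu^{h_2}$ when $\cont{v}\subseteq\cont{u}$ (Lemma~\ref{lem:propreo2}). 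This is a short, purely combinatorial argument once those lemmas are in hand.

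Your proposed route instead tries to transport an $(\alpha,\alpha,S)$-tree over $A$ into a witness for the greatest fixpoint of $Red$ over $B$. The problem is that Theorems~\ref{thm:poltheo} and~\ref{thm:bpoltheo} are statements about a \emph{fixed} \at-compatible morphism; to apply Theorem~\ref{thm:bpoltheo} on the $B$-side you must first choose a concrete morphism $\beta$ recognizing $L$ and $L'$, and then exhibit a good subset $S'\subseteq\beta(B^*)^2$ with $Red(\beta,S')=S'$ containing the right pair. Your sketch (``realizing each $S$-operation node by iterating a block enough times to reach an idempotent'') does not say what $\beta$ is, what $S'$ is, or why $S'$ is stable under $Red$ --- and the $Red$-stability is a global closure condition, not something a single tree establishes. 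Making this precise is substantially more work than the paper's preorder argument and is the missing idea in your proposal. I would recommend switching to the preorder approach: it is what the encoding of $L,L'$ is designed for (the inner iteration $(A^*\$\#)^*$ is there precisely to saturate alphabets so that Lemma~\ref{lem:propreo2} applies).
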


Proposition~\ref{prop:bpolred} is proved using standard properties of the polynomial and Boolean closure operations. The argument is postponed ot the appendix. It is clear than given as input \nfas for two languages $H,H'$, one may compute \nfas for the languages $L,L'$ defined Proposition~\ref{prop:bpolred} in \logspace. Consequently, the proposition yields the desired \logspace reduction from \sthtwo-separation for \nfas to \sttwo-separation for \nfas. This proves that \sttwo-separation is \pspace-hard for \nfas (the result can then be lifted to monoids using Corollary~\ref{cor:autoreducvari}) since \sttwo is a \varie).

\section{Conclusion}
\label{sec:conc}
We showed several results, all of them raising new questions. First we proved that for many important classes of languages (including all \pvaries), the complexity of separation does not depend on how the input languages are represented. A natural question is whether the technique can be adapted to encompass more classes. In particular, one may define more permissive notions of \pvaries by replacing  closure under inverse image by weaker notions. For example, many natural classes are \emph{length increasing \pvaries}: closure under inverse image only has to hold for length increasing morphisms (\emph{i.e.}, morphisms $\alpha: A^* \to B^*$ such that $|\alpha(w)| \geq |w|$ for every $w \in A^*$). For example, the levels of another famous concatenation hiearchy, the dot-depth~\cite{BrzoDot} (whose basis is $\{\emptyset,\{\varepsilon\},A^+,A^*\}$) are length increasing \pvaries. Can our techniques be adapted for such classes? Let us point out that there exists no example of natural class \Cs for which separation is decidable and strictly harder for \nfas than for monoids. However, there are classes \Cs for which the question is open (see for example the class of locally testable languages in~\cite{pvzltt}).

We also investigated the complexity of separation for levels 1/2 and 1 in finitely based concatenation hierarchies. We showed that when the alphabet is fixed, the problems are respectively in \nlog and \ptime for any such hierarchy. An interesting follow-up question would be to push these results to level 3/2, for which separation is also known to be decidable in any finitely based concatenation hierarchy~\cite{pbp}. A rough analysis of the techniques used in~\cite{pbp} suggests that this requires moving above \ptime.

Finally, we showed that in the famous Straubing-Thérien hierarchy, \sthtwo-separation is \pspace-complete and \sttwo-separation is in \exptime and \pspace-hard. Again, a natural question is to analyze \sththree-separation whose decidability is established in~\cite{pbp}.


\appendix
\newpage

\section{Appendix to Section~\ref{sec:nfatomono}}
\label{app:nfatomono}
In this appendix, we present the missing proofs for the statements of Section~\ref{sec:nfatomono}.

\subsection{Proof of Proposition~\ref{prop:variautored1}}

We start with Proposition~\ref{prop:variautored1} which is used to build morphisms recognizing the languages we associate to \nfas and tagging pairs. Let us recall the statement.

\adjustc{prop:variautored1}

\begin{proposition}

  Given a \nfa $\As = (A,Q,\delta,I,F)$ and a compatible tagging $P$ of size $n$, one may compute in \logspace a monoid morphism $\alpha: (A \cup \frE)^* \to M$ recognizing $L[\As,P]$ and such that $|M| \leq n + |A| \times n^2 \times (|Q|^2+2)$.

\end{proposition}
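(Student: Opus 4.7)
The strategy is to build $M$ as an explicit monoid whose elements encode, for every input word, the relevant data read off its unique decomposition $w = u \, a_1 w_1 \cdots a_k w_k$ with $u, w_i \in \frE^*$, $a_i \in A$ (where $k = 0$ iff $w \in \frE^*$).

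First, because every transition of $\As[P]$ carries a distinct letter of $A \times T$, the transition relation of any $v \in (A\times T)^*$ is either the identity (for $v = \varepsilon$), empty, or a singleton $\{(p,q)\}$ with $p,q \in Q$. The transition-relation morphism therefore yields $\beta \colon (A\times T)^* \to N_0$ recognizing $K[\As,P]$ with $|N_0| \leq |Q|^2 + 2$ and accepting set $F_\beta = \{R \in N_0 : R \cap (I\times F) \neq \emptyset\}$, all computable in \logspace: evaluating $\beta((a,t))$ amounts to scanning $\delta$ and $G$ in parallel to locate the (unique, if any) transition whose relabel is $(a,t)$.

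Second, I set $M = T \sqcup (A \times T \times N_0 \times T)$ with identity $1_T$ and intend $\alpha(w) = \tau(w) \in T$ when $w \in \frE^*$, and otherwise
\[
\alpha(w) = \bigl(\tau(u),\ \beta((a_1,\tau(w_1))\cdots(a_{k-1},\tau(w_{k-1}))),\ a_k,\ \tau(w_k)\bigr),
\]
which records the $\tau$-image of the $\frE$-prefix, the $\beta$-image of the already-completed blocks, the last $A$-letter, and the $\tau$-image of the pending $\frE$-suffix. The multiplication is forced by this intention: $T \cdot T$ is the product in $T$; a $T$-element adjacent to a tuple is absorbed into the adjacent $\frE$-region (e.g.\ $t \cdot (t_0, s, a, t') = (t t_0, s, a, t')$ and symmetrically on the right); and two tuples multiply by closing the joined $\frE$-region into a new block,
\[
(t_0^1, s^1, a^1, t^1) \cdot (t_0^2, s^2, a^2, t^2) = \bigl(t_0^1,\ s^1 \cdot \beta((a^1, t^1 t_0^2)) \cdot s^2,\ a^2,\ t^2\bigr).
\]
Setting $\alpha(b) = \tau(b)$ for $b \in \frE$ and $\alpha(a) = (1_T, 1_{N_0}, a, 1_T)$ for $a \in A$, a routine induction on $|w|$ shows that $\alpha$ extends to a morphism realizing the intended assignment.

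Third, I choose
\[
F = \{(t_0, s, a, t) \in M : s \cdot \beta((a,t)) \in F_\beta\} \cup T_\varepsilon,
\]
where $T_\varepsilon = T$ if $\varepsilon \in L(\As)$ and $T_\varepsilon = \emptyset$ otherwise. Unfolding $L[\As,P] = \frE^* \cdot \mu^{-1}(K[\As,P])$ together with the uniqueness of the canonical decomposition of $w$ yields $\alpha^{-1}(F) = L[\As,P]$, while $|M| = n + |A| \cdot n^2 \cdot |N_0| \leq n + |A| \cdot n^2 (|Q|^2 + 2)$ is immediate, and every component of the output (multiplication table, letter images, accepting set) reduces to basic lookups in $T$, $N_0$, $\delta$ and $G$, hence is produced in \logspace. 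The main obstacle will be verifying associativity of the multiplication on $M$, which splits into eight cases by the types ($T$ or tuple) of the three factors; each case collapses to associativity in $T$, associativity in $N_0$, and the fact that $\beta$ is a morphism, but one must check that an $\frE$-region sitting at a junction is neither duplicated nor lost.
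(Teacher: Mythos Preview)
Your proposal is correct and follows essentially the same route as the paper: build a small monoid $N_0$ recognizing $K[\As,P]$ from the fact that $\As[P]$ has distinctly labeled transitions, then take $M = T \cup (T \times N_0 \times A \times T)$ with the ``close the pending block'' multiplication and the obvious generator images. One cosmetic slip: you declare $M = T \sqcup (A \times T \times N_0 \times T)$ but then use tuples $(t_0,s,a,t') \in T \times N_0 \times A \times T$; the latter is what you actually want (and what the paper uses).
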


\restorec

Let $P = (\tau: \frE^* \to T,G)$ ($n = |T|$). We construct the morphism $\alpha: (A \cup \frE)^* \to M$ recognizing $L[\As,P] \subseteq (A \cup \frE)^*$. That it has size $|M| \leq n + |A| \times n^2 \times (|Q|^2+2)$ and can be computed in \logspace is immediate from the construction.

Recall that $L[\As,P]$ is defined from an intermediary language $K[\As,P] \subseteq (A \times T)^*$ which is recognized by the \nfa $\As[P]$. We first prove the following preliminary result about $K[\As,P]$ which uses the fact that, by construction, all transitions in $\As[P]$ are labeled by distinct letters in $A \times T$.

\begin{lemma} \label{lem:variautored1}

  The language $K[\As,P]$ is recognized by a morphism $\beta: (A \times T)^* \to N$ such that monoid $N$ has size $|N| \leq |Q|^2+2$.

\end{lemma}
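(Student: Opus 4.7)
\textbf{Proof plan for Lemma~\ref{lem:variautored1}.} The plan is to exploit the crucial structural feature of $\As[P]$ highlighted in the preceding remark: by construction, each letter $(a,t) \in A \times T$ labels \emph{at most one} transition of $\As[P]$. Hence, reading a single letter from state $q$ in $\As[P]$ deterministically leads to at most one successor state. I will turn this observation into a recognizing monoid whose elements are partial relations on $Q$ of cardinality at most one, plus a formal identity.

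Concretely, I would define $N$ as the set consisting of a formal identity $1_N$, a zero element $0$, and all pairs $(q,r) \in Q \times Q$, so that $|N| = |Q|^2 + 2$. Multiplication is as follows: $1_N$ is neutral, $0$ is absorbing, and $(p,q) \cdot (r,s) = (p,s)$ if $q = r$, otherwise $(p,q) \cdot (r,s) = 0$. Associativity is routine. The morphism $\beta: (A \times T)^* \to N$ is defined on letters by $\beta((a,t)) = (q,r)$ if $(q, (a,t), r)$ is the unique transition of $\As[P]$ carrying the label $(a,t)$, and $\beta((a,t)) = 0$ if no such transition exists; then $\beta$ is extended to $(A \times T)^*$ by multiplicativity. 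The key compatibility to check is that for every nonempty word $w \in (A \times T)^+$, the value $\beta(w)$ equals $(q,r)$ exactly when there is a (necessarily unique) run of $\As[P]$ from $q$ to $r$ labeled by $w$, and equals $0$ otherwise. This follows by a straightforward induction on $|w|$ using the definition of multiplication in $N$ and the single-transition-per-letter property.

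It then remains to exhibit an accepting subset $F' \subseteq N$ with $\beta^{-1}(F') = K[\As,P]$. Set
\[
F' = \{(q,r) \mid q \in I,\ r \in F\} \cup \begin{cases} \{1_N\} & \text{if } I \cap F \neq \emptyset, \\ \emptyset & \text{otherwise.}\end{cases}
\]
For a nonempty word $w$, we have $w \in K[\As,P]$ iff there is an accepting run, iff $\beta(w) = (q,r)$ with $q \in I$ and $r \in F$; for $w = \varepsilon$, we have $\varepsilon \in K[\As,P]$ iff $I \cap F \neq \emptyset$ iff $1_N = \beta(\varepsilon) \in F'$. Thus $K[\As,P] = \beta^{-1}(F')$, which completes the proof.

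The only mildly delicate point is handling the empty word correctly: the identity relation on $Q$ has $|Q|$ elements and so does not fit in our one-element-or-less framework, which is why I introduce $1_N$ as a separate formal neutral element rather than identifying it with the identity relation. Apart from this bookkeeping, the argument is entirely driven by the fact that the relabeling in the construction of $\As[P]$ makes each letter act as a partial function of support size at most one, which is stable under composition.
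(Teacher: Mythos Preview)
Your proposal is correct and follows essentially the same approach as the paper: the paper defines the very same monoid $N = Q^2 \cup \{0_N,1_N\}$ with the same multiplication and the same morphism $\beta$ on letters, and then simply asserts that one may verify $\beta$ recognizes $K[\As,P]$. Your write-up is in fact more detailed than the paper's, since you spell out the accepting subset $F'$ and the treatment of the empty word, which the paper omits.
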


\begin{proof}

  Recall that $\As[P] = (A \times T,Q,\delta[P],I,F)$ where $\delta[P]$ is obtained by relabeling the transition of \As. We let $N = Q^2 \cup \{0_N,1_N\}$ and equip $N$ with the following multiplication. The elements $0_N$ and $1_N$ are respectively a zero and a neutral element. For $(q_1,r_1),(q_2,r_2) \in Q^2$, we define,

  \[
    (q_1,r_1) \cdot (q_2,r_2) = \left\{
      \begin{array}{ll}
        (q_1,r_2) & \text{if $r_1 = q_2$} \\
        0_N & \text{otherwise}
      \end{array}
    \right.
  \]

  We now define a morphism $\beta: (A \times T)^* \to N$. Given $(a,t) \in A \times T$, we know by definition that there exists at most one transition in $\delta[P]$ whose label is $(a,t)$. Therefore, either there is no such transition and we let $\beta((a,t)) = 0_N$ or there exists exactly one pair $(q,r) \in Q^2$ such that $(q,(a,t),r) \in \delta[P]$ and we define $\beta((a,t)) = (q,r)$. One may now verify that $\beta$ recognizes $L(\As[P]) = K[\As,P]$.

\end{proof}

Let us briefly recall how $L[\As,P] \subseteq (A \cup \frE)^*$ is defined from $K[\As,P]$. We have a map $\mu: (A\frE^*)^* \to (A \times T)^*$ defined as follows. Consider $w \in (A\frE^*)^*$. Since $A \cap \frE = \emptyset$, $w$ admits a unique decomposition $w = a_1w_1 \cdots a_n w_n$ with $a_1,\dots,a_n \in A$ and $w_1,\dots,w_n \in \frE^*$. We define, $\mu(w) = (a_1,\tau(w_1)) \cdots (a_n,\tau(w_n))$. Finally, recall that,
\[
  L[\As,P] = \frE^* \cdot \mu\inv(K[\As,P]) \subseteq \frE^*(A\frE^*)^* = (A \cup \frE)^*
\]
We may now define the morphism $\alpha: (A \cup \frE)^* \to M$. We let $\beta: (A \times T)^* \to N$ as the morphism given by Lemma~\ref{lem:variautored1}. Consider the following set $M$:
\[
  M = T \cup (T \times N \times A \times T)
\]
Note that since $|N| \leq |Q|^2+2$, we do have $|M| \leq n + |A| \times n^2 \times (|Q|^2+2)$ as desired. We equip $M$ with the following multiplication. Since $M$ is defined as a union there are two kinds of elements which means that we have to consider four cases:

\begin{itemize}

\item If $t,t' \in T$, then their multiplication as element of $M$ is the one in $T$, i.e. $tt'$.

\item If $t  \in T$ and $(t_1,s,a,t_2) \in T \times N \times A \times T$, we let,

  \[
    \begin{array}{lll}
      t \cdot (t_1,s,a,t_2) & = & (tt_1,s,a,t_2) \\
      (r,t_1,s,a,t_2) \cdot t & = & (t_1,s,a,t_2t)
    \end{array}
  \]

\item If $(t_1,s,a,t_2),(t'_1,s',a',t'_2) \in T \times N \times A \times T$, we let,

  \[
    (t_1,s,a,t_2) \cdot (t'_1,s',a',t'_2) = (t_1,s \beta((a,t_2t'_1))s',a',t'_2)
  \]
\end{itemize}

One may verify that this multiplication is associative and that $1_T \in T$ is a neutral element for $M$. Finally, we define a morphism $\alpha: (A \cup \frE)^* \to M$ as follows. For $a \in A$, we let $\alpha(a) = (1_T,1_N,a,1_T) \in T \times N \times A \times T$ and for $b \in \frE$, we let $\alpha(b) = \tau(b) \in T$. The following fact can be verified from the definition of $\alpha$.

\begin{fact} \label{fct:thealpha}
  Consider a word $u \in (A \cup \frE)^*$. Then, one of the two following properties holds:

  \begin{enumerate}
  \item $u \in \frE^*$ and $\alpha(u) = \tau(u) \in T$.
  \item $u = u_0u_1au_2$ with $u_0 \in \frE^*$, $u_1 \in (A\frE^*)^*$, $a \in A$ and $u_2 \in \frE^*$ and we have,
    \[
      \alpha(u) = (\tau(u_0),\beta(\mu(u_1)),a,\tau(u_2)).
    \]
  \end{enumerate}
\end{fact}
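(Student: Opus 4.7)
\textbf{Proof plan for Fact~\ref{fct:thealpha}.} The statement is purely a computation from the definition of $\alpha$ and the multiplication on $M$, so the plan is a case analysis followed by an induction that tracks how the four components of an element of $T \times N \times A \times T$ evolve. I treat the two cases of the fact separately.

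First, suppose $u \in \frE^*$. Writing $u = b_1 \cdots b_k$ with each $b_i \in \frE$, the definition $\alpha(b) = \tau(b)$ for $b \in \frE$ together with the fact that the multiplication on $T \subseteq M$ agrees with the monoid operation of $T$ gives $\alpha(u) = \tau(b_1) \cdots \tau(b_k) = \tau(u)$, which is Case~(1).

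Now suppose $u$ contains at least one letter of $A$. Let $a$ be the \emph{last} letter of $u$ belonging to $A$, and let $u_2 \in \frE^*$ be the suffix of $u$ that follows this occurrence of $a$. Let $u_0 \in \frE^*$ be the maximal prefix of $u$ in $\frE^*$, and define $u_1$ so that $u = u_0 u_1 a u_2$. By choice of $u_0$ and of $a$, the word $u_1$ either is empty or starts with a letter of $A$ and contains no $A$-letter except in positions followed only by letters of $\frE$, so $u_1 \in (A\frE^*)^*$. Write $u_1 = a_1 w_1 \cdots a_n w_n$ with $a_i \in A$ and $w_i \in \frE^*$ (the case $n=0$ is $u_1 = \varepsilon$).

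The core computation is to establish the following invariant by induction on $n$:
\[
  \alpha(u_1 a) \;=\; \bigl(1_T,\, \beta(\mu(u_1)),\, a,\, 1_T\bigr).
\]
For $n=0$ this is just $\alpha(a) = (1_T,1_N,a,1_T)$ and $\beta(\mu(\varepsilon))=1_N$. For the inductive step, $\alpha(a_1 w_1 \cdots a_n w_n)$ is first shown, again by induction, to equal $(1_T, \beta((a_1,\tau(w_1))\cdots(a_{n-1},\tau(w_{n-1}))), a_n, \tau(w_n))$: the third rule of the multiplication on $M$ turns each transition from position $i$ to position $i{+}1$ into the multiplication in $N$ of a new factor $\beta((a_i,\tau(w_i)))$, while the second rule absorbs each intermediate $\tau(w_i)\in T$ into the last component. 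Multiplying on the right by $\alpha(a) = (1_T,1_N,a,1_T)$ picks up the missing factor $\beta((a_n,\tau(w_n)))$ in the $N$-component and resets the final $T$-component to $1_T$, which gives the invariant. Finally, $\alpha(u) = \tau(u_0)\cdot \alpha(u_1 a)\cdot \tau(u_2)$, and applying the rules $t\cdot(t_1,s,a,t_2) = (tt_1,s,a,t_2)$ and $(t_1,s,a,t_2)\cdot t = (t_1,s,a,t_2 t)$ yields $(\tau(u_0),\beta(\mu(u_1)),a,\tau(u_2))$, which is Case~(2).

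The only delicate point is the bookkeeping in the inductive step on $n$: one must check that at each stage the first component stays $1_T$ and the fourth component absorbs just the trailing $\tau(w_i)$, so that the next multiplication by $\alpha(a_{i+1}) = (1_T,1_N,a_{i+1},1_T)$ produces exactly the factor $\beta((a_i,\tau(w_i)))$ in the $N$-slot. This is a direct unfolding of the three multiplication rules defining $M$, and no further ingredients are needed.
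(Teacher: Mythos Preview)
Your proposal is correct and carries out precisely the routine verification that the paper leaves to the reader: the paper gives no proof of this fact beyond the sentence ``can be verified from the definition of $\alpha$''. Your decomposition $u = u_0 u_1 a u_2$ and the induction tracking the four components through the multiplication rules of $M$ is exactly the intended computation.
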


It remains to verify that $\alpha$ recognizes $L[\As,P]$. Since $K[\As,P]$ is recognized by $\beta$, we have $H \subseteq N$ such that $K[\As,P] = \beta\inv(H)$. We define $H' \subseteq M$ as the following set:
\[
  H' = \left\{
    \begin{array}{ll}

      \{(t_1,s,a,t_2) \in T \times N \times A \times T \mid s \beta((a,t_2)) \in H\} & \text{if $1_N \not\in H$} \\

      \{(t_1,s,a,t_2) \in T \times N \times A \times T \mid s \beta((a,t_2)) \in H\} \cup T & \text{if $1_N \in H$}

    \end{array}
  \right.
\]
Since $L[\As,P] = \frE^* \cdot \mu\inv(K[\As,P])$ by definition, it can be verified from Fact~\ref{fct:thealpha} that $L[\As,P] = \alpha\inv(H')$ which concludes the proof.

\subsection{Proof of Proposition~\ref{prop:variautored2}}

We first recall Proposition~\ref{prop:variautored2}.

\adjustc{prop:variautored2}

\begin{proposition}

  Let $\Cs,\Ds$ be \pvaris such that \Ds extends \Cs. Consider two \nfas $\As_1$ and $\As_2$ over some alphabet $A$ and let $P$ be a compatible tagging that fools \Ds.  Then, $L(\As_1)$ is $\Cs(A)$-separable from $L(\As_2)$ if and only if $L[\As_1,P]$ is $\Ds(A \cup \frE)$-separable from $L[\As_2,P]$.

\end{proposition}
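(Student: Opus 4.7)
The plan is to prove the two implications separately. The forward direction uses only the first clause of the extension property, while the backward direction is where the fooling hypothesis on $P$ plays its essential role, through a carefully chosen lift of accepting runs.

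\emph{Forward direction.} Suppose $K \in \Cs(A)$ separates $L(\As_1)$ from $L(\As_2)$, and let $\gamma : (A \cup \frE)^* \to A^*$ be the morphism erasing the letters of $\frE$. By the first clause of the definition of ``extension'', $\gamma\inv(K) \in \Ds(A \cup \frE)$. It then suffices to check $L[\As_i,P] \subseteq \gamma\inv(L(\As_i))$ for $i = 1,2$. Indeed, any $w \in L[\As_i,P]$ decomposes as $w = u \cdot a_1 w_1 \cdots a_n w_n$ with $u,w_j \in \frE^*$ and $a_j \in A$, and $\mu(a_1 w_1 \cdots a_n w_n) \in L(\As_i[P])$ unwinds to an accepting run of $\As_i$ on $a_1 \cdots a_n = \gamma(w)$ (since $\As_i[P]$ is obtained from $\As_i$ only by relabeling transitions). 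Hence $\gamma\inv(K)$ is the desired separator.

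\emph{Backward direction.} We argue by contrapositive. Let $K' \in \Ds(A \cup \frE)$ separate $L[\As_1,P]$ from $L[\As_2,P]$ and let $\alpha : (A \cup \frE)^* \to M$ be its syntactic morphism. Since $\Ds$ is a quotienting lattice of regular languages containing $K'$, each syntactic class of $K'$ is a Boolean combination of quotients of $K'$, hence lies in $\bool{\Ds}(A \cup \frE)$; thus every language recognised by $\alpha$ belongs to $\bool{\Ds}(A \cup \frE)$. The assumption that $P = (\tau,G)$ fools $\Ds$ therefore yields some $s \in M$ and, for every $t \in G$, a word $w_t \in \frE^*$ with $\alpha(w_t) = s$ and $\tau(w_t) = t$. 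Fix any $u \in \frE^*$ with $\alpha(u) = s$ (e.g.\ $u = w_{t_0}$) and set $L := \lambda_u\inv(K')$, which lies in $\Cs(A)$ by the second clause of ``extension''. The crux is the following lift. Given $w = a_1 \cdots a_n \in L(\As_i)$ with an accepting run whose $k$-th transition is the $j_k$-th of $\delta_i$, tagged by $t_{j_k} \in G$, form $w' := a_1 w_{t_{j_1}} \cdots a_n w_{t_{j_n}}$. On the one hand, $\mu(w') = (a_1, t_{j_1}) \cdots (a_n, t_{j_n})$ labels the corresponding accepting run of $\As_i[P]$, so $w' \in \mu\inv(K[\As_i,P]) \subseteq L[\As_i,P]$. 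On the other hand, every $\alpha(w_{t_{j_k}})$ equals $s = \alpha(u)$, so $\alpha(w') = \alpha(a_1)\,s \cdots \alpha(a_n)\,s = \alpha(\lambda_u(w))$. Since $K'$ is saturated by $\alpha$, this yields $w' \in K' \Leftrightarrow w \in L$. Applied to $i = 1$ (where $w' \in L[\As_1,P] \subseteq K'$) this gives $L(\As_1) \subseteq L$; applied to $i = 2$ (where $w' \in L[\As_2,P]$ and thus $w' \notin K'$) it gives $L(\As_2) \cap L = \emptyset$. The degenerate case $w = \varepsilon$ (when $I_i \cap F_i \neq \emptyset$) is handled the same way, since $\lambda_u(\varepsilon) = \varepsilon$ belongs to $L[\As_i,P]$.

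\emph{Main obstacle.} The central difficulty is reconciling the per-transition tags $t_{j_k}$ required to place the lift $w'$ inside $L[\As_i,P]$ with the single uniform word $u$ imposed by the morphism $\lambda_u$ of the extension property. The fooling hypothesis is designed exactly for this: it ensures that all the tag-realising words $w_t$ can be chosen with a common $\alpha$-image $s$, so that the $\alpha$-image of the non-uniform lift $w'$ coincides with that of the uniform substitution $\lambda_u(w)$. Once this bridge is built, combining ``$w'$ lies on the correct side of $K'$'' with the $\alpha$-saturation of $K'$ immediately yields the desired $\Cs(A)$-separator.
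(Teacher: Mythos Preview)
Your proof is correct and follows essentially the same route as the paper's: the forward direction pulls back along the erasing morphism $\gamma$ using the first extension clause, and the backward direction takes the syntactic morphism of the $\Ds$-separator, invokes the fooling property to obtain tag-realising words $w_t$ with common $\alpha$-image $s$, and then shows that $\lambda_u^{-1}$ of the separator works via the second extension clause. One cosmetic point: in the backward direction you write ``we argue by contrapositive'' but then give a direct proof (assume a $\Ds$-separator, produce a $\Cs$-separator); this is harmless, just a slip in wording.
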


\restorec

We fix $\As_1 = (A,Q_1,\delta_1,I_1,F_1)$ and $\As_2 = (A,Q_2,\delta_2,I_2,F_2)$ for the proof. Moreover, we let $P = (\tau: \frE^* \to T,G)$ as the tagging pair which fools \Ds.

\medskip

There are two directions to prove. First, we assume that $L(\As_1)$ is \Cs-separable from $L(\As_2)$. We prove that $L[\As_1,P]$ is \Ds-separable from $L[\As_2,P]$. Note that this direction is independent from the hypothesis that $P$ fools \Ds. Let $K \in \Cs(A)$ be a separator for $L(\As_1)$ and $L(\As_2)$: $L(\As_1) \subseteq K$ and $L(\As_2) \cap K = \emptyset$. Consider the morphism $\gamma: (A \cup \frE)^* \to A^*$ defined by $\gamma(a) = a$ for $a \in A$ and $\gamma(b) = \varepsilon$ for $b \in \frE$. Since \Ds is an extension of \Cs, we have $\gamma\inv(K) \in \Ds(A \cup \frE)$ by definition. Moreover, it is straightforward to verify from the definitions of $\gamma$, $L[\As_1,P]$ and $L[\As_2,P]$ that $\gamma\inv(K)$ separates $L[\As_1,P]$ from $L[\As_2,P]$ which concludes this direction of the proof.

\medskip

Assume now that $L[\As_1,P]$ is \Ds-separable from $L[\As_2,P]$. We show that $L(\As_1)$ is \Cs-separable from $L(\As_2)$. Let $K \in \Ds(A \cup \frE)$ which separates $L[\As_1,P]$ from $L[\As_2,P]$. Clearly, $K \in \bool{\Ds}(A \cup \frE)$. Moreover, since \Ds is a \pvari, one may verify that $\bool{\Ds}$ is a \vari (quotients commute with Boolean operations). Therefore, it follows from standard results about \varis that there exists a morphism $\alpha: (A \cup \frE)^* \to M$ into a finite monoid $M$ which recognizes $K$ and such that every language recognized by $\alpha$ belongs to \bool{\Ds} (it suffices to choose $\alpha$ as the ``syntactic morphism'' of $K$, see~\cite{pingoodref} for details). By definition of $\alpha$ and since $P$ fools \Ds, we get the following fact.

\begin{fact} \label{fct:themorphism}
  There exists $s \in M$ such that for every $t \in G$, we have $w_t \in \frE^*$ satisfying $\alpha(w_t) = s$ and $\tau(w_t) = t$.
\end{fact}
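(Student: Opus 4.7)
The plan is that this Fact is essentially a direct instantiation of the definition of ``fools'' applied to the morphism $\alpha$ built in the preceding paragraph. The two hypotheses required by that definition are already in place: the tagging $P = (\tau: \frE^* \to T, G)$ fools \Ds by assumption in the direction of Proposition~\ref{prop:variautored2} currently being proved, and the morphism $\alpha : (A \cup \frE)^* \to M$ has the property that every language it recognizes belongs to $\bool{\Ds}(A \cup \frE)$. So the only real work is to justify why the hypothesis on $\alpha$ holds.

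For that justification, I would recall the standard property of the syntactic morphism of a regular language $K$: the languages recognized by the syntactic morphism are exactly the Boolean combinations of quotients $u^{-1}K v^{-1}$ of $K$ (see~\cite{pingoodref}). Since $K \in \bool{\Ds}(A \cup \frE)$ and $\bool{\Ds}$ is a \vari (hence closed under quotients and Boolean combinations, as noted in the text just before the Fact), every such Boolean combination of quotients of $K$ still belongs to $\bool{\Ds}(A \cup \frE)$. Therefore, every language recognized by $\alpha$ lies in $\bool{\Ds}(A \cup \frE)$, which is precisely the hypothesis in the definition of ``$P$ fools \Ds''.

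With both hypotheses in place, I would then simply invoke the definition of ``fools'' for the alphabet $A$ and the morphism $\alpha$: this yields an element $s \in M$ together with, for each $t \in G$, a witness word $w_t \in \frE^*$ such that $\alpha(w_t) = s$ and $\tau(w_t) = t$. That is exactly the conclusion claimed by the Fact.

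There is no real obstacle here since the Fact is a pure unfolding of definitions; the subtlety has been pushed into the earlier choice of $\alpha$ as the syntactic morphism and into the assumption that \Ds is such that $P$ fools it. The actual effort in the proof of Proposition~\ref{prop:variautored2} is what comes \emph{after} the Fact: using the element $s$ and the uniform family $(w_t)_{t \in G}$ to build, from the \Ds-separator $K$ of $L[\As_1,P]$ and $L[\As_2,P]$, a \Cs-separator of $L(\As_1)$ and $L(\As_2)$ by an inverse-image argument through some morphism $\lambda_{w_t}$ (using the second clause of the definition of an extension). The Fact is the technical device that makes such a uniform choice of $w_t$'s possible in that subsequent construction.
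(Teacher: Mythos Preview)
Your proposal is correct and matches the paper's own justification: the paper simply writes ``By definition of $\alpha$ and since $P$ fools \Ds, we get the following fact,'' having already established just before the Fact that $\alpha$ is the syntactic morphism of $K$ and hence recognizes only languages in $\bool{\Ds}(A \cup \frE)$. Your additional explanation of why the syntactic morphism has this property (via Boolean combinations of quotients) is a welcome elaboration of what the paper dismisses as ``standard results about \varis''.
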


Let $u = w_t \in \frE^*$ for some arbitrary $t \in G$ and consider the morphism $\lambda_u: A^* \to (A \cup \frE)^*$ defined by $\gamma(a) = au \in (A \cup \frE)^*$ for every $a \in A$. Finally, we let $K' = \lambda_u\inv(K)$. Since $K \in \Ds(A \cup \frE)$ and \Ds is an extension of \Cs, it is immediate that $K' \in \Cs(A)$. We now show that $K'$ separates $L(\As_1)$ from $L(\As_2)$ which concludes the argument.

We concentrate on proving that $L(\As_1) \subseteq K'$. That $L(\As_2) \cap K' = \emptyset$ is showed symmetrically and left to the reader. Consider some word $v = a_1 \cdots a_n \in L(\As_1)$. We show that $v \in K'$. By definition of $L[\As_1,P]$, it is straightforward to verify that there exists $t_1,\dots,t_n \in G$ (each depending on the whole word $v$) such that $a_1w_{t_1} \cdots a_nw_{t_n} \in L[\As_1,P]$. Moreover, by definition in Fact~\ref{fct:themorphism}, we know that $\alpha(w_t) = \alpha(u) = s$ for every $t \in G$. Consequently, we get,
\[
  \alpha(a_1w_{t_1} \cdots a_nw_{t_n}) = \alpha(a_1u \cdots a_nu) = \alpha(\lambda_u(v))
\]

Since $\alpha$ recognizes $L[\As_1,P]$ which contains $a_1w_{t_1} \cdots a_nw_{t_n}$, it follows that $\lambda_u(v) \in L[\As_1,P]$ as well. Hence, since $L[\As_1,P] \subseteq K$, we obtain that $\lambda_u(v) \in K$. Finally, this yields $v \in \lambda_u\inv(K) = K'$, finishing the proof.

\subsection{Proof of Lemma~\ref{lem:extendeasy}}

We first recall the statement of Lemma~\ref{lem:extendeasy}.

\adjustc{lem:extendeasy}

\begin{lemma}

  Let \Cs be a \pvarie. Then, \Cs is an extension of itself. Moreover, if $\bool{\Cs} \neq \reg$, then \Cs is smooth.

\end{lemma}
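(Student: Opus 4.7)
The first claim is immediate: since \Cs is closed under inverse image of monoid morphisms, and both maps appearing in the definition of extension (the erasure $\gamma: (A\cup\frE)^* \to A^*$ collapsing letters of $\frE$ to $\varepsilon$, and the insertion $\lambda_u: A^* \to (A\cup\frE)^*$ defined by $\lambda_u(a)=au$) are monoid morphisms, taking $\Ds = \Cs$ satisfies both clauses of the definition directly.

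For smoothness, I proceed in three steps. Step~1: strengthen the hypothesis $\bool{\Cs} \neq \reg$ to $\bool{\Cs}(\frE) \neq \reg(\frE)$. Since \bool{\Cs} is a \varie, it corresponds by Eilenberg's theorem to a pseudovariety $\Vb$ of finite monoids, and languages in $\bool{\Cs}(\frE)$ are exactly those whose syntactic monoid lies in $\Vb$. If every regular language over $\frE$ belonged to \bool{\Cs}, then every two-generated finite monoid would lie in $\Vb$; combined with the classical fact that every finite monoid embeds into a two-generated one and with closure of $\Vb$ under submonoids, this would force $\Vb$ to contain all finite monoids, contradicting $\bool{\Cs} \neq \reg$. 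Hence a surjective morphism $\nu: \frE^* \to N$ with $N \notin \Vb$ exists, and may be fixed once and for all as data depending only on \Cs.

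Step~2: produce arbitrarily large witnesses. For any $k$, refine $\nu$ by coupling it with a cyclic counter: $\nu_k: \frE^* \to N \times \quozk$ defined by $\nu_k(w) = (\nu(w), |w| \bmod k)$, whose image $N_k$ has cardinality at least $k$, is two-generated, and, since it surjects onto $N$ while $\Vb$ is closed under quotients, lies outside $\Vb$. Set the tagging $P = (\tau, G)$ with $\tau = \nu_k$ and $G$ a suitable subset of $N_k$ of size at least $k$. This gives a tagging of rank $\geq k$, and the whole construction is computable in \logspace with respect to $k$: only the counter component varies with $k$, while $\nu$ and $N$ are fixed data of \Cs.

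The main obstacle is Step~3: verifying that $P$ fools \Cs. Given $\alpha: (A \cup \frE)^* \to M$ whose recognised languages all belong to $\bool{\Cs}(A\cup\frE)$, the restriction $\beta = \alpha|_{\frE^*}$ has image in $\Vb$, because \bool{\Cs} is closed under inverse image through the inclusion $\frE^* \hookrightarrow (A\cup\frE)^*$. One must exhibit some $s \in M$ with $G \subseteq \tau(\beta^{-1}(s))$. The approach is to analyse the joint morphism $(\beta,\tau): \frE^* \to M \times T$: its image $H$ surjects onto $T$, forcing $H \notin \Vb$, while $\beta(\frE^*) \in \Vb$. Fixing an idempotent $e \in \beta(\frE^*)$, picking a preimage $w \in \beta^{-1}(e)$, and pumping $w$ produces arbitrarily many distinct $\tau$-images in the single fibre $\beta^{-1}(e)$, thanks to the counter component built into $\tau$; choosing $G$ to match precisely the $\tau$-values realisable by such pumping then yields $G \subseteq \tau(\beta^{-1}(e))$ uniformly in $\beta$. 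Making this last combinatorial step rigorous, and selecting $G$ so that it is both of size at least $k$ and stable under every admissible pumping, is the technical crux of the proof.
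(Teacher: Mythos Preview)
Your first claim and Step~1 are fine. The reduction to the two-letter alphabet $\frE$ via Eilenberg and the two-generated embedding is a clean alternative to what the paper does (the paper instead works over an arbitrary alphabet $D$ where $\bool{\Cs}(D)\neq\reg(D)$ and only encodes into $\frE$ at the very end).

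Steps~2--3, however, have a genuine gap, and it is precisely the one you flag as the ``technical crux''. Coupling $\nu$ with a length counter modulo $k$ does produce a morphism onto a monoid outside $\Vb$, but this is not what the fooling property asks for. You need a \emph{fixed} set $G$ of size $\geq k$ such that for \emph{every} $\beta$ with image in $\Vb$ there is a single fibre $\beta^{-1}(s)$ whose $\tau$-image contains all of $G$. Your pumping argument cannot deliver this: if $e=\beta(w)$ is idempotent, the counter values reached by $\tau(w^n)$ are the multiples of $\gcd(|w|,k)$, and both $|w|$ and the $\nu$-component $\nu(w)^n$ depend on $\beta$. There is no way to choose $G$ in advance so that it is covered by such pumping orbits uniformly over all admissible $\beta$.

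The paper's route is different and hinges on an idea you do not use: non-separability. From $N\notin\Vb$ one extracts two elements $r,r'\in N$ such that $\nu^{-1}(r)$ is not $\bool{\Cs}$-separable from $\nu^{-1}(r')$; this immediately gives a tagging with $|G|=2$ that fools $\Cs$, because any $\beta$ recognising only $\bool{\Cs}$-languages must have a fibre meeting both preimages. To reach rank $2^h$, the paper then takes $h$ disjoint copies of the base alphabet, forms the product morphism, and sets $G$ to be the product of the $h$ two-element sets; the fooling property for the product follows by applying the base case independently on each copy (using closure under inverse image to restrict $\alpha$). Only after this inductive doubling does one encode the large alphabet back into $\frE$. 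Your counter construction shortcuts this, but loses exactly the structure that makes fooling provable.
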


\restorec

We fix the \pvarie \Cs for the proof. Clearly, \Cs is an extension of itself since \pvaries are closed under inverse image by definition. We now assume that $\bool{\Cs} \neq \reg$ and show that \Cs is smooth: given as input $k \in \nat$, one may compute in \logspace (with respect to $k$) a tagging of rank at least $k$ and which fools \Cs. We describe how to construct a tagging of rank $k$ and size polynomial in $k$, that it can be computed in \logspace is straightforward to verify and left to the reader. Furthermore, we consider the special case when $k = 2^h$ for some $h \geq 1$ (when $k$ is not of this form, it suffices to consider the least $h$ such that $k \leq 2^h$). The construction is based on the following preliminary lemma.

\begin{lemma} \label{lem:tags}

  There exist constants $\ell,m \in \nat$ such that for every $h \geq 1$, there exists a morphism $\gamma: B^* \to T$ and $F \subseteq T$ such that,

  \begin{enumerate}

  \item $B \leq h \times \ell$, $|T| \leq m^h$ and $|F| \geq 2^h$.

  \item for every alphabet $A$ and every morphism $\alpha: (A \cup B)^* \to M$ into a finite monoid $M$, if all languages recognized by $\alpha$ belongs to $\bool{\Cs}(A \cup B)$, then, there exists $s \in M$, such that for every $t \in T$, we have $w_t \in B^*$ which satisfies $\alpha(w_t) = s$ and $\tau(w_t) = t$.

  \end{enumerate}

\end{lemma}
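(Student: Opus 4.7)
My plan is to establish the lemma by induction on $h$: a base case builds a rank-$2$ tagging of fixed size using the hypothesis $\bool{\Cs} \neq \reg$, and the inductive step doubles the rank by juxtaposing a fresh disjoint copy of the base tagging with the current one. The constants $\ell$ and $m$ will be the size of the base alphabet and of its associated monoid, respectively.

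For the base case, I would first fix a regular language $K \notin \bool{\Cs}(B_0)$ over some finite alphabet $B_0$ (such $K$ exists by hypothesis). Let $\gamma_0 \colon B_0^* \to T_0$ be the syntactic morphism of $K$, so that $K = \gamma_0^{-1}(X)$ for some $X \subseteq T_0$. The key claim is that there exist $t_0, t_1 \in T_0$ such that $\gamma_0^{-1}(t_0)$ is not $\bool{\Cs}$-separable from $\gamma_0^{-1}(t_1)$: otherwise, using separators $K_{t,t'} \in \bool{\Cs}(B_0)$ for every pair, each singleton preimage could be written as the Boolean combination $\gamma_0^{-1}(t) = \bigcap_{t' \neq t} K_{t,t'} \cap \bigcap_{t' \neq t} (B_0^* \setminus K_{t',t})$, putting it in $\bool{\Cs}(B_0)$; but then $K = \bigcup_{t \in X} \gamma_0^{-1}(t)$ would lie in $\bool{\Cs}$, a contradiction. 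Now, given any $\alpha \colon (A \cup B_0)^* \to M$ whose recognized languages lie in $\bool{\Cs}(A \cup B_0)$, closure of $\bool{\Cs}$ under inverse image (applied to the inclusion $\iota \colon B_0^* \hookrightarrow (A \cup B_0)^*$) ensures $\alpha \circ \iota$ recognizes only $\bool{\Cs}(B_0)$-languages. If the sets $\alpha(\gamma_0^{-1}(t_0))$ and $\alpha(\gamma_0^{-1}(t_1))$ were disjoint in $M$, then $(\alpha \circ \iota)^{-1}(\alpha(\gamma_0^{-1}(t_0)))$ would $\bool{\Cs}$-separate these two preimages, contradicting non-separability. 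Hence their intersection contains some $s$, providing $w_0, w_1$ witnessing the rank-$2$ fooling property. I set $\ell = |B_0|$ and $m = |T_0|$.

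For the inductive step, given $(B_h, \gamma_h \colon B_h^* \to T_h, F_h)$ with the required bounds fooling $\Cs$, I pick a fresh disjoint copy $B_0'$ of $B_0$ with isomorphic data $\gamma_0' \colon (B_0')^* \to T_0'$ and $t_0', t_1' \in T_0'$, and set $B_{h+1} = B_h \sqcup B_0'$, $T_{h+1} = T_h \times T_0'$, $F_{h+1} = F_h \times \{t_0', t_1'\}$, $\gamma_{h+1}(a) = (\gamma_h(a), 1_{T_0'})$ for $a \in B_h$ and $\gamma_{h+1}(b) = (1_{T_h}, \gamma_0'(b))$ for $b \in B_0'$. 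The size bounds hold immediately. To verify fooling, consider $\alpha \colon (A \cup B_{h+1})^* \to M$ with languages in $\bool{\Cs}$. Applied with the partition $(A \cup B_0') \cup B_h$, the induction hypothesis produces $s_1 \in M$ and $w_t \in B_h^*$ (for $t \in F_h$) with $\gamma_h(w_t) = t$ and $\alpha(w_t) = s_1$; applied with the partition $(A \cup B_h) \cup B_0'$, the base case produces $s_2 \in M$ and $w_{t_i'}' \in (B_0')^*$ with $\gamma_0'(w_{t_i'}') = t_i'$ and $\alpha(w_{t_i'}') = s_2$. Setting $w_{(t,t_i')} = w_t \cdot w_{t_i'}'$ and $s = s_1 \cdot s_2$, the disjointness of the letter sets in $w_t$ and $w_{t_i'}'$ yields $\gamma_{h+1}(w_{(t,t_i')}) = (t, t_i')$, and $\alpha(w_{(t,t_i')}) = s$ by multiplicativity. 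Since the whole construction is an $h$-fold disjoint product of a fixed base object, it can be output in \logspace as a function of $h$. The main obstacle of the proof is the base case --- specifically, extracting a non-$\bool{\Cs}$-separable pair of slices from the syntactic monoid of $K$ --- which is exactly where the hypothesis $\bool{\Cs} \neq \reg$ is used through the Boolean combination argument above.
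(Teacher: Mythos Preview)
Your proof is correct and follows essentially the same approach as the paper: a base case that extracts a non-$\bool{\Cs}$-separable pair of slices from a morphism recognizing some $K \notin \bool{\Cs}$, followed by an inductive step that takes the product with a fresh disjoint copy of the base data. The only cosmetic difference is that the paper derives the non-separable pair more directly (from the fact that $K$ is not $\bool{\Cs}$-separable from its complement, one pair $(r,r')$ with $r \in X$, $r' \notin X$ must fail), whereas you use a slightly more elaborate Boolean-combination argument over all pairs; both are standard and yield the same conclusion.
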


Before we prove Lemma~\ref{lem:tags}, let us use it to finish the construction of smooth taggings. We fix $h \geq 1$ and build a tagging of rank $2^h$ and size polynomial in $2^h$. Let $\gamma: B^* \to T$ and $F \subseteq T$ be as defined in Lemma~\ref{lem:tags}. We fix some binary encoding of the alphabet $B$ over the two letter alphabet \frE given by the morphism $\eta: B^* \to \frE*$: for every $b \in B$, $\eta(b)$ is distinct word of length $log_2(|B|)$.

It is straightforward to build a morphism $\tau: \frE^* \to T'$ which recognizes the languages $\eta(\gamma\inv(s))$ for $s \in T$.   Moreover, one may verify that it is possible to do so with a monoid $T'$ of size polynomial with respect to $|T|$ and $|B|$. Therefore the size of $T'$ is polynomial with respect to $2^h$ since $B \leq h \times m$, $|T| \leq m^h$. One may now verify from our hypothesis on $\gamma$ that there exists $F' \subseteq T'$ such that $|F'| \geq 2^h$ and $(\tau: \frE^* \to T',F')$ fools \Cs. This concludes the main proof. It remains to handle Lemma~\ref{lem:tags}.

\begin{proof}[Proof of Lemma~\ref{lem:tags}]

  We start by proving the following fact which handles the special case when $h = 1$. We shall use this fact to define the constants $\ell,m \in \nat$.

  \begin{fact} \label{fct:tags}

    There exists a morphism $\eta: D^* \to R$ and $G \subseteq R$ such that $|G| = 2$ and for every alphabet $A$ and every morphism $\alpha: (A \cup D)^* \to M$ into a finite monoid $M$, if all languages recognized by $\alpha$ belongs to $\bool{\Cs}(A \cup D)$, then, there exists $s \in M$, such that for every $r \in R$, we have $w_r \in D^*$ which satisfies $\alpha(w_r) = s$ and $\eta(w_t) = t$.

  \end{fact}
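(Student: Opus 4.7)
The plan is to exploit the assumption $\bool{\Cs} \neq \reg$ by converting a non-membership witness into a non-separation witness that is robust under alphabet extension. I would first fix an alphabet $D$ and a regular language $L_0 \subseteq D^*$ with $L_0 \notin \bool{\Cs}(D)$, which exists by hypothesis. Pick any morphism $\eta: D^* \to R$ into a finite monoid that recognizes $L_0$, say $L_0 = \eta\inv(F)$ with $F \subseteq R$, and replace $R$ by the image $\eta(D^*)$ so that every $\eta\inv(r)$ with $r \in R$ is nonempty. Note that $L_0 \notin \bool{\Cs}$ forces $\emptyset \neq L_0 \neq D^*$, so both $F$ and $R \setminus F$ are nonempty.

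The crucial step is to extract a pair $r_0 \in F$ and $r_1 \in R \setminus F$ such that $\eta\inv(r_0)$ and $\eta\inv(r_1)$ are not $\bool{\Cs}(D)$-separable, and then set $G = \{r_0, r_1\}$. Assuming for contradiction that every such pair is $\bool{\Cs}$-separable, one can, for each $r_0 \in F$, intersect the separators witnessing $r_0$ against all $r_1 \in R \setminus F$ to obtain $K_{r_0} \in \bool{\Cs}(D)$ that contains $\eta\inv(r_0)$ and is disjoint from $D^* \setminus L_0$; the union $\bigcup_{r_0 \in F} K_{r_0}$ is then a \bool{\Cs}-separator of $L_0$ from $D^* \setminus L_0$, which forces $L_0 \in \bool{\Cs}$, a contradiction. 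Both the intersection and union stay inside \bool{\Cs} because \bool{\Cs} is a Boolean algebra and $F$, $R \setminus F$ are finite.

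Next, I would upgrade this non-separability to the alphabet $A \cup D$ for an arbitrary $A$. Since \Cs is a \pvarie, $\bool{\Cs}$ is a \varie and thus closed under inverse image of monoid morphisms; applying this to the inclusion $\iota: D^* \hookrightarrow (A \cup D)^*$ shows that if some $K \in \bool{\Cs}(A \cup D)$ separated $\eta\inv(r_0)$ from $\eta\inv(r_1)$ inside $(A \cup D)^*$, then $\iota\inv(K) = K \cap D^*$ would be a $\bool{\Cs}(D)$-separator, contradicting the previous step.

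Finally, I would derive the fooling property. Given $\alpha: (A \cup D)^* \to M$ whose $\alpha$-recognized languages all lie in $\bool{\Cs}(A \cup D)$, suppose for contradiction that no $s \in M$ has $\alpha\inv(s) \cap D^*$ meeting both $\eta\inv(r_0)$ and $\eta\inv(r_1)$. Let $M_0 = \{s \in M \mid \alpha\inv(s) \cap \eta\inv(r_1) \cap D^* = \emptyset\}$. Then $\alpha\inv(M_0) \in \bool{\Cs}(A \cup D)$ by assumption on $\alpha$, it contains $\eta\inv(r_0) \cap D^*$, and it is disjoint from $\eta\inv(r_1) \cap D^*$; viewing these as subsets of $(A \cup D)^*$, this contradicts the upgraded non-separability. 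The main obstacle will be the Boolean-algebra bookkeeping in the second paragraph; once the right pair $r_0, r_1$ is pinned down, the remaining manipulations are formal.
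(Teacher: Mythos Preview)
Your proposal is correct and follows essentially the same approach as the paper: choose a regular $L_0 \notin \bool{\Cs}(D)$, take a recognizing morphism $\eta$, extract two elements $r_0,r_1$ whose $\eta$-fibers are not $\bool{\Cs}$-separable, and then use that $\bool{\Cs}$ is a \varie to transfer non-separability through the inclusion $D^* \hookrightarrow (A \cup D)^*$. The only cosmetic difference is that the paper phrases the last step by restricting $\alpha$ to $D^*$ and arguing directly over $D$, whereas you first lift non-separability to $A \cup D$ and then contradict it with $\alpha\inv(M_0)$; these are two sides of the same closure-under-inverse-image argument.
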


  \begin{proof}

    Since $\bool{\Cs} \neq \reg$, there exist an alphabet $D$ and a regular language $L \subseteq D^*$ such that $L \not\in \bool{\Cs}(D)$. Since $L$ is regular, we have a morphism $\eta: D^* \to R$ into a finite monoid $R$  and $XF \subseteq R$ such that $L = \eta\inv(X)$. Since $L \not\in \bool{\Cs}$, it is not \bool{\Cs}-separable from $D^* \setminus L = \eta\inv(R \setminus X)$. This implies the existence of $r \in X$ and $r' \in R \setminus X$ such that $\eta\inv(r)$ is not \bool{\Cs}-separable from $\eta\inv(r')$. We let $G = \{r,r'\}$. It remains to show the property described in the fact is satisfied.

    Consider a morphism $\alpha: (A \cup D)^* \to M$ such that every language recognized by $\alpha$ belongs to $\bool{\Cs}(A \Cup D)$. We have to exhibit $s \in M$ and $w,w' \in D^*$ such that $\alpha(w) = \alpha(w') = s$, $\eta(w) = r$ and $\eta(w') = r'$. Let $\beta: D^* \to M$ be the restriction of $\alpha$ to $D^*$. Since \bool{\Cs} is a \varie, one may verify that every language recognized by $\beta$ belongs to $\bool{\Cs}(D)$. Since $\eta\inv(r) \subseteq D^*$ is not \bool{\Cs}-separable from $\eta\inv(r') \subseteq D^*$, it follows that there exists $s \in M$ such that $\beta\inv(s)$ intersects both $\eta\inv(r)$ and $\eta\inv(r')$ (otherwise a separator in \bool{\Cs} would be recognized by $\beta$). This exactly says that we have $w,w' \in D^*$ such that $\beta(w) = \alpha(w) = \beta(w') = \alpha(w') = s$, $\eta(w) = r$ and $\eta(w') = r'$, finishing the proof.

  \end{proof}

  We fix the tagging $\eta: D^* \to R$ and $G$ for the remainder of the argument. We define $\ell = |D|$ and $m = |R|$. We may now prove the Lemma~\ref{lem:tags}. We proceed by induction on $h \geq 1$.

  The case $h = 1$ has already been handled with Fact~\ref{fct:themorphism}. Assume now that $h \geq 2$. Induction to $h-1$ yields a morphism $\gamma': (B')^* \to T'$ and $F' \subseteq T'$ satisfying the two assertions in the lemma. Recall that \bool{\Cs} is a \varie by hypothesis. Hence, it is closed under bijective renaming of letters and we may assume without loss of generality that $D \cap B' = \emptyset$. We define the alphabet $B$ as the disjoint union $B = B' \cup D$. Moreover, we let $T$ as the monoid $T = T' \times R$ equipped with the componentwise multiplication. We let $\gamma: B^* \to T$ as the morphism such for every $b \in B$,

  \[
    \gamma(b) =  \left\{
      \begin{array}{ll}
        (\gamma'(b),1_R)     & \text{if $b \in B'$} \\
        (1_{T'},\eta(b)) & \text{if $b \in D$}
      \end{array}
    \right.
  \]

  Finally, we let $F = F' \times G$. Observe that by definition, we have $|F| = 2 \times |F'| \geq 2^{h}$. Moreover, $|B| = |D| + |B'| \leq h \times \ell$ and $|T| = |T'| \times |R| \leq m^h$. It remains to show that the second assertion in Lemma~\ref{lem:tags} holds.

  We consider an alphabet and a morphism $\alpha: (A \cup B)^* \to M$ such that every language recognized by $\alpha$ belong to $\bool{\Cs}(A \cup B)$. We have to exhibit $s \in M$ such for every $t \in F$, there exists $w_t \in B^*$ satisfying $\alpha(w_t) = s$ and $\gamma(w_t) = t$. By hypothesis on $\eta$ and $\gamma'$, we have the following fact.

  \begin{fact}

    We have two elements $s_{B'},s_D \in M$ which satisfy the following properties:

    \begin{itemize}

    \item for every $t' \in F'$, we have $w_{t'} \in (B')^*$ such that $\alpha(w_{t'}) = s_{B'}$ and $\gamma'(w_{t'}) = t'$.

    \item for every $r \in G$, we have $w_{r} \in D^*$ such that $\alpha(w_{r}) = s_{D}$ and $\eta(w_{r}) = r$.

    \end{itemize}

  \end{fact}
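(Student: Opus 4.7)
The plan is to prove the fact by applying the two preceding results (the inductive hypothesis for $\gamma'$ on one hand, Fact~\ref{fct:tags} for $\eta$ on the other) independently, each time viewing $\alpha$ as a morphism on an enlarged alphabet. The two assertions are completely decoupled: one produces $s_{B'}$ and the other produces $s_D$, so I will handle them one at a time.

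For the first assertion, I will view $\alpha$ as a morphism with domain $((A \cup D) \cup B')^*$ rather than $(A \cup B)^*$; since $B = B' \cup D$ is a disjoint union, these are the same set. The hypothesis says that every language recognized by $\alpha$ lies in $\bool{\Cs}(A \cup B) = \bool{\Cs}((A \cup D) \cup B')$. The induction hypothesis applied to $\gamma': (B')^* \to T'$, $F' \subseteq T'$ and to the ``extra'' alphabet $A \cup D$ then directly gives an element $s_{B'} \in M$ together with words $w_{t'} \in (B')^*$ for each $t' \in F'$ satisfying $\alpha(w_{t'}) = s_{B'}$ and $\gamma'(w_{t'}) = t'$.

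For the second assertion, I will view $\alpha$ symmetrically as a morphism with domain $((A \cup B') \cup D)^*$. Again, the hypothesis on $\alpha$ is unchanged since the underlying alphabet is the same. Applying Fact~\ref{fct:tags} to $\eta: D^* \to R$, $G \subseteq R$, and to the ``extra'' alphabet $A \cup B'$, yields an element $s_D \in M$ together with words $w_r \in D^*$ for each $r \in G$ with $\alpha(w_r) = s_D$ and $\eta(w_r) = r$.

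There is essentially no obstacle to this step; the only point requiring a moment of care is that in both applications the hypothesis ``every language recognized by $\alpha$ belongs to $\bool{\Cs}$ over the relevant alphabet'' must hold. This is immediate because $A \cup D \cup B' = A \cup B' \cup D = A \cup B$ as sets, so the class $\bool{\Cs}$ over each of these alphabets refers to exactly the same languages, and the hypothesis transfers verbatim. The real work of the lemma happens afterwards, when one combines these two morphisms to build preimages for arbitrary pairs $(t',r) \in F' \times G = F$ by concatenating the witnesses $w_{t'}$ and $w_r$ and using commutativity of $\gamma$ on the two disjoint subalphabets; the present fact is just the preparatory step isolating the two independent applications of the base case and induction hypothesis.
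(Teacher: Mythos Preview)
Your proof is correct, but it takes a slightly different route from the paper's. The paper does not apply the induction hypothesis (resp.\ Fact~\ref{fct:tags}) to $\alpha$ itself over the enlarged ``extra'' alphabet $A \cup D$ (resp.\ $A \cup B'$). Instead, it \emph{restricts} $\alpha$ to a morphism $\beta: (A \cup B')^* \to M$ (resp.\ to $(A \cup D)^*$), and then invokes the fact that $\bool{\Cs}$ is a \vari to argue that every language recognized by the restriction belongs to $\bool{\Cs}(A \cup B')$ (resp.\ $\bool{\Cs}(A \cup D)$); only then is the induction hypothesis applied, with the original alphabet $A$ as the ``extra'' part. Your approach sidesteps this restriction entirely by exploiting the observation that $(A \cup D) \cup B' = A \cup B$ as sets, so the class $\bool{\Cs}$ over either alphabet is literally the same, and the hypothesis on $\alpha$ transfers verbatim. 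This is more direct and avoids any appeal to closure under inverse image; the paper's route is perhaps more in the spirit of ``reducing to a smaller alphabet'', but yours is a perfectly valid shortcut.
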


  \begin{proof}

    We prove the existence of $s_{B'}$, the argument for $s_D$ is symmetrical. Recall that $B = B' \cup D$ and let $\beta: (A \cup B')^* \to M$ be the restriction of $\alpha$ to $(A \cup B')^*$. Since \bool{\Cs} is a \varie, and all languages recognized by $\alpha$ belong to $\bool{\Cs}(A \cup B)$, it straightforward to verify that all languages recognized by $\beta$ belong to $\bool{\Cs}(A \cup B')$. Hence, since by hypothesis on $\gamma': (B')^* \to T'$ and $F'$, we obtain $s_{B'} \in M$ such that for every $t' \in F'$, we have $w_{t'} \in (B')^*$ such that $\alpha(w_{t'}) = \beta(w_{t'}) = s_{B'}$ and $\gamma'(w_{t'}) = t'$.

  \end{proof}

  We define $s = s_{B'}s_{D}$. It remains to show that $s$ satisfies the desired property. Consider $t \in F = F' \times G$. We have $t = (t',r)$ with $t' \in F'$ and $r \in G$. Let $w_t = w_{t'}w_r$. By definition of $\gamma$, since $w_{t'} \in (B')^*$ and $w_r \in D^*$, we have,

  \[
    \gamma(w_t) = \gamma(w_{t'})\gamma(w_r) = (\gamma'(w_{t'}),1_R) \cdot (1_{T'},\eta(w_r)) = (t',1_R) \cdot (1_{T'},r) = (t',r) = t
  \]
  This concludes the proof.
\end{proof}

\subsection{Proof of Lemma~\ref{lem:extendeasy2}}

We now prove Lemma~\ref{lem:extendeasy2}. Let us first recall the statement.

\adjustc{lem:extendeasy2}

\begin{lemma}
  Let \Cs be a finite \vari. For every $n \in \frac12 \nat$, $\Cs_\frE[n]$ is smooth and an extension of $\Cs[n]$.
\end{lemma}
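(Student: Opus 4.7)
The plan is to establish extension and smoothness separately, both by leveraging the inductive definition of concatenation hierarchies together with the specific structure of $\Cs_\frE$.

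For extension, I would prove the two required conditions simultaneously by induction on $n$. At $n=0$, the first condition is the very definition of $\Cs_\frE$, while for the second, if $K = \gamma\inv(L) \in \Cs_\frE(A \cup \frE)$ with $L \in \Cs(A)$, then $\lambda_u\inv(K) = (\gamma \circ \lambda_u)\inv(L) = L$, because $\gamma \circ \lambda_u$ is the identity on $A^*$ (since $u \in \frE^*$). For full levels, both $\gamma\inv$ and $\lambda_u\inv$ commute with Boolean operations, so the step is immediate.

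The interesting case is the half level $n+1/2$, where we must handle marked concatenations $K_1 a K_2$. For condition (a), one checks directly that $\gamma\inv(K_1 a K_2) = \gamma\inv(K_1) \cdot a \cdot \gamma\inv(K_2)$ for $a \in A$, since the letter $a$ cannot be erased by $\gamma$. For condition (b), the expansion of $\lambda_u\inv(K_1 a K_2)$ splits into two subcases. When $a \in A$, the $a$ in any decomposition $\lambda_u(w) = v_1 a v_2$ must come from a unique letter of $w$, yielding $\lambda_u\inv(K_1 a K_2) = \lambda_u\inv(K_1) \cdot a \cdot \lambda_u\inv(u^{-1} K_2)$. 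When $a \in \frE$, the letter $a$ must sit inside some copy of $u$ appearing in $\lambda_u(w)$; enumerating all ways this can happen produces
\[
\lambda_u\inv(K_1 a K_2) \;=\; \bigcup_{u = y_1 a y_2}\ \bigcup_{b \in A}\ \lambda_u\inv\bigl(K_1 (b y_1)^{-1}\bigr) \cdot b \cdot \lambda_u\inv\bigl(y_2^{-1} K_2\bigr).
\]
In both subcases, quotient-closure of $\Cs_\frE[n]$ (as a \pvari) together with the induction hypothesis places every factor in $\Cs[n](A)$, so the whole expression lies in $\pol{\Cs[n]}(A) = \Cs[n+1/2](A)$.

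For smoothness, I would invoke the construction from the proof of Lemma~\ref{lem:extendeasy} essentially unchanged. The only substantive ingredient it needs is some alphabet $D$ together with a regular language over $D$ not in $\bool{\Cs_\frE[n]}(D)$. I would take $D = \frE$: viewing $\frE$ as $\emptyset \cup \frE$, the morphism $\gamma: \frE^* \to \{\varepsilon\}$ collapses everything, so $\Cs_\frE(\frE) = \{\emptyset, \frE^*\}$, and hence $\Cs_\frE[n](\frE)$ coincides with level $n$ of the Straubing-Th\'erien hierarchy over the two-letter alphabet $\frE$. The strictness of that hierarchy (which holds for any alphabet of size at least two) supplies the witness language, and the rest of the tagging construction goes through unchanged because every alphabet it manipulates is of the form $A \cup \frE$, exactly where $\Cs_\frE$ is defined and $\bool{\Cs_\frE[n]}$ behaves as a \vari.

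The hard part will be verifying the formula for $\lambda_u\inv(K_1 a K_2)$ in the subcase $a \in \frE$: one must carefully track the position of $a$ within the copies of $u$ in $\lambda_u(w)$, rewrite the resulting decomposition of $w$ as a marked concatenation, and check that each factor is obtained from $K_1$ or $K_2$ by a quotient which keeps it inside $\Cs_\frE[n]$.
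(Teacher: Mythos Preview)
Your treatment of extension is correct and coincides with the paper's approach (induction on $n$, checking preservation under Boolean and polynomial closure); the explicit formula you give for $\lambda_u^{-1}(K_1 a K_2)$ in the case $a \in \frE$ accurately fills in what the paper leaves to the reader.

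For smoothness, though, there is a genuine gap. The construction behind Lemma~\ref{lem:extendeasy} (Lemma~\ref{lem:tags} in the appendix) does \emph{not} stay within alphabets of the form $A \cup \frE$: its inductive step forms $B = B' \cup D$ by adjoining a fresh disjoint copy of $D$ at each stage, so after $h$ steps $B$ is a union of $h$ pairwise disjoint copies of $\frE$. Verifying the intermediate property of the tagging over $B$ then requires reasoning about morphisms $\alpha:(A\cup B)^*\to M$ all of whose recognized languages lie in $\bool{\Cs_\frE[n]}(A \cup B)$, and this last object is simply undefined. So the construction cannot be invoked ``essentially unchanged''; your claim that ``every alphabet it manipulates is of the form $A \cup \frE$'' is where the argument breaks.

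The paper instead takes a much more direct route: it exhibits the single tagging $\tau_k: \frE^* \to \integ/k\integ$ sending a word to its length modulo $k$, and argues that this fools $\Cs_\frE[n]$ for every $n$. The underlying reason is that $\Cs_\frE[n]$, being built from a finite basis by polynomial and Boolean closure, contains only star-free languages; any $\alpha$ recognizing only such languages therefore has aperiodic image when restricted to $\frE^*$, so $\alpha(0^m)$ eventually stabilizes and a single value of $\alpha$ is attained by words of every residue modulo $k$. Your observation that $\Cs_\frE[n](\frE)$ coincides with level $n$ of the Straubing--Th\'erien hierarchy over $\frE$ is exactly the right starting point --- but the useful consequence is aperiodicity (whence the length-mod-$k$ tagging already suffices), not the heavier machinery of Lemma~\ref{lem:extendeasy}.
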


\restorec

We fix the finite \vari \Cs for the proof. We start by proving that $\Cs_\frE[n]$ is smooth for every $n \in \frac12 \nat$.

Let $k \in \nat$, we describe a tagging of rank $k$. we let $T_k = \{t_0,\dots,t_{k-1}\}$ as the monoid whose multiplication is defined by $t_it_j = t_{i +j \mod k}$ for $i,j \leq k-1$ (i.e. $T$ is isomorphic to \quozk). We now consider the morphism $\tau_k: \frE^* \to T_k$ defined by $\beta(0) = \beta(1) = t_1$ (i.e. $\tau_k$ counts the length of words modulo $k$). Clearly the tagging $(\tau_k: \frE* \to T_k,T_k)$ has rank $k$ and can be computed in \logspace. Moreover, the following lemma can be verified from the definition of $\Cs_\frE$ and that of concatenation hierarchies (the proof is left to the reader).

\begin{lemma} \label{lem:itisfooled}
  For every $k \in \nat$ and every $n \in \frac12 \nat$, the tagging $(\tau_k: \frE* \to T_k,T_k)$ fools $\Cs_\frE[n]$.
\end{lemma}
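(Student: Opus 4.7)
My plan is to exploit the observation that any language in $\Cs_\frE[n]$ becomes star-free once restricted to the sub-alphabet $\frE$, which forces the submonoid $\alpha(\frE^*)$ to be aperiodic, after which aperiodicity easily produces the required witnesses by taking suitable powers of $0 \in \frE$.

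The technical heart of the argument is an inductive claim: for every $n \in \frac12\nat$ and every $L \in \Cs_\frE[n](A \cup \frE)$, the intersection $L \cap \frE^*$ belongs to $\sttp{n}(\frE)$, the corresponding level of the Straubing-Th\'erien hierarchy over the two-letter alphabet $\frE$. The base case $n = 0$ is immediate from $\gamma(\frE^*) = \{\varepsilon\}$: any $\gamma\inv(K)$ intersected with $\frE^*$ is either $\emptyset$ or $\frE^*$, and hence belongs to $\stzer(\frE)$. At a half-level, a marked concatenation $L_0 a_1 L_1 \cdots a_k L_k$ meets $\frE^*$ nontrivially only if every $a_i \in \frE$, in which case the intersection equals $(L_0 \cap \frE^*) a_1 \cdots a_k (L_k \cap \frE^*)$, which is a marked concatenation at the corresponding Straubing-Th\'erien half-level by the induction hypothesis. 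Full levels are handled trivially since restriction to $\frE^*$ commutes with Boolean operations. In particular, for every $L \in \bool{\Cs_\frE[n]}$, the language $L \cap \frE^*$ is star-free over $\frE$.

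Given $\alpha$ as in the hypothesis of the fooling condition, I would then consider $N = \alpha(\frE^*) \subseteq M$ and the corestriction $\beta : \frE^* \to N$. Every language recognized by $\beta$ has the form $\alpha\inv(H) \cap \frE^*$ for some $H \subseteq N$ and is therefore star-free by the previous paragraph. By Sch\"utzenberger's theorem~\cite{sfo}, this forces $N$ to be aperiodic: otherwise some element of $N$ would have cyclic powers of period $p \geq 2$, and an appropriate singleton preimage under $\beta$ would have a non-aperiodic syntactic monoid, contradicting star-freeness.

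To finish, apply aperiodicity of $N$ to $\alpha(0) \in N$: there exists $n_0$ such that $\alpha(0^m) = \alpha(0^{n_0})$ for every $m \geq n_0$. Set $s = \alpha(0^{n_0})$, and for each index $i \in \{0,\dots,k-1\}$ let $\ell_i \in \{0,\dots,k-1\}$ be the unique integer with $n_0 + \ell_i \equiv i \pmod k$. Then $w_{t_i} := 0^{n_0 + \ell_i}$ satisfies $\alpha(w_{t_i}) = s$ (since $n_0 + \ell_i \geq n_0$) and $\tau_k(w_{t_i}) = t_i$, which is exactly what fooling $\Cs_\frE[n]$ requires. The main difficulty I anticipate is the inductive restriction claim: the polynomial-closure case requires a careful split on whether the marked-concatenation letters lie in $A$ or in $\frE$, and one must correctly identify $\sttp{n}(\frE)$ as the target class so that the induction goes through at both half and full levels. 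Once that invariant is in place, the rest is a packaging of Sch\"utzenberger's theorem with an elementary pigeonhole on the lengths of powers of $0$.
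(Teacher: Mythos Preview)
Your argument is correct. The paper itself omits this proof entirely (it merely says the lemma ``can be verified from the definition of $\Cs_\frE$ and that of concatenation hierarchies'' and leaves it to the reader), so there is no detailed proof to compare against; your write-up is a legitimate way to supply the missing details.

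A couple of remarks on presentation. First, the inductive claim that $L\cap\frE^*\in\sttp{n}(\frE)$ for every $L\in\Cs_\frE[n](A\cup\frE)$ is exactly the right invariant, and your case analysis for marked concatenation (splitting on whether the marking letter lies in $A$ or in $\frE$) is the essential step. Second, the appeal to Sch\"utzenberger's theorem to conclude that $N=\alpha(\frE^*)$ is aperiodic is valid but slightly heavier than necessary: you only need that the single element $\alpha(0)$ is aperiodic in $M$. This follows directly because the language $\alpha^{-1}(\{\alpha(0)^{\omega}\})\cap 0^*$ is the intersection with $0^*$ of a star-free language over $\frE$ (by your inductive claim), hence is finite or cofinite in $0^*$; if $\alpha(0)$ had period $p\geq 2$ this intersection would be an infinite--co-infinite arithmetic progression, a contradiction. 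This shortcut avoids the detour through syntactic monoids and makes the ``appropriate singleton preimage'' step you sketched completely explicit. Either route is fine.
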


Altogether, we obtain that $\Cs_\frE[n]$ is smooth for every $n \in \frac12 \nat$. It remains to show that $\Cs_\frE[n]$ is an extension of $\Cs[n]$ for every $n \in \frac12 \nat$. Both conditions involved in extension are verified using induction on $n$ (this amounts to proving that they are preserved by polynomial and Boolean closure). The arguments are straightforward and left to the reader.

\section{Appendix to Section~\ref{sec:fixalph}}
\label{app:fixalph}
In this appendix we present the missing proofs of Section~\ref{sec:fixalph}. Let us first take care of Lemma~\ref{lem:compat}. Recall that in this section, an arbitrary alphabet $A$ and a finite \vari \Cs are fixed.

\subsection{Proof of Lemma~\ref{lem:compat}}

Let us first recall the statement of Lemma~\ref{lem:compat}

\adjustc{lem:compat}
\begin{lemma}
  Given two morphisms recognizing regular languages $L_1,L_2 \subseteq A^*$ as input, one may compute in \logspace a \Cs-compatible morphism which recognizes both $L_1$ and $L_2$.
\end{lemma}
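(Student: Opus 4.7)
The plan is to take the direct product of the two input morphisms with the canonical projection onto the quotient monoid $A^*/{\sim_\Cs}$. Since \Cs is a finite \vari and $A$ is fixed, the quotient $Q = A^*/{\sim_\Cs}$ is a finite monoid of constant size, and the map $\rho: A^* \to Q$, $w \mapsto \ctype{w}$, may be treated as a hardcoded constant: it is entirely determined by its values $\ctype{a}$ on the (constantly many) letters $a \in A$, which depend only on the fixed data $A$ and \Cs.

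Concretely, suppose the inputs are morphisms $\alpha_j: A^* \to M_j$ together with accepting sets $F_j \subseteq M_j$ such that $L_j = \alpha_j^{-1}(F_j)$ for $j = 1, 2$. I would set $N = M_1 \times M_2 \times Q$ with componentwise multiplication and define $\alpha: A^* \to N$ by $\alpha(a) = (\alpha_1(a), \alpha_2(a), \ctype{a})$ on each letter $a \in A$. A direct unfolding gives $\alpha(w) = (\alpha_1(w), \alpha_2(w), \ctype{w})$ for every $w \in A^*$, so $\alpha$ recognizes $L_1$ and $L_2$ via the sets $F_1 \times M_2 \times Q$ and $M_1 \times F_2 \times Q$ respectively. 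The projection $\pi_3: N \to Q$ onto the third component is a monoid morphism satisfying $\pi_3(\alpha(w)) = \ctype{w}$, which is precisely the requirement that $\alpha$ be \Cs-compatible (taking the map $s \mapsto \ctype{s}$ to be $\pi_3$).

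For the \logspace bound, observe that $|N| = |M_1| \cdot |M_2| \cdot |Q|$ is polynomial in the input size since $|Q|$ is a constant. A \logspace transducer enumerates the elements of $N$ as triples consisting of two pointers into the inputs and a constant-size tag in $Q$, produces the multiplication table componentwise from the input tables of $M_1$ and $M_2$ together with the hardcoded table of $Q$, writes $\alpha(a)$ for each of the constantly many letters $a \in A$, and outputs the accepting sets by linearly scanning $F_1$ and $F_2$. Each step uses only logarithmically many bits of working memory. I do not anticipate a real obstacle; the construction is essentially the standard product, and the only subtle point is that the \logspace bookkeeping relies critically on $A$ and \Cs being fixed so that $Q$ remains of constant size.
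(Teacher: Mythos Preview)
Your proposal is correct and essentially identical to the paper's own proof: the paper also takes the product $M_1 \times M_2 \times (A^*/{\sim_\Cs})$ with the componentwise morphism, defines $\ctype{(s_1,s_2,D)} = D$ to witness \Cs-compatibility, and remarks that the \logspace bound holds precisely because $A$ (hence $A^*/{\sim_\Cs}$) is fixed. Your writeup is in fact more explicit about the \logspace bookkeeping and the accepting sets than the paper's.
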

\restorec

We let $\alpha_1: A^* \to M_1$ and $\alpha_2: A^* \to M_2$ as the morphisms recognizing $L_1$ and $L_2$. Recall that the relation $\sim_\Cs$ associated to \Cs is a congruence over $A^*$ for word concatenation ($\sim_\Cs$ compares words which belong to the same languages in \Cs). Therefore, the quotient set ${A^*}/{\sim_\Cs}$ is a monoid (we write ``\cmult'' for its multiplication) and the map $w \mapsto \ctype{w}$ which maps each word to its $\sim_\Cs$-class is a monoid morphism.

We let $M = M_1 \times M_2 \times ({A^*}/{\sim_\Cs})$ as the monoid equipped with the componentwise multiplication. Moreover, we let $\beta: A^* \to M$ as the morphism defined by $\beta(w) = (\alpha_1(w),\alpha_2(w),\ctype{w})$. Clearly, $\beta$ recognizes both $L_1$ and $L_2$. Moreover, $\beta$ is \Cs-compatible: given $s = (s_1,s_2,D) \in M$, it suffices to define $\ctype{s} = D$. It then immediate that the two axioms in the definition of \Cs-compatibility are satisfied:
\begin{itemize}
\item Given $w \in A^*$ we $\ctype{\beta(w) } = \ctype{w}$.
\item Given $s,s' \in M$ $\ctype{ss'} = \ctype{s} \cmult \ctype{s'}$.
\end{itemize}
Finally, it is clear that $\beta$ ca be computed in \logspace from $\alpha_1$ and $\alpha_2$.

\begin{remark}
  It is important here that the alphabet $A$ is fixed. This implies that the monoid ${A^*}/{\sim_\Cs}$ is a constant. When $A$ is a parameter, it may not be possible to compute $\beta$ in \logspace (this depends on \Cs).
\end{remark}

\subsection{Proof of Proposition~\ref{prop:computimp}}

We actually prove a statement which is slightly stronger than Proposition~\ref{prop:computimp} (this is required to use induction in the proof). It is as follows.

\begin{proposition} \label{prop:computimp2}
  Let $h,m \in \nat$ be constants. Consider two \Cs-compatible morphisms $\alpha: A^* \to M$ and $\beta: A^* \to N$ and a good subset $S \subseteq N$. Given $s \in M$ and $T \in 2^N$ such that $|T| \leq m$, one may test in \nlog with respect to $|M|$ and $|N|$ whether there exists an $(\alpha,\beta,S)$-tree  of operational height at most h and with root label $(s,T)$.
\end{proposition}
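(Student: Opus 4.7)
The argument proceeds by induction on $h$, leaving $m$ parametric so that the inductive hypothesis can be invoked with any constant bound on $|T|$.

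For the base case $h = 0$, a tree without $S$-operation nodes has root label $(s,T)$ satisfying $|T| \leq 1$ (by an easy structural induction: leaves give singletons, and a binary combination of subsets of size at most one again has size at most one). The algorithm therefore rejects whenever $|T| \geq 2$. For $T = \{t\}$, deciding existence reduces to checking reachability of the pair $(s,t)$ in the product automaton of $\alpha$ and $\beta$, a graph of size $|M|\cdot|N|$, which is the classical \nlog problem. For $T = \emptyset$ one only checks reachability of $s$ in $M$.

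For the inductive step $h \geq 1$, my plan is to \emph{flatten} any $(\alpha,\beta,S)$-tree of operational height at most $h$ by a left-to-right traversal into a linear sequence of atoms. Each atom is either a \emph{letter atom} $(\alpha(w),\{\beta(w)\})$ obtained from a leaf (merging consecutive leaves under binary nodes, since concatenating their words is again a leaf), or an \emph{$S$-op atom} $(e,T')$ produced by a maximal $S$-op subtree, whose corresponding child $(e,E)$ is idempotent, has operational height at most $h-1$, and satisfies $T' \subseteq E \cdot \{\rho \in S : \ctype{\rho} = \ctype{e}\} \cdot E$. Thus a tree with root label $(s,T)$ exists iff there is an atom sequence $(\sigma_i,T_i)_{i=1}^k$ with $s = \sigma_1 \cdots \sigma_k$ and $T \subseteq T_1 \cdots T_k$. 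Fixing once and for all a decomposition $r = r^{(r)}_1 \cdots r^{(r)}_k$ for each $r \in T$ and restricting each $T_i$ to $\{r^{(r)}_i : r \in T\}$, we may assume $|T_i| \leq |T| \leq m$ throughout. The nondeterministic algorithm then guesses the atom sequence left-to-right, maintaining in log space a state $(\tilde s, (\tilde r^{(r)})_{r \in T}) \in M \times N^{|T|}$ of polynomial size, starting from $(1_M,(1_N)_{r \in T})$ and accepting on $(s,(r)_{r \in T})$. A letter-atom step right-multiplies $\tilde s$ by $\alpha(a)$ and each $\tilde r^{(r)}$ by $\beta(a)$ for a guessed $a \in A$; an $S$-op-atom step guesses $e \in M$ with $e^2 = e$ and, for each $r \in T$, elements $e^{(r)}_1,e^{(r)}_2 \in N$ and $\rho^{(r)} \in S$ with $\ctype{\rho^{(r)}} = \ctype{e}$, updating the state accordingly.

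The technically delicate part will be verifying, inside an $S$-op-atom step, the existence of an idempotent $(e,E)$ of operational height at most $h-1$ whose $N$-component contains the constant-size set $E_0 := \{e^{(r)}_1,e^{(r)}_2 : r \in T\}$, since $E$ itself may be large and cannot be guessed in log space. My plan is an auxiliary \nlog sub-procedure that exploits two closure properties of the family of achievable labels at height $\leq h-1$: closure under binary products (which preserves operational height) and closure under taking subsets of the $N$-component. These reduce the existence of the desired idempotent extension to a bounded number of queries of the form ``$(e,E')$ is achievable at height $\leq h-1$'' with $|E'|$ bounded by a constant depending only on $m$, each answered by the inductive hypothesis in \nlog; Immerman–Szelepcsényi handles any universal quantification. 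Since $h$ is constant, the overall procedure makes a bounded number of nested calls, each contributing $O(\log(|M|\cdot|N|))$ space, and stays within \nlog. The main obstacle is formalizing the idempotent-closure step and proving that finitely many constant-size shadow queries suffice to characterize the existence of $E$; this is where the proof will require the most care.
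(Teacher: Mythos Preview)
Your approach is essentially the paper's. Both argue by induction on $h$: the base case is reachability in $M\times N$ (since a tree with no $S$-operation nodes has root label $(\alpha(w),\{\beta(w)\})$ for some $w$), and the inductive step flattens a height-$\le h$ tree into a product of at most $|M|\cdot|N|^{m}$ ``atoms'' (this is the paper's auxiliary set $Y_{h,m}$ together with its pigeon-hole Lemma~\ref{lem:bound}), each atom being either achievable at height $\le h-1$ or the label of a root $S$-operation node sitting on a height-$\le h-1$ subtree. Your log-space state $(\tilde s,(\tilde r^{(r)})_{r\in T})\in M\times N^{|T|}$ and the left-to-right guessing of the atom sequence correspond exactly to the paper's combination of Lemmas~\ref{lem:induc} and~\ref{lem:bound}.

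On your ``main obstacle''---verifying the existence of an idempotent $(e,E)$ at height $\le h-1$ whose $N$-component contains the witness set $E_0$---the paper's resolution is short: starting from any such idempotent $E$ with $T\subseteq E\,S_e\,E$, it shrinks to the $\le 2m$ witnesses $E_0\subseteq E$ needed for $T\subseteq E_0\,S_e\,E_0$, observes that $(e,E_0)$ is still achievable at height $\le h-1$ by downward closure (exactly your second closure property), and takes $(e,E_0)\in X_{h-1,2m}$ as the inductive query with the larger constant $2m$. The paper does \emph{not} carry out a separate idempotent-extension argument of the kind you anticipate; it simply declares the reduction to $|E|\le 2m$ ``straightforward''. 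So the proof you sketch and the paper's coincide, and the step you flag as requiring the most care is precisely where the paper is most terse.
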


Clearly, Proposition~\ref{prop:computimp} is the special case of Proposition~\ref{prop:computimp2} when $m = 1$. Hence, we may concentrate on proving  Proposition~\ref{prop:computimp2}.

Consider two \Cs-compatible morphisms $\alpha: A^* \to M$ and $\beta: A^* \to N$ and a good subset $S \subseteq N$. Given $h,m \in \nat$, we shall write $X_{h,m} \subseteq M \times 2^N$ for the set of all elements $(s,T) \in M \times 2^N$ such that $|T| \leq m$ and $(s,T)$ is the root label of an $(\alpha,\beta,S)$-tree of operational height is a most $h$.

We have to show that when $h$ and $m$ are fixed, one may test in \nlog with respect to $|M|$ and $|N|$ whether some input pair $(s,T) \in M \times 2^N$ belongs to $X_{h,m}$. We proceed by induction on $h$.

\medskip

When $h = 0$, $(\alpha,\beta,S)$-trees of operational height $0$ contain only leaves and binary nodes. Therefore, one may verify from the definition that their labels are always of the form $(\alpha(w),\{\beta(w)\})$ for some $w \in A^*$. Consequently, the problem of deciding whether $(s,T)$ belongs to $X_{h,m}$ amounts to verifying that $T$ is a singleton $\{t\}$ and that there exists $w \in A^*$ such that $\alpha(w) = s$ and $\beta(w) = t$. This is easily achieved in \nlog.

\medskip

We now assume that $h \geq 1$. We introduce an auxiliary set $Y_{h,m} \subseteq M \times 2^N$. Given $(s,T) \in M \times 2^N$, we have $(s,T) \in Y_{h,m}$ when $|T| \leq m$ and one of the two following conditions holds:
\begin{itemize}
\item $(s,T) \in X_{h-1,m}$, or,
\item $(s,T)$ is the root label of an $(\alpha,\beta,S)$-tree having operational height $h$ and whose root is an $S$-operation node (i.e. the unique child of the root has operational height $h-1$).
\end{itemize}
By induction on $h$, we have the following lemma.

\begin{lemma} \label{lem:induc}
  Let $s \in M$ and $T \in 2^N$, one may test in \nlog with respect to $|M|$ and $|N|$ whether $(s,T) \in Y_{h,m}$
\end{lemma}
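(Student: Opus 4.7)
The proof proceeds by induction on $h$, with the inductive hypothesis being that Proposition~\ref{prop:computimp2} holds at level $h-1$ for every constant $m'\geq m$. The \nlog procedure that decides $(s,T)\in Y_{h,m}$ begins by nondeterministically guessing which of the two clauses in the definition of $Y_{h,m}$ is responsible for membership.

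If the first clause is guessed, the procedure simply calls the inductive hypothesis on $(s,T)$ with parameters $(h-1,m)$ to decide $(s,T)\in X_{h-1,m}$ in \nlog. If the second clause is guessed, the procedure must certify that $(s,T)$ is the root label of an $(\alpha,\beta,S)$-tree of operational height $h$ whose root is an $S$-operation node. To do so it (a) verifies that $s$ is an idempotent in $M$, a constant-time check; (b) nondeterministically guesses, for every $t\in T$, elements $e_1^t,e_2^t\in N$ and $t^*\in S$ such that $\ctype{t^*}=\ctype{s}$ and $t=e_1^t\cdot t^*\cdot e_2^t$. Since $|T|\leq m$, only a constant number of guesses of logarithmic size is needed. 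The constant-size set $E_0=\{e_1^t,e_2^t\mid t\in T\}$, of cardinality at most $2m$, thus handles the decomposition requirement in the $S$-operation rule, and it remains to certify the existence of an idempotent $(s,E)\in X_{h-1,|E|}$ with $E\supseteq E_0$.

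The main obstacle is precisely this last step, since the candidate $E$ may have size as large as $|N|$ and therefore cannot be guessed outright in logarithmic space. The plan to overcome it is to reduce the search for $E$ to the search for a constant-size ``core'' $E^{*}$ that carries all the idempotent structure relevant to $E_0$. Concretely, the procedure additionally guesses, for each $z\in E_0$, a putative decomposition $z=uv$ with $u,v$ claimed to lie in $E$, and for every pair $x,y\in E_0$, a putative value of $xy$ also claimed to lie in $E$. These auxiliary guesses enlarge $E_0$ to a set $E^{*}$ whose size is still bounded by a constant in $m$. The inductive hypothesis is then invoked on $(s,E^{*})$ with parameters $(h-1,|E^{*}|)$ to verify in \nlog that $(s,E^{*})\in X_{h-1}$, while the internal consistency constraints on $E^{*}$ (each element has a decomposition inside $E^{*}$ and each product of two elements is realized inside $E^{*}$) are checked in constant time.

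The core combinatorial lemma to justify is that this bounded-size augmentation is indeed sufficient: whenever $(s,E^{*})\in X_{h-1}$ with the guessed decomposition data, a genuine idempotent $(s,E)\in X_{h-1}$ extending $E_0$ actually exists. The argument leverages the closure properties of $(\alpha,\beta,S)$-trees already used in Proposition~\ref{prop:opbound}, in particular the possibility of duplicating subtrees through binary nodes in order to pass from a set to its idempotent power, together with the \Cs-compatibility of $\alpha$ and $\beta$ to control how the quotient classes of guessed elements propagate. This justification is where the main technical effort of the proof concentrates; once it is in place, composing the above steps yields an \nlog procedure for $Y_{h,m}$, completing the induction.
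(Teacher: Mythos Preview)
The paper's argument is much shorter than yours. After unfolding the definition of an $S$-operation node (the child label is an idempotent $(e,E)$ with $e=s$ and $T\subseteq E\cdot\{t\in S:\ctype{t}=\ctype{e}\}\cdot E$), the paper simply asserts that, because $|T|\le m$, the witness $E$ may always be chosen with $|E|\le 2m$: one keeps only the at most $2m$ elements $e_1^t,e_2^t$ coming from decompositions $t=e_1^t\cdot t^*\cdot e_2^t$ for $t\in T$. Membership $(e,E)\in X_{h-1,2m}$ is then tested by the outer induction on~$h$. No enlarged ``core'' set and no auxiliary consistency data are introduced.

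Your proposal instead treats the idempotence of $(e,E)$ as the central obstacle and attempts to certify it through a bounded enlargement $E^*\supseteq E_0$ carrying one round of factorizations $z=uv$ and products $xy$ for $z,x,y\in E_0$. The crux is then your ``core combinatorial lemma'': that these local conditions on $E^*$, together with $(s,E^*)\in X_{h-1}$, force the existence of a genuine idempotent $(s,E)\in X_{h-1}$ containing $E_0$. You never prove this; you offer only the hint ``pass to the idempotent power via binary nodes''. That hint does not work as written: your guessed data yields $E_0\subseteq (E^*)^2$, but nothing forces $E_0\subseteq (E^*)^n$ for larger~$n$ (the factors $u,v$ you guess for $z\in E_0$ lie in $E^*$, not in $(E^*)^2$), so the idempotent power $(E^*)^\omega$ need not contain $E_0$ at all. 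Since this unproved and, as stated, unjustified lemma carries the entire weight of your handling of the second clause, the proposal has a genuine gap.
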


\begin{proof}
  It suffices to verify that given as input $(s,T) \in Y_{h,m}$ such that $|T| \leq m$, one may check in \nlog whether one of the two conditions in the definition of $Y_{h,m}$ is satisfied. Testing whether $(s,T) \in X_{h-1,m}$ can be achieved in \nlog by induction on $h-1$. For the second condition, we know that the two following properties are equivalent:
  \begin{itemize}
  \item $(s,T)$ is the root label of an $(\alpha,\beta,S)$-tree having operational height at $h$ and whose root is an $S$-operation node.
  \item there exists an $(\alpha,\beta,S)$-tree having operational height $h-1$ whose root label $(e,E)$ is an idempotent satisfying:
    \[
      e =s \quad \text{and} \quad T \subseteq E \cdot \{t \in S \mid \ctype{e} = \ctype{t} \in S\} \cdot E
    \]
  \end{itemize}
  Since $|T| \leq m$, it is straightforward to verify that the second assertion is satisfied if and only if $E$ can be chosen such that $|E| \leq 2m$ (i.e. $(e,E) \in X_{h-1,2m}$). Hence, the second conditions can be checked in \nlog by induction which concludes the proof.
\end{proof}

Moreover, the next lemma is immediate from the definition of $(\alpha,\beta,S)$-trees of operational height $h$ and a pigeon-hole principle argument.

\begin{lemma} \label{lem:bound}
  Let $(s,T) \in M \times 2^N$. Then, $(s,T) \in X_{h,m}$ if and only if there exists $\ell \leq |M| \times |N|^m$ and $\ell$ elements $(r_1,T_1), \dots,(r_\ell,T_\ell) \in Y_{h,m}$ such that,
  \[
    s = r_1 \cdots r_\ell \quad \text{and} \quad \{t_1,\dots,t_m\} \subseteq T_1 \cdots T_\ell
  \]
\end{lemma}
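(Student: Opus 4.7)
The plan is to establish the two implications separately, both leveraging the structural decomposition of $(\alpha, \beta, S)$-trees along the maximal binary subtree rooted at their top.

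For the ``if'' direction, suppose each $(r_i, T_i) \in Y_{h, m}$ is witnessed by a tree $\frT_i$ of operational height at most $h$, with $s = r_1 \cdots r_\ell$ and $T \subseteq T_1 \cdots T_\ell$. I would assemble the $\frT_i$'s into a left-leaning comb of binary nodes, placing $\frT_i$ as the right child of the $i$-th binary node. Since the binary-node rule permits the root label to be any pair whose second coordinate is a subset of the product of the children's second coordinates, the ``$\subseteq$''-slack in the hypothesis allows me to propagate the label choices up the comb so that its apex bears exactly $(s, T)$. Binary nodes do not contribute to operational height, so $(s, T) \in X_{h, m}$.

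For the ``only if'' direction, I would start from a tree $\frT$ witnessing $(s, T) \in X_{h, m}$ and strip off its maximal binary prefix at the root. This yields subtrees $\widetilde{\frT}_1, \ldots, \widetilde{\frT}_k$ whose roots are either leaves or $S$-operation nodes, with labels $(r_i, T'_i)$; unfolding the binary-node rules along the skeleton gives $s = r_1 \cdots r_k$ and $T \subseteq T'_1 \cdots T'_k$. The subtlety is that $|T'_i|$ may exceed $m$, so I would shrink: for each $t \in T$ pick a factorization $t = u^{(t)}_1 \cdots u^{(t)}_k$ with $u^{(t)}_i \in T'_i$, and replace $T'_i$ by $T_i := \{u^{(t)}_i : t \in T\}$. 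Then $|T_i| \leq |T| \leq m$ and $T \subseteq T_1 \cdots T_k$. The key point to verify is that each $(r_i, T_i)$ still lies in $Y_{h, m}$: for $S$-operation roots, the labeling rule is closed under taking subsets, so the shrink is legal; for leaves, $T'_i$ is already a singleton, which automatically equals $T_i$ whenever $T \neq \emptyset$ (the degenerate case $T = \emptyset$ admits an ad hoc fix by adjoining an $\varepsilon$-labeled leaf via a binary node with empty second coordinate).

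Finally, to force $\ell \leq |M| \times |N|^m$, I would apply pigeonhole to the prefix-state sequence
\[
  \left(r_1 \cdots r_i,\; \bigl(u^{(t)}_1 \cdots u^{(t)}_i\bigr)_{t \in T}\right)_{i = 0, \ldots, \ell},
\]
which lives in $M \times N^{|T|}$, of cardinality at most $|M| \times |N|^m$. If $\ell$ exceeds this number, two prefix states must coincide at some indices $i < j$, and excising the block $(r_{i+1}, T_{i+1}), \ldots, (r_j, T_j)$ preserves both the total product $s$ and the chosen factorization of every $t \in T$. Iterating this excision produces a decomposition of the desired length. The main obstacle is the shrinking step in the forward direction: its legality hinges on the monotonicity in the second coordinate of the internal-node labeling rules for $(\alpha, \beta, S)$-trees, and on handling the leaf case (and the $T = \emptyset$ edge case) carefully.
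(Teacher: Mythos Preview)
Your proposal is correct and follows exactly the approach the paper has in mind: the paper dispatches this lemma in one line as ``immediate from the definition of $(\alpha,\beta,S)$-trees of operational height $h$ and a pigeon-hole principle argument,'' and you have accurately reconstructed those details---the binary-comb assembly for the backward direction, the decomposition at the topmost non-binary nodes together with the subset-monotonicity shrinking for the forward direction, and the prefix-state pigeonhole to bound $\ell$.
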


It is now immediate from Lemma~\ref{lem:induc} and~\ref{lem:bound} that one may test in \nlog with respect to $|M|$ and $|N|$ whether some input pair $(s,T) \in M \times 2^N$ belongs to $X_{h,m}$. This concludes the proof.

\subsection{Proof of Proposition~\ref{prop:opbound}}

Let us first recall the statement of Proposition~\ref{prop:opbound}.

\adjustc{prop:opbound}
\begin{proposition}
  Let $h \in \nat$ be the \Js-depth of ${A^*}/{\sim_\Cs}$. Consider two \Cs-compatible morphisms $\alpha: A^* \to M$ and $\beta: A^* \to N$, and a good subset $S \subseteq N$. Then, for every $(s,T) \in M \times 2^N$, the following properties are equivalent:
  \begin{enumerate}
  \item $(s,T)$ is the root label of some $(\alpha,\beta,S)$-tree.
  \item $(s,T)$ is the root label of some $(\alpha,\beta,S)$-tree whose operational height is at most $h$.
  \end{enumerate}
\end{proposition}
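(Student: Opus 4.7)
The direction $(2) \Rightarrow (1)$ is immediate, since a tree of operational height at most $h$ is, \emph{a fortiori}, an $(\alpha,\beta,S)$-tree. For the nontrivial direction $(1) \Rightarrow (2)$, the plan is a pumping argument: given an $(\alpha,\beta,S)$-tree \frT of operational height $k \geq h+1$, one produces another $(\alpha,\beta,S)$-tree with the same root label and strictly smaller operational height, and iterates until the operational height drops to $h$.

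To locate a reducible configuration, fix a branch of \frT realizing its operational height and enumerate its $S$-operation nodes from root to leaf as $x_1,\dots,x_k$, writing $(e_i,E_i)$ for the idempotent label of the unique child $y_i$ of $x_i$. Between $y_i$ and $x_{i+1}$ the branch traverses only binary nodes, along which the $M$-component combines multiplicatively. Writing $u_{i+1}$ and $v_{i+1}$ for the products of the $M$-components contributed by the sibling subtrees visited on each side of the branch, one obtains $e_i = u_{i+1}\, e_{i+1}\, v_{i+1}$ in $M$, and using the \Cs-compatibility of $\alpha$ (so that $s \mapsto \ctype{s}$ is a morphism from $M$ to \cmono) this passes to
\[
  \ctype{e_i} \;=\; \ctype{u_{i+1}} \cmult \ctype{e_{i+1}} \cmult \ctype{v_{i+1}} \quad \text{in \cmono.}
\]
Thus $\ctype{e_k},\ctype{e_{k-1}},\dots,\ctype{e_1}$ is a sequence of $k$ elements of \cmono in which each is a two-sided multiple of its predecessor; since the \Js-depth of \cmono is exactly $h < k$, these elements cannot all be distinct, so $\ctype{e_i} = \ctype{e_j}$ for some $i < j$. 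In particular, setting $X_e := \{t \in S \mid \ctype{t} = \ctype{e} \in S\}$ for the multiplier set used by an $S$-operation on an idempotent of first component $e$, we have $X_{e_i} = X_{e_j}$.

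Given this collision, the plan is to construct a modified tree $\frT'$ with the same root label by eliminating the inner $S$-operation at $x_j$. The surgery proceeds by (i) replacing the subtree at $x_j$ by the subtree at its child $y_j$, shrinking the root label of that subtree from $(e_j, T_{x_j})$ to $(e_j, E_j)$ and reducing its operational height by one; (ii) propagating the change in $N$-components upward along the branch through the binary nodes between $y_j$ and $y_i$, producing a new label $(e_i, \tilde E_i)$ at $y_i$; and (iii) adjusting the top of the modified $y_i$-subtree, possibly by binary-combining several of its copies to force idempotency in $2^N$, so that the resulting label $(e_i, \bar E_i)$ is a valid idempotent, and re-running the $S$-operation at $x_i$ with output the original $T_i$. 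Above $x_i$ the tree is unchanged, so the root label is preserved. The engine of this surgery is an absorption law proved by a straightforward induction on trees: at every node label $(s,T)$ and every $t \in T$ one has $\ctype{t} = \ctype{s}$; combined with the idempotency of $\ctype{e_i}$ in \cmono and with $X_{e_i} = X_{e_j}$, this yields $X_{e_i} \cdot E \cdot X_{e_i} \subseteq X_{e_i}$ for any $N$-component $E$ of a node whose $M$-component is $\sim_\Cs$-equivalent to $e_i$. This lets the outer $S$-operation at $x_i$ absorb every $X_{e_j}$-factor that disappears with the inner operation, so every element of $T_i \subseteq E_i \cdot X_{e_i} \cdot E_i$ remains producible as an element of $\bar E_i \cdot X_{e_i} \cdot \bar E_i$. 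The main obstacle is the precise bookkeeping in step (iii): verifying the idempotency adjustment and the containment $T_i \subseteq \bar E_i \cdot X_{e_i} \cdot \bar E_i$ simultaneously, while checking that the intermediate $N$-components along the branch remain consistent with the binary-node constraint $T \subseteq T_1 T_2$. Once this is carried out, iterating the reduction yields an $(\alpha,\beta,S)$-tree of operational height at most $h$ with root label $(s,T)$, establishing $(1) \Rightarrow (2)$.
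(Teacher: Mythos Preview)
Your overall strategy---locate two $S$-operation nodes on a common branch whose first components have the same $\sim_\Cs$-class, then excise the inner one and let the outer one absorb the missing $X_{e_j}$-factor---is exactly the paper's plan, and your absorption law (every node label $(s,T)$ satisfies $\ctype{t}=\ctype{s}$ for all $t\in T$, whence $X_{e_i}\cdot E\cdot X_{e_i}\subseteq X_{e_i}$) is the right engine. However, your termination argument has a genuine gap. You claim the surgery yields a tree of \emph{strictly smaller operational height}, but this need not hold: the surgery only modifies the subtree below $x_i$, so any branch of maximal operational height that diverges from the chosen one above $x_i$ is untouched; and even inside the $y_i$-subtree, removing the single node $x_j$ only shortens one branch, while other branches through $y_i$ may already realise the same height. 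Worse, your step~(iii) (``binary-combining several copies to force idempotency in $2^N$'') multiplies the number of operation nodes, so switching to the more robust measure \emph{operational size} (total count of $S$-operation nodes) does not rescue the argument as stated.

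The paper fixes both issues at once by taking operational size as the measure and designing the surgery so that no copying is needed. Rather than deleting $x_j$ and patching idempotency afterwards, it keeps the deepest idempotent $(e,E)$ (the child of the innermost colliding operation node) intact, collects all \emph{side} subtrees hanging off the backbone between the two colliding nodes into two trees $\frU$ and $\frV$ (built with binary nodes only), applies a \emph{single} $S$-operation to $(e,E)$, and then sandwiches the result between $\frU$ and $\frV$ via two binary nodes. The delicate part is an inductive lemma controlling the labels of $\frU$ and $\frV$ so that the final product still covers the original $T_k$; once this is established, the new subtree has $k-1$ fewer operation nodes than the old one, so operational size strictly drops. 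Iterating until no branch carries two operation nodes with equal $\sim_\Cs$-class then bounds the operational height by the \Js-depth~$h$.
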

\restorec

We fix $h$ as the \Js-depth of ${A^*}/{\sim_\Cs}$. Moreover, we let $\alpha: A^* \to M$ and $\beta: A^* \to N$ as two \Cs-compatible morphisms and fix $S\subseteq N$ as a good subset. The direction $2)\Rightarrow 1)$ in Proposition~\ref{prop:opbound} is trivial. Therefore, we concentrate on proving that $1) \Rightarrow 2)$. Given $(s,T) \in M \times 2^N$ and a $(\alpha,\beta,S)$-tree \frT whose root label is $(s,T)$, we explain how to construct a second tree with the same root label and whose operational height is bounded by $h$.

For the proof, we call \emph{operational size} of an $(\alpha,\beta,S)$-tree the total number of operation nodes it contains (clearly, this number is always larger than the operational height). The result is a consequence of the following lemma.

\begin{lemma} \label{lem:half:pumpingtree}
  Consider an $(\alpha,\beta,S)$-tree \frT and assume that it contains a branch with two distinct operation nodes $x$ and $x'$ whose labels $(s,T)$ and $(s',T')$ satisfy $\ctype{s} = \ctype{s'}$. Then, there exists a second tree $\frT'$ with strictly smaller operational size than \frT and with the same root label.
\end{lemma}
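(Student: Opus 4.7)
We seek a tree $\frT'$ with the same root label as $\frT$ but strictly fewer $S$-operation nodes; the plan is to perform a surgery on $\frT$ that merges the $S$-operations at $x$ and $x'$ into a single one. Without loss of generality, assume $x$ is a strict ancestor of $x'$ along the given branch. Let $y$ be the unique child of $x$, with idempotent label $(s,E)$, and $y'$ the unique child of $x'$, with idempotent label $(s',E')$. Since $\ctype{s}=\ctype{s'}$, the sets $\{t\in S:\ctype{t}=\ctype{s}\}$ and $\{t\in S:\ctype{t}=\ctype{s'}\}$ coincide; denote this common set by $U$.

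A crucial preliminary is that $U$ is closed under multiplication inside $N$. The idempotence of $(s,E)$ forces $s^2=s$ in $M$, so the $\Cs$-compatibility of $\alpha$ yields $\ctype{s}^2=\ctype{s}$ in ${A^*}/{\sim_\Cs}$; then for $t_1,t_2\in U$, the closure of the good set $S$ under multiplication combined with the $\Cs$-compatibility of $\beta$ gives $t_1t_2\in S$ with $\ctype{t_1t_2}=\ctype{s}^2=\ctype{s}$, hence $t_1t_2\in U$. This closure $U\cdot U\subseteq U$ is the algebraic heart of the argument and is precisely what the hypothesis $\ctype{s}=\ctype{s'}$ (together with the idempotences of the $S$-operation children) forces.

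The tree $\frT'$ is obtained by replacing the subtree rooted at $x$ in $\frT$ by a single fresh $S$-operation node of label $(s,T)$ whose unique child is a subtree $\frT^\#$ to be constructed. It suffices to build $\frT^\#$ with root label $(s,F)$ satisfying (i) $(s,F)$ is idempotent in $M\times 2^N$, (ii) $T\subseteq F\cdot U\cdot F$, and (iii) $\frT^\#$ has strictly fewer $S$-operation nodes than the subtree of $\frT$ rooted at $y$. Conditions (i)--(ii) make the fresh $S$-operation above $\frT^\#$ legal, so $\frT'$ is a valid $(\alpha,\beta,S)$-tree with the same root label as $\frT$; condition (iii) then delivers the strict decrease in operational size, since everything outside of the subtree rooted at $x$ is preserved unchanged.

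The natural candidate for $\frT^\#$ is obtained from the subtree rooted at $y$ by substituting the subtree rooted at $x'$ by the subtree rooted at $y'$ (which deletes the $S$-operation at $x'$) and then relabelling the $N$-components upward along the path from $x'$ to $y$: at each binary ancestor, take the full product of the children's $N$-labels; at any intermediate $S$-operation ancestor, take the largest $N$-set permitted by the $S$-operation axiom. The $M$-components along this path are unaffected, because both subtrees share the $M$-component $s'$.

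The main obstacle is the verification of (i) and (ii). It proceeds by induction along the path from $x'$ up to $y$, showing that the propagated $N$-labels stay inside a universe closed under multiplication that contains the sets appearing in the original labels; this in turn forces $F$ to be idempotent and large enough that $T\subseteq E\cdot U\cdot E\subseteq F\cdot U\cdot F$. The closure $U\cdot U\subseteq U$, the idempotence of the subtree labels being propagated, and the freedom of choosing maximal $N$-labels at binary and $S$-operation ancestors are used at each inductive step. If needed, $\frT^\#$ can be further stabilised by binary-concatenating it with auxiliary copies extracted from the original subtree at $y$ to saturate $F$ into a genuine idempotent without reintroducing the operation at $x'$. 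Without the hypothesis $\ctype{s}=\ctype{s'}$, the closure of $U$ would fail and no such $F$ could exist, so the surgery captures exactly the algebraic content of that hypothesis.
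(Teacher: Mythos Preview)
Your proposal identifies one correct algebraic ingredient --- that $U=\{t\in S:\ctype{t}=\ctype{s}\}$ is multiplicatively closed because $\ctype{s}$ is idempotent --- but the surgery you describe is not shown to produce a valid tree, and the two key conditions (i) and (ii) are asserted rather than proved.

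The concrete obstruction is idempotence. After you replace the subtree at $x'$ by the subtree at $y'$ and propagate ``full products'' upward, the $N$-component at the child of each intermediate $S$-operation node (and at $y$ itself) becomes a product of the form $P\cdot E'\cdot Q$ with $P,Q$ coming from the side subtrees. There is no reason for $(P E' Q)^2=P E' Q$ to hold, so the child of every intermediate operation node, and your $\frT^\#$ itself, may fail to carry an idempotent label --- which makes these nodes illegal. Your proposed fix (``stabilise by binary-concatenating auxiliary copies'') forces you to take an $\omega$-power in $2^N$, i.e.\ to concatenate on the order of $|2^N|$ copies of a subtree; each copy carries all of its operation nodes, so the operational size explodes rather than decreases. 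The same issue undermines your claim for (ii): you need $E\cdot U\cdot E\subseteq F\cdot U\cdot F$, but with $F=P E' Q$ you only get $E\subseteq P\cdot T'\subseteq P E' U E'$, and there is no way to move the trailing $E'$ (or the intermediate operation-node sets, whose $\ctype{}$-parameter need not equal $\ctype{s}$) across to recover $F$ on the right without the idempotent structure you have just lost.

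The paper's proof is organised differently and avoids this trap. Instead of deleting only $x'$ and keeping the backbone, it removes \emph{all} $k-1$ operation nodes on the path from $x$ to $x'$ simultaneously and keeps a single operation node sitting directly above the original child of $x'$ (not at the top). The side subtrees are reassembled, by an explicit induction (Lemma~\ref{lem:half:crunchtree}), into two flanking trees $\frU_k,\frV_k$ whose labels $(u_k,U_k),(v_k,V_k)$ satisfy precise identities $f_k=u_k e v_k=u'_k e v'_k$ together with a controlled inclusion for $T_k$. The delicate point --- that $\ctype{e}=\ctype{e v'_k f_k u'_k e}$ --- is an honest computation in ${A^*}/{\sim_\Cs}$ (Fact~\ref{fct:half:somealgebra}) using both $\ctype{f_k}=\ctype{e}$ and the idempotence of $e$; it is this identity, not merely $U\cdot U\subseteq U$, that makes the single surviving operation node legal. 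Your sketch does not reach this computation, and the simpler surgery you propose does not seem to admit one.
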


Starting from an arbitrary $(\alpha,\beta,S)$-tree \frT, one may use Lemma~\ref{lem:half:pumpingtree} recursively to build $\frT'$ which has the same label as \frT and such that for any two operation nodes $x$ and $x'$ on the same branch of $\frT'$, their labels $(s,T)$ and $(s',T')$ satisfy $\ctype{s} \neq \ctype{s'}$. Clearly, this tree $\frT'$ has operational height bounded by $h$ (by definition of $h$ as the \Js-depth of ${A^*}/{\sim_\Cs}$). This concludes the proof for the implication $1) \Rightarrow 2)$ in Proposition~\ref{prop:opbound}.

\medskip

We now concentrate on proving Lemma~\ref{lem:half:pumpingtree}. We let \frT and $x \neq x'$ the nodes defined in the lemma. Since $x,x'$ are on the same branch, one is an ancestor of the other. By symmetry, we assume that $x$ is an ancestor of $x'$. We let \frS as the subtree of \frT which is rooted in $x$. We let $(s,T)$ as the label $(s,T) = lab(\frS) = lab(x)$. We build a new tree $\frS'$ with the same label as \frS and strictly smaller operational size. It will then be simple to build the desired tree $\frT'$ by replacing the subtree \frS with $\frS'$ in \frT.

Given two nodes $z,z'$ of \frS, we write $z < z'$ to denote the fact that $z$ is a (strict) ancestor of $z'$. By hypothesis, we have $x < x'$, hence we may consider the sequence of operations nodes which are between the two. We let $x_1,\dots,x_k$ as the sequence of all nodes which satisfy the following properties:
\begin{itemize}
\item For all $i$, $x_i$ is an operation node.
\item $x = x_k < \cdots < x_1 = x'$.
\end{itemize}
Note that since $x_k = x$ and $x_1 = x'$, we have $k \geq 2$. For all $i \geq 1$, we let $(f_i,T_i)$ as label of $x_i$. By definition of operation nodes, $f_i \in M$ must be an idempotent. Moreover, $(f_k,T_k) = (s,T)$ is the label of \frS and we know by hypothesis that $\ctype{f_1} = \ctype{f_k}$. Finally, consider the unique child of $x_1$ and let $(e,E)$ be the label of this child (which is an idempotent of $M \times 2^N$ since $x_1$ is an operation node). Recall that by definition of operation nodes, we have $e = f_1$ and $T_1 \subseteq E \cdot \{t \in S \mid \ctype{e} = \ctype{t}\} \cdot E$.

We now classify the nodes within \frS in several categories. We call \emph{backbone} of \frS the path made of all (strict) ancestors of $x_1$. Since $x_k$ is the root, there are $k-1 \geq 1$ operation nodes on the backbone (the nodes $x_2,\dots,x_k$). Furthermore, we call \emph{lower nodes} all nodes within the subtree rooted in $x_1$ (including $x_1$). We denote by $m$ the number operation nodes which are lower nodes. Finally, all nodes which are neither backbone nor lower nodes are called \emph{side nodes}. Observe that any side node $z$ has a closest ancestor $y$ on the backbone which has to be a binary node. We say that $z$ is a \emph{left (resp. right) side node} when it belongs to the subtree whose root is the left (resp. right) child of $y$. Finally, we associate a \emph{rank} to each side node $z$: the rank of $z$ is the smallest $i \leq k$ such that $x_i$ is an ancestor of $z$ ($i$ must exist since $x_k$ is the root). For all $i \leq k$, we write $\ell_i$ (resp. $r_i$) the number of operation nodes which are left (resp. right) side nodes of rank $i$. We illustrate these definitions in Figure~\ref{fig:half:classnodes}.

\begin{figure}
  \begin{center}
    \begin{tikzpicture}
      \node[snode,fill=red!30] (xn) at (0.0,6.0) {$x_k$};

      \node[snode,fill=red!30] (x3) at (0.0,4.5) {$x_3$};

      \node[snode,fill=red!30] (x2) at (0.0,1.5) {$x_2$};

      \node[snode,fill=red!30] (x1) at (0.0,-0.2) {$x_1$};

      \node[snode,fill=blue!30] (y2) at (0.0,0.75) {};

      \node[snode,fill=blue!30] (z1) at (0.0,2.25) {};
      \node[snode,fill=blue!30] (z2) at (0.0,3.0) {};
      \node[snode,fill=blue!30] (z3) at (0.0,3.75) {};

      \draw[lines] (x1) to (y2);
      \draw[lines] (y2) to (x2);

      \draw[lines] (x2) to (z1);
      \draw[lines] (z1) to (z2);
      \draw[lines] (z2) to (z3);
      \draw[lines] (z3) to (x3);

      \draw[lines,dotted] (x3) to (xn);

      \node[triangle,minimum height=1.7cm] (t1) at (0,-1.4) {};
      \draw[lines] (x1) to ($(t1.corner 1)-(0,0.10)$);

      \node[triangle,minimum height=1.1cm] (t2) at (-2,-0.15) {};
      \draw[lines] (y2) to ($(t2.corner 1)-(0,0.1)$);

      \node[triangle,minimum height=1.1cm] (t3) at (2,1.35) {};
      \draw[lines] (z1) to ($(t3.corner 1)-(0,0.1)$);

      \node[triangle,minimum height=1.1cm] (t4) at (-2,2.1) {};
      \draw[lines] (z2) to ($(t4.corner 1)-(0,0.1)$);

      \node[triangle,minimum height=1.1cm] (t5) at (2,2.85) {};
      \draw[lines] (z3) to ($(t5.corner 1)-(0,0.1)$);

      \begin{pgfonlayer}{background}

        \node[bobox,fit=(x1) (t1.corner 2) (t1.corner 3)] (lower) {};

        \node[bobox,fit=(xn) (y2)] (backbone) {};

        \node[bobox,fit=(t2.corner 1) (t2.corner 2) (t2.corner 3)] (leftrankone) {};

        \node[bobox,fit=(t4.corner 1) (t4.corner 2) (t4.corner 3)] (leftranktwo) {};
        \node[bobox,fit=(t5.corner 1) (t3.corner 2) (t3.corner 3)] (rightranktwo) {};

      \end{pgfonlayer}

      \node at ($(lower.south)-(0.0,0.25)$) {Lower nodes};
      \node[rotate=90] at ($(backbone.102)-(0.25,0.0)$) {Backbone};

      \node[rotate=90,align=center] at ($(leftrankone.west)-(0.5,0.0)$) {Left side nodes\\ of rank $2$};

      \node[rotate=90,align=center] at ($(leftranktwo.west)-(0.5,0.0)$) {Left side nodes\\ of rank $3$};

      \node[rotate=90,align=center] at ($(rightranktwo.east)+(0.5,0.0)$) {Right side nodes\\ of rank $3$};

      \node[snode,fill=red!30] (l1) at (2.2,-0.7) {};
      \node[snode,fill=blue!30] (l2) at (2.2,-1.3) {};

      \node[anchor=mid west] at ($(l1)+(0.25,0.0)$) {Operation};
      \node[anchor=mid west] at ($(l2)+(0.25,0.0)$) {Binary};

      \draw[very thick,decorate,decoration=brace] ($(l2)-(0.3,0.35)$) to ($(l1)-(0.3,-0.35)$);

    \end{tikzpicture}
  \end{center}
  \caption{Classification of the nodes in \frS (here, there are no right side nodes of rank $2$).}
  \label{fig:half:classnodes}
\end{figure}
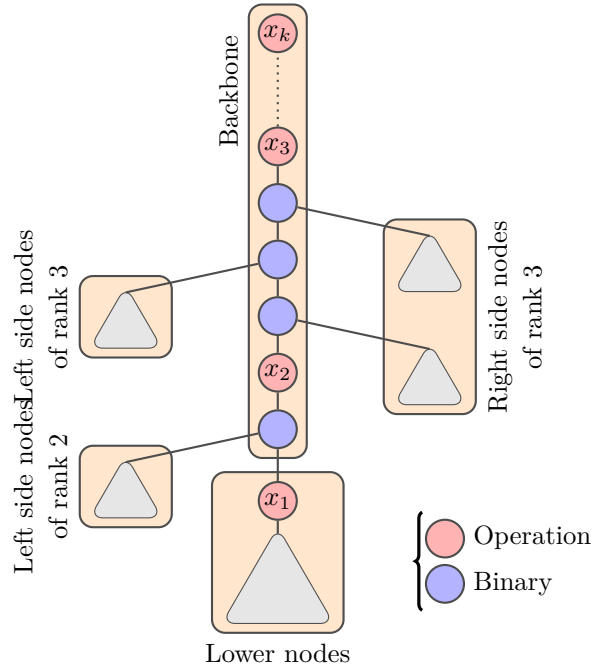

Observe that by definition, backbone nodes, lower nodes and side nodes account for all nodes in the tree. Thus, we have the following fact.

\begin{fact} \label{fct:half:totoalcount}
  The total number of operation nodes in \frS is,
  \[
    k-1 + m + \ell_1 + \cdots \ell_k + r_1 + \cdots + r_k
  \]
\end{fact}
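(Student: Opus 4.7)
The plan is to verify that the three categories introduced just before the fact (backbone, lower, side) partition the nodes of $\frS$, and then to tally the operation nodes contributed by each category independently; the formula will fall out by addition.

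First, I would check that the three categories are pairwise disjoint and exhaustive. By definition backbone nodes are \emph{strict} ancestors of $x_1$, while lower nodes consist of $x_1$ together with its descendants; these two families are disjoint and every remaining node of $\frS$ is, by definition, a side node. Since every node of $\frS$ is either a strict ancestor of $x_1$, equal to $x_1$, a descendant of $x_1$, or in none of the above, the three classes partition the node set of $\frS$.

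Next, I would count operation nodes in each class. On the backbone, the operation nodes are by construction exactly $x_2,\dots,x_k$: the sequence $x_k < x_{k-1} < \cdots < x_1$ was defined to enumerate \emph{all} operation nodes lying weakly between $x = x_k$ and $x' = x_1$, so the strict ancestors of $x_1$ that are operation nodes are precisely $x_2,\dots,x_k$, contributing $k-1$. Among lower nodes, the count is $m$ by definition of $m$. For side nodes, I would verify that each side node $z$ admits a well-defined rank: since $x_k$ is the root of $\frS$ and hence an ancestor of $z$, the set $\{i \mid x_i \text{ is an ancestor of } z\}$ is nonempty, so the minimum exists; and this minimum cannot equal $1$, for otherwise $x_1$ would be an ancestor of $z$, making $z$ a lower node. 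Hence every side node has a rank in $\{2,\dots,k\}$ and is either a left or a right side node, giving a disjoint decomposition of the operation side nodes into $2k$ classes of sizes $\ell_i$ and $r_i$ (with $\ell_1 = r_1 = 0$, which accounts for the corresponding vacuous terms in the stated formula).

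Summing the three contributions yields $k-1 + m + \ell_1 + \cdots + \ell_k + r_1 + \cdots + r_k$, which is the claimed count. The only delicate point I would expect to spell out is the well-definedness of the rank and of the left/right classification of a side node; both rely on the fact that the backbone is a single path, so the forest of subtrees hanging off the backbone is cleanly indexed by the binary branches that leave the backbone, and each side node belongs to exactly one of these subtrees.
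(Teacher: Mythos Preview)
Your proposal is correct and follows exactly the approach the paper implicitly relies on: the paper simply states that backbone, lower, and side nodes account for all nodes of $\frS$ and declares the fact immediate, whereas you spell out the partition and the count in each class. Your additional observation that no side node can have rank $1$ (so $\ell_1 = r_1 = 0$) is correct and usefully explains why those terms appear in the formula at all.
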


Essentially, the desired tree $\frS'$ is built by removing all backbone nodes from \frS and replacing them with binary nodes. Thus, we obtain a tree $\frS'$ whose operational size is $m + \ell_1 + \cdots \ell_k + r_1 + \cdots + r_k$ which is strictly smaller than that of \frS since $k-1 \geq 1$. We use an inductive construction which is formalized in the following lemma.

\begin{lemma} \label{lem:half:crunchtree}
  For every $i \leq k$, there exist two $(\alpha,\beta,S)$-trees $\frU_i$ and $\frV_i$ of labels $(u_i,U_i)$ and $(v_i,V_i)$ with operational heights $\ell_1 + \cdots + \ell_i$ and $r_1 + \cdots + r_i$ respectively. Moreover, there exist $u'_i,v'_i \in M$ satisfying the following two conditions:
  \begin{enumerate}
  \item For $q \in \{u_i,u'_i\}$ and $r \in \{v_i,v'_i\}$, $f_i  = qer$.
  \item $T_i \subseteq U_i E \cdot \{t \in S \mid \ctype{t} = \ctype{ev'_if_iu'_ie}\} \cdot E V_i$.
  \end{enumerate}
\end{lemma}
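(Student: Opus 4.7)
The plan is to prove Lemma~\ref{lem:half:crunchtree} by induction on $i$ from $1$ up to $k$.

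For the base case $i=1$: the backbone of $\frS$ consists of the strict ancestors of $x_1$ and hence excludes $x_1$ itself, so no side node has rank $1$ and $\ell_1 = r_1 = 0$. I take $\frU_1$ and $\frV_1$ to be the single-leaf tree with label $(\alpha(\varepsilon),\{\beta(\varepsilon)\}) = (1_M,\{1_N\})$ and set $u_1 = v_1 = u'_1 = v'_1 = 1_M$. Condition~1 reduces to $f_1 = e$, which holds by definition. For Condition~2, $\ctype{ev'_1 f_1 u'_1 e} = \ctype{e^3} = \ctype{e}$, so the required inclusion reads $T_1 \subseteq E\cdot\{t\in S:\ctype{t}=\ctype{e}\}\cdot E$, which is precisely the defining $S$-operation constraint at $x_1$.

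For the inductive step $i \to i+1$, let $y$ denote the unique child of $x_{i+1}$, whose label $(f_{i+1}, Y_{i+1})$ is idempotent. Walking from $y$ down to $x_i$ along the backbone traverses binary nodes, each contributing a side subtree of rank $i+1$; analysing the left/right orientations yields a factorisation $(f_{i+1},Y_{i+1}) = (\lambda,\Lambda)\cdot(f_i,T_i)\cdot(\rho,P)$ in $M\times 2^N$, where $(\lambda,\Lambda)$ and $(\rho,P)$ are the ordered products of the left- and right-side subtree labels. I form trees $\frL_{i+1}$ and $\frR_{i+1}$ of labels $(\lambda,\Lambda)$ and $(\rho,P)$ by iterated binary composition of these side subtrees, and then set $\frU_{i+1}$ to be the binary node with $\frL_{i+1}$ on the left and $\frU_i$ on the right, while $\frV_{i+1}$ is the binary node with $\frV_i$ on the left and $\frR_{i+1}$ on the right. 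This gives $u_{i+1} = \lambda u_i$, $v_{i+1} = v_i\rho$, $U_{i+1} = \Lambda U_i$, $V_{i+1} = V_i P$, and the claimed operational heights. Finally, set $u'_{i+1} = \lambda u'_i$ and $v'_{i+1} = v'_i\rho$; Condition~1 propagates directly from rank $i$ by left multiplication by $\lambda$ and right multiplication by $\rho$, using $\lambda f_i \rho = f_{i+1}$.

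Condition~2 at rank $i+1$ is the bulk of the work. Plugging the $S$-operation identity $T_{i+1} \subseteq Y_{i+1}\cdot Z_{i+1}\cdot Y_{i+1}$ with $Z_{i+1} = \{t\in S:\ctype{t}=\ctype{f_{i+1}}\}$ into the factorisation of $Y_{i+1}$ and substituting the inductive form of $T_i$, a generic element of $T_{i+1}$ becomes $u_L e_1 w_1 e_2 v_R z u'_L e_3 w_2 e_4 v'_R$ with $u_L,u'_L\in U_{i+1}$, $v_R,v'_R\in V_{i+1}$, $e_j\in E$, $w_j \in W_i := \{t\in S:\ctype{t}=\ctype{ev'_i f_i u'_i e}\}$, and $z\in Z_{i+1}$. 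A key auxiliary fact, established by structural induction on $(\alpha,\beta,S)$-trees using $\Cs$-compatibility, is that in any tree with root label $(s,T)$ every element of $T$ shares the $\ctype{}$-value $\ctype{s}$; thus $E, \Lambda, P, U_i, V_i, W_i, Z_{i+1}$ each have a uniform $\ctype{}$-class. Grouping the outermost factors as the $U_{i+1}\cdot E$-prefix and $E\cdot V_{i+1}$-suffix, the middle factor $w_1 e_2 v_R z u'_L e_3 w_2$ lies in $S$ (by closure under multiplication) and has a uniform $\ctype{}$-value; a direct computation in $A^*/\sim_\Cs$ using $f_{i+1}=\lambda f_i\rho$ and the inductive value $\ctype{ev'_i f_i u'_i e}$ identifies this value with $\ctype{ev'_{i+1} f_{i+1} u'_{i+1} e}$, yielding Condition~2.

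The main obstacle I anticipate is the downstream use of the lemma at $i=k$ in the proof of Lemma~\ref{lem:half:pumpingtree}, which requires the identity $\ctype{ev'_k f_k u'_k e} = \ctype{e}$. To secure this I would maintain throughout the induction the sharper invariant $\ctype{ev'_i f_i u'_i e} = \ctype{f_i}$, trivial at $i=1$; its persistence in the inductive step may force a minor modification to $u'_{i+1}$ and $v'_{i+1}$, absorbing an idempotent power of $\ctype{\rho\lambda}$ in the finite quotient via stability of $A^*/\sim_\Cs$ and the idempotency of $e$ and $f_{i+1}$. Combined with the hypothesis $\ctype{f_1}=\ctype{f_k}$, this specialises to $\ctype{ev'_k f_k u'_k e} = \ctype{f_k} = \ctype{e}$, which is exactly what is needed to build $\frS'$ by applying a single $S$-operation to the subtree rooted at the child of $x_1$ (label $(e,E)$) and wrapping it with $\frU_k$ and $\frV_k$ via two binary nodes.
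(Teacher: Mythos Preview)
Your overall strategy coincides with the paper's: induction on $i$, trivial leaves at $i=1$, and at the inductive step collecting the rank-$(i{+}1)$ side subtrees into factors $(\lambda,\Lambda)$ and $(\rho,P)$, then extending $\frU_i,\frV_i$ by a single binary node on each side. The base case and the construction of $\frU_{i+1},\frV_{i+1}$ are fine.

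The gap is in your choice $u'_{i+1}=\lambda u'_i$, $v'_{i+1}=v'_i\rho$ together with the claim that Condition~2 then follows ``by direct computation''. Carrying out that computation with your definitions, and using Condition~1 at rank $i$ to simplify $u'_i e v_i$ and $u_i e v'_i$ to $f_i$, the $\Cs$-type of the middle block $w_1 e_2 v_R z u'_L e_3 w_2$ comes out as
\[
\ctype{e\,v'_i f_i \rho\; f_{i+1}\; \lambda f_i u'_i\, e},
\]
not $\ctype{e\,v'_i\rho\,f_{i+1}\,\lambda u'_i\,e}$. The two differ by the extra $f_i$ factors, and there is no reason for them to coincide in $A^*/{\sim_\Cs}$. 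The fix is precisely to absorb these factors into the definition: set $u'_{i+1}=\lambda f_i u'_i$ and $v'_{i+1}=v'_i f_i \rho$. Condition~1 still holds (using idempotency of $f_i$: $u'_{i+1} e v'_{i+1}=\lambda f_i u'_i e v'_i f_i\rho=\lambda f_i^3\rho=f_{i+1}$), and now the computed middle type is \emph{literally} $\ctype{e v'_{i+1} f_{i+1} u'_{i+1} e}$, so Condition~2 follows. This is exactly the paper's choice.

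Your last paragraph misdiagnoses the downstream issue. The identity $\ctype{e v'_k f_k u'_k e}=\ctype{e}$ used after the lemma does \emph{not} require carrying a sharper invariant $\ctype{e v'_i f_i u'_i e}=\ctype{f_i}$ through the induction. It follows, outside the induction, from Condition~1 alone (namely $f_k=u'_k e v'_k$) together with the hypothesis $\ctype{f_k}=\ctype{f_1}=\ctype{e}$, via a standard $\Js$-class argument in the finite monoid $A^*/{\sim_\Cs}$ (this is the paper's Fact~\ref{fct:half:somealgebra}). So no modification for that purpose is needed; the only modification needed is the insertion of $f_i$ into $u'_{i+1}$ and $v'_{i+1}$ to make your Condition~2 computation actually close.
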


Before we show Lemma~\ref{lem:half:crunchtree}, we use it to build the desired tree $\frS'$ and finish the proof of Lemma~\ref{lem:half:pumpingtree}. Recall that we need $\frS'$ to have label $lab(\frS) = (s,T) = (f_k,T_k)$. We apply Lemma~\ref{lem:half:crunchtree} in the special case when $i = k$. This yields two $(\alpha,\beta,S)$-trees $\frU_k$ and $\frV_k$ with labels $(u_k,U_k)$ and $(v_k,V_k)$ which have operational heights $\ell_1 + \cdots + \ell_i$ and $r_1 + \cdots + r_i$. Moreover, we let $u'_k,v'_k \in M$ which satisfy the two assertions in the lemma.

It follows from the first assertion in Lemma~\ref{lem:half:crunchtree} that $u_kev_k = v'_keu'_k = f_k = s$. This implies the following fact.

\begin{fact} \label{fct:half:somealgebra}
  $\ctype{e} = \ctype{ev'_kf_ku'_k e}$.
\end{fact}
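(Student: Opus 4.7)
The plan is to verify the identity by pushing through the monoid morphism $s \mapsto \ctype{s}$ from $M$ to ${A^*}/{\sim_\Cs}$, which exists by the hypothesis that $\alpha$ is \Cs-compatible. Applying it componentwise to the right-hand side gives $\ctype{ev'_kf_ku'_ke} = \ctype{e}\cmult\ctype{v'_k}\cmult\ctype{f_k}\cmult\ctype{u'_k}\cmult\ctype{e}$, so the task reduces to simplifying this five-term product to $\ctype{e}$.

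Three ingredients will be combined. First, since $e = f_1$ (recalled just before the statement, from the definition of the operation node $x_1$) and the hypothesis of Lemma~\ref{lem:half:pumpingtree} gives $\ctype{f_1} = \ctype{f_k}$, we have $\ctype{e} = \ctype{f_k}$. Second, the identity $v'_keu'_k = f_k$ from Lemma~\ref{lem:half:crunchtree}, once pushed through the morphism and combined with the first ingredient, yields $\ctype{v'_k}\cmult\ctype{f_k}\cmult\ctype{u'_k} = \ctype{f_k}$. Third, the node $x_k$ is an operation node, so its label component $f_k$ is an idempotent of $M$, and therefore $\ctype{f_k}$ is an idempotent of ${A^*}/{\sim_\Cs}$. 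Substituting the first ingredient at the two outer factors reduces the product to $\ctype{f_k}\cmult\ctype{v'_k}\cmult\ctype{f_k}\cmult\ctype{u'_k}\cmult\ctype{f_k}$; the second ingredient collapses the middle three factors to $\ctype{f_k}$, leaving $\ctype{f_k}\cmult\ctype{f_k}\cmult\ctype{f_k}$; and the third ingredient finishes the computation as $\ctype{f_k} = \ctype{e}$.

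There is no real obstacle here: the fact is a purely formal identity in the quotient ${A^*}/{\sim_\Cs}$, and the only care required is to keep track of which equation comes from which of the three ingredients above. The point of isolating it as a separate fact is that it will be used immediately afterwards to justify collapsing the entire sequence $x_2,\dots,x_k$ of backbone operation nodes of \frS into a single $S$-operation node applied to an idempotent whose \Cs-class matches the one required by the definition of $S$-operation nodes.
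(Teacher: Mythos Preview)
Your second ingredient is the problem. Lemma~\ref{lem:half:crunchtree}, Assertion~1, says that $f_i = qer$ for $q \in \{u_i,u'_i\}$ and $r \in \{v_i,v'_i\}$: the $u$-factor is on the \emph{left} and the $v$-factor on the \emph{right}. In particular it yields $f_k = u'_k e v'_k$, not the identity $v'_k e u'_k = f_k$ that you invoke. (The sentence just before the Fact in the paper, which writes ``$v'_keu'_k = f_k$'', is a typo; the proof of the Fact itself uses the correct orientation ``$f_k = u'_kev'_k$''.) With the correct identity you obtain $\ctype{u'_k}\cmult\ctype{f_k}\cmult\ctype{v'_k} = \ctype{f_k}$, whereas what you need in order to collapse the middle three factors of $\ctype{e}\cmult\ctype{v'_k}\cmult\ctype{f_k}\cmult\ctype{u'_k}\cmult\ctype{e}$ is the \emph{swapped} version $\ctype{v'_k}\cmult\ctype{f_k}\cmult\ctype{u'_k} = \ctype{f_k}$. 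These are not interchangeable in a general monoid.

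The paper closes this gap with a standard $\omega$-power argument: writing $g = \ctype{e} = \ctype{f_k}$, $a = \ctype{u'_k}$, $b = \ctype{v'_k}$, one has $g = agb$ and $g$ idempotent, hence $g = gagbg$, and iterating the substitution gives $g = (ga)^n g (bg)^n$ for all $n$; passing to the idempotent power $\omega$ then lets one rewrite $g$ so as to extract a factor $(bg)^\omega$ on the right, from which the swapped identity $g = gbgag$ follows after a couple more substitutions. So your outline is right in spirit---everything happens in $A^*/{\sim_\Cs}$ via the morphism $s\mapsto\ctype{s}$---but the substantive step (reversing the order of $u'_k$ and $v'_k$ around the idempotent) is missing, and it is precisely the content of the Fact.
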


\begin{proof}
  By definition of \Cs-compatible morphisms we have,
  \[
    \ctype{ev'_kf_ku'_k e} = \ctype{e} \cmult \ctype{v'_k} \cmult \ctype{f_k} \cmult \ctype{u'_k} \cmult\ctype{e}
  \]
  Therefore, since $\ctype{f_k} = \ctype{e}$, it suffices to prove that,	$\ctype{e} = \ctype{e} \cmult \ctype{v'_k} \cmult\ctype{e}\cmult \ctype{u'_k} \cmult\ctype{e}$.

  By the first assertion in Lemma~\ref{lem:half:crunchtree}, we have $e = f_k  =u'_kev'_k$. Hence, $\ctype{e} = \ctype{u'_k} \cmult \ctype{e} \cmult \ctype{v'_k}$. Moreover, since $e$ is idempotent of $M$, $\ctype{e} = \ctype{ee} = \ctype{e} \cmult \ctype{e}$ is an idempotent of ${A^*}/{\sim_\Cs}$. This yields,
  \[
    \begin{array}{lll}
      \ctype{e} & = & \ctype{e} \cmult \ctype{u'_k} \cmult \ctype{e} \cmult \ctype{v'_k} \cmult\ctype{e}\\
      \ctype{e} & = & (\ctype{e} \cmult \ctype{u'_k})^\omega \cmult \ctype{e} \cmult (\ctype{v'_k} \cmult\ctype{e})^\omega\\
      \ctype{e} & = & \ctype{e} \cmult (\ctype{v'_k} \cmult\ctype{e})^\omega\\
      \ctype{e} & = & \ctype{e} \cmult \ctype{v'_k} \cmult\ctype{e} \cmult (\ctype{v'_k} \cmult\ctype{e})^{\omega-1}
    \end{array}
  \]
  We may now replace the second copy of $\ctype{e}$ in the above with $\ctype{e} \cmult \ctype{u'_k} \cmult \ctype{e} \cmult \ctype{v'_k} \cmult\ctype{e}$ which yields,
  \[
    \ctype{e}  = \ctype{e} \cmult \ctype{v'_k} \cmult\ctype{e} \cmult\ctype{u'_k} \cmult \ctype{e} \cmult (\ctype{v'_k} \cmult\ctype{e})^{\omega}
  \]
  Finally, since $\ctype{e} = \ctype{e} \cmult (\ctype{v'_k} \cmult\ctype{e})^\omega$, this yields $\ctype{e} = \ctype{e} \cmult \ctype{v'_k} \cmult\ctype{e}\cmult \ctype{u'_k} \cmult \ctype{e}$ as desired.
\end{proof}

In view of Fact~\ref{fct:half:somealgebra} and the second assertion in Lemma~\ref{lem:half:crunchtree}, we obtain that,
\begin{equation} \label{eq:tkinc}
  T_k \subseteq U_k E \cdot \{t \in S \mid \ctype{t} = \ctype{e}\} \cdot E V_k
\end{equation}
Finally, we have a tree of root label $(e,E)$ whose operational size is $m-1$: the child of $x_1$. Hence, using one operation node, we may build a tree of operational size $m$ whose root label is:
\[
  (e,E \cdot \{t \in S \mid \ctype{t} = \ctype{e}\})
\]
Finally, by~\eqref{eq:tkinc}, we may combine this tree with $\frU_k$ and $\frV_k$ using two binary nodes to get a tree $\frS'$ whose root label is:
\[
  (s,T) = (f_k,T_k) = (u_kev_k,T_k)
\]
By definition, this tree $\frS'$ has operational size $m + m +\ell_1 + \cdots + \ell_k+r_1 + \cdots + r_k$. As desired, this is strictly smaller than \frS (its operational size is $k-1 + m + \ell_1 + \cdots \ell_k + r_1 + \cdots + r_k$ by Fact~\ref{fct:half:totoalcount} and $k-1 \geq 1$). This terminates the proof of Lemma~\ref{lem:half:pumpingtree}.

\medskip

It now remains to prove Lemma~\ref{lem:half:crunchtree}. We proceed by induction on $i$. When $i = 1$, since $x_1$ is an operation node whose unique child has label $(e,E)$, we have $f_1 = e$ and $T_1 \subseteq E \cdot \{t \in S \mid \ctype{e} = \ctype{t}\} \cdot E$. We define both $\frU_1$ and $\frV_1$ as the same tree made of a single leaf whose label is $(1_M,\{1_N\}) = (\alpha(\varepsilon),\{\beta(\varepsilon)\})$. It is then simple to verify that the two assertions in the lemma are satisfied for $u'_1 = v'_1 = 1_M$.

\medskip

We now assume that $i \geq 2$. By definition, $x_i$ has a unique child whose label is an idempotent $(f_i,F_i)$ such that,
\[
  T_i \subseteq F_i \cdot \{t \in S \mid \ctype{f_i} = \ctype{t}\} \cdot F_i
\]
We use the following fact to choose our new trees $\frU_i,\frV_i$.

\begin{fact} \label{fct:half:takesides}
  There exist two $(\alpha,\beta,S)$-trees $\frP$ and $\frQ$ whose operational sizes are respectively bounded by $\ell_i$ and $r_i$ and whose labels $(p,P)$ and $(q,Q)$ satisfy the following two properties,
  \begin{itemize}
  \item $f_i = p \cdot f_{i-1} \cdot q$
  \item $F_i \subseteq P T_{i-1} Q$
  \end{itemize}
\end{fact}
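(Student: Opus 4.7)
The plan is to read off $\frP$ and $\frQ$ directly from the geometry of the backbone between $x_i$ and $x_{i-1}$. Let $y_0$ denote the unique child of $x_i$, which by hypothesis carries the idempotent label $(f_i, F_i)$. Because the only operation nodes on the backbone are $x_1, \ldots, x_k$ and none of them lie strictly between $x_i$ and $x_{i-1}$, the backbone path $y_0, y_1, \ldots, y_m$ from $y_0$ down to $y_m = x_{i-1}$ consists entirely of binary nodes (except possibly its bottom endpoint, which is an operation node). If $m = 0$, then $y_0 = x_{i-1}$, forcing $(f_i, F_i) = (f_{i-1}, T_{i-1})$ and $\ell_i = r_i = 0$; the fact is then satisfied by taking both $\frP$ and $\frQ$ to be single leaves with label $(\alpha(\varepsilon), \{\beta(\varepsilon)\}) = (1_M, \{1_N\})$.

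Assume $m \geq 1$. For each $0 \leq j < m$, the binary node $y_j$ has $y_{j+1}$ as one child and, as its other child, the root of a subtree $\frU_j$ hanging off the backbone. Every node of $\frU_j$ has $x_i$ as its closest operation-ancestor on the backbone, so $\frU_j$ is a rank-$i$ side subtree, and it is a left-side subtree precisely when $y_{j+1}$ is the right child of $y_j$, a right-side subtree otherwise. Partition $\{0, \ldots, m-1\}$ into $J_L$ and $J_R$ accordingly and write $lab(\frU_j) = (a_j, A_j)$. Unrolling the binary-node product constraints along the spine from $y_0$ down to $y_m$ then yields
\[
f_i \;=\; \widetilde p \cdot f_{i-1} \cdot \widetilde q \qquad\text{and}\qquad F_i \;\subseteq\; \widetilde P \cdot T_{i-1} \cdot \widetilde Q,
\]
where $\widetilde p$ (resp.\ $\widetilde P$) is the concatenation of the $a_j$ (resp.\ $A_j$) for $j \in J_L$ in order of \emph{increasing} $j$, while $\widetilde q$ and $\widetilde Q$ are the analogous products over $J_R$ in order of \emph{decreasing} $j$. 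The reversal on the right-hand side is forced by associativity, as one sees already from the cases $m = 2$ with $J_L = \{0,1\}$ or $J_R = \{0,1\}$.

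I then assemble $\frP$ by chaining the subtrees $(\frU_j)_{j \in J_L}$ in the prescribed order using fresh binary nodes, choosing the second component $T = T_1 T_2$ at each such new node; if $J_L = \emptyset$, I take $\frP$ to be a single $\varepsilon$-leaf with label $(1_M, \{1_N\})$. The resulting object is a valid $(\alpha, \beta, S)$-tree (its only new internal nodes are binary, and $T_1 T_2 \subseteq T_1 T_2$), its root label is $(\widetilde p, \widetilde P)$, and since newly added binary nodes contribute nothing to operational size, the operational size of $\frP$ equals the total number of operation nodes appearing in the $\frU_j$ for $j \in J_L$, which is $\ell_i$ by definition. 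Build $\frQ$ symmetrically from $(\frU_j)_{j \in J_R}$ in the reverse order; it likewise has root label $(\widetilde q, \widetilde Q)$ and operational size $r_i$. Setting $(p, P) = (\widetilde p, \widetilde P)$ and $(q, Q) = (\widetilde q, \widetilde Q)$, the two required identities are just a restatement of the displayed unrolling.

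The only real obstacle is bookkeeping: keeping the multiplication order straight in the unrolling, since moving a side subtree past a backbone node flips whether it multiplies on the left or on the right of the running product along the spine. Once the increasing-for-left, decreasing-for-right convention is pinned down by two or three small cases, the rest is a mechanical translation of the tree's geometry into monoid algebra.
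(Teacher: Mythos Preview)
Your proof is correct and follows essentially the same approach as the paper's: both read off $\frP$ and $\frQ$ from the side subtrees hanging between $x_i$ and $x_{i-1}$, combine them with fresh binary nodes in the appropriate left-to-right order, and handle the degenerate empty-side case with an $\varepsilon$-leaf. Your write-up is somewhat more explicit about the ordering convention (increasing $j$ for the left factor, decreasing $j$ for the right) and about the $m=0$ case, but the underlying construction is identical.
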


\begin{proof}
  We build \frP (resp. \frQ) by combining all subtrees made of left (resp. right) side nodes of rank $i$ into a single one using binary nodes only. In the degenerate case when there are no left (resp. right) side nodes \frP (resp. \frQ) is a single leaf with label $(1_M,\{1_N\})$.

  Let us describe this construction in more details when the set of left and right side nodes of rank $i$ are nonempty Consider all nodes between $x_i$ and $x_{i-1}$ (which are all binary by definition). For each such node, one child is an ancestor of $x_{i-1}$ (or $x_{i-1}$ itself) and the other is a side node. We define,
  \begin{itemize}
  \item $x_i < z_{h_1} < \cdots < z_1 < x_{i-1}$ as all binary nodes whose left children are side nodes (in particular these children and all their descendants are left side nodes of rank $i$).
  \item $x_i < z'_{h_2} < \cdots < z'_1 < x_{i-1}$ as all binary nodes whose right children are side nodes (in particular these children and all their descendants are right side nodes of rank $i$).
  \end{itemize}
  We may now define \frP and \frQ. We start with \frP. For all $j \leq h_1$, we let $(p_j,P_j)$ as the label of the left child of $z_j$. Clearly, one may combine all subtrees rooted in the left children of the $z_j$ with binary nodes into a single one whose label is,
  \[
    (p,P) = (p_{h_1},P_{h_1}) \cdot \cdots \cdot (p_{1},P_{1})
  \]
  By definition, the operational size of \frP is $\ell_i$: the sum of those for the subtrees we have combined (we only added binary nodes). Symmetrically, one may build \frQ of operational size $r_i$ whose label is,
  \[
    (q,Q) = (q_{1},Q_{1}) \cdot \cdots \cdot (q_{h_2},Q_{h_2})
  \]
  where $(q_j,Q_j)$ is the label of the right child of $z'_j$ for all $j \leq h_2$. One may now verify from the definition that the two assertions in the fact are satisfied.
\end{proof}

We are now ready to define our new trees $\frU_i$ and $\frV_i$. We first use induction to obtain two trees $\frU_{i-1}$ and $\frV_{i-1}$ of labels $(u_{i-1},U_{i-1})$ and $(v_{i-1},V_{i-1})$ which satisfy the conditions of Lemma~\ref{lem:half:crunchtree} for $i-1$. We define,
\begin{itemize}
\item $\frU_i$ as the tree of label $(u_i,U_i) = (p \cdot u_{i-1},PU_{i-1})$ obtained by combining \frP and $\frU_{i-1}$ with a single binary node.
\item $\frV_i$ as the tree of label $(v_i,V_i) = (v_{i-1} \cdot q,V_{i-1}S)$ obtained by combining $\frV_{i-1}$ and \frQ with a single binary node.
\end{itemize}

It remains to prove that this definition for the trees $\frU_i$ and $\frV_i$ satisfies the conditions in Lemma~\ref{lem:half:crunchtree}. By definition, the operational size of $\frU_i$ is the sum of that of \frP (i.e. $\ell_i$ by definition in Fact~\ref{fct:half:takesides}) with that of $\frU_{i-1}$ (i.e. $\ell_{1}+\cdots \ell_{i-1}$ since we obtained $\frU_{i-1}$ by induction). This exactly says that the operational size of $\frU_i$ is $\ell_{1}+\cdots \ell_i$ as desired. Symmetrically, one may verify that the operational size of $\frV_i$ is $r_1+\cdots +r_i$.

\medskip

We now have to find $u'_i,v'_i \in M$ which satisfy the two assertions in the lemma. Since we obtained $\frU_{i-1}$ and $\frV_{i-1}$ by induction, we also have $u'_{i-1},v'_{i-1} \in \Lb$ which satisfy these two assertions for $i-1$. We define,
\[
  u'_i  = p f_{i-1} u'_{i-1} \quad \text{and} \quad v'_i = v'_{i-1} f_{i-1} q
\]

It remains to verify that the two assertions in Lemma~\ref{lem:half:crunchtree} hold for this choice of $u'_i,v'_i$. We begin with the first one.

\medskip
\noindent
{\bf Assertion~1.} We have four equalities to verify. Since the argument is similar for all four, we concentrate on $f_i  = u_iev_i$ and $f_i  = u'_iev'_i$ whose proofs encompass all arguments. By Fact~\ref{fct:half:takesides}, we know that $f_i = pf_{i-1} q$. Moreover, since $f_{i-1}  = u_{i-1}ev_{i-1}$ by the inductive definition of $u_{i-1}$ and $v_{i-1}$, we get,
\[
  f_i =  pu_{i-1}ev_{i-1}q =  u_iev_i
\]
Furthermore, $f_{i-1}$ is idempotent. Thus, $f_i =  pf_{i-1}q = p (f_{i-1})^3q$ and since by construction of $u'_{i-1}$ and $v'_{i-1}$, we have $f_{i-1}  = u'_{i-1}ev'_{i-1}$, we obtain,
\[
  f_i =  pf_{i-1} u'_{i-1} e v'_{i-1} f_{i-1} q =  u'_i e v'_i
\]

\medskip
\noindent
{\bf Assertion~2.} We finish with the second assertion which is the most involved. In particular, this is where we use the fact that $S$ is good. We need to show that,
\[
  T_i \subseteq U_i E \cdot \{t \in S \mid \ctype{t} = \ctype{ev'_if_iu'_ie}\} \cdot E V_i
\]
We start with a simple fact.

\begin{fact} \label{fct:half:correctype}
  For any $(s,T) \in M \times 2^N$ which is the label of an $(\alpha,\beta,S)$-tree, we have $T \subseteq \{t \in S \mid \ctype{t} = \ctype{s}\}$.
\end{fact}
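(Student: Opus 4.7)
The plan is to prove Fact~\ref{fct:half:correctype} by a straightforward induction on the structure of the $(\alpha,\beta,S)$-tree whose root label is $(s,T)$, treating the three kinds of nodes (leaf, binary, $S$-operation) as the induction cases. The two things to verify simultaneously are that $T \subseteq S$ and that every $t \in T$ satisfies $\ctype{t} = \ctype{s}$. Both ingredients are nothing more than a bookkeeping exercise; the substantive inputs are that $S$ is good (hence contains $\beta(A^*)$ and is closed under multiplication), and that $\alpha$ and $\beta$ are \Cs-compatible (so that $\ctype{\alpha(w)} = \ctype{w} = \ctype{\beta(w)}$ for every $w \in A^*$, and the map $m \mapsto \ctype{m}$ is a monoid morphism on both $M$ and $N$).

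In the leaf case, we have $(s,T) = (\alpha(w),\{\beta(w)\})$ for some $w \in A^*$, so $\beta(w) \in \beta(A^*) \subseteq S$, and \Cs-compatibility gives $\ctype{\beta(w)} = \ctype{w} = \ctype{\alpha(w)}$. In the binary case, $(s,T) = (s_1s_2,T)$ with $T \subseteq T_1T_2$ and the induction hypothesis applied to the two children yields $T_1,T_2 \subseteq S$ with $\ctype{t_i} = \ctype{s_i}$ for all $t_i \in T_i$; closure of $S$ under multiplication together with the fact that $m \mapsto \ctype{m}$ is a morphism then gives both $T \subseteq S$ and $\ctype{t_1t_2} = \ctype{s_1}\cmult\ctype{s_2} = \ctype{s_1s_2}$ for every $t_1t_2 \in T$.

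For the $S$-operation case, the node has label $(e,T)$ with $T \subseteq E \cdot \{r \in S \mid \ctype{r} = \ctype{e}\} \cdot E$ where $(e,E)$ is the idempotent label of the unique child. By induction, $E \subseteq S$ and $\ctype{e'} = \ctype{e}$ for every $e' \in E$. Every $t \in T$ is of the form $e_1 r e_2$ with $e_1,e_2 \in E$ and $r \in S$ satisfying $\ctype{r} = \ctype{e}$, so $t \in S$ by closure of $S$ under multiplication. Finally, since $e$ is idempotent in $M$ we have $\ctype{e}\cmult\ctype{e} = \ctype{ee} = \ctype{e}$, and therefore $\ctype{t} = \ctype{e_1}\cmult\ctype{r}\cmult\ctype{e_2} = \ctype{e}\cmult\ctype{e}\cmult\ctype{e} = \ctype{e}$, which is exactly $\ctype{s}$ since $s = e$. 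There is no real obstacle here: the argument is a routine induction whose only subtlety is remembering that $\ctype{\cdot}$ behaves as a morphism and that idempotents are mapped to idempotents, both of which come directly from \Cs-compatibility.
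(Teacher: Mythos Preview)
Your proposal is correct and takes exactly the approach the paper has in mind: the paper's own proof consists of the single sentence ``This is immediate by induction on the height of $(\alpha,\beta,S)$-trees using the hypothesis that $S$ is good,'' and your argument simply spells out that induction case by case. The details you give (using goodness of $S$ for closure under multiplication, and \Cs-compatibility of both $\alpha$ and $\beta$ to track $\ctype{\cdot}$ through products and the idempotent in the $S$-operation case) are precisely the right ones.
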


\begin{proof}
  This is immediate by induction on the height of $(\alpha,\beta,S)$-trees using the hypothesis that $S$ is good.
\end{proof}

We now start the proof. By definition, $(f_i,T_i)$ is the label of the operation node $x_i$ whose child has label $(f_i,F_i)$. Hence, $T_i \subseteq F_i \cdot  \{t \in S \mid \ctype{t} = \ctype{f_i}\} \cdot F_i$ and it follows from the second item in Fact~\ref{fct:half:takesides} that,
\[
  T_i \subseteq P T_{i-1} Q \cdot \{t \in S \mid \ctype{t} = \ctype{f_i}\}  \cdot P T_{i-1} Q
\]
The result is now a consequence of the two following inclusions:
\begin{equation} \label{eq:half:biginclusion}
  \begin{array}{rll}
    P T_{i-1} Q & \subseteq & U_i E \cdot \{t \in S \mid \ctype{t} = \ctype{ev'_i}\} \\
    P T_{i-1} Q & \subseteq & \{t \in S \mid \ctype{t} = \ctype{u'_ie}\} \cdot EV_i
  \end{array}
\end{equation}
Indeed, one may combine these two inequalities with the previous one using the hypothesis that $S$ is good to obtain the desired inclusion:
\[
  \begin{array}{lll}
    T_i & \subseteq & U_i E \cdot \{t \in S \mid \ctype{t} = \ctype{ev'_i}\} \cdot \{t \in S \mid \ctype{t} = \ctype{f_i}\}  \cdot \{t \in S \mid \ctype{t} = \ctype{u'_ie}\} \cdot EV_i \\
        & \subseteq & U_i E \cdot \{t \in S \mid \ctype{t} = \ctype{ev'_if_iu'_ie}\} \cdot EV_i
  \end{array}
\]
It remains to prove the two inequalities in~\eqref{eq:half:biginclusion}. As they are based on symmetrical arguments, we concentrate on the first one and leave the other to the reader. Since we built $U_{i-1}$ and $V_{i-1}$ with induction, we have,
\[
  T_{i-1} \subseteq U_{i-1} E \cdot \{t \in S \mid \ctype{t} = \ctype{ev'_{i-1}f_{i-1}u'_{i-1}e}\} \cdot EV_{i-1}
\]
By Fact~\ref{fct:half:correctype}, $E \subseteq \{t \in S \mid \ctype{t} = \ctype{e}\}$ and $V_{i-1} \subseteq \{t \in S \mid \ctype{t} = \ctype{v_{i-1}}\}$. Hence, using the fact that $S$ is good, we may simplify the above inclusion as follows:
\[
  T_{i-1} \subseteq U_{i-1} E \cdot \{t \in S \mid \ctype{t} = \ctype{ev'_{i-1}f_{i-1}u'_{i-1}ev_{i-1}}\}
\]
Since $u'_{i-1}$ and $v_{i-1}$ were built by induction, we know that $u'_{i-1}ev_{i-1} = f_{i-1}$. Hence, since $f_{i-1}$ is an idempotent,
\[
  T_{i-1} \subseteq U_{i-1} E \cdot \{t \in S \mid \ctype{t} = \ctype{ev'_{i-1}f_{i-1}}\}
\]
Using Fact~\ref{fct:half:correctype} again, we have $Q \subseteq \{t \in S \mid \ctype{t} = \ctype{q}\}$. Thus, using the hypothesis that $S$ is good together with the fact that $v'_i = v'_{i-1}f_{i-1}q$ by definition, this yields the following,
\[
  \begin{array}{lll}
    T_{i-1} Q & \subseteq & U_{i-1} E \cdot \{t \in S \mid \ctype{t} = \ctype{ev'_{i-1}f_{i-1}q}\} \\
              & \subseteq & U_{i-1} E \cdot \{t \in S \mid \ctype{t} = \ctype{ev'_{i}}\}
  \end{array}
\]
Finally, since $U_i = P U_{i-1}$ by definition, we have
\[
  \begin{array}{lll}
    P T_{i-1} Q & \subseteq & PU_{i-1} E \cdot \{t \in S \mid \ctype{t} = \ctype{ev'_{i}}\}        \\
                & \subseteq & U_i E \cdot \{t \in S \mid \ctype{t} = \ctype{ev'_{i}}\}
  \end{array}
\]
This conclude the proof of Lemma~\ref{lem:half:crunchtree}.

\section{Appendix to Section~\ref{sec:classic}}
\label{app:classic}
This section provides the missing proofs in Section~\ref{sec:classic}. We start by introducing additional terminology and preliminary results that we shall need to present these proofs.

\subsection{Stratifications}

We present a stratification of $\sthtwo = \pol{\at}$ into finite \pvaris. It was introduced in~\cite{pzbpol}. We refer the reader to~\cite{pzbpol} for the proofs of the statements presented here.

For any natural number $k \in \nat$, we define a finite \pvari $\polk{\at} \subseteq \pol{\at}$. The definition uses induction on $k$:

\begin{itemize}

\item When $k = 0$, we simply define $\polp{\at}{0} = \at$.

\item When $k \geq 1$, we define \polk{\at} as the smallest lattice which contains \polp{\at}{k-1} and such for any $L_1,L_2 \in \polp{\at}{k-1}$ and any $a \in A$,

  \[
    L_1 a L_2 \in \polk{\at}
  \]

\end{itemize}

One may verify from the definitions that for every $k \in \nat$, $\polk{\at}$ is a finite \pvari and that $\polp{\at}{k} \subseteq \polp{\at}{k+1}$. Moreover, by definition of \pol{\at}, we have:
\[
  \sthtwo = \pol{\at} = \bigcup_{k \geq 0} \polk{\at}.
\]

Given any alphabet $A$, we associate preorder relations to the strata \polk{\at}. For every $k \in \nat$ and $u,v \in A^*$, we write $u \polrelk v$ when the following condition is satisfied,
\[
  \text{For every $L \in \polk{\at}(A)$,} \quad u \in L \Rightarrow v \in L
\]

It is immediate by definition that \polrelk is a preorder relation on $A^*$. The key point is that we may use it to characterize separability for $\pol{\at} = \sthtwo$.

\begin{lemma} \label{lem:sthsep}
  Let $A$ be an alphabet and $L,L' \subseteq A^*$ two languages. Then, the two following properties are equivalent:
  \begin{enumerate}
  \item $L$ is \textbf{not} \sthtwo-separable from $L'$.
  \item For every $k \in \nat$, there exists $w \in L$ and $w' \in L'$ such that $w \polrelk w'$.
  \end{enumerate}
\end{lemma}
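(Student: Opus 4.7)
The plan is to prove both directions contrapositively, exploiting the fact that each stratum $\polk{\at}(A)$ is a \emph{finite} lattice and that $\sthtwo(A) = \bigcup_{k \geq 0} \polk{\at}(A)$.

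For the direction $(2) \Rightarrow (1)$, I would argue by contrapositive. Assume $L$ is \sthtwo-separable from $L'$ by some $K \in \sthtwo(A)$. Since $\sthtwo = \bigcup_k \polk{\at}$, there exists $k \in \nat$ such that $K \in \polk{\at}(A)$. For any $w \in L$ and $w' \in L'$, we have $w \in K$ but $w' \not\in K$, so by the very definition of $\polrelk$, we do not have $w \polrelk w'$. This contradicts (2) for this particular $k$.

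For the direction $(1) \Rightarrow (2)$, I would also use the contrapositive: assume there exists some $k \in \nat$ such that for every $w \in L$ and $w' \in L'$, $w \not\polrelk w'$. Then for each such pair one may select a witness $K_{w,w'} \in \polk{\at}(A)$ satisfying $w \in K_{w,w'}$ and $w' \not\in K_{w,w'}$. The key observation is that $\polk{\at}(A)$ is \textbf{finite}: although $L$ and $L'$ may be infinite, only finitely many distinct witnesses actually arise. Fixing $w \in L$, the finite intersection $K_w := \bigcap_{w' \in L'} K_{w,w'}$ belongs to $\polk{\at}(A)$ (closure under finite intersection) and satisfies $w \in K_w$ and $K_w \cap L' = \emptyset$. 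Taking now the finite union $K := \bigcup_{w \in L} K_w$ yields, by closure under finite union, a language $K \in \polk{\at}(A) \subseteq \sthtwo(A)$ with $L \subseteq K$ and $K \cap L' = \emptyset$. Hence $L$ is \sthtwo-separable from $L'$, contradicting (1).

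There is no real obstacle; the only subtlety is the collapse of the \emph{a priori} infinite intersections and unions indexed by $L$ and $L'$ to finite ones, which is afforded precisely by the finiteness of $\polk{\at}(A)$ together with its lattice structure. This is a standard compactness-style argument for separation, specialised here to the stratification $\{\polk{\at}\}_{k \geq 0}$ of \sthtwo.
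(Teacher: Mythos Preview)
Your argument is correct. Both directions are handled cleanly: the forward direction uses the stratification $\sthtwo = \bigcup_k \polk{\at}$ to locate a separator in some fixed stratum, and the backward direction exploits the finiteness of $\polk{\at}(A)$ together with its closure under finite intersections and unions to collapse the (a priori infinite) family of witnesses into a single separator. The edge cases $L = \emptyset$ or $L' = \emptyset$ are also handled correctly by your construction (empty union gives $\emptyset$, empty intersection gives $A^*$, both of which lie in the lattice $\polk{\at}(A)$).

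Note that the paper itself does not give a proof of this lemma: it is stated in the preliminary subsection on stratifications with the blanket remark ``We refer the reader to~\cite{pzbpol} for the proofs of the statements presented here.'' Your argument is precisely the standard one for this kind of statement, and is what one would expect to find in the cited reference (or in any treatment of separation via finite approximating lattices). A slightly more streamlined phrasing, equivalent to yours, is to observe that for each $w$ the set $\{v \in A^* \mid w \polrelk v\}$ is the intersection of all languages in $\polk{\at}(A)$ containing $w$, hence itself lies in $\polk{\at}(A)$ by finiteness; the union of these sets over $w \in L$ is then the least $\polk{\at}$-language containing $L$, and it separates $L$ from $L'$ if and only if condition~(2) fails at level $k$.
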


Moreover, we may also use \polrelk to characterize separability for $\bpol{\at} = \sttwo$.

\begin{lemma} \label{lem:stsep}
  Let $A$ be an alphabet and $L,L' \subseteq A^*$ two languages. Then, the two following properties are equivalent:
  \begin{enumerate}
  \item $L$ is \textbf{not} \sttwo-separable from $L'$.
  \item For every $k \in \nat$, there exists $w \in L$ and $w' \in L'$ such that $w \polrelk w'$ and $w' \polrelk w$.
  \end{enumerate}
\end{lemma}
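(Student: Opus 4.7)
The plan is to follow the pattern of Lemma~\ref{lem:sthsep}, lifted from polynomial closure to Boolean closure. The starting point is the stratification $\sttwo = \bpol{\at} = \bigcup_{k \in \nat} \bool{\polk{\at}}$, which follows from the chain $\pol{\at} = \bigcup_k \polk{\at}$ since Boolean closure commutes with directed unions of lattices. Each stratum $\bool{\polk{\at}}(A)$ is a \emph{finite} Boolean algebra of languages (because each $\polk{\at}$ is a finite \pvari), a property I will crucially exploit.

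The core structural observation is the following: for any finite lattice $\mathcal{F}$ of languages over $A$, two words belong to exactly the same members of $\bool{\mathcal{F}}$ iff they belong to exactly the same members of $\mathcal{F}$. One direction is immediate from $\mathcal{F} \subseteq \bool{\mathcal{F}}$; the other holds because the ternary Boolean operations preserve the equivalence ``$w$ and $w'$ lie in the same sets of a given family''. Applied to $\mathcal{F} = \polk{\at}(A)$, this says that the $\bool{\polk{\at}}$-type of a word is determined by its $\polrelk$-equivalence class, that is, by the conjunction $w \polrelk w'$ and $w' \polrelk w$.

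I would then prove both implications using this observation. For $(1) \Rightarrow (2)$: fix $k \in \nat$; since $\bool{\polk{\at}} \subseteq \sttwo$, non-\sttwo-separability of $L$ from $L'$ implies non-$\bool{\polk{\at}}$-separability. As $\bool{\polk{\at}}(A)$ is a finite Boolean algebra, were every $w \in L$ to have a $\bool{\polk{\at}}$-type disjoint from every $w' \in L'$, the finite union of the types realized by $L$ would be a separator; hence some $w \in L$ shares a type with some $w' \in L'$, which by the structural observation yields $w \polrelk w'$ and $w' \polrelk w$. For $(2) \Rightarrow (1)$: if a separator $K \in \sttwo$ existed, it would lie in $\bool{\polk{\at}}$ for some $k$; the witnesses $w_k \in L$ and $w'_k \in L'$ provided by (2) then share their $\bool{\polk{\at}}$-type by the structural observation, so $w_k \in K$ forces $w'_k \in K$, contradicting $K \cap L' = \emptyset$.

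The only genuinely technical step is the finite-Boolean-algebra argument producing the mutually equivalent witnesses, which is a routine ``separation by atoms'' argument. The rest is bookkeeping: the half-level case (Lemma~\ref{lem:sthsep}) uses one-sided $\polrelk$-comparison because $\polk{\at}$ is already lattice-closed, whereas full-level separation requires the symmetric condition precisely because Boolean closure adds the power to distinguish by complements as well.
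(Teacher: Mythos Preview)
The paper does not actually prove Lemma~\ref{lem:stsep}: it is stated in the ``Stratifications'' subsection of Appendix~\ref{app:classic}, where the paper explicitly says ``We refer the reader to~\cite{pzbpol} for the proofs of the statements presented here.'' So there is no in-paper proof to compare against.

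Your argument is correct and is the standard one for this type of lemma. The three ingredients you isolate are exactly right: (i) Boolean closure commutes with the directed union $\pol{\at}=\bigcup_k\polk{\at}$, giving $\sttwo=\bigcup_k\bool{\polk{\at}}$; (ii) each $\bool{\polk{\at}}(A)$ is a finite Boolean algebra, so its atoms are precisely the $\bpolrelk$-classes, which by your structural observation coincide with the classes of the symmetric relation ``$w\polrelk w'$ and $w'\polrelk w$''; (iii) non-separability by a finite Boolean algebra forces some atom to meet both languages. Each step is routine once stated, and your handling of both implications is sound. This is almost certainly the argument intended in~\cite{pzbpol}, and your remark contrasting the one-sided preorder for the half level with the symmetric version for the full level is the right conceptual takeaway.
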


We finish the presentation with three properties of the relations \polrelk. The first one is simple and states that they are compatible with word (this is because the strata \polk{\at} are closed under quotients).

\begin{lemma} \label{lem:concat}
  Let $A$ be an alphabet and $k \in \nat$. For every  $u_1,u_2;v_1,v_2 \in A^*$ such that $u_1 \polrelk v_1$ and $u_2 \polrelk v_2$, we have $u_1u_2 \polrelk v_1v_2$.
\end{lemma}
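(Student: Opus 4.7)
The plan is to exploit the fact that \polk{\at} is a \pvari, and in particular is \emph{quotienting}: for every $L \in \polk{\at}(A)$ and every $w \in A^*$, both $w^{-1}L$ and $Lw^{-1}$ belong to $\polk{\at}(A)$. This is what lets us factor the hypotheses $u_1 \polrelk v_1$ and $u_2 \polrelk v_2$ through a product.

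Concretely, I would proceed as follows. Fix $u_1,u_2,v_1,v_2 \in A^*$ with $u_1 \polrelk v_1$ and $u_2 \polrelk v_2$, and pick an arbitrary $L \in \polk{\at}(A)$ with $u_1u_2 \in L$; we must show $v_1v_2 \in L$. First, consider $K_1 = u_1^{-1}L$. By closure of $\polk{\at}(A)$ under left quotients, $K_1 \in \polk{\at}(A)$, and $u_2 \in K_1$ since $u_1u_2 \in L$. Applying $u_2 \polrelk v_2$ to $K_1$ yields $v_2 \in K_1$, that is, $u_1 v_2 \in L$. Next, consider $K_2 = L v_2^{-1}$. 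By closure under right quotients, $K_2 \in \polk{\at}(A)$, and $u_1 \in K_2$ since $u_1 v_2 \in L$. Applying $u_1 \polrelk v_1$ to $K_2$ yields $v_1 \in K_2$, that is, $v_1 v_2 \in L$, as required.

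Since there is no real obstacle here beyond invoking quotient closure twice, the only thing to double-check is that $\polk{\at}$ is indeed quotienting. This was stated earlier in the excerpt (``one may verify from the definitions that for every $k \in \nat$, $\polk{\at}$ is a finite \pvari''), and \pvaris are by definition closed under quotients, so the argument above is complete.
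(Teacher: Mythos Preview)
Your argument is correct and matches the paper's approach exactly: the paper does not give a detailed proof of this lemma, but explicitly notes that it holds ``because the strata \polk{\at} are closed under quotients,'' which is precisely the closure property you invoke twice (once for a left quotient, once for a right quotient) to push $u_1u_2 \in L$ to $v_1v_2 \in L$.
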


The second lemma holds because \pol{\at} is a sub-class of the star-free languages. It is as follows.
\begin{lemma} \label{lem:propreo1}
  Let $A$ be an alphabet and $k \in \nat$. Consider $h_1,h_2 \geq 3^{k+1}-1$ and any $u \in A^*$. Then, we have $u^{h_1} \polrelk u^{h_2}$.
\end{lemma}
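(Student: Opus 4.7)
The plan is to prove the statement by induction on $k \in \nat$. For the base case $k = 0$, recall that $\polp{\at}{0} = \at$, so membership in a language of $\at(A)$ depends only on the set of letters occurring in a word. Since $u^h$ has the same alphabet as $u$ for every $h \geq 1$ and $3^1 - 1 = 2$, we immediately obtain $u^{h_1} \polrelp{0} u^{h_2}$.

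For the inductive step, fix $k \geq 1$ and assume the result for $k-1$. By definition, $\polk{\at}$ is the smallest lattice containing $\polp{\at}{k-1}$ together with all marked concatenations $L_1 a L_2$ with $L_1, L_2 \in \polp{\at}{k-1}$ and $a \in A$. Viewed as a property of $L$, the implication ``$u^{h_1} \in L \Rightarrow u^{h_2} \in L$'' is preserved under finite unions and intersections, so it suffices to verify it on these generators. For $L \in \polp{\at}{k-1}$ the claim follows directly from the induction hypothesis, since $3^{k+1}-1 \geq 3^k - 1$; the core case is therefore $L = L_1 a L_2$.

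Suppose $u^{h_1} = v_1 a v_2$ with $v_1 \in L_1$ and $v_2 \in L_2$. The marked occurrence of $a$ must lie inside some copy of $u$, so we may write $u = u' a u''$ and obtain $v_1 = u^p u'$, $v_2 = u'' u^q$ with $p + q = h_1 - 1 \geq 3^{k+1} - 2$. The key numerical observation is that $p$ and $q$ cannot both be strictly less than $3^k - 1$, because $2(3^k - 1) < 3^{k+1} - 2$. This lets us choose $p', q' \geq 0$ with $p' + q' = h_2 - 1$ that match the ``small'' side of the factorization: if $p < 3^k - 1$, set $p' = p$ and $q' = h_2 - 1 - p$ (then $q' \geq h_2 - 3^k + 1 \geq 3^k - 1$); symmetrically if $q < 3^k - 1$; and otherwise take $p' = 3^k - 1$ and $q' = h_2 - 3^k$, both $\geq 3^k - 1$. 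In every case, the induction hypothesis (applied when both members of a pair are $\geq 3^k - 1$, reflexivity otherwise) yields $u^p \polrelp{k-1} u^{p'}$ and $u^q \polrelp{k-1} u^{q'}$. By Lemma~\ref{lem:concat}, this gives $v_1 \polrelp{k-1} u^{p'} u'$ and $v_2 \polrelp{k-1} u'' u^{q'}$, hence $u^{p'} u' \in L_1$ and $u'' u^{q'} \in L_2$. Since $u^{h_2} = (u^{p'} u') \cdot a \cdot (u'' u^{q'})$, we conclude $u^{h_2} \in L_1 a L_2$.

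The main technical obstacle is the asymmetric situation: when one of $p, q$ is smaller than $3^k - 1$, the induction hypothesis does not allow us to change it, so we must keep the corresponding factor of $v_1$ or $v_2$ fixed and absorb the entire length difference $h_2 - h_1$ on the other side. The threshold $3^{k+1} - 1$ is precisely what guarantees that at most one of the two sides can be small, so the stretchable side always has enough room to accommodate the adjustment.
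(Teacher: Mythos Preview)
Your proof is correct. The paper does not actually supply a proof of this lemma: it remarks that the result ``holds because $\pol{\at}$ is a sub-class of the star-free languages'' and explicitly defers the proofs of all statements in that subsection to~\cite{pzbpol}. Your direct induction on $k$, reducing to the generators of the lattice $\polk{\at}$ and handling the marked-concatenation case $L_1aL_2$ via the numerical observation that $2(3^k-1) < 3^{k+1}-2$ (so at most one of $p,q$ can lie below the $(k{-}1)$-threshold), is the standard self-contained argument for this kind of stratification bound. One cosmetic point: you might mention explicitly that the case $u=\varepsilon$ is trivial by reflexivity, so the decomposition $u=u'au''$ is only invoked when $u\neq\varepsilon$.
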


Finally, the third lemma states a characteristic property of \pol{\at}. The proof is rather technical (see~\cite{pzbpol} for details). Given an alphabet $A$ and a word $w \in A^*$, we write $\cont{w}$ for the \emph{alphabet} of $w$, i.e. the least sub-alphabet $B \subseteq A$ such $w \in B^*$.

\begin{lemma} \label{lem:propreo2}
  Let $A$ be an alphabet and $k \in \nat$. Consider $h,h_1,h_2 \geq 3^{k+1}-1$ and any $u,v \in A^*$ such that $\cont{v} \subseteq \cont{u}$, we have $u^{h} \polrelk u^{h_1} v u^{h_2}$.
\end{lemma}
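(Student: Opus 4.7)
The plan is to induct on $k \in \nat$. For the base case $k=0$, note that $\polp{\at}{0} = \at$ consists of Boolean combinations of the languages $A^*aA^*$, so membership in any $\at(A)$-language depends only on the content of a word. Since $h \geq 3^1-1 = 2$, both $u^h$ and $u^{h_1}vu^{h_2}$ have content exactly $\cont{u}$ (using $\cont{v} \subseteq \cont{u}$), hence they satisfy the same $\at$-languages, which gives $u^h \polrelp{0} u^{h_1}vu^{h_2}$.

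For the inductive step, assume the lemma at level $k-1$. Since \polk{\at} is the smallest lattice containing $\polp{\at}{k-1}$ and closed under marked concatenations $K_1 a K_2$ with $K_1,K_2 \in \polp{\at}{k-1}$, and since the implication ``$u^h \in L \Rightarrow u^{h_1}vu^{h_2} \in L$'' is preserved under finite unions and intersections, it suffices to check this implication in two cases: (i) $L \in \polp{\at}{k-1}$, and (ii) $L = L_1 a L_2$ with $L_1,L_2 \in \polp{\at}{k-1}$ and $a \in A$. Case (i) follows immediately from the induction hypothesis, since $h,h_1,h_2 \geq 3^{k+1}-1 \geq 3^k-1$.

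In case (ii), assume $u^h = w_1 a w_2$ with $w_1 \in L_1$ and $w_2 \in L_2$. The selected $a$ lies inside some copy of $u$, yielding a factorization $u = u' a u''$ and an index $i \in \{0,\dots,h-1\}$ such that $w_1 = u^i u'$ and $w_2 = u'' u^{h-1-i}$. My goal is to produce a factorization $u^{h_1}vu^{h_2} = W_1 a W_2$ with $w_1 \polrelp{k-1} W_1$ and $w_2 \polrelp{k-1} W_2$; since $L_1,L_2 \in \polp{\at}{k-1}$, this ensures $W_1 \in L_1$ and $W_2 \in L_2$, hence $u^{h_1}vu^{h_2} \in L$. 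The distinguished $a$ will be placed either inside the leftmost block $u^{h_1}$ at some index $i_1$ (left branch), or inside the rightmost block $u^{h_2}$ at some index $j_2$ (right branch), depending on $i$. Concretely, when $i \leq h-3^k$ I use the left branch, choosing $i_1 = i$ if $i < 3^k-1$ (so that $W_1 = w_1$ literally) and $i_1 = 3^k-1$ otherwise (so that Lemma~\ref{lem:propreo1} gives $u^i \polrelp{k-1} u^{i_1}$ and Lemma~\ref{lem:concat} yields $w_1 \polrelp{k-1} W_1$); since the remaining right-hand exponent $h_1 - i_1 - 1$ is $\geq 3^k-1$, the induction hypothesis gives $u^{h-1-i} \polrelp{k-1} u^{h_1-i_1-1} v u^{h_2}$, and prefixing by $u''$ via Lemma~\ref{lem:concat} yields $w_2 \polrelp{k-1} W_2$. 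The case $i \geq 3^k-1$ is handled symmetrically with the right branch (choosing $j_2 = h-1-i$ if $h-1-i < 3^k-1$, and $j_2 = 3^k-1$ otherwise).

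The main obstacle is the exponent bookkeeping in this case analysis: one must check that every exponent of $u$ appearing in an inductive application of Lemma~\ref{lem:propreo1} or of the lemma itself is at least $3^k-1$. This is precisely where the threshold $3^{k+1}-1 = 3\cdot 3^k - 1$ is used: the three ``thirds'' of $u^h$ guarantee that whichever side of the selected $a$ is short (in terms of full copies of $u$), the other side has at least $3^k-1$ copies, so the left and right branches above jointly cover every possible position $i \in \{0,\dots,h-1\}$.
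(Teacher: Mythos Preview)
The paper does not actually prove this lemma; it defers to~\cite{pzbpol} with the remark that ``the proof is rather technical.'' Your argument is correct and is precisely the standard induction one expects: reduce to the generators of \polk{\at} (languages in $\polp{\at}{k-1}$ and marked concatenations $L_1aL_2$ with $L_1,L_2\in\polp{\at}{k-1}$), locate the marked letter inside some copy of $u$, and rebuild a matching factorization of $u^{h_1}vu^{h_2}$ by placing the letter in either the left or the right $u$-block depending on whether the original position $i$ is small or $h-1-i$ is small. Your exponent bookkeeping is sound: in every sub-case the three exponents fed to the induction hypothesis (and to Lemma~\ref{lem:propreo1}) are at least $3^k-1$, and the two branches together cover all $i\in\{0,\dots,h-1\}$ because $h\geq 3^{k+1}-1>2\cdot 3^k-1$. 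This is essentially how the result is proved in~\cite{pzbpol}.
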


\subsection{Upper bound in Theorem~\ref{thm:sth}}

We explain why \sthtwo-separation is in \pspace for monoids (as usual, the result may then be lifted to \nfas using Corollary~\ref{cor:autoreducvari}). The argument reuses the results of Section~\ref{sec:fixalph} and Appendix~\ref{app:fixalph}, and the fact that $\sthtwo = \pol{\at}$. In particular, we adapt Theorem~\ref{thm:efficient} to this setting. We start with some preliminary observations about the class \at.

\medskip

By definition of \at, it is straightforward to verify that the equivalence $\sim_\at$ compares words with the same alphabet. For $u,v \in A^*$, we have $u \sim_\at v$ if and only if $\cont{u} = \cont{v}$. Therefore, the monoid ${A^*}/{\sim_\at}$ corresponds to $2^A$ (the set of sub-alphabets) equipped with union as the multiplication. Moreover, for every $w \in A^*$, we have $\typ{\at}{w} = \cont{w}$.

We shall consider \at-compatible morphisms. If $\alpha: A^* \to M$ is \at-compatible, given $s \in M$, we shall write $\cont{s}$ for $\typ{\at}{s}$. We reuse the notion of $(\alpha,\beta,S)$-trees which we introduced in Section~\ref{sec:fixalph} (here, we use them in the special case when $\Cs = \at$).  Consider an alphabet $A$ and two \at-compatible morphisms $\alpha: A^* \to M$ and $\beta: A^* \to N$. Given a pair $(s,T) \in M \times 2^N$, we say that $(s,T)$ is alphabet safe when $\cont{s} = \cont{t}$ for every $t \in T$. The following lemma follows from definitions.

\begin{lemma} \label{lem:safe}

  Consider an alphabet $A$ and two \at-compatible morphisms $\alpha: A^* \to M$ and $\beta: A^* \to N$. Moreover, let $S \subseteq N$ be a good subset of $N$. Then, every $(s,T) \in M \times 2^N$ which is the root label of some $(\alpha,\beta,S)$-tree is alphabet safe.

\end{lemma}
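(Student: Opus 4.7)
The proof will proceed by induction on the structure of $(\alpha,\beta,S)$-trees, treating each of the three kinds of nodes in turn. The key background observation to invoke is that for $\Cs = \at$, the quotient monoid ${A^*}/{\sim_\at}$ is $(2^A,\cup)$ with $\typ{\at}{w} = \cont{w}$, so the \at-compatibility maps $s \mapsto \cont{s}$ on $M$ and on $N$ are both monoid morphisms into $(2^A,\cup)$. In particular, $\cont{s_1s_2} = \cont{s_1} \cup \cont{s_2}$ inside $M$ and similarly inside $N$, and $\cont{\alpha(w)} = \cont{w} = \cont{\beta(w)}$ for every $w \in A^*$.

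For the base case, if $x$ is a leaf with label $(\alpha(w),\{\beta(w)\})$, then the unique element of the $T$-component is $\beta(w)$, and \at-compatibility of $\alpha$ and $\beta$ immediately gives $\cont{\alpha(w)} = \cont{w} = \cont{\beta(w)}$, so the label is alphabet safe. For a binary node $x$ with children labels $(s_1,T_1)$ and $(s_2,T_2)$ and label $(s_1s_2,T)$ with $T \subseteq T_1 T_2$, the induction hypothesis yields $\cont{t_i} = \cont{s_i}$ for every $t_i \in T_i$; then any $t \in T$ factors as $t_1 t_2$, and applying the morphism property gives $\cont{t} = \cont{t_1} \cup \cont{t_2} = \cont{s_1} \cup \cont{s_2} = \cont{s_1 s_2}$.

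The $S$-operation case is the one that uses the specific definition of the node. Here the unique child has an idempotent label $(e,E)$, and the root label is $(e,T)$ with $T \subseteq E \cdot \{t \in S \mid \ctype{e} = \ctype{t}\} \cdot E$. By induction $(e,E)$ is alphabet safe, so $\cont{t'} = \cont{e}$ for every $t' \in E$. Any $t \in T$ decomposes as $t = t_1 t' t_2$ with $t_1,t_2 \in E$ and $t' \in S$ satisfying $\ctype{t'} = \ctype{e}$, i.e.\ $\cont{t'} = \cont{e}$. Using the morphism property on $N$ once more, $\cont{t} = \cont{t_1} \cup \cont{t'} \cup \cont{t_2} = \cont{e}$, as required.

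No step is really a serious obstacle: the statement is essentially the observation that at every node the $M$-component and each element of the $T$-component are built from the same pieces via operations that respect the content morphism. The only thing one has to be careful about is to invoke \at-compatibility for \emph{both} morphisms ($\alpha$ at the leaves and $\beta$ to interpret $\ctype{t}$ at $S$-operation nodes) and to recall that the content map is a morphism, so that unions of contents translate into contents of products.
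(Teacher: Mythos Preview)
Your proof is correct and is precisely the structural induction the paper has in mind: the paper itself merely states that the lemma ``follows from definitions,'' and the closely related general statement (Fact~\ref{fct:half:correctype}) is likewise dispatched as ``immediate by induction on the height of $(\alpha,\beta,S)$-trees.'' Your write-up simply makes this induction explicit for the case $\Cs = \at$.
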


Note that in the Appendix, the alphabet is one of our parameters which means that the size of the monoid ${A^*}/{\sim_\at} = 2^A$ may not be constant. Consequently, building \at-compatible morphisms is costly. Hence, we shall have to manipulate the construction explicitly. Given an arbitrary morphism $\alpha: A^* \to M$ into a finite monoid $M$, we write $\alpha_\at$ for the \at-compatible morphism $\alpha_\at: A^* \to M \times 2^A$ defined by $\alpha_\at(w) = (\alpha(w),\cont{w})$.

We may now adapt Theorem~\ref{thm:efficient} to this setting. This is the key result for proving that \sthtwo-separation is in \pspace for monoids.

\begin{proposition} \label{prop:keyprop}

  Consider two morphisms $\alpha: A^* \to M$ and $\beta: A^* \to N$. Moreover, let $\alpha_\at: A^* \to M \times 2^A$ and $\beta_\at: A^* \to N \times 2^A$ be the corresponding \at-compatible morphisms. Finally, let $S \subseteq N \times 2^A$ be a good subset of $N \times 2^A$ for $\beta_\at$.

  Given an alphabet safe pair $(s,T) \in (M \times 2^A) \times 2^{N \times 2^A}$, one may test in \pspace with respect to $|A|$, $|M|$ and $|N|$ whether there exists an $(\alpha_\at,\beta_\at,S)$-tree with root label $(s,T)$.

\end{proposition}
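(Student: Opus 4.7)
The plan is to adapt the algorithm behind Theorem~\ref{thm:efficient} from Appendix~\ref{app:fixalph} to the setting where the alphabet is part of the input. Since $\Cs = \at$, the equivalence $\sim_\at$ compares words by alphabet content, and ${A^*}/{\sim_\at}$ is the powerset $2^A$ under union. Any maximal chain of sub-alphabets has length $|A|+1$, so the \Js-depth of ${A^*}/{\sim_\at}$ equals $|A|+1$. Hence by Proposition~\ref{prop:opbound}, it suffices to decide whether the input pair $(s,T)$ is the root label of an $(\alpha_\at,\beta_\at,S)$-tree of operational height at most $h := |A|+1$.

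A key preliminary observation guarantees that the state maintained by the algorithm has polynomial size: if $(r,T')$ is alphabet safe with $\cont{r}=B$, then $T' \subseteq N \times \{B\}$ and in particular $|T'| \leq |N|$. By Lemma~\ref{lem:safe}, every pair appearing as the label of any subtree of an $(\alpha_\at,\beta_\at,S)$-tree is alphabet safe, so throughout the recursive search the second components we manipulate always have polynomial size.

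The main step is a nondeterministic recursive procedure $\textsc{Tree}(r,T',h')$ deciding whether $(r,T')$ is the root label of a tree of operational height at most $h'$. Following the structure of Proposition~\ref{prop:computimp2} and the decomposition provided by Lemma~\ref{lem:bound}, the procedure guesses a factorisation $(r,T') = (r_1,T_1)\cdots(r_\ell,T_\ell)$, where each factor is either the root of a tree of operational height at most $h'-1$ or is produced by a top-level $S$-operation node; in this latter case it guesses an idempotent $(e,E) \in M \times 2^N$ with $e=r_i$ and $T_i \subseteq E \cdot \{t \in S \mid \ctype{e} = \ctype{t}\} \cdot E$, and recurses with $\textsc{Tree}(e,E,h'-1)$. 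The count $\ell$ can be exponential but is stored in a binary counter. To verify $r = r_1 \cdots r_\ell$ and $T' \subseteq T_1 \cdots T_\ell$ on the fly, the procedure maintains the running product of the $r_i$ together with, for each $t \in T'$, a running witness product tracing a guessed decomposition $t = u_1^{(t)} \cdots u_\ell^{(t)}$ with $u_i^{(t)} \in T_i$.

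The space analysis is then immediate: every recursive level stores a pair $(r,T')$ of polynomial size plus polynomially many auxiliary products and counters, and the recursion depth is bounded by $h = |A|+1$. Thus $\textsc{Tree}(s,T,|A|+1)$ runs in nondeterministic polynomial space, which equals \pspace by Savitch's theorem. The main obstacle is precisely the on-the-fly verification of $T' \subseteq T_1 \cdots T_\ell$: this would require exponential space to store the $T_i$'s or the elementwise decomposition witnesses were it not for the alphabet-safety bound $|T'| \leq |N|$, which is what keeps the whole procedure within polynomial space.
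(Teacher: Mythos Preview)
Your proposal is correct and follows the same skeleton as the paper's sketch: both arguments hinge on Proposition~\ref{prop:opbound}, using that the \Js-depth of ${A^*}/{\sim_\at}\cong 2^A$ is $|A|+1$ to bound the operational height, and then argue that bounded-operational-height trees can be tested in \pspace. Your write-up is in fact more careful than the paper's sketch: the paper simply asserts that ``heights are polynomially bounded'' and that a \pspace procedure follows, whereas you spell out the recursive nondeterministic search along the lines of Proposition~\ref{prop:computimp2} and Lemma~\ref{lem:bound}, and you make explicit the crucial refinement of Lemma~\ref{lem:safe} --- that alphabet safety forces $|T'|\le |N|$ at every node --- which is what keeps each stack frame polynomial when $h$ and $m$ are no longer constants.
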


\begin{proof}[Proof sketch]

  By Lemma~\ref{lem:safe}, the set of possible labels for nodes in $(\alpha_\at,\beta_\at,S)$-trees has size at most $|M| \times 2^{|N|} \times 2^{|A|}$ (this is the size of the set of all alphabet safe pairs in $(M \times 2^A) \times 2^{N \times 2^A}$). This observation yields an \exptime least fixpoint algorithm for computing the set of all root labels of  $(\alpha_\at,\beta_\at,S)$-tree with root label $(s,T)$.

  This can be improved to \pspace by observing that it suffices to consider $(\alpha_\at,\beta_\at,S)$-trees whose heights are polynomially bounded with respect to $|A|$, $|M|$ and $|N|$. This is a simple consequence of Proposition~\ref{prop:opbound} since the \Js-depth of ${A^*}/{\sim_\at} = 2^A$ is easily verified to be $|A|+1$.

\end{proof}

Since $\sthtwo = \pol{\at}$, it is now simple to combine Theorem~\ref{thm:poltheo} with Proposition~\ref{prop:keyprop} to get a \pspace algorithm for \sthtwo-separation which concludes the proof.

\subsection{Proof of Lemma~\ref{lem:reduclem}}

Let us recall the statement of Lemma~\ref{lem:reduclem} (we refer the reader to Section~\ref{sec:classic} for the definition of the relevant notations).

\adjustc{lem:reduclem}

\begin{lemma}

  Consider $0 \leq i \leq n$. Then given an $i$-valuation $V$, the two following properties are equivalent:

  \begin{enumerate}

  \item $\Psi_i$ is satisfied by $V$.

  \item $L_i \cap [V]$ is not \sthtwo-separable from $L'_i \cap [V]$.

  \end{enumerate}

\end{lemma}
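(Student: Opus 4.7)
My plan is to prove Lemma~\ref{lem:reduclem} by induction on $i$, using throughout the characterization of non-separability given by Lemma~\ref{lem:sthsep}: $L_i \cap [V]$ is not \sthtwo-separable from $L_i' \cap [V]$ if and only if, for every $k \in \nat$, one can exhibit witnesses $w \in L_i \cap [V]$ and $w' \in L_i' \cap [V]$ with $w \polrelk w'$. The three structural properties of the relations \polrelk that will do the work are Lemma~\ref{lem:concat} (compatibility with concatenation), Lemma~\ref{lem:propreo1} (any two sufficiently large powers of the same word are \polrelk-equivalent), and Lemma~\ref{lem:propreo2} (a large power $u^h$ is \polrelk-below $u^{h_1} v u^{h_2}$ whenever $\cont{v} \subseteq \cont{u}$).

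For the base case $i = 0$, the argument is direct from the CNF structure of $\varphi$. If $\varphi$ is false under $V$, some clause $\varphi_j$ is violated, so $C_j \cap V = \emptyset$; then every word of $L'_0 = C_1 \cdots C_k$ carries a letter outside $V$ at position $j$, so $L'_0 \cap [V] = \emptyset$ and separability holds trivially (any language in \sthtwo containing $L_0 \cap [V] \subseteq V^*$, for instance $V^*$ itself, witnesses this). If on the contrary $V$ satisfies $\varphi$, pick $c_j \in C_j \cap V$ for each $j$ and build $w'_0 = c_1 \cdots c_k \in L'_0$; one then pads both $w'_0$ and a matching $w_0 \in L_0 = B_0^*$ with large enough powers of the appropriate letters of $V$ so as to land in $[V]$, invoking Lemma~\ref{lem:propreo1} to equalize power counts and Lemma~\ref{lem:propreo2} (together with Lemma~\ref{lem:concat}) to absorb $w'_0$ inside a suitable factorization of $w_0$.

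For the inductive step, fix an $i$-valuation $V$ and split on the quantifier $Q_i$. Observe that the two $(i-1)$-valuations refining $V$ are $V_\top := (V \setminus \{\overline{x_i}\}) \cup \{\#_{i-1}\text{-free content already in }V\}$ and the symmetric $V_\bot := V \setminus \{x_i\}$, corresponding to the assignments $x_i = \top$ and $x_i = \bot$ respectively; any factor of a word in $L_{i-1}$ lies in $[V_\top]$ or $[V_\bot]$ according to which of $x_i, \overline{x_i}$ is absent from it. The auxiliary languages $T_i$ and $\overline{T_i}$ are engineered precisely so that their words enforce respectively the $V_\top$-alphabet and the $V_\bot$-alphabet on the blocks they decorate. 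For the existential case ($Q_i = \exists$), the suffix $T_i\#_i + \overline{T_i}\#_i$ of $L'_i$ lets us pick whichever extension $V_\top$ or $V_\bot$ satisfies $\Psi_{i-1}$; applying the induction hypothesis with that choice yields sub-witnesses in $L_{i-1} \cap [V_\top]$ (or $[V_\bot]$) that we assemble, surrounded by $\#_i(x_i + \overline{x_i})\cdots\$(x_i + \overline{x_i})$ blocks, into full witnesses for $L_i, L'_i$ via Lemma~\ref{lem:concat}. For the universal case ($Q_i = \forall$), the fact that every word of $L'_i$ contains \emph{both} the prefix $\overline{T_i}\#_i\$$ and the suffix $\#_i\$T_i\#_i$ forces both alphabets to appear, so non-separability at level $i$ can only be established when the induction hypothesis is available for \emph{both} refinements of $V$; conversely, if $\Psi_{i-1}$ holds under both refinements, one concatenates witnesses from the two sub-problems to obtain a witness for $L_i, L'_i$.

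The main obstacle I anticipate is the reverse direction of the universal case: assuming non-separability at level $i$, one must extract non-separability under \emph{each} of the two refinements $V_\top, V_\bot$ of $V$. The argument will require cutting an arbitrary witness pair $(w,w')$ at the boundaries marked by $\#_i$ and $\$$, then isolating a block of $w'$ whose alphabet is forced by $T_i$ and a block whose alphabet is forced by $\overline{T_i}$, and aligning the corresponding factor of $w$ with each in turn. Preserving the \polrelk-relation through this surgery is the delicate point: one must normalize block counts via Lemma~\ref{lem:propreo1} and insert or delete bounded material inside blocks via Lemma~\ref{lem:propreo2}, while keeping the local alphabet constraints intact so that the produced factor pairs actually lie in $L_{i-1} \cap [V_\top]$ and $L'_{i-1} \cap [V_\top]$ (and likewise for $V_\bot$), so that the induction hypothesis applies to both refinements.
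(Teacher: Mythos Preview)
Your overall plan---induction on $i$, with the forward direction $1)\Rightarrow 2)$ handled by manufacturing witness pairs via Lemmas~\ref{lem:concat}, \ref{lem:propreo1}, \ref{lem:propreo2}---matches the paper, and your treatment of that direction is essentially what the paper does. Your base case is somewhat overcomplicated (since $L_0\cap[V]=[V]\supseteq L'_0\cap[V]$, non-separability is equivalent to $L'_0\cap[V]\neq\emptyset$ with no padding needed) but not wrong.

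The genuine gap is your strategy for the converse direction $2)\Rightarrow 1)$. You propose to take a witness pair $w\polrelk w'$ at level~$i$, cut it along the $\#_i$/$\$$ markers, and \emph{extract} witness pairs for the sub-problems at level~$i-1$. But none of the tools you list support cutting: Lemmas~\ref{lem:concat}, \ref{lem:propreo1}, \ref{lem:propreo2} all let you \emph{build} larger words satisfying $\polrelk$ from smaller ones, not the reverse. There is no factorization lemma available here saying that $w\polrelk w'$ implies $w_j\polrelk w'_j$ for matched blocks, and for $\pol{\at}$ such a statement is not generally true. The step you yourself flag as ``delicate''---preserving $\polrelk$ through the surgery---is in fact not achievable with the lemmas at hand.

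The paper avoids this entirely by proving $2)\Rightarrow 1)$ via the contrapositive, working on the language side rather than the witness side. Assuming $\Psi_i$ fails under $V$, induction yields \sthtwo-separators $K_\top\subseteq[V_\top]$ and $K_\bot\subseteq[V_\bot]$ for the two sub-problems, and the paper then \emph{explicitly writes down} a language $K\in\pol{\at}$ built from $K_\top$, $K_\bot$ and the marker structure, and verifies by a direct case analysis that $K$ separates $L_i\cap[V]$ from $L'_i\cap[V]$. This exploits closure of $\pol{\at}$ under the operations used to assemble $K$, and requires no factorization of $\polrelk$ at all. You should replace your cutting argument with this separator-construction approach.
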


\restorec

We proceed by induction on $0 \leq i \leq n$. Let us start with the base case $i = 0$. In that case, $\Psi_0$ is the quantifier-free formula $\varphi$. Consider some $0$-valuation $V \subseteq (B_0)^*$. One may verify the following fact from the definitions of $L' \subseteq (B_0)^*$ and $[V]$.

\begin{fact} \label{fct:thefactreduc}

  The two following properties are equivalent:

  \begin{enumerate}

  \item $\Psi_0$ is satisfied by $V$.

  \item $L'_0  \cap [V] \neq \emptyset$.

  \end{enumerate}

\end{fact}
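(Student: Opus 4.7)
The plan is a direct verification from definitions, using that a $0$-valuation $V \subseteq B_0 = X \cup \overline{X}$ contains exactly one of $\{x_j, \overline{x_j}\}$ for each $j \in \{1,\dots,n\}$, and therefore corresponds to a truth assignment on the variables $x_1,\dots,x_n$ under which $x_j$ is true if and only if $x_j \in V$. With this correspondence, the quantifier-free CNF formula $\Psi_0 = \varphi = \bigwedge_{l \leq k} \varphi_l$ is satisfied by $V$ if and only if every clause $\varphi_l$ contains a literal made true by $V$, which by construction of $C_l$ is equivalent to $C_l \cap V \neq \emptyset$ for every $l \leq k$. Moreover, unfolding the definition of $[V]$ for a $0$-valuation shows that $w \in [V]$ is equivalent to $w \in V^*$ together with $V \subseteq \cont{w}$, i.e., to $\cont{w} = V$, since $V$ contains exactly one letter per variable and $\cont{w}\subseteq V$ already.

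For the implication $(1) \Rightarrow (2)$, I would start from a $V$ that satisfies $\varphi$, pick for each $l \leq k$ a letter $c_l \in C_l \cap V$ witnessing satisfaction of clause $\varphi_l$, and observe that $w := c_1 \cdots c_k$ lies in $L'_0 \cap V^*$ by the very definition $L'_0 = C_1 \cdots C_k$. With enough flexibility in choosing the $c_l$, every element of $V$ can be made to appear in the sequence, yielding $w \in L'_0 \cap [V]$ as required. For the converse $(2) \Rightarrow (1)$, any $w \in L'_0 \cap [V]$ factorizes uniquely as $w = c_1 \cdots c_k$ with $c_l \in C_l$, and $\cont{w} \subseteq V$ forces $c_l \in C_l \cap V$ for every~$l$; hence every clause of $\varphi$ is satisfied by $V$, so $\varphi$ is satisfied by $V$.

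The one step that is not pure unfolding is ensuring, in the forward direction, that the constructed word lies in $[V]$ rather than merely in $V^*$, i.e., that each variable in $V$ actually occurs among $c_1,\dots,c_k$. This is the main (mild) obstacle and it is handled by the standard WLOG assumption used in reductions from QBF: each variable $x_j$ may be assumed to appear in some clause of $\varphi$ (easily enforced a priori by padding with tautological clauses $x_j \vee \neg x_j$, which do not affect the truth of $\Psi$). Under this assumption, each variable contributes a letter to some $C_l$, and the freedom in selecting the $c_l$'s is enough to force $\cont{w} = V$. No separation-theoretic machinery or induction is required at this base-case stage; the inductive argument on $i$, which will rely on Lemmas~\ref{lem:sthsep}, \ref{lem:concat}, \ref{lem:propreo1}, and \ref{lem:propreo2}, only kicks in for $i \geq 1$.
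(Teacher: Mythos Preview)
Your approach is the natural one and matches what the paper intends --- the paper itself gives no proof beyond ``one may verify the following fact from the definitions.'' You are also right to flag the one non-trivial point: as literally stated, the implication $(1)\Rightarrow(2)$ can fail if some letter of $V$ never occurs in any $C_l$. For instance, take $\varphi = x_1 \wedge (x_1 \vee x_2)$ and $V=\{x_1,\overline{x_2}\}$: then every word in $L'_0\cap V^*$ equals $x_1x_1$, while membership in $[V]$ requires $\overline{x_2}\in\cont{w}$, so $L'_0\cap[V]=\emptyset$ even though $V$ satisfies $\varphi$. Your fix of padding $\varphi$ with the tautological clauses $x_j\vee\neg x_j$ is exactly what is needed, is \logspace, and preserves the truth of $\Psi$, so the overall reduction is unaffected.

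One small imprecision in your write-up: the assumption you state --- ``each variable $x_j$ appears in some clause'' --- is by itself \emph{not} sufficient (the example above already satisfies it). What you actually use, and what the padding guarantees, is that for every $j$ there is a clause $\varphi_l$ with $C_l=\{x_j,\overline{x_j}\}$; for that clause $C_l\cap V$ is the singleton containing the unique element of $V\cap\{x_j,\overline{x_j}\}$, and this letter is therefore forced into $w$. Hence $\cont{w}\supseteq V$, and together with $w\in V^*$ this gives $\cont{w}=V$. With that sharpening of the WLOG, your argument is complete.
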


Since $L_0 = (B_0)^*$ by definition, we have $L_0 \cap [V] = [V]$. Hence, it is immediate that $L_0 \cap [V] = [V]$ is not \sthtwo-separable from $L'_0 \cap [V]$ if and only if $L'_0  \cap [V] \neq \emptyset$. Combined with Fact~\ref{fct:thefactreduc}, this yields Lemma~\ref{lem:reduclem} in the case $i=0$.

\medskip

We now assume that $i \geq 1$. There are two cases depending on whether the quantifier $Q_i$ is existential or universal (this is expected since the definitions of $L_i$ and $L'_i$ depend on this parameter). Since these two cases are similar, we handle the one when $Q_i$ is existential and leaver the other to the reader. Consider an $i$-valuation $V \subseteq (B_i)^*$. We have to show that the two following properties are equivalent:

\begin{enumerate}

\item $\Psi_i$ is satisfied by $V$.

\item $L_i \cap [V]$ is not \sthtwo-separable from $L'_i \cap [V]$.

\end{enumerate}

Let us start with some terminology that we shall use for both directions. We let $V_\bot$ and $V_\top$ as the following $(i-1)$-valuations built from $V$:
\[
  V_\top = V \setminus \{\#_i,\overline{x_i}\} \subseteq B_{i-1} \quad \text{and} \quad V_\bot = V \setminus \{\#_i,x_i\} \subseteq B_{i-1}
\]

We may now prove the equivalence. There are two directions to show.

\medskip

\noindent

{\bf Direction $1) \Rightarrow 2)$.} Assume that $\Psi_i$ is satisfied by $V$. We show that $L_i \cap [V]$ is not \sthtwo-separable from $L'_i \cap [V]$. We use Lemma~\ref{lem:sthsep}: given an arbitrary $k \in \nat$, we have to exhibit $w \in L_i \cap [V]$ and $w' \in L'_i \cap [V]$ such that $w \polrelk w'$. We fix $k$ for the proof.

Recall that by hypothesis, we have $\Psi_i = \exists x_i\ \Psi_{i-1}$. Hence, since $\Psi_i$ is satisfied by $V$, the definitions yield that either $V_\top$ or $V_\bot$ satisfies $\Psi_{i-1}$. By symmetry, we assume that we are in the former case: $V_\top$ satisfies $\Psi_{i-1}$. By induction hypothesis this implies that $L_{i-1} \cap [V_\top]$ is not \sthtwo-separable from $L'_{i-1} \cap [V_\top]$. Consequently, Lemma~\ref{lem:sthsep} yields $u \in L_{i-1} \cap [V_\top]$ and $u' \in L'_{i-1} \cap [V_\top]$ such that $u \polrelk u'$. Note that by definition of $V_\top$, we have $u,u' \in (B_{i-1} \setminus \{\overline{x_i}\})^*$. We define,
\[
  \begin{array}{lll}
    w  & = & (\#_i x_iu\$x_i)^{3^{k+1}}\#_i                                   \\
    y  & = & (\#_i x_iu\$x_i)^{3^{k+1}}\#_i\$(\#_i x_iu\$x_i)^{3^{k+1}}\#_i   \\
    w' & = & (\#_i x_iu'\$x_i)^{3^{k+1}}\#_i\$(\#_i x_iu'\$x_i)^{3^{k+1}}\#_i
  \end{array}
\]

Clearly, $\cont{\#_i\$} \subseteq \cont{\#_i x_iu\$x_i}$. Therefore, Lemma~\ref{lem:propreo2} yields that $w \polrelk y$. Moreover, since $u \polrelk u'$, we get from Lemma~\ref{lem:concat} that $y \polrelk w'$. By transitivity, we get $w \polrelk w'$. Finally, one may verify from the definition of $L_i$ and $L'_i$ that $w \in L_i \cap [V]$ and $w' \in L'_i \cap [V]$. Therefore, Lemma~\ref{lem:sthsep} yields that $L_i \cap [V]$ is not \sthtwo-separable from $L'_i \cap [V]$ as desired.

\medskip

\noindent

{\bf Direction $2) \Rightarrow 1)$.} We actually prove the contrapositive of this implication. Assuming that $\Psi_i$ is not satisfied by $V$, we show that $L_{i} \cap [V]$ is \sthtwo-separable from $L'_{i} \cap [V]$. Since $\Psi_i = \exists x_i\ \Psi_{i-1}$, our hypothesis yields that $\Psi_{i-1}$ is neither satisfied by $V_\top$ nor by $V_\bot$. Therefore, induction yields the two following properties:

\begin{enumerate}

\item $L_{i-1} \cap [V_\top]$ is \sthtwo-separable from $L'_{i-1} \cap [V_\top]$. We let $K_\top \in \sthtwo$ as a separator. Note that since $[V_\top] \in \sthtwo$ (actually $[V_\top] \in \at$), we may assume without loss of generality that $K_\top \subseteq [V_\top]$.

\item $L_{i-1} \cap [V_\bot]$ is \sthtwo-separable from $L'_{i-1} \cap [V_\bot]$.  We let $K_\bot \in \sthtwo$ as a separator. Again, we may assume without loss of generality that $K_\top \subseteq [V_\bot]$.

\end{enumerate}

We now define a language $K \in \sthtwo$ from $K_\top$ and $K_\bot$. We then show that it separates $L_{i} \cap [V]$ from $L'_{i} \cap [V]$. We let:
\[
  K =
  \begin{array}{ll}
    & \{\#_i\} \\
    \cup & A^*\#_i((A^*x_iA^* \cap A^*\overline{x_i}A^*) \setminus (A^*\#_iA^*))\#_i \\
    \cup & \#_ix_iK_{\top}\$x_i\#_i (A \setminus \{\overline{x_i}\})^* \\
    \cup & A^*\#_i((A^*\overline{x_i}A^*) \setminus (A^*\#_iA^*))\#_i x_iK_{\top}\$x_i\#_i (A \setminus \{\overline{x_i}\})^* \\
    \cup & \#_i\overline{x_i}K_{\bot}\$\overline{x_i}\#_i (A \setminus \{x_i\})^* \\
    \cup & A^*\#_i((A^*x_iA^*) \setminus (A^*\#_iA^*))\#_i \overline{x_i}K_{\bot}\$\overline{x_i}\#_i (A \setminus \{x_i\})^*
  \end{array}
\]

It is straightforward to verify that $K \in \pol{\at} = \sthtwo$. It remains to verify that $K$ separates $L_{i} \cap [V]$ from $L'_{i} \cap [V]$.

\medskip

We first show that $L_{i} \cap [V] \subseteq K$. Consider a word $w \in L_i \cap [V]$, we show that $w \in K$. Recall that we have $L_i = (\#_i(x_i + \overline{x_i})L_{i-1}\$(x_i + \overline{x_i}))^*\#_i$. Consequently, there exists $k \geq 0$  and $w_1,\dots,w_k \in  (x_i + \overline{x_i})L_{i-1}\$(x_i + \overline{x_i})$ such that,
\[
  w = \#_iw_1 \cdots \#_iw_k\#_i.
\]

Observe first that if $k = 0$, then $w = \#_i \in K$ and we are finished. Assume now that $k = 1$. By definition of $K$, when $w_k \in (A^*x_iA^* \cap A^*\overline{x_i}A^*) \setminus (A^*\#_iA^*)$, we also have $w \in K$. Therefore, we assume that $w_k \not\in (A^*x_iA^* \cap A^*\overline{x_i}A^*) \setminus (A^*\#_iA^*)$. Since $w_k \in (x_i + \overline{x_i})L_{i-1}\$(x_i + \overline{x_i})$, the letter $\#_i$ cannot occur in $w_k$ (by definition of $L_{i-1}$). Hence, our hypothesis on $w_k$ implies one of the two following properties holds:

\begin{itemize}

\item $x_i \in \cont{w_k}$ and $\overline{x_i} \not\in \cont{w_k}$, or,

\item $\overline{x_i} \in \cont{w_k}$ and $x_i \not\in \cont{w_k}$.

\end{itemize}

By symmetry, we handle the case when the first property holds and leave the other to the reader. We now assume that $x_i \in \cont{w_k}$ and $\overline{x_i} \not\in \cont{w_k}$.

There are two sub-cases depending on whether $\overline{x_i} \in \cont{w}$ or not. Assume first that $\overline{x_i} \not\in \cont{w}$. Since $w_1 \in (x_i + \overline{x_i})L_{i-1}\$(x_i + \overline{x_i})$, it follows that $w_1 = x_i u\$x_i$ where $u \in L_{i-1}$. Moreover, recall that $w \in [V]$ by definition which implies that $u \in [V]$. Moreover, $\cont{u}$ contains neither $\overline{x_i}$ nor $\#_i$ (the latter holds by definition of $L_{i-1}$). Altogether, this yields that $u \in L_{i-1} \cap [V_\top]$ and therefore $u \in K_\top$ by definition of $K_\top$. It follows that $w_1 \in x_iK_{\top}\$x_i$ which implies that $w \in \#_ix_iK_{\top}\$x_i\#_i (A \setminus \{\overline{x_i}\})^* \subseteq K$ which concludes this case.

Finally, assume that $\overline{x_i} \in \cont{w}$. Therefore, there exists some factor $w_j$ for $j \leq k$  such that $\overline{x_i} \in \cont{w_j}$. We consider the rightmost one. Note that we have $j < k$ by hypothesis on $w_k$. By definition, we know that $\overline{x_i} \not\in \cont{\#_iw_{j+1} \cdots \#_iw_k\#_i}$. We may now reuse the argument of the previous case to obtain that,
\[
  \#_iw_{j+1} \cdots \#_iw_k\#_i \in \#_ix_iK_{\top}\$x_i\#_i (A \setminus \{\overline{x_i}\})^*
\]

Moreover, by definition of $w_j$, we have $w_j \in (A^*\overline{x_i}A^*) \setminus (A^*\#_iA^*)$. Therefore, we obtain,
\[
  w \in A^*\#_i((A^*\overline{x_i}A^*) \setminus (A^*\#_iA^*))\#_i x_iK_{\top}\$x_i\#_i (A \setminus \{\overline{x_i}\})^* \subseteq K
\]

This concludes the proof that $L_i \subseteq K$.

\medskip

It remains to show that $L'_i \cap [V] \cap K = \emptyset$. We proceed by contradiction and assume that there exists $w \in L'_i \cap [V] \cap K$. Recall that by definition, we have
\[
  \begin{array}{c}
    T_i = (\#_ix_i (B_{i-1} \setminus \{\overline{x_i}\})\$x_i)^* \quad \text{and} \quad  \overline{T_i} = (\#_i \overline{x_i} (B_{i-1} \setminus \{x_i\})\$\overline{x_i})^* \\
    L'_i  = (\#_i (x_i + \overline{x_i})L'_{i-1}\$(x_i + \overline{x_i}))^* \#_i\$
    \left(
    T_i\#_i \cup \overline{T_i}\#_i
    \right)
  \end{array}
\]

Therefore, since $w \in L'_i$, we have $w = u\#_i\$ v\#_i$ with $u \in (\#_i (x_i + \overline{x_i})L'_{i-1}\$(x_i + \overline{x_i}))^*$ and $v \in T_i \cup \overline{T_i}$. By symmetry, we shall assume that $v \in T_i$. We obtain that $k,\ell \geq 0$  and $u_1,\dots,u_k \in  (x_i + \overline{x_i})L'_{i-1}\$(x_i + \overline{x_i})$ and $v_1,\dots,v_\ell \in \#_ix_i (B_{i-1} \setminus \{\overline{x_i}\})\$x_i$ such that,
\[
  u = \#_iu_1 \cdots \#_iu_k \quad \text{and} \quad v = \#_iv_1 \cdots \#_iv_\ell
\]

Since $K$ is defined as a union, $w$ belongs to some member of this union. We treat each case independently. If $w \in \{\#_i\}$, we have a contradiction since $w$ contains the letter $\$$ by definition.

Assume now that $w \in A^*\#_i((A^*x_iA^* \cap A^*\overline{x_i}A^*) \setminus (A^*\#_iA^*))\#_i$. If $\ell = 0$, this means that $\$ \in (A^*x_iA^* \cap A^*\overline{x_i}A^*) \setminus (A^*\#_iA^*)$ which is a contradiction. Otherwise $\ell \geq 1$ and we obtain that $v_\ell \in (A^*x_iA^* \cap A^*\overline{x_i}A^*) \setminus (A^*\#_iA^*)$. This is also a contradiction since $v_\ell \in \#_ix_i (B_{i-1} \setminus \{\overline{x_i}\})\$x_i$ and cannot contain the letter $\overline{x_i}$.

We now treat the case when $w \in \#_ix_iK_{\top}\$x_i\#_i (A \setminus \{\overline{x_i}\})^*$. If $k = 0$, this implies that $\$ \in x_iK_{\top}\$x_i$ which is a contradiction. Otherwise, we have $u_1 \in x_iK_{\top}\$x_i$. Recall that $u_1 \in  (x_i + \overline{x_i})L_{i-1}\$(x_i + \overline{x_i})$. Therefore, $u_1 \in x_i L'_{i-1}\$ x_i$ which implies that $L'_{i-1} \cap K_{\top} \neq \emptyset$. Furthermore, since $K_\top \subseteq [V_\top]$ by definition, we get that $L'_{i-1} \cap [V_\top] \cap K_{\top} \neq \emptyset$. This contradicts the definition of $K_\top$. One may handle the case when $w \in \#_i\overline{x_i}K_{\bot}\$\overline{x_i}\#_i (A \setminus \{x_i\})^*$ symmetrically using the definition of $K_\bot$.

We turn to the case when $w \in A^*\#_i((A^*\overline{x_i}A^*) \setminus (A^*\#_iA^*))\#_i x_iK_{\top}\$x_i\#_i (A \setminus \{\overline{x_i}\})^*$. Since the factors $v_j$ cannot contain the letter $\overline{x_i}$, it follows that there exists $j \leq k$ such that $u_j \in (A^*\overline{x_i}A^*) \setminus (A^*\#_iA^*)$ and,
\[
  \#_iu_{j+1} \cdots \#_iu_k\#_i\$v\#_i \in \#_i x_iK_{\top}\$x_i\#_i (A \setminus \{\overline{x_i}\})^*
\]

One may now reuse the argument of the previous case to derive a contradiction. Finally, one may handle that case when $w \in A^*\#_i((A^*x_iA^*) \setminus (A^*\#_iA^*))\#_i \overline{x_i}K_{\bot}\$\overline{x_i}\#_i (A \setminus \{x_i\})^*$ symmetrically which concludes the proof.

\subsection{Proof of Theorem~\ref{thm:bpoltheo}}

It is straightforward to verify from Proposition~\ref{prop:keyprop} and Theorem~\ref{thm:bpoltheo} that \sttwo-separation is in \exptime for monoids (since \sttwo is a \varie, this is also the case for \nfas by Corollary~\ref{cor:autoreducvari}). We focus on proving that \sttwo-separation is \pspace-hard for \nfas (again this is lifted to monoids with Corollary~\ref{cor:autoreducvari}). As explained in the main paper, this boils down to proving Proposition~\ref{prop:bpolred}.

\adjustc{prop:bpolred}

\begin{proposition}

  Consider an alphabet $A$ and $H,H' \subseteq A^*$. Let $B = A \cup \{\#,\$\}$ with $\#,\$ \not\in A$, $L = \#(H'\#(A^*\$\#)^*)^*H\#(A^*\$\#)^* \subseteq B^*$ and $L' = \#(H'\#(A^*\$\#)^*)^* \subseteq B^*$. The two following properties are equivalent:

  \begin{enumerate}

  \item $H$ is \sthtwo-separable from $H'$.

  \item $L$ is \sttwo-separable from $L'$.

  \end{enumerate}

\end{proposition}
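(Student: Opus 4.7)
The plan is to prove the two implications separately. For $(1)\Rightarrow(2)$ I would directly construct a \sthtwo-separator (hence a fortiori a \sttwo-one) for $L$ from $L'$ out of a \sthtwo-separator for $H$ from $H'$. For $(2)\Rightarrow(1)$ I would argue the contrapositive: using Lemmas~\ref{lem:sthsep} and~\ref{lem:stsep}, I would convert one-sided witness pairs $u\polrelk u'$ in $H\times H'$ into symmetric witness pairs $w\sim_k w'$ in $L\times L'$.

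For $(1)\Rightarrow(2)$, given $K\in\sthtwo$ separating $H$ from $H'$, the candidate separator is $M = B^*\,\#\,K\,\#\,B^*$. This lies in $\pol{\at}(B) = \sthtwo(B) \subseteq \sttwo(B)$ because $K\subseteq A^*$ remains in $\pol{\at}$ when viewed over the larger alphabet (its $\at$-factors stay alphabet-testable), and $\pol{\at}$ is closed under marked concatenation with the letter $\#$. The inclusion $L\subseteq M$ follows by writing any $w\in L$ as $w = y_1\,\#\,h\,\#\,z$ with $h\in H\subseteq K$, exploiting that the initial $L'$-prefix of an $L$-word always terminates in $\#$. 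Disjointness $L'\cap M = \emptyset$ follows because any decomposition $w = u_1\,\#\,k\,\#\,u_2$ of a word $w\in L'$ with $k\in K\subseteq A^*$ forces $k$ to coincide with a single cell of $w$ (since $k$ has no $\#$), but the cells of $L'$-words without $\$$ lie in $H'$, contradicting $K\cap H' = \emptyset$.

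For $(2)\Rightarrow(1)$, assume $H$ is not \sthtwo-separable from $H'$. By Lemma~\ref{lem:sthsep}, for each $k$ there exist $u_k\in H$ and $u'_k\in H'$ with $u_k\polrelk u'_k$; in particular $\cont{u_k}\subseteq\cont{u'_k}$. Setting $h = 3^{k+1}$, the natural candidate pair is
\[
  w_k \;=\; \#\,u'_k\#(u'_k\$\#)^{h}\,u_k\#\,(u'_k\$\#)^{h} \in L,\qquad w'_k \;=\; \#\,u'_k\#(u'_k\$\#)^{2h} \in L',
\]
which share the alphabet $\cont{u'_k}\cup\{\#,\$\}$. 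The direction $w'_k\polrelk w_k$ is straightforward: Lemma~\ref{lem:propreo2} with base $u = u'_k\$\#$ and insertion $v = u_k\#$ (whose alphabet is contained in $\cont{u}$) gives $(u'_k\$\#)^{2h}\polrelk(u'_k\$\#)^{h}u_k\#(u'_k\$\#)^{h}$, and Lemma~\ref{lem:concat} adjoins the common prefix $\#u'_k\#$.

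The main obstacle will be the converse direction $w_k\polrelk w'_k$, since Lemma~\ref{lem:propreo2} provides only insertions, not removals. My plan is to handle it in two steps: first using $u_k\polrelk u'_k$ and Lemma~\ref{lem:concat} to replace the distinguished $u_k\#$ cell by $u'_k\#$, producing an intermediate word $w''_k$, and then relating $w''_k$ to $w'_k$ by a chain of reshaping applications of Lemmas~\ref{lem:propreo1} and~\ref{lem:propreo2} that fold the excess $u'_k\#$ factor into the surrounding power. If this closing step turns out to be unavailable in this raw form, the backup is to enrich $w_k$ and $w'_k$ with additional symmetric pumped blocks surrounding the distinguished cell, engineered so that both directions of $\polrelk$ become accessible in parallel through the one-sided insertion lemma applied from whichever side is appropriate.
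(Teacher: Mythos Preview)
Your direction $(1)\Rightarrow(2)$ is exactly the paper's argument: the same separator $B^*\#K\#B^*$, the same reasoning for both inclusion and disjointness.

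For $(2)\Rightarrow(1)$ your overall strategy (contrapositive via Lemmas~\ref{lem:sthsep} and~\ref{lem:stsep}) is right, and your direction $w'_k\polrelk w_k$ goes through. The gap is precisely where you already suspect it: with your witnesses
\[
  w_k=\#u'_k\#(u'_k\$\#)^h u_k\#(u'_k\$\#)^h,\qquad w'_k=\#u'_k\#(u'_k\$\#)^{2h},
\]
Step~1 of your plan gives $w_k\polrelk w''_k=\#(u'_k\#(u'_k\$\#)^h)^2$, and then you would need $w''_k\polrelk w'_k$. But Lemma~\ref{lem:propreo1} only equates powers that are both at least $3^{k+1}-1$, and here the outer block $u'_k\#(u'_k\$\#)^h$ appears with exponents $2$ versus $1$; Lemma~\ref{lem:propreo2} only \emph{inserts}. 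There is no combination of these lemmas that deletes the extra $u'_k\#$. Choosing instead $w'_k=\#(u'_k\#(u'_k\$\#)^h)^2$ would make this step trivial but break the other direction, since you would then need $u'_k\polrelk u_k$, which you do not have.

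The fix---and this is what the paper does---is to pump the outer block as well. Take
\[
  w=\#\bigl(u'\#(u\$\#)^h\bigr)^{h}\,u\#(u\$\#)^h,\qquad w'=\#\bigl(u'\#(u\$\#)^h\bigr)^{h},
\]
with $h=3^{k+1}$. Now replacing the distinguished $u$ by $u'$ via $u\polrelk u'$ turns $w$ into $\#(u'\#(u\$\#)^h)^{h+1}$, and since $h+1$ and $h$ are both $\geq 3^{k+1}-1$, Lemma~\ref{lem:propreo1} collapses the exponent to give $w'$. The reverse direction $w'\polrelk w$ is your insertion argument (Lemma~\ref{lem:propreo2} applied to the final $(u\$\#)^h$). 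Your ``backup plan'' of adding more pumped blocks is heading toward this idea, but the concrete point it misses is that the \emph{outer} iteration must itself be a large power so that Lemma~\ref{lem:propreo1} becomes available after the substitution.
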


\restorec

We start with the direction $1) \Rightarrow 2)$. Assume that $H$ is \sthtwo-separable from $H'$ and let $K \subseteq A^*$ be a separator in \sthtwo. Consider the following language $S \subseteq B^*$:
\[
  S = B^*\#K\#B^*.
\]

Clearly, $S \in \sthtwo \subseteq \sttwo$. Moreover, since $L = \#(H'\#(A^*\$\#)^*)^*H\#(A^*\$\#)^*$ and $H \subseteq K$ by definition of $K$, we have $L \subseteq S$. Finally, we have $H' \cap K = \emptyset$ by definition of $K$. Moreover, $L' = \#(H'\#(A^*\$\#)^*)^*$. Since $\#,\$ \not\in A$, given $w \in L'$, the only factors of $w$ belonging to $\#A^*\#$ actually belong to $\#H'\#$.  Therefore, since $K \subseteq A^*$, we get $L' \cap K = \emptyset$ which concludes the proof for the direction $1) \Rightarrow 2)$.

\medskip

We turn to the direction $2) \Rightarrow 1)$. Actually, we prove the contrapositive. Assuming that $H$ is not \sthtwo-separable from $H'$, we show that $L$ is not \sttwo-separable from $L'$. By Lemma~\ref{lem:stsep}, we have to show that for every $k \in \nat$, there exists $w \in L$ and $w' \in L'$ such that $w \polrelk w'$ and $w' \polrelk w$. we fix $k$ for the proof.

Since $H$ is not \sthtwo-separable from $H'$, Lemma~\ref{lem:stsep} yields $u \in  H$ and $u' \in H'$ such that $u \polrelk u'$. We define,
\[
  \begin{array}{lll}
    w  & = & \#(u'\#(u\$\#)^{3^{k+1}})^{3^{k+1}}u\#(u\$\#)^{3^{k+1}}   \\
    w' & = & \#(u'\#(u\$\#)^{3^{k+1}})^{3^{k+1}}
  \end{array}
\]

Since $u \in H$ and $u' \in H'$, it is clear from the definitions of $L$ and $L'$ that $w \in L$ and $w' \in L'$. It remains to show that $w \polrelk w'$ and $w' \polrelk w$. We start with the former.

Since $u \polrelk u'$, we may use Lemma~\ref{lem:concat} to obtain the following inequality:
\[
  w \polrelk \#(u'\#(u\$\#)^{3^{k+1}})^{3^{k+1}}u'\#(u\$\#)^{3^{k+1}} = \#(u'\#(u\$\#)^{3^{k+1}})^{3^{k+1}+1}
\]

Moreover, it is immediate from Lemma~\ref{lem:propreo1} that we have,
\[
  \#(u'\#(u\$\#)^{3^{k+1}})^{3^{k+1}+1} \polrelk w'
\]

By transitivity, this yields $w \polrelk w'$.

\medskip
We finish with the converse inequality. Clearly, $\cont{u\#} \subseteq \cont{u\$\#}$. Therefore,  Lemma~\ref{lem:propreo2} yields that,
\[
  (u\$\#)^{3^{k+1}} \polrelk (u\$\#)^{3^{k+1}}u\#(u\$\#)^{3^{k+1}}
\]

We may apply Lemma~\ref{lem:concat} to obtain:
\[
  \#(u'\#(u\$\#)^{3^{k+1}})^{3^{k+1}} \polrelk \#(u'\#(u\$\#)^{3^{k+1}})^{3^{k+1}}u\#(u\$\#)^{3^{k+1}}
\]

This exactly says that $w' \polrelk w$, finishing the proof.

\begin{thebibliography}{10}

\bibitem{BrzoDot}
Janusz~A. Brzozowski and Rina~S. Cohen.
\newblock Dot-depth of star-free events.
\newblock {\em Journal of Computer and System Sciences}, 5(1):1--16, 1971.

\bibitem{chofo}
Sang Cho and Dung~T. Huynh.
\newblock Finite automaton aperiodicity is {PSPACE}-complete.
\newblock {\em Theoretical Computer Science}, 88(1):99 -- 116, 1991.

\bibitem{martens}
Wojciech Czerwi{\'n}ski, Wim Martens, and Tom{\'a}{\v{s}} Masopust.
\newblock Efficient separability of regular languages by subsequences and
  suffixes.
\newblock In {\em Proceedings of the 40th International Colloquium on Automata,
  Languages, and Programming ({ICALP}'13)}, pages 150--161. Springer-Verlag,
  2013.

\bibitem{green}
James~Alexander Green.
\newblock On the structure of semigroups.
\newblock {\em Annals of Mathematics}, 54(1):163--172, 1951.

\bibitem{Masopust18}
Tom{\'{a}}s Masopust.
\newblock Separability by piecewise testable languages is {PTIME}-complete.
\newblock {\em Theoretical Computer Science}, 711:109--114, 2018.

\bibitem{jep-dd45}
Jean{-}{\'{E}}ric Pin.
\newblock The dot-depth hierarchy, 45 years later.
\newblock In {\em The Role of Theory in Computer Science - Essays Dedicated to
  Janusz Brzozowski}, pages 177--202, 2017.

\bibitem{pingoodref}
Jean-\'Eric Pin.
\newblock Mathematical foundations of automata theory.
\newblock In preparation, 2018.
\newblock URL: \url{https://www.irif.fr/~jep/PDF/MPRI/MPRI.pdf}.

\bibitem{pin-straubing:upper}
Jean-Eric Pin and Howard Straubing.
\newblock {Monoids of upper triangular {B}oolean matrices}.
\newblock In {\em {Semigroups. Structure and Universal Algebraic Problems}},
  volume~39 of {\em Colloquia Mathematica Societatis Janos Bolyal}, pages
  259--272. North-Holland, 1985.

\bibitem{pbp}
Thomas Place.
\newblock Separating regular languages with two quantifier alternations.
\newblock Unpublished, a preliminary version can be found at
  \url{https://arxiv.org/abs/1707.03295}, 2018.

\bibitem{pvzltt}
Thomas Place, Lorijn van Rooijen, and Marc Zeitoun.
\newblock Separating regular languages by locally testable and locally
  threshold testable languages.
\newblock In {\em Proceedings of the 33rd IARCS Annual Conference on
  Foundations of Software Technology and Theoretical Computer Science},
  FSTTCS'13, pages 363--375, Dagstuhl, Germany, 2013. Schloss
  Dagstuhl--Leibniz-Zentrum fuer Informatik.

\bibitem{pvzmfcs13}
Thomas Place, Lorijn van Rooijen, and Marc Zeitoun.
\newblock Separating regular languages by piecewise testable and unambiguous
  languages.
\newblock In {\em Proceedings of the 38th International Symposium on
  Mathematical Foundations of Computer Science}, MFCS'13, pages 729--740.
  Springer-Verlag, 2013.

\bibitem{pzfo}
Thomas Place and Marc Zeitoun.
\newblock Separating regular languages with first-order logic.
\newblock In {\em Proceedings of the Joint Meeting of the 23rd EACSL Annual
  Conference on Computer Science Logic (CSL'14) and the 29th Annual ACM/IEEE
  Symposium on Logic in Computer Science (LICS'14)}, pages 75:1--75:10. ACM,
  2014.

\bibitem{pztale}
Thomas Place and Marc Zeitoun.
\newblock The tale of the quantifier alternation hierarchy of first-order logic
  over words.
\newblock {\em {SIGLOG} News}, 2(3):4--17, 2015.

\bibitem{pzfoj}
Thomas Place and Marc Zeitoun.
\newblock Separating regular languages with first-order logic.
\newblock {\em Logical Methods in Computer Science}, 12(1), 2016.

\bibitem{pzsuccfull}
Thomas Place and Marc Zeitoun.
\newblock Adding successor: {A} transfer theorem for separation and covering.
\newblock Unpublished, a preliminary version can be found at
  \url{http://arxiv.org/abs/1709.10052}, 2017.

\bibitem{pz:qalt:2014}
Thomas Place and Marc Zeitoun.
\newblock Going higher in the first-order quantifier alternation hierarchy on
  words.
\newblock Unpublished, a preliminary version can be found at
  \url{https://arxiv.org/abs/1404.6832}, 2017.

\bibitem{pzbpol}
Thomas Place and Marc Zeitoun.
\newblock Separation for dot-depth two.
\newblock In {\em Proceedings of the 32th Annual {ACM/IEEE} Symposium on Logic
  in Computer Science, (LICS'17)}, pages 202--213. {IEEE} Computer Society,
  2017.

\bibitem{PZ:generic_csr_tocs:18}
Thomas Place and Marc Zeitoun.
\newblock Generic results for concatenation hierarchies.
\newblock {\em Theory of Computing Systems (ToCS)}, 2018.
\newblock Selected papers from CSR'17.

\bibitem{sfo}
Marcel~Paul Sch{\"u}tzenberger.
\newblock On finite monoids having only trivial subgroups.
\newblock {\em Information and Control}, 8:190--194, 1965.

\bibitem{simon75}
Imre Simon.
\newblock Piecewise testable events.
\newblock In {\em 2nd GI Conference on Automata Theory and Formal Languages},
  pages 214--222, 1975.

\bibitem{StrauConcat}
Howard Straubing.
\newblock A generalization of the sch{\"u}tzenberger product of finite monoids.
\newblock {\em Theoretical Computer Science}, 13(2):137--150, 1981.

\bibitem{TheConcat}
Denis Th{\'e}rien.
\newblock Classification of finite monoids: The language approach.
\newblock {\em Theoretical Computer Science}, 14(2):195--208, 1981.

\end{thebibliography}
\end{document}